\newcommand\coker{\mathop{{\rm coker}}}
\newcommand\p{\partial}
\newcommand\wt{\widetilde}
\renewcommand\sout{\bgroup \color{red} \ULdepth=-.5ex \ULset}
\definecolor{mydarkred}{RGB}{233,20,35}
\definecolor{mypurple}{RGB}{120, 35, 160}
\definecolor{mydarkpurple}{RGB}{128, 100, 162}
\definecolor{mybrown}{RGB}{255, 195, 0}
\definecolor{myaqua}{RGB}{29, 153, 168}
\definecolor{myblue}{RGB}{91, 129, 184}  
\definecolor{mygreen}{RGB}{84, 174, 50}  
\definecolor{mybrightblue}{RGB}{0, 140, 255}  
\tikzstyle{species}=[
\tikzstyle{reaction}=[rectangle,draw=black!100,fill=black!15,thick, inner sep=0pt,minimum size=5mm]
\theoremstyle{remark}
\newtheorem{lemma}{Lemma}
\newtheorem{theorem}{Theorem}
\newtheorem*{thm*}{Theorem}
\newtheorem{proposition}[lemma]{Proposition}
\newtheorem{cor}[lemma]{Corollary}
\newtheorem{definition}[lemma]{Definition}
\newtheorem{assm}[lemma]{Assumption}
\begin{document}
\preprint{KUNS-2973}

\title{ Complete characterization of robust perfect adaptation 
\\
in biochemical reaction networks }

\author{Yuji~Hirono}
\email{yuji.hirono@gmail.com}
\affiliation{Department of Physics, Kyoto University, Kyoto 606-8502, Japan}
\affiliation{RIKEN iTHEMS, RIKEN, Wako 351-0198, Japan}

\author{Ankit~Gupta}
\email{ankit.gupta@bsse.ethz.ch}
\affiliation{
Department of Biosystems Science and Engineering, ETH Zurich, 4058 Basel, Switzerland
}

\author{Mustafa~Khammash}
\email{mustafa.khammash@bsse.ethz.ch}
\affiliation{
Department of Biosystems Science and Engineering, ETH Zurich, 4058 Basel, Switzerland
}

\date{\today}

\begin{abstract}
Perfect adaptation is a phenomenon whereby the output variables of a system can reach and maintain certain values despite external disturbances.
Robust perfect adaptation (RPA) refers to an adaptation property that does not require fine-tuning of system parameters.
RPA plays a vital role for the survival of living systems in unpredictable environments, and there are numerous examples of biological implementations of this feature.
However, complex interaction patterns among components in biochemical systems pose a significant challenge in identifying RPA properties and the associated regulatory mechanisms.
The goal of this paper is to present a novel approach for identifying all the RPA properties that are realized for a generic choice of kinetics for general deterministic chemical reaction systems. 
This is accomplished by proving that an RPA property with respect to a system parameter can be represented by a subnetwork with certain topological features. 
This connection is then exploited to show that these special structures generate \emph{all} kinetics-independent RPA properties, allowing us to systematically identify all such RPA properties by enumerating these subnetworks. An efficient method is developed to carry out this enumeration, and we provide a computational package for this purpose.
We pinpoint the integral feedback controllers that work in concert to realize each RPA property, casting our results into the familiar control-theoretic paradigm of the Internal Model Principle.
Furthermore, we generalize the regulation problem to the multi-output scenario where the target values belong to a robust manifold of nonzero dimension, and provide a sufficient topological condition for this to happen.
We call the emergence of this phenomenon as manifold RPA.
The present work significantly advances our understanding of regulatory mechanisms that lead to RPA in endogenous biochemical systems, and it also provides rational design principles for synthetic controllers. 
We demonstrate these results through illustrative examples as well as biological ones.
The present results indicate that an RPA property is essentially equivalent to the existence of a ``topological invariant", which is an instance of what we coin as the ``{\bf R}obust {\bf A}daptation is {\bf T}opological'' (RAT) principle. 
\end{abstract}

\maketitle

\tableofcontents

\section{Introduction} 

\subsection{ Context and motivation }

Maintaining stability in a variable environment is a crucial issue for biological systems~\cite{stelling2004robustness,kitano2004biological,kitano2007towards}. 
One strategy adopted by living cells to achieve this is perfect adaptation, which is a property of a system to maintain the level of certain quantities by countering the effects of disturbances within biochemical reaction networks~\cite{10.1093/nar/28.1.27,jeong2000large,doi:10.1126/science.1073374}.
Perfect adaptation is said to be robust when no fine-tuning of system parameters is needed to achieve the adaptation.
Having this property of robust perfect adaptation (RPA)
contributes to the stability of a system,
and indeed there are numerous biological examples of RPA such as chemotaxis
of prokaryotes~\cite{barkai1997robustness} and eukaryotes~\cite{LEVCHENKO200250}, 
calcium homeostasis~\cite{el2002calcium},
glucose uptake of cancer cells~\cite{Tveit2020.01.02.892729},
yeast osmoregulation~\cite{MUZZEY2009160}, 
cell signaling~\cite{FERRELL201662}, 
scaling of morphogen gradient~\cite{ben2010scaling}, and so on. 
Revealing how cells implement RPA in molecular networks is important not only for understanding the origin of robustness in living systems but also for designing synthetic biomolecular systems. Recent advances in synthetic biology techniques have afforded scientists an unprecedented ability to engineer biomolecular controllers, utilizing genetic components, and transport them into living cells, where they can accomplish novel functions, like RPA. This emerging field, known as Cybergenetics~\cite{9779327}, focuses on the analysis and design of genetic control systems and holds tremendous potential for various domains, including industrial biotechnology and medical therapy.

Most existing work on RPA in biological reaction systems borrows ideas from control theory and views RPA as the property of an input-output system to robustly reject constant-in-time disturbances to the input variable by ensuring that the steady-state level of the output remains unaffected~\cite{RevModPhys.88.035006,KHAMMASH2021509}. Typically, the output is the concentration of some chemical species (e.g.\ a protein of interest) with a pre-defined steady-state value (called the \emph{set-point}), and the input is the concentration of some other species (e.g. enzyme, chemical inducer etc.) or some exogenous variable. For such single-input single-output (SISO) reaction systems, it is known that only two types of RPA topologies emerge: the incoherent feedforward (IFF) loops and the negative feedback (NFB) networks. This was first discovered computationally through an exhaustive search over three-node networks~\cite{ma2009defining} (see also Ref.~\cite{TANG2016274}), but it has recently been proved mathematically that any arbitrarily-sized RPA network must essentially be composed of IFF and NFB modules ~\cite{araujo2018topological,wang2021structure}. The reason for such strict structural requirements for RPA networks is linked to the famous Internal Model Principle (IMP)~\cite{bin2022internal,SONTAG2003119} of control theory, which mandates that any RPA system must be internally organized (after possibly a change of coordinates) into two distinct components -- an \emph{internal model} (IM) that computes the time-integral of the deviation of the output and its set-point, and the \emph{rest of the network} (RoN) which receives both input disturbances as well as counteracting signals from the IM to reject these disturbances (see Fig. \ref{fig:imp} in Section~\ref{sec:rpa_int_ctr}). This argument shows that an IM serves as an embedded controller that generates RPA by implementing the famous integral feedback mechanism. This mechanism not only resides at the forefront of modern control engineering but has also been identified in numerous endogenous biochemical RPA networks~\cite{barkai1997robustness,alon1999robustness,ANG2010723,FERRELL201662,aoki2019universal,8619101,KHAMMASH2021509}. A specific type of SISO RPA is called Absolute Concentration Robustness (ACR) which refers to the robustness of the concentration of output species when the initial state of the system is perturbed. Traditionally, the study of ACR networks has relied on techniques from Chemical Reaction Network Theory (CRNT) \cite{feinberg2019foundations}, starting from the seminal paper by Shinar and Feinberg in 2010 \cite{shinar2010structural}. A recent development has established a connection between ACR and integral controllers \cite{cappelletti2020hidden}, resulting in a deeper understanding of the mechanisms underlying ACR.

Although robustness to a specific single input may be relevant in some scenarios, the notion of a single input source for disturbances appears to be generally limiting within biological contexts, where disturbances manifest as parameter perturbations that can arise due to multitude of reasons, such as temperature changes~\cite{ni2009control}, onset of stress conditions~\cite{toni2011qualitative}, resource competition and burden effects \cite{qian2017resource}, etc. 
As many of these disturbances are often simultaneously present, it is important to consider RPA networks with multiple inputs, and in such a setting, conditions have been identified that completely characterize biochemical networks that exhibit a maximal form of RPA (called maxRPA), whereby the set-point of one output species is robust to the most number of reaction rate parameters, i.e.\ it depends on the least number of parameters~\cite{gupta2022universal}. 
The networks exhibiting maxRPA are examples of multiple-input single-output (MISO) RPA systems. 
The characterization of more general multiple-input multiple-output (MIMO) RPA systems has been largely unexplored so far (see however Ref.~\cite{alexis2022regulation} for the two-output case). 
A major challenge for finding MIMO RPA structures in biological systems arises from the absence of a presumptive separation between a controlled system and a controller, which contrasts with control engineering, where these components are given from the outset.

Inspired by the connection between RPA and integral actions, recently an algebraic approach has been devised, that identifies all RPA properties, along with the hidden integral actions that generate this property~\cite{araujo2023universal}. This method assumes mass-action kinetics (see Eq.~\eqref{defn:mass_action_kin}) to express the reaction rates as a function of reacting species' concentrations. 
This ensures that the steady-states of the system satisfy a system of polynomial equations, and by investigating the ideal generated by these polynomials, through resource-intensive Gr\"{o}bner basis computations, all RPA properties can in principle be found. 
While the law of mass-action works remarkably well in \emph{in vitro} (i.e.\ `test tube' ) conditions, where the reacting species are sufficiently dilute and well-mixed \cite{guldberg1864studies}, many experimental and computational studies have demonstrated that this law breaks down within living cells (see Ref.~\cite{schnell2004reaction} and the references therein). 
Several factors contribute to this breakdown, including macromolecular crowding in the cytoplasmic soup where reactions occur \cite{hall2003macromolecular} and spatial compartmentalisation within the cells \cite{bauermann2022chemical}. 
These factors significantly impede molecular diffusion compared to \emph{in vitro} conditions \cite{elowitz1999protein}, resulting in the loss of well-mixed characteristics. 
Additionally, reaction networks serve as simplifications of the intricate reality found within cells. Kinetics-altering factors like conformational changes in protein structures~\cite{bu2011proteins}, allosteric perturbations~\cite{perica2021systems}, resource competition~\cite{qian2017resource}, and context dependence~\cite{cardinale2012contextualizing} are deliberately disregarded for the sake of model simplicity and manageability. 
In such cases, the assumption of mass-action kinetics becomes tenuous. One could hypothesize that if RPA is crucial for a cell's survival, it would make evolutionary sense to select RPA-generating strategies that are independent of kinetics, meaning the RPA property remains invariant regardless of the chosen kinetics. 
This notion holds significance not only in understanding living systems but also in the field of synthetic biology where one may aim to develop molecular controllers that can achieve RPA when interfaced with natural intracellular networks, irrespective of the specific kinetics employed~\cite{gupta2022universal}. 
It is important to note that while the structure of a reaction network is determined by the chemical properties of the reacting species, which can be reliably determined through \emph{in vitro} experiments, knowledge of reaction kinetics does not readily transfer from \emph{in vitro} studies to the actual conditions within living cells~\cite{minton2001influence}. 

Distinctly from control-theoretic ideas, there exist alternative methods to characterize the response of reaction systems to parameter perturbations, based on the topological analysis of subnetworks containing the perturbed parameters~\cite{PhysRevLett.117.048101,PhysRevE.96.022322,PhysRevE.98.012417,PhysRevResearch.3.043123,https://doi.org/10.48550/arxiv.2302.01270}.
In such methods, a crucial role is played by the integers assigned to a given subnetwork based on its topological characteristics.
It has been shown that if a subnetwork is output-complete (meaning that all reactions that have a reactant species within the subnetwork are included in the subnetwork) and its \emph{influence index} is zero, the steady-state values of the species-concentrations and reaction fluxes outside this subnetwork are insensitive to perturbations of parameters inside the subnetwork. 
Namely, these special subnetworks confine the effect of perturbations inside them, and hence such subnetworks are called \emph{buffering structures}. 
This result can be interpreted as a sufficient condition for RPA: the concentrations and reaction rates outside the subnetwork exhibit RPA, if the subnetwork is output-complete and has a zero influence index. 
More recently, a slightly different index called the \emph{flux influence index} was introduced to identify parameters under the perturbation of which \emph{all} the reaction rates (fluxes) exhibit RPA~\cite{https://doi.org/10.48550/arxiv.2302.01270}. 
A basic idea behind these topological approaches is that we can estimate the impact of parameter perturbations based on indices determined from network topology.
Although these indices provide us with sufficient conditions for RPA, whether all RPA properties can be explained in this way or not has been an open question.
The underlying regulatory mechanisms responsible for realizing these RPA properties and the relation of these results to control-theoretical approaches also remain unresolved.

\begin{figure}[tb]
  \centering
  \includegraphics
  [clip, trim=0cm 6cm 0cm 0cm, scale=0.45]
  {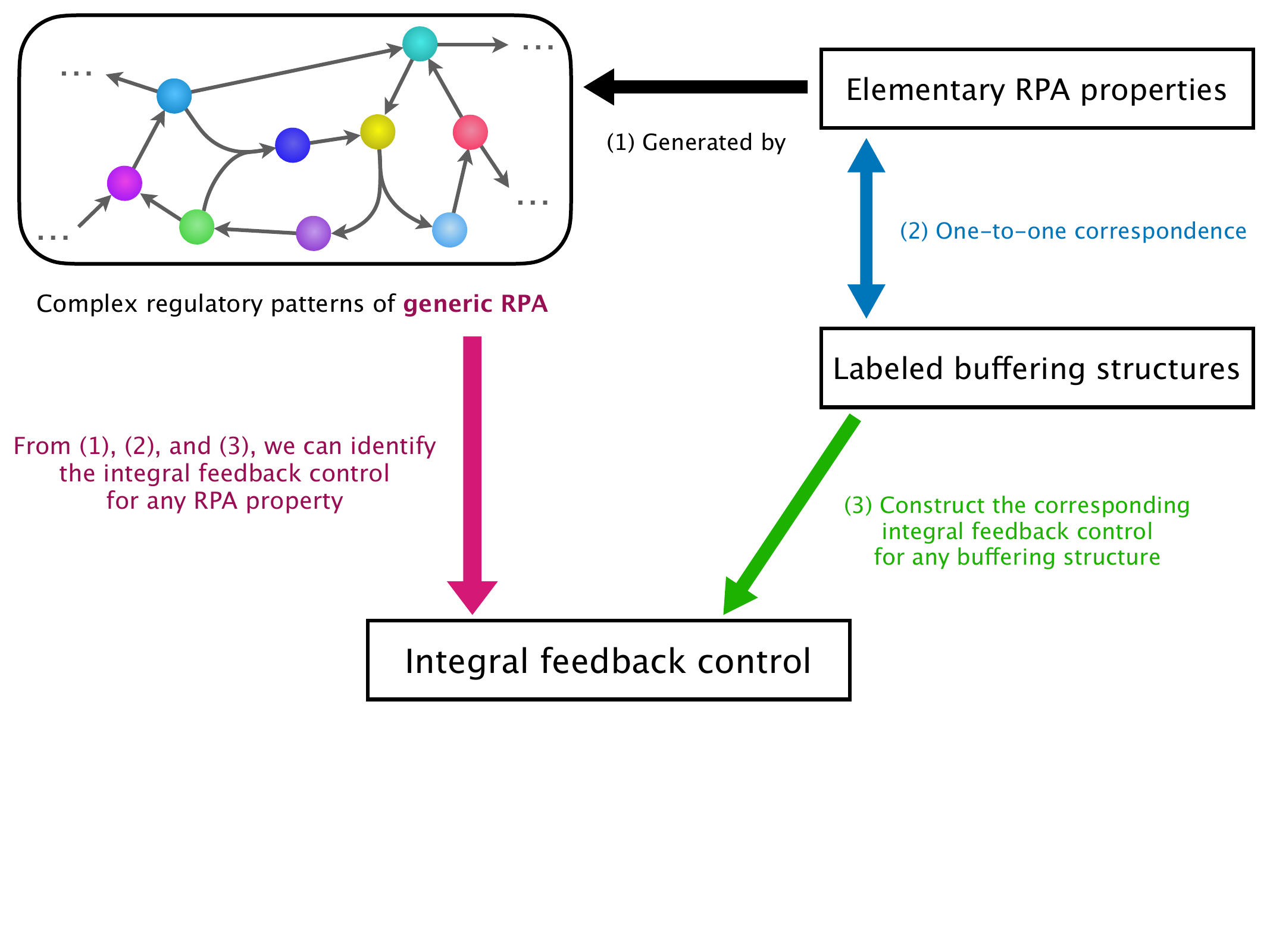}
 \caption{ Outline of the main results of the present work. A generic RPA property in a deterministic chemical reaction system can be decomposed into elementary RPA properties, each of which represents an RPA property with respect to a system parameter. We show that every elementary RPA property can be represented by a subnetwork with specific topological characteristics, that we call a labeled buffering structure. Then, we construct integral feedback control corresponding to any buffering structure.
 Consequently, we identify integral feedback control for every RPA property.
  }
  \label{fig:fig1} 
\end{figure}

\subsection{ Main results and structure of the paper }\label{sec:introd_main_results}

Motivated by these developments, the objective of this paper is to develop a systematic approach to identify all RPA properties that do not require a specific choice of kinetics for general deterministic chemical reaction systems (see Fig.~\ref{fig:fig1} for the sketch of our strategy).
Existing works on RPA define this notion as robustness of one or more quantities of interest (like steady-state species concentrations, reaction rates) with respect to certain system parameters (e.g.\ rate constants, values of conserved quantities, etc.). 
To characterize regulatory patterns in biochemical system, 
we shall regard a \emph{generic} RPA property as a triplet $(\mathcal{V}, \mathcal{E}, \bm p)$ of subsets of species, reactions and system parameters respectively, which signifies that the steady-state concentrations of species in $\mathcal{V}$ and rates of reactions in $\mathcal{E}$ exhibit RPA to perturbations of any parameter in $\bm p$. 
It is easy to see that any generic RPA property can be viewed as a combination of \emph{elementary} RPA properties that are each defined as subsets of species and reactions that exhibit RPA with respect to a single parameter $p$\footnote{See Sec.~\ref{sec:assumptions} for definitions of the elementary and generic RPA properties}. 
The main result of the paper proves a one-to-one correspondence between each elementary RPA property in a deterministic reaction system and a buffering structure with additional information, which we call a \emph{labeled buffering structure} (Theorem~\ref{thm:rpa-bs}).
Therefore, to characterize and systematically identify all generic RPA properties, it suffices to enumerate all the labeled buffering structures. We have devised an efficient algorithm to accomplish this enumeration and implemented it in \emph{Mathematica} as an open-source computational package named {\bf RPAFinder}~\cite{RPAFinder}.
This package can be used to identify the degrees of freedom with a desired RPA property, which is important for designing RPA-achieving synthetic controllers~\cite{9779327} and for finding yield-optimizing perturbations in metabolic pathways (see Sec.~\ref{sec:yeast}). 
Furthermore, we uncover the hidden integral feedback control responsible for achieving a generic RPA property by explicitly constructing the Internal Model along with the set of integrators it generates.
Therefore, we discover a novel Internal Model Principle for kinetics-independent RPA networks, which is of independent interest in the control-theory community and has deep connections with Cybergenetic applications~\cite{gupta2023internal}.
In a number of examples, we find that the set of integrators forces the dynamics of multiple output species towards a manifold with nonzero dimension.
This phenomenon, which we call \emph{manifold RPA} (see Fig.~\ref{fig:manifold-rpa-schematic} for a schematic illustration), implies that while the steady-state concentrations of the output species is sensitive to certain parameters, the relationship among the species is robustly maintained. 
We show that any output-complete subnetwork gives rise to manifold RPA (Theorem~\ref{thm:law-of-manifold-localization}), and the dimension of the target manifold is given by the influence index. This can be seen as a natural generalization the law of localization.
This form of multi-output control generalizes conventional approaches and has important implications for rational design in synthetic biology~\cite{alexis2022regulation}.

At this point, we would like to compare our results with the characterization results obtained in Ref.~\cite{araujo2023universal} for RPA networks with the mass-action kinetics. Unlike the approach in Ref.~\cite{araujo2023universal}, our analysis does not rely on any computationally-expensive procedures. Instead, we employ elementary linear-algebraic computations to fully characterize and identify all RPA properties. This approach is particularly well-suited for application to large-scale networks, such as those encountered in metabolic engineering (refer to Section \ref{sec:yeast}). Furthermore, our analysis yields a conventional IMP for RPA networks with a single IM that generates a global set of integral actions. This is in contrast to the non-standard IMP formulation presented in Ref.~\cite{araujo2023universal}, which considers each RPA network as a topologically organized collection of subnetworks, with each subnetwork possessing its own independent IM and localized integral action. Generally, these localized integral actions cannot be synthesized to form global integral actions, as required by the standard IMP. Nonetheless, the findings in Ref.~\cite{araujo2023universal} offer intriguing insights into the various ways in which RPA can arise due to intricate polynomial factorizations. Our results show that many of these ways `drop out' if deviations from mass-action kinetics are anticipated, revealing a linear structure that we meticulously unravel in this paper.

\begin{figure}[tb]
  \centering
  \includegraphics
  [clip, trim=0cm 9cm 0cm 0cm, scale=0.4]
  {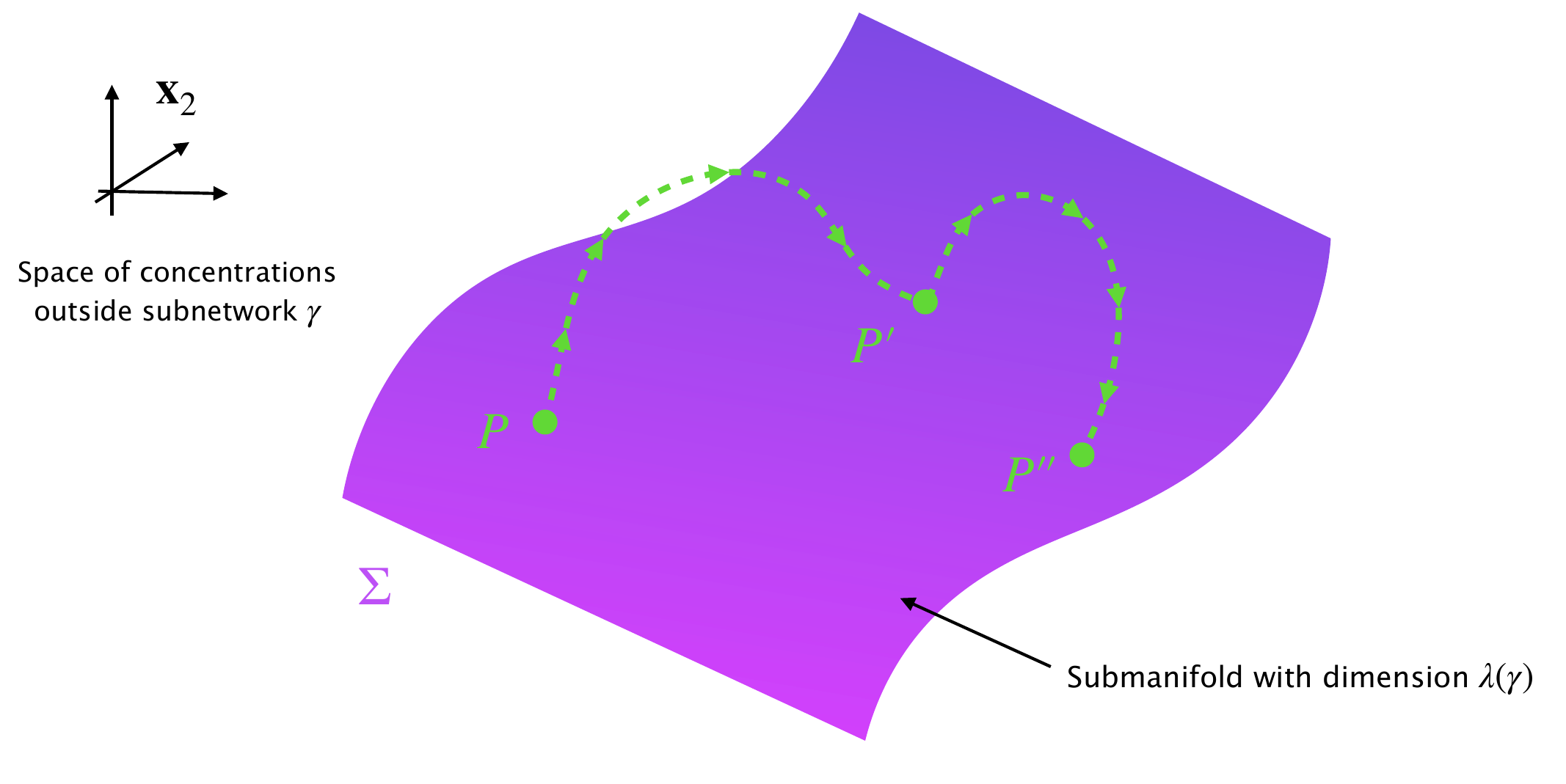}
 \caption{ Schematic of manifold RPA. For a chosen output-complete subnetwork $\gamma$, the concentrations $\bm x_2$ of the species outside the subnetwork $\gamma$ are regulated to a submanifold $\Sigma$, whose dimension is given by the influence index of the subnetwork, $\lambda(\gamma)$. For example, suppose that $\bm x_2$ is initially at point $P$. If we perturb a parameter inside $\gamma$, the state deviates from that point, but eventually end up in another point $P'$ inside $\Sigma$. 
 If we perturb another parameter in $\gamma$, the state deviate from $\Sigma$ temporarily, but it comes back to another point in $\Sigma$. See Sec.~\ref{sec:manifold-rpa} for details.
 }
  \label{fig:manifold-rpa-schematic} 
\end{figure}

If we view the phenomenon of RPA in a broader context, the fact that a system with an RPA property is characterized by a ``topological invariant'' should not be restricted to deterministic chemical reaction systems and would also be true for more generic systems. 
Since the ``robustness'' of adaptation does not allow the fine-tuning of reaction rates to achieve adaptation, the RPA conditions we obtain should be insensitive to deformations of the system, such as changes in parameters or modifications of rate functions.
Indeed, the condition of a zero index (or output-completeness) is \emph{topological}, in the sense that it is insensitive to these deformations of reaction systems. 
Generically, one can expect that the class of systems
with an RPA property, which does not necessarily have to be deterministic chemical reaction systems, should be of this feature. 
Let us summarize this claim as the \emph{``Robust Adaptation is Topological'' (RAT) principle},
by which we mean that the class of dynamical systems that exhibit robust adaptation is characterized by a topological invariant.
The RAT principle itself is not a concrete theorem, 
but rather a template of theorems, 
and different technical assumptions should be made 
depending on the nature of the systems under consideration. 
Namely, depending on technical details such as
the class of ``dynamical systems'' to be considered
or the choice of output variables, 
the corresponding ``topological invariant''
should be defined appropriately.
The law of localization~\cite{PhysRevLett.117.048101,PhysRevE.96.022322} 
and flux RPA~\cite{https://doi.org/10.48550/arxiv.2302.01270}
are particular realizations of the RAT principle for deterministic chemical reaction systems. 
The one-to-one correspondence of elementary RPA properties and labeled buffering structures indicates that topological characteristics can, in fact, exhaust all the RPA properties in these systems. 
This leads us to hypothesize that topological characterization is crucial for other classes of dynamical systems that exhibit RPA properties beyond deterministic chemical reaction systems.

Let us describe the organization of the paper. 
In Sec.~\ref{sec:crs}, after introducing basic notions on the description of deterministic chemical reaction systems, we specify the technical setting, including the definitions of generic and elementary RPA properties. 
In Sec.~\ref{sec:LoL}, we introduce a formalism to describe the response of steady states to parameter perturbations, and review topological approaches toward RPA.
In Sec.~\ref{sec:control}, we give a review of a control theoretical approach toward RPA, and we review the theory of maxRPA. 
In Sec.~\ref{sec:rpa-lbs}, we show that all the elementary RPA properties in a network can be represented by labeled buffering structures, and generic RPA properties are generated by them. We also give a numerical algorithm for the identification of all the labeled buffering structures for a given chemical reaction network. We illustrate the method through several examples. 
In Sec.~\ref{sec:equivalence}, we discuss the topological characterization of kinetics-independent maxRPA, and interpret this result via the law of localization. 
Hinted by this connection, in Sec.~\ref{sec:integrator-general}, we describe how to construct integral feedback control for a given buffering structure.
In Sec.~\ref{sec:manifold-rpa}, we discuss the phenomenon of manifold RPA, and provide a sufficient topological condition for this to be realized.
In Sec.~\ref{sec:rat}, we discuss the naturality of a topological characterization of systems with RPA properties and propose the RAT principle. 
Section~\ref{sec:conclusion} is devoted to a conclusion and future outlook. 

\section{ Chemical reaction systems and robust perfect adaptation }\label{sec:crs}

Various processes within biological cells can be viewed and modeled as chemical reaction networks.
In this section, we introduce the description of generic deterministic chemical reaction systems. 
Then, we introduce notions for discussing the phenomenon of robust perfect adaptation along with basic technical assumptions.

\subsection{ Chemical reaction systems } \label{sec:crns_intro}

We consider a chemical reaction network 
$\Gamma= (V, E)$, which consists of a set $V$ of chemical species 
and a set $E$ of chemical reactions. 
A reaction $e_A \in E$ can be specified as
\begin{equation}
  e_A :
 \sum_i s_{iA} v_i \to 
 \sum_i t_{iA} v_i , 
\end{equation}
where $v_i \in V$, and $s_{iA},t_{iA} \in \mathbb Z_{\ge 0}$ are stoichiometric constants of species $v_i$ for reaction $e_A$. 
We denote the stoichiometric matrix by 
$S$, whose components are given by $S_{iA} \coloneqq t_{iA} - s_{iA}$. 
The total numbers of chemical species and reactions will be denoted by
\begin{equation}
M \coloneqq |V| , 
\quad 
N \coloneqq |E| ,
\end{equation}
where $|\ldots|$ of a set indicates its cardinality.

When we consider a dynamical system based on a given reaction network $\Gamma$, there are different levels of description.
If the number of molecules is large, we can model the system
deterministically, and in that case, the system variables
are the concentrations $x_i$ of chemical species $v_i \in V$.
On the other hand, if the number of molecules is small, stochastic nature of the system becomes important, and then the system has to be modeled as a stochastic reaction system \cite{anderson2015stochastic} governed by a chemical master equation.
In the present paper, we focus on the former situation and consider deterministic dynamics.
The time evolution of the concentrations is governed by the rate equation\footnote{We use bold fonts to indicate vectors. Here, the components of a vector $\bm x$ are given by $x_i$.}, 
\begin{equation}
  \frac{d}{dt} \bm x (t) =  S \bm r , 
  \label{eq:rate}
\end{equation}
where $r_A$ is the reaction rate of reaction $e_A \in E$.
To solve the rate equations, we need to express the reaction rates, $r_A$, as functions of reactant concentrations, $\bm x$, and parameters, $k_A$, i.e., $r_A = r_A (\bm x; k_A)$. Such a choice is called {\it kinetics}. 
For example, in the case of mass-action kinetics, the rate function of a reaction $e_A$ is proportional to the product of reactant concentrations,
\begin{equation}
    r_A ( \bm x; k_A) = k_A \prod_i x_i^{s_{iA}}. 
    \label{defn:mass_action_kin}
\end{equation}
In the present work, we {\it do not assume any specific kinetics} unless otherwise stated. Rather, we seek properties that do not depend on the choice of kinetics.

Since we are interested in RPA, we consider a situation where the system reaches an asymptotically-stable steady state in the long-time limit as we discuss in more detail in Sec.~\ref{sec:assumptions}. The steady-state solution can be obtained by solving the following equations,\footnote{
Here we use different characters for 
indices of 
chemical species ($i, j, \ldots$), 
reactions ($A,B,\ldots$), 
the basis of the kernel of $S$ ($\alpha, \beta, \ldots$), and that of cokernel of $S$ ($\bar\alpha, \bar\beta, \ldots$).
We use the notation where $|i|$ indicates the number of values the index $i$ takes. 
Thus, $|i|$ indicates the number of chemical species,
$|A|$ indicates the number of reactions,
$|\alpha|$ denotes the dimension of $\ker S$, 
and $|\bar\alpha|$ denotes the dimension of $\coker S$. 
}
\begin{align}
\sum_A  S_{iA}  \bar r_A ( \bar{\bm x} (\bm k, \bm \ell); k_A ) &= 0, \label{eq:sr} \\
 \sum_i d^{(\bar \alpha)}_i \bar x_i (\bm k, \bm \ell)
 &= \ell^{\bar\alpha},   \label{eq:l-dx} 
\end{align} 
where $\bar x_i$ and $\bar r_A$ indicate the steady-state values of 
concentration $x_i$ and reaction rate $r_A$, respectively.
Note that we here allow the stoichiometric matrix to have left null vectors, and the system can have conserved quantities.
The set of vectors $\{\bm d^{(\bar\alpha)} \}_{\bar\alpha=1,\ldots,|\bar\alpha|}$ is a basis of $\coker S$, 
and we need to specify the values of conserved quantities 
as in Eq.~\eqref{eq:l-dx} to obtain the steady-state solution (if $\coker S$ is nontrivial).
The set $(\bm k, \bm \ell) \in K \times L$ specifies the parameters of this chemical reaction system, where $\bm k$ is the vector of all reaction parameters, $\bm \ell$ is the vector of the values of all the conserved quantities, and $K$ and $L$ are open sets of all admissible parameter vectors and values of conserved quantities, respectively.

\subsection{ Robust perfect adaptation } \label{sec:assumptions}

Let us now give the definition of robust perfect adaptation and present basic assumptions. 

We say that a deterministic chemical reaction system exhibits {\it perfect adaptation} in species $v_i \in V$ with respect to a certain parameter $p \in (\bm k, \bm \ell)$
when $x_i(t)$ asymptotically reaches a steady-state value that is independent of $p$, i.e.
\begin{equation}
\lim _{t \to \infty} x_i(t) = \bar x_i ,
\label{eq:lim-x-bar-x}
\end{equation}
where $\bar x_i$ is independent of $p$.
Similarly, we say that the system exhibits perfect adaptation in reaction $e_A$ with respect to $p$ if the corresponding reaction rate $r_A$ reaches a value independent of $p$ asymptotically.
A perfect adaptation property is said to be {\it robust}, if the adaptation occurs without fine-tuning of system parameters.
In this paper, we are concerned with such {\it robust perfect adaptation (RPA)} realized in deterministic chemical reaction systems. 
If the system exhibits RPA in $v_i \in V$ with respect to $p$, we will also say that the system has an RPA property in $v_i$ with respect to $p$. 
Observe that, if the system exhibits perfect adaptation in $v_i \in V$, 
Eq.~\eqref{eq:lim-x-bar-x} implies that the steady-state concentration 
is unaffected by constant-in-time disturbances of parameter $p$, 
\begin{equation}
   p \mapsto p + \delta p . 
\end{equation}
Namely, the system has the ability to reject the effect of disturbances in $p$ on $x_i$.\footnote{ We will discuss the control-theoretical aspects of this phenomenon from Sec.~\ref{sec:control}. }

As a most general case of RPA, a subset $\mathcal V \subset V$ of species and a subset $\mathcal E \subset E$ of reactions can exhibit RPA with respect to a subset of system parameters $\bm p \subset (\bm k, \bm \ell)$. Thus, a {\it generic RPA property} of a system is characterized by these three ingredients, $(\mathcal V, \mathcal E, \bm p)$.

How can we characterize all such generic RPA properties in a given chemical reaction system? 
In a deterministic chemical reaction system, 
the property of a steady state is specified by the values of concentrations and reaction rates at the steady state, $\bar{x}_i$ and $\bar{r}_A$.
In general, the steady-state concentration $\bar x_i$ of species $v_i \in V$
is a function of a subset of system parameters $(\bm k, \bm \ell)$. 
Namely, concentration $x_i$ exhibits RPA under the perturbation of the parameters on which $\bar x_i$ does not depend.
For example, if the steady-state concentration of $v_1$ depends only on $k_1$ and $k_3$, we can write it as $\bar x_1 (k_1, k_3)$, and hence it will exhibit RPA under the perturbation of $k_2, k_4$, or other system parameters except for $k_1$ and $k_3$. 
If we identify the dependencies of all the concentrations and reaction rates on all the parameters, we will have a complete characterization of the RPA properties existing in the system. 

Equivalently, we can perform the perturbation $p \mapsto p + \delta p$ of a chosen parameter $p \in (\bm k, \bm \ell)$, and identify the concentrations and reaction rates whose steady-state values are affected and unaffected by this perturbation.
The unaffected ones exhibit RPA with respect to $p$. 
We call this separation of degrees of freedom whose steady-state values are affected and unaffected by $p$ as {\it an elementary RPA property with respect to parameter $p$}. 
Once we identify elementary RPA properties with respect to all the system parameters, we have a complete characterization of the RPA properties in the system, since a generic RPA property can be obtained by combinations of elementary ones.

In this work, we allow the reaction kinetics to be generic. While mass-action kinetics has often been assumed in prior studies, there exist multiple factors that can undermine this assumption. Moreover, experimental verification of its validity within cellular environments is highly challenging. Accordingly, the RPA properties that we discuss here are of the following nature: 
\begin{definition}[Kinetics-independent RPA]
An RPA property is said to be {\it kinetics-independent} if 
the adaptation occurs regardless of the choice of kinetics 
as long as the stability of the system is maintained. 
\end{definition}\label{def:kinetics-indep-rpa}
For example, even if we observe an RPA property with a system with mass-action kinetics, it may disappear once we deform the kinetics 
away from mass-action kinetics. 
The RPA properties we discuss here are those that survive such deformations of a system.

Let us comment on the nature of perturbations under consideration. 
In this paper, we consider the constant-in-time disturbances of 
reaction parameters and the values of conserved quantities, 
\begin{equation}
k_A 
\mapsto 
k_A + \delta k_A, 
\quad 
\ell_{\bar\alpha} 
\mapsto
\ell_{\bar\alpha} + \delta \ell_{\bar\alpha} . 
\label{eq:pert-k-l}
\end{equation}
When a reaction rate has multiple parameters, 
the perturbation $k_A \mapsto k_A + \delta k_A$ indicates that one of them is disturbed.
Since we allow the kinetics to be a generic one, the perturbation of a reaction parameter, $ k_A \mapsto k_A + \delta k_A$, can be regarded as an arbitrary one-parameter deformation of the reaction rate function.
There can be situations where parameters of several reactions are simultaneously perturbed, for example, through the change of temperatures. 
Once we know the response to the change of each parameter, 
we can use this information to compute the response
under, for example, a temperature variation
by considering the temperature dependence of each parameter, $k_A (T)$.

Let us summarize the assumptions we make throughout the paper.
As we are interested in the phenomenon of RPA, we are here concerned with the situation where a steady state is realized eventually, after adding a constant-in-time perturbation to the system. 
Accordingly, we make the following assumption:
\begin{assm}[Stability]\label{assm:stability}
The reaction system has an asymptotically stable steady state. 
In other words, the Jacobian matrix\footnote{
Here, we mean the Jacobian matrix of a reaction system where the values of conserved quantities are fixed. Thus, there is no zero eigenvalue associated with conserved quantities.
} at the steady state is non-singular and every eigenvalue has a strictly negative real part.
\end{assm}
We also assume the following,
\begin{assm}
The steady-state concentration is nonzero, $\bar x_i (\bm k, \bm \ell) > 0$, 
for any $v_i \in V$, and 
\begin{equation}
\left(
\frac{\p r_A (\bm x ; k_A)}{\p x_i} 
\right)_{\bm x = \bar{\bm x}(\bm k, \bm \ell)} 
\begin{cases}
\neq 0 & \quad \text{if $v_i \in V$ is a reactant of reaction $e_A$} \\
= 0 & \quad \text{otherwise}
\end{cases}. 
\label{eq:drdx-neq-0}
\end{equation}
\end{assm}

\section{ RPA from a topological perspective }  \label{sec:LoL}

We here introduce a basic formalism for characterizing the responses 
of steady states to parameter perturbations. 
Then we introduce the law of  localization~\cite{PhysRevLett.117.048101,PhysRevE.96.022322}, 
that gives us a sufficient condition for RPA based on the network topology.

\subsection{ Notations } 

In this subsection, let us summarize the notations we will use in the following. 
Generically, we will use overlines to indicate quantities at steady state, 
such as $\bar {\bm x} (\bm k, \bm \ell)$ and $\bar{\bm r} ( \bar{\bm x} (\bm k, \bm \ell) ; k_A)$. 
The arguments of a function may be omitted for simplicity. 
The derivative of a steady-state concentration with respect to 
rate parameter $k_B$ 
and the value $\ell^{\bar\alpha}$ of a conserved quantity 
will be denoted as 
\begin{equation}
\p_B \bar x_i
= 
\bar x_{i, B} 
\coloneqq 
\frac{\p}{\p k^B} \bar x_i (\bm k, \bm \ell)
, 
\quad \quad 
\p_{\bar\alpha} \bar x_i
= 
\bar x_{i, {\bar\alpha}} 
\coloneqq 
\frac{\p}{\p \ell^{\bar\alpha} } \bar x_i (\bm k, \bm \ell)
\end{equation}
As for reactions, note that each steady-state reaction rate,
$\bar r_A (\bar{\bm  x} (\bm k, \bm \ell) ; k_A)$,
has an implicit dependence on parameters $(\bm k, \bm \ell)$ through $\bar {\bm x}$, as well as an explicit dependence on a particular parameter $k_A$. 
In the following expressions, we mean the derivative of steady-state reaction rates with respect to parameter $k_B$ 
including the both contributions of implicit and explicit dependence, 
\begin{equation}
\p_{B} \bar r_A 
= 
\bar r_{A,B} 
\coloneqq 
\frac{\p}{\p k_B} 
\bar r_A (\bar{\bm  x} (\bm k, \bm \ell) ; k_A) ,
\end{equation}
and similarly for the derivative with respect to $\ell^{\bar\alpha}$. 
On the other hand, the following expression means the derivative of 
the steady-state reaction rate with respect to $k_B$ 
only through the explicit dependence, 
\begin{equation}
\p_{B} r_A
=
r_{A,B}
\coloneqq 
\left( 
\frac{\p}{\p k_B} r_A (\bm x  ; k_A) 
\right)_{\bm x = \bar{\bm  x}(\bm k, \bm \ell)} . 
\end{equation}
Note that this quantity is nonzero only when $A=B$, 
i.e., $r_{A,B} \propto \delta_{AB}$, where $\delta_{AB}$ 
denotes the Kronecker delta. 
The following expressions indicate the derivative of 
the rate function with respect to a reactant concentration, 
that is evaluated at steady state, 
\begin{equation}
\frac{\p r_A}{\p x_i}
= 
\p_i r_A
=
r_{A,i} 
\coloneqq 
\left( 
\frac{\p
}{\p x_i}
r_A(\bm x ; k_A)
\right)_{\bm x = \bar{\bm x}(\bm k, \bm \ell)}. 
\end{equation}

In the following, we will often choose a subnetwork $\gamma = (V_\gamma, E_\gamma)\subset \Gamma = (V, E)$. 
The degrees of freedom inside $\gamma$ will be denoted with indices with stars, 
while those outside $\gamma$ will be denoted with indices with primes. 
For example, $x_{i^\star}$ 
denotes the concentration of a species inside $\gamma$ 
(i.e., $v_{i^\star} \in V_\gamma$),
while 
$x_{i'}$ denotes the concentration of a species outside $\gamma$, 
$v_{i'} \in V \setminus V_\gamma$. 
Similarly, $r_{A^\star}$ denotes 
the reaction flux of 
a reaction inside $\gamma$, 
$e_{A^\star} \in E_\gamma$, 
while 
$r_{A'}$ denotes the reaction flux of 
a reaction outside $\gamma$, 
$e_{A'} \in E \setminus E_\gamma$.

\subsection{ Response of steady states }

Let us now discuss the response of steady states against parameter perturbations. 
Equation~\eqref{eq:sr} indicates that the steady-state reaction rates are in the kernel of the stoichiometric matrix $S$. Hence, the rates can be written as 
\begin{equation}
  \bar r_A (\bar{\bm x} (\bm k, \bm \ell); k_A ) 
  = \sum_\alpha \mu_\alpha (\bm k, \bm \ell) c^{(\alpha)}_A ,
  \label{eq:r-mu-c}
\end{equation}
where $\{\bm c^{(\alpha)}\}_{\alpha=1,\ldots ,|\alpha|}$ 
is a basis of $\ker S$. 
Taking the derivative of 
Eqs.~(\ref{eq:r-mu-c}) and (\ref{eq:l-dx}) 
with respect to $k_B$ and $\ell^{\bar\beta}$, 
we have
\begin{align}
 \sum_i \frac{\p r_A}{\p x_i} 
  \frac{\p \bar x_i}{\p k_B} 
  + 
  \frac{\p r_A}{\p k_B}
  &=  \sum_\alpha \frac{\p \mu_\alpha }{\p k_B} c^{(\alpha)}_A  , 
  \label{eq:der-1}
  \\
  \sum_i \frac{ \p r_A }  {\p x_i } 
 \frac{\p \bar x_i}{\p \ell^{\bar\beta }} 
 &=  
 \sum_\alpha  \frac{ \p \mu_\alpha }  {\p \ell^{\bar\beta }} c^{(\alpha)}_A , \\
\sum_i d_i^{(\bar\alpha)} \frac{\p \bar x_i}{\p k_B} &= 0,  
 \\
\sum_i  d_i^{(\bar\alpha)} \frac{\p \bar x_i}{\p \ell^{\bar\beta}}
 &=  \delta^{\bar\alpha \bar\beta} . 
\label{eq:der-4}
\end{align}
As noted earlier, steady-state reaction rates $\bar r_A (\bar {\bm x} (\bm k, \bm \ell); k_A)$ have an explicit dependence on $k_A$, as well as an implicit dependence on $\bm k$ and $\bm \ell$ through steady-state concentrations,
$\bar x_i (\bm k, \bm \ell)$, 
and $\frac{\p r_A}{\p k_B}$ indicates the derivative with respect to the {\it explicit} dependence. 
Note also that $\frac{\p r_A}{\p x_i}$ is evaluated at steady state. 
Equations~\eqref{eq:der-1} -- \eqref{eq:der-4} 
can be simplified by introducing a matrix defined by\footnote{
The {\bf A}-matrix is first introduced in Ref.~\cite{MOCHIZUKI2015189} for 
the case $\coker S = \bm 0$ 
and it was extended to the case $\coker S \neq \bm 0$ 
in Ref~\cite{PhysRevE.96.022322}. 
}
\begin{equation}
  {\bf A} 
  \coloneqq 
  \left[
  \begin{array}{c|c}
    \p_i r_A 
    & - c^{(\alpha)}_A \\ \hline 
    d^{(\bar\alpha)}_i & \bm 0_{|\bar\alpha| \times |\alpha|}
  \end{array}
  \right]
  ,
  \label{eq:mat-a-def}
\end{equation}
where $\p_i \coloneqq \p / \p x_i$. 
The matrix ${\bf A}$ is $(|A| + |\bar\alpha|) \times (|i| + |\alpha|)$ dimensional, and 
because of the Fredholm's theorem, $|A| + |\bar \alpha| = |i| + |\alpha|$, it is square.
Using this, Eqs.~\eqref{eq:der-1}--\eqref{eq:der-4} are summarized compactly in the matrix form, 
\begin{equation}
  {\bf A} \,
  \p_B
  \begin{pmatrix}
  \bar {\bm x} \\
  {\bm \mu}
  \end{pmatrix}
  = - 
  \begin{pmatrix}
  \p_B  \bm r  \\
    \bm 0
\end{pmatrix}
, 
\quad 
{\bf A} \, 
\p_{\bar\beta}
\begin{pmatrix}
 \bar{\bm x} \\
 \bm \mu 
\end{pmatrix}
= 
\begin{pmatrix}
 \bm 0 \\
 \p_{\bar\alpha} \bm \ell 
% \delta^{\bar\alpha \bar\beta }
\end{pmatrix}, 
\label{eq:sensitivity-1}
\end{equation}
where, as mentioned earlier,
$\p_B \coloneqq \partial /\partial k^B$ 
and 
$\p_{\bar\beta} \coloneqq \partial /\partial \ell^{\bar\beta}$.

We note that the matrix ${\bf A}$ is invertible when the stability condition \ref{assm:stability} is satisfied.
For the case $\coker S = \bm 0$, 
the invertibility of ${\bf A}$ results 
from the one-to-one correspondence of 
the eigenspectrum of the Jacobian matrix
and the generalized-eigenspectrum of ${\bf A}$.
In Appendix~\ref{sec:spectral-correspondence}, 
we extend this correspondence to the case $\coker S \neq \bm 0$, 
from which the invertibility of ${\bf A}$ follows even in the presence of nontrivial $\coker S$, as long as the stability condition \ref{assm:stability} is satisfied (see Corollary~\ref{cor:invertivility-of-a}). 
Thus, by multiplying ${\bf A}^{-1}$ on Eq.~\eqref{eq:sensitivity-1}, we have
\begin{equation}
 \p_B
 \begin{pmatrix}
  \bar{\bm x} \\
  \bm \mu
  \end{pmatrix}
  = - 
  {\bf A}^{-1}
  \begin{pmatrix}
  \p_B  \bm r \\
    \bm 0
\end{pmatrix}
, 
\quad 
\p_{\bar\beta}
\begin{pmatrix}
\bar{\bm x} \\
\bm \mu
\end{pmatrix}
=  {\bf A}^{-1}
\begin{pmatrix}
 \bm 0 \\
 \p_{\bar\alpha} \bm \ell 
\end{pmatrix}.  
\label{eq:sensitivity-2}
\end{equation}
Note that $\p_B r_A$ is a diagonal matrix, i.e., $\p_B r_A \propto \delta_{BA}$. 
If we partition ${\bf A}^{-1}$ as 
\begin{equation}
  {\bf A}^{-1} = 
  \begin{pmatrix}
    ({\bf A}^{-1})_{iA} & ({\bf A}^{-1})_{i \bar\alpha  } \\
    ({\bf A}^{-1})_{ \alpha A } & ({\bf A}^{-1})_{ \alpha \bar\alpha }  
  \end{pmatrix}, 
\end{equation}
the responses of steady-state concentrations and reaction rates to the perturbations of $k_B$ and $\ell^{\bar\beta}$ are 
proportional to the following components, 
\begin{equation}
  \p_B \bar x_i \propto  ({\bf A}^{-1})_{iB} , 
  \quad 
  \p_{\bar\beta} \bar x_i \propto  ({\bf A}^{-1})_{i \bar\beta } , 
  \quad 
  \p_{B} \mu_{\alpha} \propto ({\bf A}^{-1})_{\alpha B} ,
  \quad 
  \p_{\bar\beta} \mu_{\alpha} \propto ({\bf A}^{-1})_{\alpha \bar\beta }.
  \label{eq:response}
\end{equation}

Let us introduce notations with which 
the sensitivity can be expressed in a concise manner. 
We organize the chemical concentrations and $\mu^\alpha$ 
as a single vector\footnote{We use $\nu,\rho,\sigma, \ldots$ to denote the index of the vector $\bar y$. }, 
\begin{equation}
\bar y^{\nu} \coloneqq
  \begin{pmatrix}
   \bar x^i \\
   \mu^\alpha
  \end{pmatrix}. 
  \label{eq:y-def}
\end{equation}
We also denote the parameters $k^A$
and the values $\ell^{\bar\alpha}$ 
of conserved quantities collectively, 
\begin{equation}
 q^{\nu} \coloneqq 
  \begin{pmatrix}
   k^A \\
   \ell^{\bar \alpha}  
  \end{pmatrix}. 
\label{eq:q-def}
\end{equation}
Note that the vectors $\bar y^\mu$ and $q^\mu$ have the same dimensions, $|\nu| = |A| + |\bar \alpha| = |i| + |\alpha|$. 
Let us also introduce 
\begin{equation}
 L^\nu
  \coloneqq
  \begin{pmatrix}
   r^A \\
   -\ell^{\bar \alpha} 
   \end{pmatrix} . 
\end{equation}
The vector $L^{\nu}$ depends on both variables $y$
and parameters $q$, 
$L^{\nu} = L^{\nu }( y, q)$. 
Its derivative with respect to the parameters 
is represented as a matrix of the the following form,
\begin{equation}
 \frac{ \p L^{\nu} }{\p q^{\rho }}
  =
  \begin{pmatrix}
   \frac{\p r^A}{\p k^B} & \bm 0 \\
   \bm 0 &- \delta^{\bar \alpha \bar \beta}
  \end{pmatrix} . 
\end{equation}
With these notations, 
we can express Eq.~\eqref{eq:sensitivity-1} as a single equation, 
\begin{equation}
\sum_{\rho} 
{\bf  A}_{\nu \rho } \, \frac{\p \bar y^{\rho}}{\p q^{\sigma}} 
 = 
 - \frac{\p L^{\nu}}{\p q^{\sigma}} . 
 \label{eq:a-yq-r} 
\end{equation}

\subsection{ Law of localization }

In this subsection, we introduce the law of localization~\cite{PhysRevLett.117.048101,PhysRevE.96.022322}, 
which gives us a sufficient topological condition 
so that the effect of perturbations inside a subnetwork is localized inside it. 

Let us choose a subnetwork $\gamma$, 
which is specified by subsets of chemical species and reactions, 
$\gamma = (V_\gamma, E_\gamma)$. 
A subnetwork $\gamma$ is called {\it output-complete} 
if $E_\gamma$ includes all the reactions 
whose reactants are in $V_\gamma$. 
For a given output-complete subnetwork $\gamma$, 
its {\it influence index} is defined by 
\begin{equation}
  \lambda(\gamma)
  \coloneqq 
-  |V_\gamma |
 + 
  |E_\gamma |
  - 
  |(\ker S)_{{\rm supp\,} \gamma }|
  + 
  | P^0_\gamma (\coker S) | . 
  \label{eq:index-def} 
\end{equation}
The definitions of the spaces 
that appear in the influence index 
are given as follows: 
\begin{eqnarray}
 (\ker S)_{{\rm supp}\,\gamma}
 &\coloneqq &
 \left\{
  \bm c \in \ker S
  \, \middle| \,
  P^1_\gamma \bm c = \bm c 
  \right\}, 
   \\
   P^0_\gamma (\coker S) 
   &\coloneqq &
   \left\{
   P^0_\gamma \bm d 
   \, \middle| \,
   \bm d \in \coker S 
   \right\}, 
\end{eqnarray}  
where $S$ is the stoichiometric matrix, 
$P^0_\gamma$ and $P^1_\gamma$
are the projection matrices to $\gamma$
in the space of chemical species and reactions, respectively. 
Namely, $(\ker S)_{{\rm supp}\,\gamma}$
is the space of vectors of $\ker S$ 
supported inside $\gamma$, 
and $P^0_\gamma (\coker S)$
is the projection of $\coker S$ to $\gamma$.
We note that the influence index of an output-complete subnetwork is nonnegative under the assumption of stability~\eqref{assm:stability}.

The statement of the law of localization is as follows:\footnote{
In Ref.~\cite{PhysRevLett.117.048101}, $\coker S = \bm 0$ is assumed. The analysis is generalized to the case with $\coker S \neq \bm 0$ in Ref.~\cite{PhysRevE.96.022322}. 
}
\begin{theorem}[Law of localization]  \label{th:lol}
Let $\gamma \subset \Gamma$ be an output-complete subnetwork of 
a deterministic chemical reaction system
satisfying the assumptions in Sec.~\ref{sec:assumptions}. 
When $\gamma$ satisfies $\lambda(\gamma)=0$, 
the steady-state values of 
chemical concentrations and reaction rates 
outside $\gamma$ do not change 
under the perturbation of rate parameters or 
conserved quantities with nonzero support\footnote{
Let us clarify precise meaning of this expression. 
For a given subnetwork, we first take basis $\{ \bm d^{(\bar\alpha')} \}_{\bar \alpha' = 1, \ldots, |\bar\alpha'|}$ of those vectors in $\coker S$ that have support only in $V \setminus V_\gamma$. We then extend this to a basis for the whole $\coker S$ by including vectors $\{ \bm d^{(\bar\alpha^\star)} \}_{\bar \alpha^\star = 1, \ldots, |\bar\alpha^\star|}$ that have nonzero support inside $\gamma$. Therefore if $P_\gamma^0$ denotes the projection matrix to $\gamma$, 
we have 
$P_\gamma^0 \bm d^{(\bar\alpha^\star)} \neq \bm 0$ 
while 
$P_\gamma^0 \bm d^{(\bar\alpha')} = \bm 0$. Note that the vectors $\{
P^0_\gamma \bm d^{\bar\alpha^\star}
\}_{
\bar\alpha^\star = 1, \ldots, |\bar\alpha^\star|
}$
will be linearly independent by construction.
The perturbed parameter $\ell^{\bar\alpha^\star}$ is associated with any basis constructed this way. 
}
inside $\gamma$.
Namely, 
we have 
$\p_{B^\star} \bar x_{i'} = 0$, 
$\p_{B^\star} \bar r_{A'} = 0$,
$\p_{\bar\alpha^\star} \bar x_{i'} = 0$, 
and 
$\p_{\bar\alpha^\star} \bar r_{A'} = 0$
for any 
$e_{B^\star} \in E_\gamma$, 
$v_{i'} \in V \setminus V_\gamma$, 
$e_{A'} \in E \setminus E_\gamma$, 
and 
$\p_{\bar\alpha^\star} 
= \frac{\p}{\p \ell^{\bar\alpha^\star}} 
$
is the derivative with respect to the value of a conserved quantity with nonzero support in $\gamma$. 
\end{theorem}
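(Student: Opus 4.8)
The plan is to reduce everything to the linear sensitivity relation (\ref{eq:sensitivity-1}) and to exploit the block structure that output-completeness forces on the matrix ${\bf A}$ of (\ref{eq:mat-a-def}). First I would reorder the rows of ${\bf A}$ as $(A^\star,\bar\alpha^\star,A',\bar\alpha')$ and its columns as $(i^\star,\alpha^\star,i',\alpha')$, where the kernel basis $\{\bm c^{(\alpha)}\}$ is chosen so that $\alpha^\star$ labels a basis of $(\ker S)_{\mathrm{supp}\,\gamma}$ (the vectors supported inside $\gamma$) extended by $\alpha'$, and the cokernel basis $\{\bm d^{(\bar\alpha)}\}$ is chosen exactly as in the theorem's footnote, so that $\bar\alpha'$ labels vectors with support in $V\setminus V_\gamma$ and $\bar\alpha^\star$ labels the remaining $|P^0_\gamma(\coker S)|$ vectors. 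The goal is to show that in these adapted bases ${\bf A}$ becomes block upper-triangular with respect to the inside/outside split, so that perturbations sourced inside $\gamma$ can only propagate inside $\gamma$.

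The key structural claim is that the outside-row/inside-column block of ${\bf A}$ vanishes. This block has four pieces, each of which I would check is zero: $\p_{i^\star}r_{A'}$ vanishes by output-completeness together with (\ref{eq:drdx-neq-0}), since an outside reaction $e_{A'}$ cannot have a reactant inside $\gamma$; the entry $-c^{(\alpha^\star)}_{A'}$ vanishes because the inside kernel vectors are supported on $E_\gamma$; the entry $d^{(\bar\alpha')}_{i^\star}$ vanishes by the choice of cokernel basis; and the lower-right block is identically $\bm 0$ by definition of ${\bf A}$. Hence ${\bf A}$ is block upper-triangular, with inside diagonal block ${\bf A}_{\star\star}$ of size $(|E_\gamma|+|P^0_\gamma(\coker S)|)\times(|V_\gamma|+|(\ker S)_{\mathrm{supp}\,\gamma}|)$.

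Next I would count dimensions. By (\ref{eq:index-def}), the hypothesis $\lambda(\gamma)=0$ says precisely that ${\bf A}_{\star\star}$ is square; since ${\bf A}$ itself is square, the complementary block ${\bf A}_{''}$ is then square as well. Because ${\bf A}$ is invertible under Assumption~\ref{assm:stability} (Corollary~\ref{cor:invertivility-of-a}) and is block upper-triangular with square diagonal blocks, $\det{\bf A}=\det{\bf A}_{\star\star}\cdot\det{\bf A}_{''}\neq 0$ forces both diagonal blocks to be invertible. Reading off localization from (\ref{eq:sensitivity-1}) is then immediate: for a perturbation of $k_{B^\star}$ with $e_{B^\star}\in E_\gamma$ the source $\p_{B^\star}\bm r\propto\delta_{B^\star A}$ lives only on the inside reaction rows, and for a perturbation of $\ell^{\bar\alpha^\star}$ the source $\delta^{\bar\alpha\bar\alpha^\star}$ lives only on the inside cokernel rows; in both cases the right-hand side lies entirely in the inside rows. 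Block upper-triangularity and invertibility of ${\bf A}_{''}$ then confine the response $\p(\bar{\bm x},\bm\mu)$ to the inside columns, giving $\p_{B^\star}\bar x_{i'}=\p_{\bar\alpha^\star}\bar x_{i'}=0$ and $\p_{B^\star}\mu_{\alpha'}=\p_{\bar\alpha^\star}\mu_{\alpha'}=0$. The vanishing of the outside fluxes follows from (\ref{eq:r-mu-c}): $\p\bar r_{A'}=\sum_\alpha(\p\mu_\alpha)c^{(\alpha)}_{A'}$, where the $\alpha'$ terms vanish because $\p\mu_{\alpha'}=0$ and the $\alpha^\star$ terms vanish because $c^{(\alpha^\star)}_{A'}=0$.

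I expect the main obstacle to be organizational rather than conceptual: carefully setting up the adapted kernel and cokernel bases so that all four sub-blocks genuinely vanish. The subtle point is the cokernel rows, where the inside vectors $\bm d^{(\bar\alpha^\star)}$ may retain nonzero components \emph{outside} $\gamma$; the argument only needs the outside vectors $\bm d^{(\bar\alpha')}$ to vanish on inside species, which is guaranteed by the footnote's basis construction and by the linear independence of the projections $P^0_\gamma\bm d^{(\bar\alpha^\star)}$. Verifying that the dimension count packaged in $\lambda(\gamma)$ matches the squareness of ${\bf A}_{\star\star}$ exactly is then the crux that links the topological index to the linear-algebraic invertibility.
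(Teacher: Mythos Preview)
Your proposal is correct. The paper does not actually give its own proof of this theorem---it simply refers the reader to Refs.~\cite{PhysRevLett.117.048101,PhysRevE.96.022322,PhysRevResearch.3.043123}---and your argument (adapted bases for $\ker S$ and $\coker S$, block upper-triangularity of ${\bf A}$ from output-completeness, squareness of the diagonal blocks from $\lambda(\gamma)=0$, and invertibility of the outside block via Corollary~\ref{cor:invertivility-of-a}) is exactly the standard proof in those references; the same block structure is also invoked in the paper's proof of Theorem~\ref{thm:rpa-bs} (see Fig.~\ref{fig:a-ainv}), where the converse direction is run.
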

For the proof of the theorem, see Refs.~\cite{PhysRevLett.117.048101,PhysRevE.96.022322,PhysRevResearch.3.043123}.

This theorem tells us that the effect of perturbation is confined inside 
a subnetwork with $\lambda (\gamma)=0$ for steady states,
and for this reason, a subnetwork with a zero influence index is called a {\it buffering structure}. 

We note that the law of localization can in fact be interpreted as a form of RPA: the concentrations and reaction rates outside a buffering structure $\gamma$ exhibit RPA with respect to the perturbation of parameters inside $\gamma$. In other words, the law of localization provides us with a sufficient condition for RPA.

In a similar spirit, the structural condition for the reaction fluxes to exhibit RPA has been identified. 
It was shown that, by using a slightly different index, which is called the {\it flux influence index}, $\lambda_{\rm f}(\gamma)$, we can find reaction parameters
under the perturbation of which {\it all} the reaction fluxes 
exhibit RPA~\cite{https://doi.org/10.48550/arxiv.2302.01270}. 
A subnetwork with $\lambda_{\rm f}(\gamma) = 0$ is called 
a {\it strong buffering structure}. 
The flux influence is related to the influence index by 
$\lambda_{\rm f}(\gamma) = 
\lambda(\gamma) + |(\ker S)_{{\rm supp}\,\gamma}|.$
A strong buffering structure is always a buffering structure, 
i.e.\ 
$\lambda_{\rm f}(\gamma) =0$ implies $\lambda(\gamma)=0$, 
which is obvious from the relation between them.

Buffering structures are closed under union and intersection: 
namely, if $\gamma_1$ and $\gamma_2$ are buffering structures, 
so are $\gamma_1 \cap \gamma_2$ and $\gamma_1 \cup \gamma_2$.
This can be proven easily using the submodularity 
of the influence index~\cite{PhysRevResearch.3.043123}. 
The strong buffering structures also are closed
under union and intersection~\cite{https://doi.org/10.48550/arxiv.2302.01270}.

\section{RPA from a control-theoretic perspective}\label{sec:control} 

In this section, we discuss RPA from a control-theoretic perspective.
After introducing a general idea, we discuss maxRPA networks in which one distinguished species exhibit RPA with respect to maximally many parameters in a reaction system.
Then, we discuss how a buffering structure naturally arises for kinetics-independent maxRPA networks. In fact, the existence of such a buffering structure will be shown to be equivalent to the conditions for kinetics-independent maxRPA. 
As we prove in Section \ref{sec:rpa-lbs}, this equivalence goes beyond maxRPA networks, and it extends to all generic RPA properties.

\subsection{RPA and integral feedback control} \label{sec:rpa_int_ctr}

The biological notion of RPA is essentially equivalent to the notion of robust steady-state tracking, that is well known in control theory. This allows us to borrow control-theoretic concepts to understand the structural conditions for networks to exhibit RPA. One such concept is the famous Internal Model Principle (IMP)~\cite{bin2022internal}, that sheds light into the organization of RPA networks. 
In particular, it says that an RPA network $\Gamma$ must contain a subsystem called the \emph{internal model} (IM), that can generate the class of disturbances to which the RPA network adapts.
Furthermore, the IM generates the disturbance using only the regulated output variable (i.e. concentration of $X$) as input, and then passes restorative signals to the \emph{rest of the network} (RoN) in order to eliminate the effect of the disturbance. 
When the output species $X$ belongs to RoN, this creates a natural ``feedback'' between IM and RoN (see Fig.~\ref{fig:imp}). 
Since we only consider constant-in-time perturbations of system parameters as disturbances, the function realized by the IM must be an ``integrator'' which computes the time-integral of the deviation of the output species concentration from its set-point (see Ref.~\cite{gupta2023internal} for more details). 

There can be several instantiations of the IMP depending on the type of disturbances allowed and the nature of the robustness desired, and each version needs to be independently proven. So far the IMP has been shown in complete generality for linear systems \cite{francis1975internal,francis1976internal} and for nonlinear systems that can be decomposed as affine functions of the disturbances~\cite{SONTAG2003119} (see Ref.~\cite{bin2022internal} for more details). It is important to note that typically RPA systems are not naturally structured into the IMP-mandated form (i.e.\ IM and RoN in feedback), but IMP asserts that they can be brought in this form after a nonlinear coordinate transformation (see the incoherent feedforward (IFF) example in Ref.~\cite{gupta2023internal}).

\begin{figure}[t]
  \centering
  \includegraphics
  [clip, trim=4cm 14cm 4cm 0cm, scale=0.54]
  {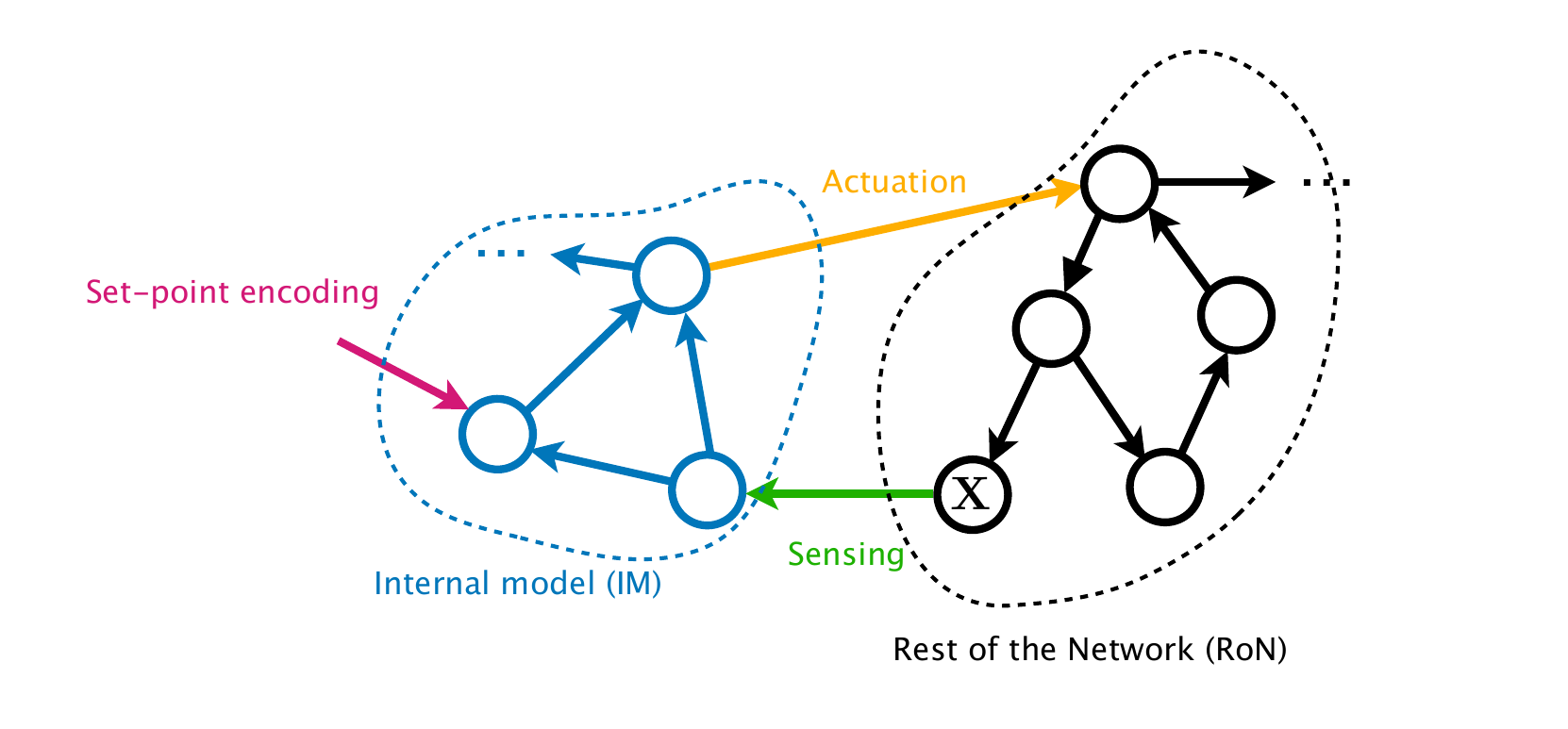} 
 \caption{The RPA network can be decomposed based on the Internal Model Principle (IMP), employing a potential change of coordinates. This decomposition results in two components: an Internal Model (IM) and the Rest of the Network (RoN). By utilizing a feedback mechanism, robustness against disturbances is achieved, with the IM playing a crucial role. The IM actively measures disturbances by evaluating the deviation of the regulated output $\mathbf{X}$ from the desired set-point, and subsequently transmits corrective signals to the RoN, effectively countering the disturbances.}
  \label{fig:imp} 
\end{figure}

\subsection{ Characterization of maxRPA networks}\label{sec:maxrpa_charac}

Here, we discuss maxRPA networks~\cite{gupta2022universal} in detail and present their precise mathematical characterization, which generalizes the characterization result in Ref.~\cite{gupta2022universal}. 

Consider a reaction network $\Gamma =(V, E)$ with $V= \{v_1,\dots,v_M\}$ consisting of $M$ species and $E = \{e_1,\dots, e_N\}$ consisting of $N$ reactions. Suppose that this network regulates its last species $v_M = X$, which we shall call the output species. In particular, there is a set-point at which the concentration of $X$ is maintained.
We say that a network satisfies the maxRPA property if the steady-state concentration of the output species $v_M = X$ only depends on the parameters $k_{\bar{1}}$ and $k_{\bar{2}}$ of the last two reactions $e_{\bar 1} \coloneqq e_{N-1}$ 
and 
$e_{\bar 2} \coloneqq e_{N}$. This implies that there is a function $\phi_{\rm out} $ such that
\begin{equation}
\label{max_rpa_defn}
\lim _{t \to \infty} x(t) = 
\phi_{\rm out} \left(k_{\bar 1}, k_{\bar 2} \right)
\quad \text{for any} \quad (\bm k, \bm \ell) \in K \times L,
\end{equation}
where $x(t)$ denotes the concentration of $X$ at time $t$.
Here the function $\phi_{\rm out}$ provides an encoding between parameters $k_{\bar 1}$ and $k_{\bar 2}$, and the set-point for the maxRPA species. We assume that this function depends non-trivially on both $k_{\bar 1}$ and $k_{\bar 2}$, i.e.\ its partial derivative with respect to both parameters is nonzero. 

When the last two reaction have mass-action kinetics (see Eq.~\eqref{defn:mass_action_kin}), it can be shown that $\phi_{\rm out} $ can only depend on the ratio of $k_{\bar 1}$ and $k_{\bar 2}$ (see Ref.~\cite{gupta2022universal}), but we do not assume mass-action kinetics here and so $\phi_{\rm out}$ is arbitrary. Observe that Eq.~\eqref{max_rpa_defn} implies that the steady-state output concentration is unaffected by constant-in-time disturbances that perturb parameters in $(\bm k, \bm \ell)$, except $k_{\bar 1}$ and $k_{\bar 2}$. 

The characterization result for maxRPA networks in Ref.~\cite{gupta2022universal} assumed that the cokernel of the stoichiometric matrix $S$ is trivial, i.e.\ $\coker S = \bm 0$.
The extension of this characterization result, which we now state, will relax this assumption along with the assumption that reactions $e_{\bar 1}$ and $e_{\bar 2}$ have the mass-action kinetics.
\begin{theorem}[maxRPA characterization result]
\label{thm:maxRPA_characterization}
Assuming that the chemical reaction system $\Gamma =(V, E)$ is stable (see Assumption~\ref{assm:stability}), it exhibits maxRPA for the output species $v_M = X$ if and only if the following two conditions are satisfied:
\begin{enumerate}
\item 
There exists a vector $\bm q \in \mathbb R^{M}$ and a positive $\kappa$ such that 
\begin{equation}
\bm q^\top S =  
\begin{bmatrix}
0, & \cdots,  0, & \kappa, & -1
\end{bmatrix}.
\label{eq:qs}
\end{equation}
If such a pair $({\bm q}, \kappa)$ exists then the value of $\kappa$ is unique and the value of vector $\bm q$ is unique up to addition of vectors in $\coker S$. 

\item The ratio of the rate functions $r_{\bar 1}(\bm x; k_{\bar 1})$ and $r_{\bar 2}(\bm x; k_{\bar 2})$ for the last two reactions, depends on only the output species concentration $x_M$ along with parameters $k_{\bar 1}$ and $k_{\bar 2}$, i.e.\ there exists a function $\Phi(x_M, k_{\bar 1}, k_{\bar 2})$ such that
\begin{align}
\label{max_rpa_ratio_cond}
\frac{r_{\bar 2}(\bm x; k_{\bar 2})}{r_{\bar 1}(\bm x; k_{\bar 1})} = \Phi(x_M, k_{\bar 1}, k_{\bar 2}).
\end{align}
\end{enumerate}
Moreover if these conditions hold then the set-point $\bar{x}_M $ for the output species is uniquely determined by the implicit relation
\begin{align}
\label{max_rpa_gen_set_point}
 \Phi(\bar{x}_M , k_{\bar 1}, k_{\bar 2}) = \kappa
 \end{align}
where $\kappa$ is the same constant as in Eq.~\eqref{eq:qs}. In other words, there is a unique set-point encoding function $\phi_{\rm out} $ such that
\begin{align}
\label{max_rpa_gen_set_point_2}
 \Phi(\phi_{\rm out} \left(k_{\bar 1}, k_{\bar 2} \right), k_{\bar 1}, k_{\bar 2}) = \kappa.
 \end{align}

\end{theorem}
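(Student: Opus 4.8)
The plan is to route both implications through the sensitivity formalism of Sec.~\ref{sec:LoL}. Since maxRPA asserts that $\bar x_M$ is a function of $k_{\bar 1},k_{\bar 2}$ alone, and since $\partial r_B/\partial k_B\neq 0$ for every reaction while the conserved-quantity derivatives are nonzero, Eq.~\eqref{eq:response} shows that maxRPA is \emph{equivalent} to the vanishing of all entries of the $M$-th row of ${\bf A}^{-1}$ except possibly the two columns labelled by reactions $\bar 1$ and $\bar 2$. Denoting this row by $\bm w^\top$ and using $\bm w^\top {\bf A}=\bm e_M^\top$ (with $\bm e_M$ the species-$M$ unit covector), I would split $\bm w=(\bm p,\bm s)$ into its reaction part $\bm p$ and cokernel part $\bm s$ and read off the two column-blocks of ${\bf A}$ from Eq.~\eqref{eq:mat-a-def}.

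First I would treat the kernel block, which gives $\sum_A p_A c^{(\alpha)}_A=0$ for all $\alpha$, i.e.\ $\bm p\in(\ker S)^\perp$, so $\bm p=S^\top\bm q$ for some $\bm q\in\mathbb R^M$. The support condition forces $\bm s=\bm 0$ and $p_A=0$ for $A\neq\bar 1,\bar 2$, whence $\bm q^\top S=[0,\dots,0,p_{\bar 1},p_{\bar 2}]$. To obtain condition 1 in its normalized form I contract the steady-state relation $S\bar{\bm r}=\bm 0$ with $\bm q$, giving $p_{\bar 1}\bar r_{\bar 1}+p_{\bar 2}\bar r_{\bar 2}=0$; positivity of the steady-state fluxes then forces $p_{\bar 1}$ and $p_{\bar 2}$ to be nonzero with opposite signs, so rescaling $\bm q$ by $-1/p_{\bar 2}$ yields $\bm q^\top S=[0,\dots,0,\kappa,-1]$ with $\kappa=-p_{\bar 1}/p_{\bar 2}=\bar r_{\bar 2}/\bar r_{\bar 1}>0$. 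Uniqueness of $\kappa$ and of $\bm q$ modulo $\coker S$ is exactly the statement that the subspace of row-space vectors supported on $\{\bar 1,\bar 2\}$ is one-dimensional.

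For condition 2 I would use the species block with $\bm s=\bm 0$, namely $p_{\bar 1}\partial_i r_{\bar 1}+p_{\bar 2}\partial_i r_{\bar 2}=\delta_{iM}$; for $i\neq M$ this reads $\partial_i r_{\bar 2}/\partial_i r_{\bar 1}=\kappa$ at the steady state, and combined with $\bar r_{\bar 2}/\bar r_{\bar 1}=\kappa$ it gives $\partial_i(r_{\bar 2}/r_{\bar 1})=0$ there. The decisive step is to upgrade this from a pointwise identity to a functional one: because maxRPA is assumed for every $(\bm k,\bm\ell)\in K\times L$, the steady states $\bar{\bm x}(\bm k,\bm\ell)$ sweep out an open set of concentrations, so $r_{\bar 2}/r_{\bar 1}$ is independent of $x_i$ for all $i\neq M$ on that open set, i.e.\ it is a function of $x_M,k_{\bar 1},k_{\bar 2}$ only, which is condition 2. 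The converse is then short: assuming conditions 1 and 2, contracting $S\bar{\bm r}=\bm 0$ with $\bm q$ gives $\kappa\bar r_{\bar 1}=\bar r_{\bar 2}$, and substituting condition 2 yields $\Phi(\bar x_M,k_{\bar 1},k_{\bar 2})=\kappa$; since $\kappa$ is fixed this pins $\bar x_M$ to depend on $k_{\bar 1},k_{\bar 2}$ alone (maxRPA) and simultaneously proves the ``moreover'' set-point relation, with uniqueness of $\bar x_M$ following from the non-degeneracy $\partial\Phi/\partial x_M\neq 0$ via the implicit function theorem.

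\textbf{Main obstacle.} I expect the hard part to be the promotion of the steady-state identities to global functional identities used for condition 2: this needs that the parameter-to-steady-state map has open image, which I would justify by the implicit function theorem using the invertibility of ${\bf A}$ (Corollary~\ref{cor:invertivility-of-a}) together with a genericity argument. Secondary subtleties are the sign and nonvanishing of $\kappa$, which rest on positivity of the steady-state fluxes, and the fact that the cokernel component $\bm s$ must vanish, which uses robustness of $\bar x_M$ against perturbations of the conserved-quantity values in addition to the rate parameters.
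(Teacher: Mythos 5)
Your proposal is correct and arrives at both conditions, but it routes the ``only if'' direction through the $\bf A$-matrix of Eq.~\eqref{eq:mat-a-def} and its inverse, whereas the paper's proof in Appendix~\ref{proof:maxrpageneralizattion} works with the stacked matrix $B=\bigl[\begin{smallmatrix}J\\ D\end{smallmatrix}\bigr]$ and its Moore--Penrose inverse. These are dual formulations of the same linear system: the paper's vector $\tilde{\bm q}^\top=-\bm u_M^\top(B^\top B)^{-1}B^\top$ satisfies $\tilde{\bm q}^\top B=-\bm u_M^\top$, which after substituting $J=S\nabla_{\bm x}\bm r$ and setting $\bm p^\top=-\tilde{\bm q}_1^\top S$ is precisely your species-block equation $\sum_A p_A\,\p_i r_A=\delta_{iM}$, and your kernel-block deduction $\bm p\in(\ker S)^\perp$ is automatic there because $\bm p$ is manufactured as $-S^\top\tilde{\bm q}_1$. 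What your route buys is a shorter path to condition~2: you read $\p_i(r_{\bar 2}/r_{\bar 1})=0$ for $i\neq M$ directly off the species block, while the paper goes through the kernel-vector relation \eqref{max_rpa_reln_for_c} and then a separate collinearity argument for $\bm u_M$ and $\nabla_{\bm x}\Phi$ via the auxiliary vector $\bm q_\Phi$. The ``if'' direction and the set-point relation are handled identically in both proofs (contract $S\bar{\bm r}=\bm 0$ with $\bm q$), and the obstacle you single out --- promoting the pointwise steady-state identity to the functional statement that the ratio depends only on $x_M$ --- is exactly the step the paper discharges with the constant rank theorem, by showing $\nabla_{\bm k,\bm\ell}\bar{\bm x}$ has full rank $M$ so that the steady-state components can be independently perturbed; your implicit-function-theorem fix is the same idea. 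Two small points to tighten: in the species block at $i\neq M$ you should treat the case $\p_i r_{\bar 1}=0$ separately (Eq.~\eqref{eq:drdx-neq-0} then forces $\p_i r_{\bar 2}=0$ as well, so the conclusion survives), and ruling out $p_{\bar 1}=p_{\bar 2}=0$ requires the $i=M$ component of the species block in addition to the positivity of the steady-state fluxes.
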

We present a proof in Appendix \ref{proof:maxrpageneralizattion}.

Note that the first condition~\eqref{eq:qs} is exactly the same as in the case with $\coker S = \bm 0$. 
When $\coker S \neq \bm 0$, the vector $\bm q$ (if it exists) will not be unique, as for any conservation relation 
${\bm d} \in \coker S$, ${\bm d}^\top S = {\bm 0}$ and so $(\bm q+\bm d)^\top S= \bm q^\top S$. However, as will be shown in the proof of Theorem \ref{thm:maxRPA_characterization}, the value of $\kappa$ is always unique, and the value of ${\bm q}$ in unique in the quotient subspace $\mathbb R^{M} / \coker S$.
Therefore, a vector $\bm q \in (\coker S)^{\perp}$ such that $(\bm q, \kappa)$ satisfies Eq.~\eqref{eq:qs} is going to be unique\footnote{
Let us comment on how such a vector $\bm q$ can be identified, along with $\kappa$.
Let us pick a basis $\{\bm d^{(\bar\alpha)} \}_{\bar\alpha=1,\ldots,|\bar\alpha|}$ for $\coker S$, and define the $M \times |\bar\alpha|$ matrix $D$ by horizontally stacking these basis vectors as columns. The condition $\bm q \in (\coker S)^{\perp}$ is equivalent to $\bm q^{\top} D = \bm 0$, and hence $(\bm q, \kappa)$ satisfies the augmented linear system
\begin{equation}
\bm q^\top \begin{bmatrix}S & D\end{bmatrix} =  
\begin{bmatrix}
0, & \cdots,  0, & \kappa, & -1,& 0, & \cdots,  0
\end{bmatrix},
\label{eq:qs2}
\end{equation}
where $\begin{bmatrix}S & D\end{bmatrix}$ is the $M \times (N+ |\bar\alpha|)$ augmented matrix formed by horizontally stacking $S$ and $D$, and the vector on the right has $(N-2)$ zeros at the start and $|\bar\alpha|$ zeros at the end. Observe that the rows of this augmented matrix $\begin{bmatrix}S & D\end{bmatrix}$ are independent because if $\bm d$ in any vector in $\coker  \begin{bmatrix}S & D\end{bmatrix}$ then we must have $\bm d^\top S = \bm 0$ and $\bm d^\top D = \bm 0$, which means that $\bm d$ is present in both $\coker S$ and $(\coker S)^{\perp}$, and that is only possible when $\bm d =0$. Let $\bm u_j$ be the $N$-dimensional vector whose $j$-th component is $1$ and the rest are zeros. Then Eq.~\eqref{eq:qs2} can be equivalently written as the linear system
\begin{equation}
\begin{bmatrix}
S^\top  \ & \ -\bm u_{N-1} \\
D^{\top} \ & \ {\bm 0}
\end{bmatrix}
\begin{bmatrix}
\bm q \\
\kappa
\end{bmatrix} = -\begin{bmatrix} \bm u_{N} \\ \bm 0 \end{bmatrix}.
\label{eq:qs3}
\end{equation}
The pair $(\bm q, \kappa)$ can be obtained by solving this linear equation.
}, where $(\coker S)^{\perp}$ denotes the orthogonal complement of $\coker S$. 

Observe that if we assume the mass-action kinetics for the last two reactions, then they can be written as
\begin{equation}
r_{\bar 1} (\bm x;k_{\bar 1} )= k_{\bar 1} m_1(\bm x_{\bar{M}} )x^{\nu_{\bar{1}}}_M  \quad \textnormal{and} \quad     r_{\bar 2}(\bm x;k_{\bar 2} ) = k_{\bar 2} m_2(\bm x_{\bar{M}} ) x^{\nu_{\bar{2}}}_M,    
\end{equation}
where $\bm x_{\bar{M}}$ is the concentration vector for all the species except $X$, $m_1(\bm x_{\bar{M}} )$ and $m_2(\bm x_{\bar{M}} )$ are monomials (see the definition of mass-action kinetics \eqref{defn:mass_action_kin}), and $\nu_{\bar 1}$ and $\nu_{\bar 2}$ are the numbers reactant molecules of $X$ in reactions $e_{\bar 1}$ and $e_{\bar 2}$. Therefore Eq.~\eqref{max_rpa_ratio_cond} implies that $m_1 = m_2$ and so reactions $e_{\bar 1}$ and $e_{\bar 2}$ have an equal number of all the species (except the output species $X$) as reactants. Setting $\nu \coloneqq \nu_{\bar 2} - \nu_{\bar 1}$, Eq.~\eqref{max_rpa_gen_set_point} becomes equivalent to
\begin{equation}
\bar{x}_M = \left( \kappa \frac{k_{\bar 1} }{ k_{\bar 2}} \right)^{\frac{1}{\nu}},
\end{equation}
which is the set-point encoding function in Ref.~\cite{gupta2022universal}.

We end this section with a couple of important definitions.
\begin{definition}
\label{defn:anti_homo_maxrpa}
A maxRPA network is called \emph{homothetic} is its associated $\bm q$ vector has all its nonzero components of the same sign. Otherwise, the maxRPA network is called \emph{antithetic}.
\end{definition}

Let us define a subnetwork $\bar \gamma \subset \Gamma$ consisting of the output species $v_M = X$ and the last two reactions $e_{\bar 1}$ and $e_{\bar 2}$
\begin{equation}
\label{defn_gamma_comp}
\bar \gamma = (\{X\}, \{e_{\bar 1}, e_{\bar 2} \}). 
\end{equation}
The complement of this subnetwork, 
\begin{equation}
\label{defn_gamma}
\gamma \coloneqq \Gamma \setminus \bar \gamma,
\end{equation}
consists of all the species except $X$ and all the reactions except $e_{\bar 1}$ and $e_{\bar 2}$. Condition \eqref{max_rpa_ratio_cond} shows that for maxRPA to occur, the kinetics of the last two reactions must be \emph{fine-tuned} to match the dependence on reactants other than the output species $X$. The only way to have maxRPA without fine-tuning is that the last two reactions do not have any species (other than $X$) as reactants, which is what we shall call \emph{kinetics-independent} maxRPA.
\begin{definition}
\label{defn:kinetics_indep_maxrpa}
A maxRPA network is said to be \emph{kinetics-independent} if the reactions rates of the two set-point determining reactions (i.e.\ $e_{\bar 1}$ and $e_{\bar 2}$) do not depend on any species except the output species $X$. In other words, only $X$ can be a reactant for these two reactants.
\end{definition}
Note that a maxRPA network becomes kinetics-independent, precisely in the scenario where the subnetwork $\gamma$ is output-complete, because all reactions involving species in $\gamma$ are in $\gamma$. In fact, Theorem \ref{thm:maxrpa_equivalence} will show that $\gamma$ is a buffering structure, i.e. its influence index is zero, $\lambda(\gamma) = 0$.

\subsection{An Internal Model Principle for maxRPA networks} \label{maxrpa_imp}

We now discuss how an Internal Model Principle can be formulated for maxRPA networks based on the unique vector $\bm q$ that characterizes them. Consider a maxRPA network with stoichiometric matrix $S$, and a pair $(\bm q, \kappa)$ satisfying Eq.~\eqref{eq:qs} with $\bm q \in (\coker S)^\perp$ and $\kappa > 0$. Recall that $\Phi(x_M, k_{\bar 1}, k_{\bar 2})$ is the ratio of rate functions $r_{\bar 2}(\bm x; k_{\bar{2}})/r_{\bar 1}(\bm x; k_{\bar{1}})$ which can only depend on the output species concentration $x_M$ as per Condition 2 of Theorem \ref{thm:maxRPA_characterization}. Defining 
\begin{equation}
z \coloneqq \bm q \cdot \bm x,
\end{equation}
we see that this is an \emph{integrator} for the maxRPA networks as its time-derivative is proportional to the ``error" in Eq.~\eqref{max_rpa_gen_set_point}, 
\begin{align*}
\dot z = \bm q^\top S \ \bm r(\bm x, \bm k) = 
\begin{bmatrix}
0 & \cdots & 0 & \kappa & -1
\end{bmatrix} {\bm r(\bm x, \bm k) } &= \kappa \ r_{\bar 1}(\bm x; k_{\bar{1}}) - r_{\bar 2}(\bm x; k_{\bar 2}) = \ r_{\bar 1}(\bm x; k_{\bar{1}}) (\kappa  - \Phi(x_M, k_{\bar 1}, k_{\bar 2})).
\end{align*}
Since the network dynamics is stable, the presence of this integrator ensures that the dynamics is driven to a steady-state where Eq.~\eqref{max_rpa_gen_set_point} holds, and hence the set-point $\bar{x}_M$ for the output species is only a function of $k_{\bar 1}$ and $k_{\bar 2}$. 

The Internal Model for the maxRPA network consists of species that form the support of the vector
$\bm q = \begin{bmatrix} q_1 & \cdots & q_M \end{bmatrix}^\top$, 
namely, 
\begin{equation}
V_{\rm IM} = \left\{ v_i \in V \, \middle| \, q_i \neq 0 \right\}.
\end{equation}
In order to satisfy Condition 2 of Theorem \ref{thm:maxRPA_characterization}, at least one of the last two reactions $e_{\bar{1}}$ and $e_{\bar{2}}$ must have the output species $X$ as a reactant. Without loss of generality, we may assume that $e_{\bar{2}}$ has $X$ as a reactant, and we call it the output \emph{sensing} reaction, while we refer to the other reaction $e_{\bar 1}$ as the \emph{set-point encoding} reaction. 
In a number of situations, the output species $v_M = X$ does not belong to $V_{\rm IM}$
\footnote{For example, this would happen when there is a combination of first $(N-2)$ reactions and conservation relations in $\coker S$ that only modifies the output species $v_M$, i.e. there exists a vector $\bm y$ such that
$\begin{bmatrix}\bar{S} & D \end{bmatrix} \bm y = \bm u_M,$
where $\bar{S}$ is the matrix formed by removing the last two columns from $S$, $D$ is the matrix with basis vectors of $\coker S$ as columns and $\bm u_M$ is a $M$-dimensional vector whose last component is $1$ and the rest are zeros. Note that since $\bm q$ satisfies Eq.~\eqref{eq:qs2} we would have $ q_M = {\bm q}^\top u_m =    {\bm q}^\top \begin{bmatrix} \bar{S} &  D\end{bmatrix} \bm y=  0$. 
}(i.e.\ $q_M = 0$) and so $X$ belongs to the rest of the network, giving rise to the IMP decomposition shown in Fig.~\ref{fig:schematic-equivalence}. The \emph{actuation} reactions from IM to RoN are necessary to complete the feedback loop and ensure network stability, but the form of these reactions can be arbitrary.

The characterization result for maxRPA networks (Theorem \ref{thm:maxRPA_characterization}) has two conditions, in which the first one is structural (i.e. Eq.~\eqref{eq:qs}) while the second condition (i.e.\ Eq.~\eqref{max_rpa_ratio_cond}) depends on the kinetics of the last two reactions. 
If we consider kinetics-independent maxRPA networks (see Definition \ref{defn:kinetics_indep_maxrpa}) then the second condition also becomes structural, and in this situation the two conditions become equivalent to the subnetwork $\gamma$, defined by Eq.~\eqref{defn_gamma}, being a buffering structure.
The next theorem states this equivalence and one can find its schematic illustration in Fig.~\ref{fig:schematic-equivalence}.  

\begin{theorem}
\label{thm:maxrpa_equivalence}
Consider a network $\Gamma$ and define its subnetwork $\gamma$ by Eq.~\eqref{defn_gamma}. Then, $\Gamma$ is a kinetics-independent maxRPA network if and only if the subnetwork $\gamma$ is a buffering structure, i.e.\ it is output-complete with a zero influence index $\lambda(\gamma) = 0$.
\end{theorem}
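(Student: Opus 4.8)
The plan is to prove the two implications separately, after first reorganizing the statement into a single structural equivalence. Output-completeness of $\gamma$ is literally the assertion that no reaction outside $E_\gamma$---that is, neither $e_{\bar 1}$ nor $e_{\bar 2}$---has a reactant in $V_\gamma = V\setminus\{X\}$; this is word-for-word the kinetics-independence condition of Definition~\ref{defn:kinetics_indep_maxrpa}, and it makes the ratio $r_{\bar 2}/r_{\bar 1}$ automatically a function of $x_M$ alone, so that Condition~2 of Theorem~\ref{thm:maxRPA_characterization} holds for free. Hence the theorem reduces to showing that, \emph{given output-completeness of $\gamma$}, the structural condition \eqref{eq:qs} of Theorem~\ref{thm:maxRPA_characterization} is equivalent to $\lambda(\gamma)=0$. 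The bridge between the control-theoretic condition and the topological index will be an explicit evaluation of $\lambda(\gamma)$ for this particular $\gamma$.

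The key computation is to evaluate the four terms of \eqref{eq:index-def}. Let $\pi\colon \ker S \to \mathbb R^2$ send a flux mode $\bm c$ to its two distinguished components $(c_{\bar 1},c_{\bar 2})$, put $n_\pi \coloneqq \dim \pi(\ker S)\in\{0,1,2\}$, and let $s\in\{0,1\}$ equal $1$ precisely when the indicator vector $\bm u_M$ of the output species lies in $\coker S$ (equivalently, row $M$ of $S$ vanishes). Since $|V_\gamma|=M-1$ and $|E_\gamma|=N-2$, since $(\ker S)_{\mathrm{supp}\,\gamma}=\ker\pi$ so that $\dim(\ker S)_{\mathrm{supp}\,\gamma}=\dim\ker S-n_\pi$, and since $P^0_\gamma$ kills exactly the $X$-supported conserved quantity so that $\dim P^0_\gamma(\coker S)=\dim\coker S-s$, the rank--nullity identity $N-M=\dim\ker S-\dim\coker S$ collapses \eqref{eq:index-def} to the clean formula $\lambda(\gamma)=n_\pi-1-s$.

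For the direction ``maxRPA $\Rightarrow$ buffering'' I would start from \eqref{eq:qs}: the existence of $(\bm q,\kappa)$ says that the reaction-space vector carrying $\kappa$ in slot $\bar 1$ and $-1$ in slot $\bar 2$ lies in $\mathrm{Im}(S^\top)=(\ker S)^\perp$, which forces $\kappa\, c_{\bar 1}=c_{\bar 2}$ for every $\bm c\in\ker S$; hence $\pi(\ker S)$ is contained in a single line and $n_\pi\le 1$. Stability (Assumption~\ref{assm:stability}) guarantees $\lambda(\gamma)\ge 0$ for the output-complete $\gamma$, i.e.\ $n_\pi\ge 1+s$; combined with $n_\pi\le1$ this pins down $n_\pi=1$, $s=0$, and therefore $\lambda(\gamma)=0$.

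For the converse ``buffering $\Rightarrow$ maxRPA'' the cleanest route is the law of localization rather than a direct sign chase. With $\lambda(\gamma)=0$ and $\gamma$ output-complete, Theorem~\ref{th:lol} makes the steady-state concentration of the only outside species, $\bar x_M$, insensitive to every parameter carried inside $\gamma$; in the generic case $s=0$ all conserved quantities are supported inside $\gamma$, so $\bar x_M$ can depend only on $k_{\bar 1}$ and $k_{\bar 2}$, which is exactly maxRPA, and Theorem~\ref{thm:maxRPA_characterization} then certifies the network as kinetics-independent maxRPA and in particular recovers \eqref{eq:qs} with $\kappa>0$. The hardest part is exactly the two places where a sign or a degeneracy must be controlled: establishing $\kappa>0$---equivalently that the single surviving flux mode has $c_{\bar 1}$ and $c_{\bar 2}$ of the same sign, which reflects positivity of the steady-state fluxes and is delivered by the maxRPA characterization---and excluding the spurious solution $n_\pi=2$, $s=1$ of $\lambda(\gamma)=0$, in which $X$ is itself a conserved quantity ($\bm u_M\in\coker S$). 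This last case must be removed by the standing requirement that the output species be genuinely produced or consumed (row $M$ of $S$ nonzero, so $s=0$); without it $X$ never moves and no nontrivial set-point encoding exists, so such a network would not be a maxRPA network to begin with.
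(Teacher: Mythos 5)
Your proof is correct, and it takes a genuinely different route from the paper's. The paper proves the equivalence through the decomposition $\lambda(\gamma)=\wt{c}(\gamma)+d_l(\gamma)-\wt{d}(\gamma)$ of Eq.~\eqref{eq:lambda-decom}: in the direction ``buffering $\Rightarrow$ maxRPA'' it splits into the cases $\wt{c}(\gamma)=0$ and $\wt{c}(\gamma)\neq 0$ and \emph{explicitly constructs} the vector $\bm q$, either as $c\begin{bmatrix}-S_{21}S_{11}^{+} & 1\end{bmatrix}^{\top}$ via the generalized Schur complement (Eq.~\eqref{eq:q-vec-case1}) or as an emergent conserved quantity of $\gamma$; the converse is a case analysis on $d_l(\gamma)$ and on whether $q_2=0$. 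You instead collapse the index to the closed form $\lambda(\gamma)=n_\pi-1-s$ by rank--nullity, which turns the direction ``maxRPA $\Rightarrow$ buffering'' into a two-line orthogonality argument ($[0,\dots,0,\kappa,-1]^{\top}\in\operatorname{im}S^{\top}=(\ker S)^{\perp}$ forces $n_\pi\le 1$, while stability forces $\lambda(\gamma)\ge 0$, i.e.\ $n_\pi\ge 1+s$), and you delegate the converse to Theorem~\ref{th:lol} plus the already-proved Theorem~\ref{thm:maxRPA_characterization} rather than building $\bm q$ by hand. Both arguments lean on the same two external facts --- nonnegativity of the influence index under Assumption~\ref{assm:stability}, and the exclusion of the degenerate case where $X$ is itself a conserved quantity (your $(n_\pi,s)=(2,1)$ branch, which the paper dismisses in a footnote by appealing to the nontrivial dependence of the set-point on $k_{\bar 1},k_{\bar 2}$) --- so you are not assuming anything the paper does not. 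What you lose relative to the paper is the explicit formula for $\bm q$ and its identification with an emergent conserved quantity or a Schur-complement reduction morphism, which the paper reuses downstream (Proposition~\ref{prop:maxrpa}, the IMP discussion, and the integrator construction of Sec.~\ref{sec:integrator-general}); what you gain is a shorter, more self-contained proof of the bare equivalence in which the two sides of the theorem are matched through a single integer identity rather than a case analysis.
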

The proof of this theorem will be given in Sec.~\ref{sec:proof-th1}.

\begin{figure}[tb]
  \centering
  \includegraphics
  [clip, trim=6cm 9cm 6cm 3cm, scale=0.58]
  {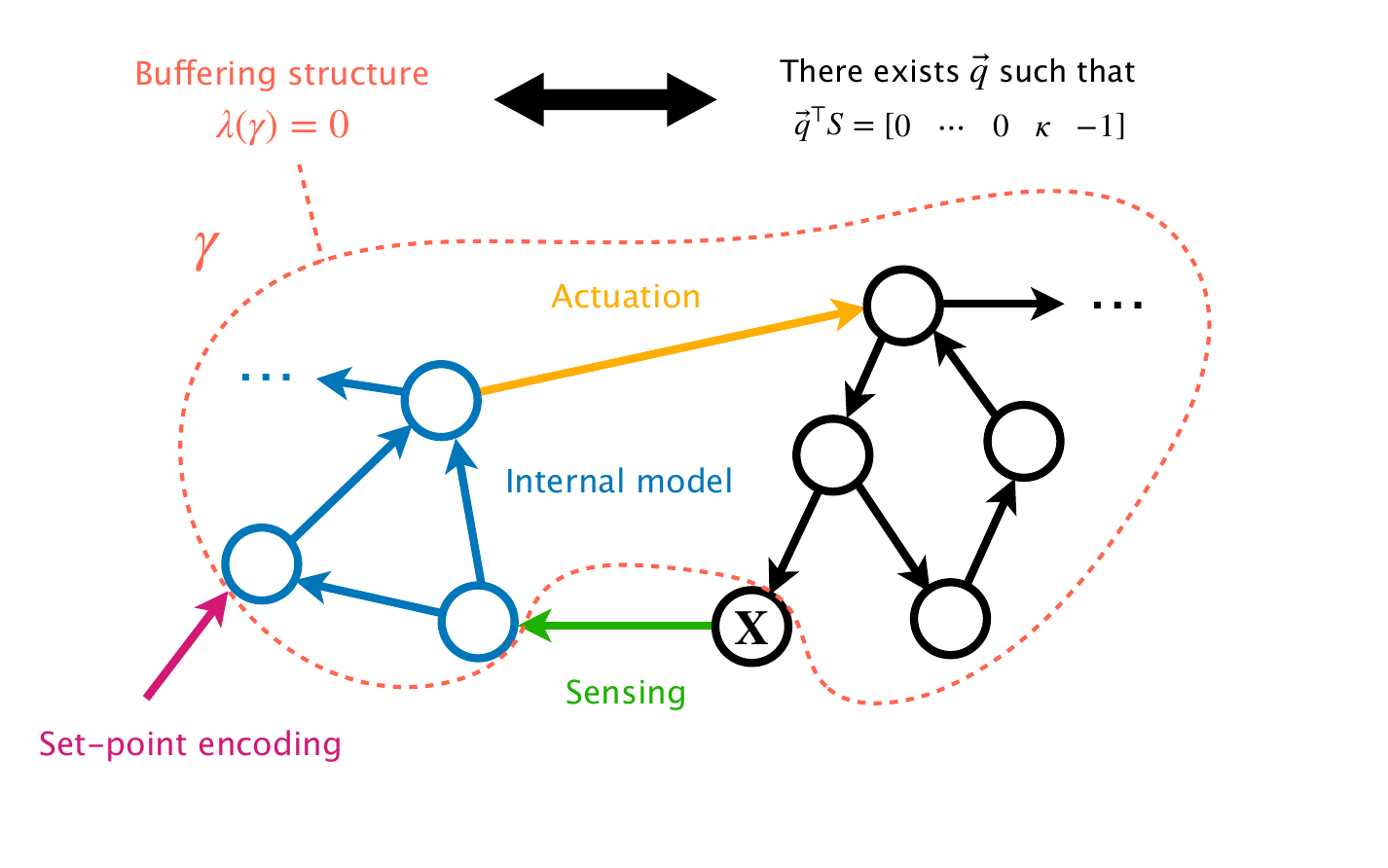}
  \caption{
  Schematic figure representing the equivalence. 
  The stoichiometric condition~\eqref{eq:qs} for the system to exhibit maxRPA is equivalent to the condition the part of the network encircled by a dashed line has a zero influence index, $\lambda(\gamma) = 0$.}
  \label{fig:schematic-equivalence} 
\end{figure}

We now briefly discuss how this topological characterization of the kinetics-independent maxRPA property is connected to the more general results in the next section. As per the terminology introduced in Sec.~\ref{sec:assumptions}, we can view maxRPA as a generic RPA property $(\mathcal{V}, \mathcal{E}, \bm p)$ where $\mathcal{V} = \{X\}$, $\mathcal{E} = \{e_{\bar 1}, e_{\bar 2}\}$ and $\bm p$ is the set of all system parameters except $k_{\bar 1}$ and $k_{\bar 2}$ which appear in the kinetics of reactions in $\mathcal{E}$. Note that $\gamma= \Gamma \setminus (\mathcal{V}, \mathcal{E})$ is the subnetwork consisting of all the species except $X$, and all the reactions except $e_{\bar 1}$ and $e_{\bar 2}$. It will be shown later that kinetics-independence and network stability, together imply that at the steady-state, the concentrations and rates of all the species and reactions in $\gamma$ are \emph{sensitive} to at least one of the parameters in $\bm p$. This ensures that if for each parameter $p_\mu$ in $\bm p$, if $\gamma_\mu$ is the subnetwork of all the species and reactions that are sensitive to $p_\mu$, then $\gamma$ is in fact the union of all such $\gamma_\mu$-s. This is made mathematically precise in the next section, where each $\gamma_\mu$ is identified as a \emph{labelled buffering structure}. As buffering structures are closed under the union operation, it follows that $\gamma$ is itself a buffering structure, as asserted by Theorem \ref{thm:maxrpa_equivalence}, and it shows that the maxRPA property is generated by all the labelled buffering structures corresponding to parameters in $\bm p$. The central goal of this paper, is to demonstrate that these arguments can be extended to any generic kinetics-independent RPA property (not just maxRPA) and hence the identification of all labelled buffering structures is sufficient to completely characterise and systematically identify all such generic RPA properties.

\section{ Characterization of all RPA properties }\label{sec:rpa-lbs}

The law of localization discussed in Sec.~\ref{sec:LoL} gives us a sufficient condition for a generic RPA, and the condition is topological, meaning that it is determined by network topology.
A natural question arises as to whether any RPA property can be captured with topological criteria.  
In this section, we show that 
{\it all} the elementary RPA properties that are kinetics-independent 
can be indeed characterized/found by labeled buffering structures,
which are buffering structures annotated with supplementary information. 
Since generic RPA properties can be constructed from elementary ones, we thus have a complete characterization of all RPA properties.
We first introduce the transitivity of influence 
in Sec.~\ref{sec:transitivity}, that will be used later
when we show the one-to-one correspondence of the RPA properties 
and labeled buffering structures in Sec.~\ref{sec:rpa-lbs-thm}.
We then present a computational method to identify labeled buffering structures in Sec.~\ref{sec:num-alg} and discuss examples in Sec.~\ref{sec:lbs-examples}.

\subsection{ Transitivity of influence }\label{sec:transitivity}

Here we introduce the transitivity of influence. 
The transitivity of influence for reaction perturbations is established first for the monomolecular case~\cite{doi:10.1002/mma.3436,https://doi.org/10.1002/mma.4557} and then for the multimolecular case~\cite{doi:10.1002/mma.4668} (for the case $\coker S = \bm 0$).

Let us prepare notations. 
For reactions $e_A, e_B \in E$, if $\p_A \bar r_B \neq 0$ holds,
we say ``$e_A$ influences $e_B$'' and 
represent this in symbols as
\begin{equation}
e_A \leadsto e_B . 
\end{equation}
On the other hand, if $\p_A \bar r_B = 0$, 
we say that ``$e_A$ does not influence $e_B$'' and 
we write this as $e_A \not \leadsto e_B$.
Similarly, if $\p_A \bar x_i \neq 0$
for some $v_i \in V$ and $e_A \in E$, 
we express this as 
\begin{equation}
e_A \leadsto v_i. 
\end{equation}
When $v_i \in V$ is a reactant of reaction $e_A$ (i.e.\ $s_{iA} \neq 0$), 
we write 
\begin{equation}
   v_i \vdash e_A . 
\end{equation}
We may also express this with the corresponding indices 
as $i \,  \vdash A$. 
While the absence of conserved quantities is assumed in 
Ref.~\cite{doi:10.1002/mma.4668}, 
we here allow them and thus consider the perturbation of conserved quantities as well as reaction parameters.
When the perturbation of $\ell_{\bar\alpha}$ affects 
the flux of $e_A$ (i.e., $\p_{\bar\alpha} \bar r_{A} \neq 0$),
we express this as
\begin{equation}
\ell_{\bar\alpha}  \leadsto e_A. 
\end{equation}

The transitivity of influence 
is stated as follows~\cite{doi:10.1002/mma.4668}: 
\begin{theorem}[Transitivity of influence]
If reaction $e_A$ influences $e_B$ and
$e_B$ influences $e_C$, then $e_A$ influences $e_C$. 
In symbols, 
\begin{equation}
e_A \leadsto e_B \leadsto e_C 
\implies 
e_A \leadsto e_C. 
\label{eq:a-b-c-to-a-c}
\end{equation}
\end{theorem}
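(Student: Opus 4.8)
The plan is to recast the relation $e_A \leadsto e_B$ as a statement about the \emph{confinement} of steady-state responses and then to exploit the law of localization (Theorem~\ref{th:lol}) together with the lattice structure of buffering structures. Recall from Eqs.~\eqref{eq:response}--\eqref{eq:a-yq-r} that perturbing $k_A$ produces a response which, up to the nonzero factor $r_{A,A}$, is the $A$-th column of $-{\bf A}^{-1}$, and that the induced flux response is $\p_A \bar r_B = \sum_\alpha (\p_A \mu_\alpha) c^{(\alpha)}_B$. Hence $e_A \leadsto e_B$ is precisely the nonvanishing of a specific component of the $A$-perturbation response, and the problem is to show that reaction-flux influence propagates transitively.

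For a fixed $e_A$ I would introduce its \emph{influenced subnetwork} $\gamma_A=(V_A,E_A)$, with $V_A \coloneqq \{v_i \mid e_A \leadsto v_i\}$ and $E_A \coloneqq \{e_B \mid e_A \leadsto e_B\}\cup\{e_A\}$, and prove the key lemma that $\gamma_A$ coincides with the smallest buffering structure $\beta(A)$ containing $e_A$. Such a $\beta(A)$ exists because $\Gamma$ itself is a buffering structure ($\lambda(\Gamma)=0$ by rank--nullity) and buffering structures are closed under intersection. One inclusion, $\gamma_A \subseteq \beta(A)$, is immediate from the law of localization: $k_A$ is a parameter inside the buffering structure $\beta(A)$, so the response is confined to $\beta(A)$, whence its support $\gamma_A$ lies in $\beta(A)$. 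The reverse inclusion, $\beta(A)\subseteq \gamma_A$, follows once $\gamma_A$ is itself shown to be a buffering structure, by minimality of $\beta(A)$. Granting the lemma, transitivity is clean: if $e_A \leadsto e_B$ then $e_B \in \gamma_A = \beta(A)$, and since $\beta(A)$ is a buffering structure containing $e_B$, minimality gives $\beta(B)\subseteq \beta(A)$; then $e_B \leadsto e_C$ forces $e_C \in \gamma_B = \beta(B) \subseteq \beta(A) = \gamma_A$, so $e_C \in E_A$, i.e.\ $e_A \leadsto e_C$. The degenerate closing case $e_C = e_A$ (an influence cycle) requires separately verifying the self-influence $e_A \leadsto e_A$, which I would dispatch by a short direct argument.

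The real work concentrates in the lemma, which splits into two parts. \emph{Output-completeness} of $\gamma_A$ requires that whenever an influenced species $v_i \in V_A$ is a reactant of some $e_B$ (so $v_i \vdash e_B$), the flux $\bar r_B$ also responds. Since $\p_A \bar r_B = \sum_j r_{B,j}\, \p_A \bar x_j$ for $B\neq A$, and since the nondegeneracy condition~\eqref{eq:drdx-neq-0} guarantees $r_{B,i}\neq 0$ on reactants while $\p_A \bar x_i \neq 0$ by hypothesis, the only way this can fail is an accidental cancellation among the summands; this is exactly where genericity of the kinetics enters, and I would invoke it to exclude such non-generic cancellations. The vanishing of the index, $\lambda(\gamma_A)=0$, I would obtain by restricting the sensitivity relation~\eqref{eq:a-yq-r} to the $\gamma_A$-block: because no quantity outside $\gamma_A$ responds, the response is the \emph{unique} solution of the $\gamma_A$-restricted linear system, and unique solvability of this square block (driven by invertibility of the full ${\bf A}$ under Assumption~\ref{assm:stability}) is equivalent, through the derivation underlying Theorem~\ref{th:lol}, to $\lambda(\gamma_A)=0$.

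The main obstacle is thus establishing that $\gamma_A$ is genuinely a buffering structure --- in particular the no-cancellation step in output-completeness and the index computation --- rather than merely being contained in one. I would close with two remarks. First, the entire argument is carried out with the augmented matrix ${\bf A}$ of Eq.~\eqref{eq:mat-a-def}, which already incorporates $\coker S$ through its $d^{(\bar\alpha)}_i$ block; this is what extends the statement, proved for $\coker S = \bm 0$ in Ref.~\cite{doi:10.1002/mma.4668}, to the present setting with conserved quantities and their perturbations $\ell_{\bar\alpha}\leadsto e_A$. Second, an alternative route bypasses buffering structures altogether: expanding the relevant minors of ${\bf A}^{-1}$ by Cramer's rule characterizes $e_A\leadsto e_B$ combinatorially as reachability in the bipartite species--reaction graph, and transitivity of reachability then yields the theorem directly. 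I would nonetheless keep the topological argument above as the primary proof, since it reuses the law of localization already in hand and extends transparently to $\coker S \neq \bm 0$.
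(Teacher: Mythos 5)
Your plan has a circularity at its core. The key lemma you need --- that the influenced subnetwork $\gamma_A$, suitably completed, is a buffering structure --- is essentially Theorem~\ref{thm:rpa-bs} of the paper, and the paper's proof of that theorem \emph{uses} transitivity as its very first step (to show that perturbations of every reaction in $E_A$, not just of $e_A$ itself, are confined to $\gamma_A$). Your proposed shortcut for the index computation does not close this loop: confinement of the response to the single perturbation $k_A$ only tells you that one particular right-hand side lies in the column space of the restricted block ${\bf A}_{11}$; it does not force ${\bf A}_{11}$ to be square. The squareness argument in the paper needs the entire lower-left block of ${\bf A}^{-1}$ to vanish, i.e.\ confinement of the perturbations of \emph{all} internal reactions, which is exactly what transitivity supplies. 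So $\lambda(\gamma_A)=0$ cannot be extracted from the hypotheses you allow yourself without already having the theorem you are trying to prove.

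The output-completeness step is also wrong as stated: the cancellations you dismiss as ``accidental'' are in fact structural and survive generic kinetics. In the sniffer system of Sec.~\ref{sec:sniffer_system}, perturbing $k_3$ influences both $v_1$ and $v_2$, both of which are reactants of $e_2$, yet $\p_3\bar r_2=0$ identically (this is why $e_2$ appears in $\mathcal E_3$ rather than $E_3$ in the labeled buffering structure $\gamma_3$); the flux $\bar r_B$ of a reaction with influenced reactants need not be influenced. This is precisely why the paper distinguishes $E_A$ from the added set $\mathcal E_A$ and must separately prove that perturbations of reactions in $\mathcal E_A$ are confined. The paper's actual proof of transitivity avoids all of this machinery: it first reduces \eqref{eq:a-b-c-to-a-c} to the chain $e_A\leadsto v_i\vdash e_B\leadsto e_C$, then applies the explicit second-order response formula for $\bar r_{C,AB}$ from Appendix~\ref{sec:second-order-formula} to show that if $\bar r_{C,A}$ vanished while $\bar r_{C,B}\neq 0$ and $\bar x_{i,A}\neq 0$, an infinitesimal shift of $k_B$ would make it nonzero, so the vanishing would require fine-tuning. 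I would recommend abandoning the topological route for this particular theorem (it is downstream, not upstream, of transitivity) and, if you pursue the Cramer's-rule alternative, be aware that influence is strictly stronger than reachability in the species--reaction graph, as the same sniffer example shows.
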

\begin{proof}
The proof we describe here is slightly different from the one 
given in Ref.~\cite{doi:10.1002/mma.4668}. 
While $\coker S=\bm 0$ is assumed in Ref.~\cite{doi:10.1002/mma.4668},
the following proof is valid for $\coker S \neq \bm 0$.

We note that, to prove Eq.~\eqref{eq:a-b-c-to-a-c}, 
it suffices to show that, 
if reaction $e_A$ influences $v_i$
and $v_i$ is a reactant of $e_B$ and 
$e_B$ influences $e_C$, 
then $e_A$ influences $e_C$. 
In symbols, 
\begin{equation}
e_A \leadsto v_i \vdash e_B \leadsto e_C 
\implies 
e_A \leadsto e_C. \label{eq:a-i-b-c-to-a-c}
\end{equation}

Let us first show that Eq.~\eqref{eq:a-i-b-c-to-a-c} implies 
Eq.~\eqref{eq:a-b-c-to-a-c}. 
When $e_A = e_B$ or $e_B = e_C$ is true, 
Eq.~\eqref{eq:a-b-c-to-a-c} is always satisfied, 
so it is sufficient to consider the case where 
$e_A = e_C \neq e_B$ or
$e_A$, $e_B$, $e_C$ are all different. 
Since $e_A \leadsto e_B$, 
\begin{equation}
0 \neq 
\bar r_{B, A} 
= 
\sum_{i \, \vdash B } r_{B,i} \, \bar x_{i,A}, 
\end{equation}
where the summation is over the species that are reactants of $e_B$.
For this to be nonzero, there must be a species $v_i \vdash e_B$  
such that $e_A \leadsto v_i$. 
By Eq.~\eqref{eq:a-i-b-c-to-a-c}, 
this implies $e_A \leadsto e_C$, and Eq.~\eqref{eq:a-b-c-to-a-c} holds.

Now let us prove Eq.~\eqref{eq:a-i-b-c-to-a-c}. 
We here use the formula for 
the second-order response of steady-state fluxes\footnote{
Note that this formula is correct when
$e_A$, $e_B$, $e_C$ are all different or 
either of $e_A \neq e_B$ is equal to $e_C$. 
}, 
which can be derived straightforwardly from Eq.~\eqref{eq:sensitivity-2}
(see Appendix~\ref{sec:second-order-formula} for derivation), 
\begin{equation}
\bar r_{C,AB}
=     
\bar r_{C,A} F_{A,B}
+
\bar r_{C,B} F_{B,A} 
+ 
\sum_{D} 
\sum_{j \, \vdash D} \sum_{k \, \vdash D}
\bar r_{C,D} \,
\frac{r_{D,jk} }{r_{D,D}}
\,
\bar x_{j,A} \,
\bar x_{k,B} , 
\end{equation}
where 
$
\bar r_{C,AB} \coloneqq 
\frac{\p^2}{\p k^A \p k^B} 
\bar r_{C} (\bar{\bm x}(\bm k, \bm \ell); k_C) 
$, 
$
r_{D,jk} \coloneqq 
(\frac{\p^2}{\p x^j \p x^k} r_{D} (\bm x ; k_D) )_{\bm x = \bar{ \bm x}(\bm k, \bm \ell)}
$,
$
r_{A,A} \coloneqq 
(\frac{\p}{\p k_A} r_{A}(\bm x ; k_A) )_{\bm x = \bar{\bm x}(\bm k, \bm \ell)} 
$, 
$F_{A} \coloneqq \ln r_{A,A}$ 
and 
$F_{A,B} \coloneqq 
\frac{\p}{\p k_B} 
\left[ 
(\ln r_{A,A})_{\bm x = \bar{ \bm x}(\bm k, \bm \ell)}
\right]
$. 
From the second term, we have the following 
contribution, 
\begin{equation}
\bar r_{C,AB}
=     
\bar r_{C,B}
\sum_{i \,\vdash B} 
(\ln  r_{B,B} )_{,i} \, \bar x_{i,A}
+ 
\cdots .
\end{equation}
Thus, when we shift the value of parameter
as $k_B \mapsto k_B + \delta k_B$, 
$\bar r_{C,A}$ as a function of $k_B$, 
$\bar r_{C,A} (k_B)$, is modified as 
(we here only denote $k_B$ dependence)
\begin{equation}
\bar r_{C,A} (k_B + \delta k_B) 
=     
\bar r_{C,A} (k_B)
+ 
\bar r_{C,B}
\sum_{j \,\vdash B} 
(\ln r_{B,B})_{,j} \, \bar x_{j,A}
\delta k_B 
+ \cdots .
\end{equation}
From the assumption, we have 
$\bar r_{C,B} \neq 0$ 
and 
$\bar x_{i,A} \neq 0$. 
Thus, an infinitesimal shift of $k_B$ induces nontrivial changes in the value of $\bar r_{C,A}$. 
This indicates that, without fine-tuning, 
the value of $\bar r_{C,A}$ is nonzero. 
This proves Eq.~\eqref{eq:a-i-b-c-to-a-c}. 

\end{proof}

\subsection{ RPA and labeled buffering structures }\label{sec:rpa-lbs-thm}

\begin{figure}[tb]
  \centering
  \includegraphics
  [clip, trim=1cm 6cm 1cm 6cm, scale=0.45]
  {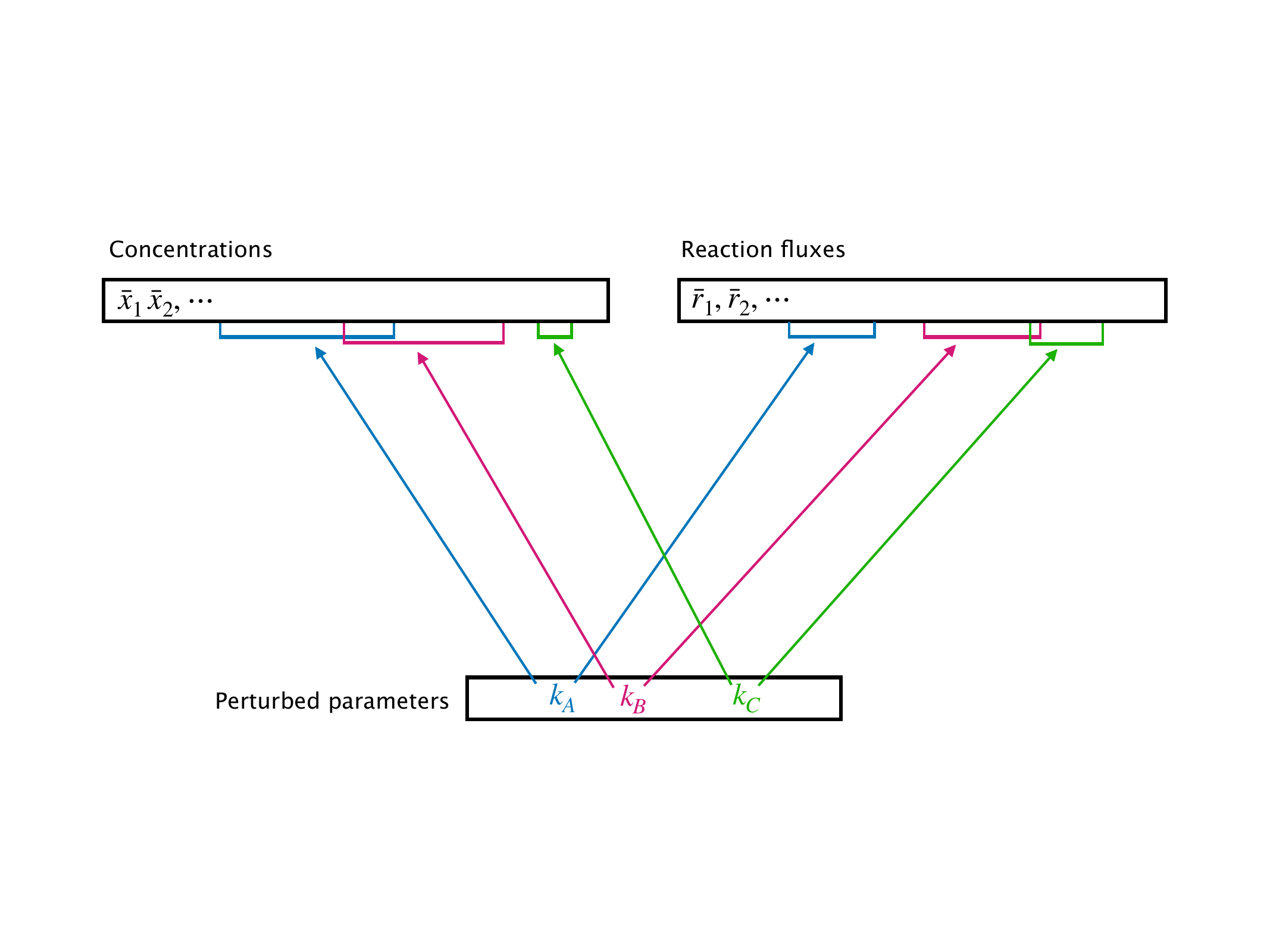} 
  \caption{
Schematic of the characterization of RPA properties in a chemical reaction system. 
  }
  \label{fig:rpa-schematic} 
\end{figure}

Suppose we would like to find all RPA properties for a given reaction system. 
For each parameter of the system, we can hypothetically 
perform its perturbation and observe the response. 
In general, some of the steady-state concentrations and reaction rates 
show nonzero response while others do not respond, and the latter exhibit perfect adaptation.
To test the robustness of adaptation, 
we can change the system parameters (or the form of rate functions) 
and repeat the hypothetical experiment.
Doing this procedure for all the parameters, 
we can obtain all elementary RPA properties of the system (see Fig.~\ref{fig:rpa-schematic})\footnote{Such a perturbation-based approach for systematically studying the steady-state response is similar to the one proposed in Ref.~\cite{giordano2016computing} for computing the \emph{influence matrix} and also to the method described in Ref.~\cite{doi:10.1002/mma.4668} for deriving the \emph{influence graph}.
A crucial difference is that we connect RPA properties to topological characteristics of subnetworks (via the one-to-one correspondence shown in this section), that facilitates the identification of integral feedback control for each RPA property.
}.

Here, we show that an elementary RPA property with respect to a parameter 
can be represented by a buffering structure with supplementary information,
by proving the following statement: 
\begin{theorem}[One-to-one correspondence of elementary RPA properties and labeled buffering structures] \label{thm:rpa-bs}
For a chosen reaction $e_B \in E$, 
we collect the species and reactions influenced by $e_B$, 
\begin{equation}
e_B \leadsto ( V_B, E_B ) . 
\end{equation}
The subnetwork $\gamma_0 \coloneqq ( V_B, E_B )$ can be 
made output-complete by adding reactions $\mathcal E_B \subset E \setminus E_B$ that have reactants in $V_B$.
The resulting output-complete subnetwork 
$\gamma \coloneqq ( V_B, E_B \cup \mathcal E_B )$ 
is a buffering structure, i.e., $\lambda(\gamma)=0$.     
\end{theorem}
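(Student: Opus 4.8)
The plan is to verify the two assertions separately: that $\gamma$ is output-complete, and that $\lambda(\gamma)=0$. Output-completeness is immediate from the construction, since $\mathcal E_B$ is defined to adjoin \emph{every} reaction with a reactant in $V_B=V_\gamma$ that was not already in $E_B$; after this adjunction no reaction having a reactant in $V_\gamma$ lies outside $E_\gamma$. The substance of the theorem is thus the vanishing of the influence index, and for that I would work directly with the $\mathbf A$-matrix of Eq.~\eqref{eq:mat-a-def}.

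Ordering species as $(i^\star,i')$, reactions as $(A^\star,A')$, choosing a kernel basis adapted to $(\ker S)_{\mathrm{supp}\,\gamma}$ and a cokernel basis adapted to the splitting described in the footnote of Theorem~\ref{th:lol}, output-completeness forces the blocks $r_{A' i^\star}=0$, $c^{(\alpha^\star)}_{A'}=0$, and $d^{(\bar\alpha')}_{i^\star}=0$ to vanish, so that $\mathbf A$ becomes block upper-triangular,
\begin{equation}
\mathbf A=\begin{pmatrix}\mathbf A_\gamma & B\\ \bm 0 & C\end{pmatrix},
\end{equation}
where $\mathbf A_\gamma$ carries the rows $(A^\star,\bar\alpha^\star)$ and columns $(i^\star,\alpha^\star)$ internal to $\gamma$ and $C$ carries the complementary rows and columns. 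By definition $\lambda(\gamma)$ equals the number of rows of $\mathbf A_\gamma$ minus its number of columns. Since the full $\mathbf A$ is invertible (Corollary~\ref{cor:invertivility-of-a}) and the lower-left block vanishes, $\mathbf A_\gamma$ has full column rank (reproving $\lambda(\gamma)\ge0$) and $C$ has full row rank; counting dimensions then shows that $\lambda(\gamma)=0$ is equivalent to $\ker C=\{\bm 0\}$.

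To establish $\ker C=\{\bm 0\}$ I would show it is equivalent to confinement: that perturbing any parameter internal to $\gamma$ leaves all external concentrations and fluxes unchanged. Indeed, a nonzero $(u_{i'},\sigma_{\alpha'})\in\ker C$, padded by zeros on the internal indices to a vector $v$, makes the external rows of $\mathbf A v$ vanish, so $v=\mathbf A^{-1}(\text{internal-row vector})$; by Eq.~\eqref{eq:sensitivity-2} this exhibits $v$ as a combination of responses to the internal perturbations $\{k_{A^\star}\}_{A^\star\in E_\gamma}$ together with internally-supported conserved quantities. If every such response is confined to $\gamma$, the combination vanishes on the external indices and $(u_{i'},\sigma_{\alpha'})=\bm 0$. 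The confinement of reaction responses is where transitivity enters: for $e_{A^\star}\in E_B$ one has $e_B\leadsto e_{A^\star}\leadsto e_C\Rightarrow e_B\leadsto e_C$ by Eq.~\eqref{eq:a-b-c-to-a-c}, while for $e_{A^\star}\in\mathcal E_B$, which carries some reactant $v_j\in V_B$, the mixed form Eq.~\eqref{eq:a-i-b-c-to-a-c} gives $e_B\leadsto v_j\vdash e_{A^\star}\leadsto e_C\Rightarrow e_B\leadsto e_C$; in either case $e_C\in E_B\subseteq E_\gamma$, so no external flux responds, and the injectivity of the external kernel vectors then pins the external kernel multipliers.

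The hardest point, which I would treat as the crux, is upgrading this reaction-level confinement to \emph{full} confinement—that the concentrations of external species, and the responses to perturbations of conserved quantities supported in $\gamma$, are likewise unaffected. Reaction-level confinement alone constrains the external species response only by $\sum_{i'} r_{A' i'} u_{i'} = 0$ and $\sum_{i'} d^{(\bar\alpha')}_{i'} u_{i'}=0$, which does not by itself force $u_{i'}=0$. I would close this gap by proving a species-target analogue of transitivity—if $e_B$ influences a reactant of $e_{A^\star}$ and $e_{A^\star}\leadsto v_i$, then $e_B\leadsto v_i$—using the same second-order response computation that proves Eq.~\eqref{eq:a-i-b-c-to-a-c}, together with the corresponding statement for internally-supported conserved quantities. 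Granting this, every internal perturbation is confined, whence $\ker C=\{\bm 0\}$ and $\lambda(\gamma)=0$.
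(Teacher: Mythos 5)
Your proposal is correct and follows essentially the same route as the paper: reaction-level confinement via the transitivity relations \eqref{eq:a-b-c-to-a-c} and \eqref{eq:a-i-b-c-to-a-c}, species-level confinement via the second-order genericity argument, and then the dimension count on the block-triangular $\mathbf A$ and its inverse (your ``$\lambda(\gamma)=\dim\ker C$'' packaging is the same linear algebra as the paper's ``$\mathbf A_{11}$ must be square''). The species-target transitivity lemma you correctly flag as the remaining crux is exactly what the paper supplies through the computation leading to Eq.~\eqref{eq:x-ip-c-sum-ln-pcrc} and the concentration version \eqref{eq:xiab-formula} of the second-order response formula, so your plan closes as intended.
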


We note that a similar statement holds for the perturbation of conserved quantities. 
Namely, if we find species and reactions that are affected by the perturbation of a conserved quantity as 
$\ell_{\bar\alpha} \leadsto ( V_{\bar\alpha}, E_{\bar\alpha})$, 
then the corresponding output-complete subnetwork
$\gamma \coloneqq ( V_{\bar\alpha}, E_{\bar\alpha} \cup \mathcal E_{\bar\alpha})$
is a buffering structure.
In the following, we present the proof of the theorem for the case of perturbations of reaction parameters. The proof for conserved-quantity perturbations is completely analogous. 

\begin{proof}
We first show that, 
although $\gamma_0$ is identified as the subnetwork within which 
the perturbation of the parameter of $e_B$ is confined, 
the effect of the perturbation of 
{\it any} reaction $e_{C^\star}$ in $E_B$ 
is in fact confined inside $\gamma_0$. 
Namely, for any $e_{C^\star} \in E_{B}$, we have 
\begin{equation}
    e_{C^\star} \,\, \slashed{\leadsto} \,\,
    \{ V \setminus V_B, E \setminus E_B \}. 
\end{equation}
Suppose that this is not the case,
then there exists reaction $e_{D'} \in E \setminus E_B$ 
which is influenced by a reaction inside $E_B$. 
Then, because of the transitivity, 
we should have $e_B \leadsto e_{D'} \not \in E_B$. 
This contradicts with the fact that 
$E_B$ includes all the reactions influenced by $e_B$.

Let us denote the set of reactions 
in $E \setminus E_B$ 
that have a species of $V_B$ as a reactant by $\mathcal E_{B}$. 
Namely,
\begin{equation}
\mathcal E_B = 
\left\{ 
e_C \in E \setminus E_B
\, \middle| \,
\exists \, v_{i^\star} \in V_B \text{ such that } v_{i^\star} \vdash e_C 
\right\} . 
\label{eq:curly-e-def}
\end{equation}
The subnetwork $\gamma = (V_B, E_B \cup \mathcal E_B)$ is 
obviously output-complete.

We show that the influence of the perturbations 
of the reaction in $\mathcal E_B$ is 
also localized in $\gamma$. 
If $e_C \in \mathcal E_B$ is equal to $e_B$, 
this is trivially true, so we consider the case $e_C \neq e_B$ in the following. 
Since $\bar x_{i',B}$ with  $v_{i'} \in V \setminus V_B$, 
should vanish regardless of parameters, 
the derivative of $\bar x_{i',B}$ with respect to the parameter 
of $e_C \in \mathcal E_B$ should vanish as well, 
\begin{equation}
0 
= \bar x_{i',BC}
=  
\bar x_{i',C} F_{C,B} + \bar x_{i',B} F_{B,C}
+ 
\sum_{D} \sum_{j \, \vdash D} \sum_{k \, \vdash D}
\frac{r_{D,jk} }{r_{D,D}}
\,
\bar x_{i',D} \, 
\bar x_{j,B} \,
\bar x_{k,C} . 
\end{equation}
The second term on the RHS is zero because $\bar x_{i',B}=0$. 
As for the third term, 
we can divide the summation over all the reactions to 
those inside/outside $E_\gamma \coloneqq E_B \cup \mathcal E_B$ as
$\sum_D \cdots = \sum_{D^\star} \cdots + \sum_{D'} \cdots$,
and noting that $\bar x_{i',D^\star} = 0$ and $\bar x_{j',B}=0$,
the third term is now written as
\begin{equation}
\begin{split}  
\text{(third term)}
&= 
\sum_{D'} 
\sum_{j^\star \, \vdash D'}
\sum_{k \, \vdash D'} 
\frac{r_{D',j^\star k} }{r_{D',D'}}
\,
\bar x_{i',D'} \, 
\bar x_{j^\star,B} \,
\bar x_{k,C} .
\end{split}
\end{equation}
Since there is no reaction outside $\gamma$ 
whose reactant is inside $\gamma$ by construction, 
the summation over $j^\star$ is empty and 
this term vanishes as well. 
Thus, we have 
\begin{equation}
0 
= 
\bar x_{i', C} 
\sum_{i^\star \vdash C}
(\ln  r_{C,C} )_{,i^\star} \bar x_{i^\star, B}. 
\label{eq:x-ip-c-sum-ln-pcrc}
\end{equation}
There is at least one 
species $v_{i^\star}$ 
that is a reactant of $e_C$, by construction, 
and $(\ln r_{C,C})_{,i^\star} \neq 0$. 
Since $\bar x_{i^\star, B} \neq 0$ by definition, 
if the RHS of Eq.~\eqref{eq:x-ip-c-sum-ln-pcrc} is to vanish 
without fine-tuning, we should have $\bar x_{i',C} = 0$. 
Thus, there are no concentrations outside $\gamma$ 
that are influenced by $e_C \in \mathcal E_B$. 
As a result, no reactions outside $\gamma$ are 
influenced by $e_C \in \mathcal E_B$. 
Therefore, we have shown that the influence by any reaction $e_{C^\star}$ inside $\gamma$ is localized inside $\gamma$.

Let us here choose a basis of $\ker S$ in the following manner.
Given a subnetwork $\gamma$, we pick a basis 
$\{\bm c^{(\alpha^\star)} \}_{\alpha^\star=1,\ldots,|\alpha^\star|}$ 
of $(\ker S)_{{\rm supp}\,\gamma}$, and 
we arrange the basis of $\ker S$ so that $\bar r_A$ is written as
\begin{equation}
\bar r_A = 
\sum_{\alpha^\star} \mu_{\alpha^\star} c^{(\alpha^\star)}_A 
+ 
\sum_{\alpha'} \mu_{\alpha'} c^{(\alpha')}_A ,
\label{eq:r-bar-expansion}
\end{equation}
where $\{\bm c^{(\alpha')} \}_{\alpha'=1,\ldots,|\alpha'|}$ are basis vectors 
with nonzero support in $E \setminus E_\gamma$. 
Note that $c^{(\alpha^\star)}_{A'}=0$, since it is supported in $\gamma$. 
We employ the expansion the steady-state fluxes
in the form~\eqref{eq:r-bar-expansion}.
From the response localization, 
for $e_{C^\star} \in E_\gamma$ 
and 
$e_{A'} \in E \setminus E_\gamma$,
we have 
\begin{equation}
0 = \bar r_{A', C^\star} 
= 
\sum_{\alpha'} 
\mu_{\alpha',C^\star} \, c^{(\alpha')}_{A'} , 
\label{p-b-r-0}
\end{equation}
where we used $c^{(\alpha^\star)}_{A'}=0$. 
Let us denote a basis vector as 
$\bm c^{(\alpha)} = 
\begin{bmatrix}
\bm c_1^{(\alpha)} \\
\bm c_2^{(\alpha)} 
\end{bmatrix}
$, 
where 
$\bm c_1^{(\alpha)}$ and $\bm c_2^{(\alpha)}$ are components 
inside and outside of $\gamma$, respectively.
With the current choice of basis, the vectors 
$\{ \bm c^{(\alpha')}_2 \}_{\alpha'=1,\ldots,|\alpha'|}$ 
are linearly independent\footnote{
Suppose that the vectors $\{ \bm c^{(\alpha')}_2 \}_{\alpha'=1,\ldots,|\alpha'|}$ are linearly dependent. 
Then, by taking a linear combination of $\{\bm c^{\alpha'} \}_{\alpha'=1,\ldots,|\alpha'|}$, 
we can make a vector supported inside $\gamma$, 
\begin{equation}
\bm c' = \begin{bmatrix}
    \bm c'_1 \\
    \bm 0
\end{bmatrix},
\end{equation}
which is an element of $(\ker S)_{{\rm supp}\,\gamma}$.
Since the vector $\{ \bm c^{(\alpha)} \}_{\alpha=1,\ldots,|\alpha|}$ 
are linearly independent, 
$\bm c'_1$ should be linearly independent of 
$\{ \bm c_1^{(\alpha^\star)} \}_{\alpha^\star=1,\ldots,|\alpha^\star|}$. 
This contradicts the fact that 
$\{ \bm c^{(\alpha^\star)} \}_{\alpha^\star=1,\ldots,|\alpha^\star|}$
is a basis vector of $(\ker S)_{{\rm supp}\,\gamma}$. 
}, and 
Eq.~\eqref{p-b-r-0} implies that
\begin{equation}
\mu_{\alpha', C^\star} = 0. 
\end{equation}
The response of the steady-state concentrations and reaction fluxes to the perturbation of parameter $k_{C^\star}$ with 
$e_{C^\star} \in E_\gamma$ 
can be characterized by the following equation 
using the ${\bf A}$-matrix, 
% %
%
The response is determined by the following equation,
\begin{equation}
\left[ 
\begin{array}{c|c}
{\bf A}_{11} & {\bf A}_{12} \\ \hline 
{\bf A}_{21} & {\bf A}_{22}
\end{array}
\right] 
\left[ 
\begin{array}{c}
\bm y_{1,C^\star}
\\ \hline 
\bm y_{2,C^\star}
\end{array}
\right]
= 
- 
\left[ 
\begin{array}{c}
\p_{C^\star} \bm r_1 
\\ \hline 
\bm 0
\end{array}
\right],
\label{eq:part-a-y-d}
\end{equation}
where we have defined 
\begin{equation}
\bm y_{1,C^\star} 
\coloneqq 
\p_{C^\star}
\begin{bmatrix}
{\bar x}_{i^{\star}} 
\\
\mu_{\alpha^\star} 
\end{bmatrix}
, 
\quad 
\bm y_{2,C^\star} 
\coloneqq 
\p_{C^\star}
\begin{bmatrix}
\bar x_{i'}
\\
\mu_{\alpha'}     
\end{bmatrix}. 
\end{equation}
We have $\bm y_{2, C^\star} = \bm 0$ 
for any $e_{C^\star} \in E_\gamma$ 
from the response localization.

\begin{figure}[tb]
  \centering
  \includegraphics
  [clip, trim=1cm 10cm 1cm 8cm, scale=0.4]
  {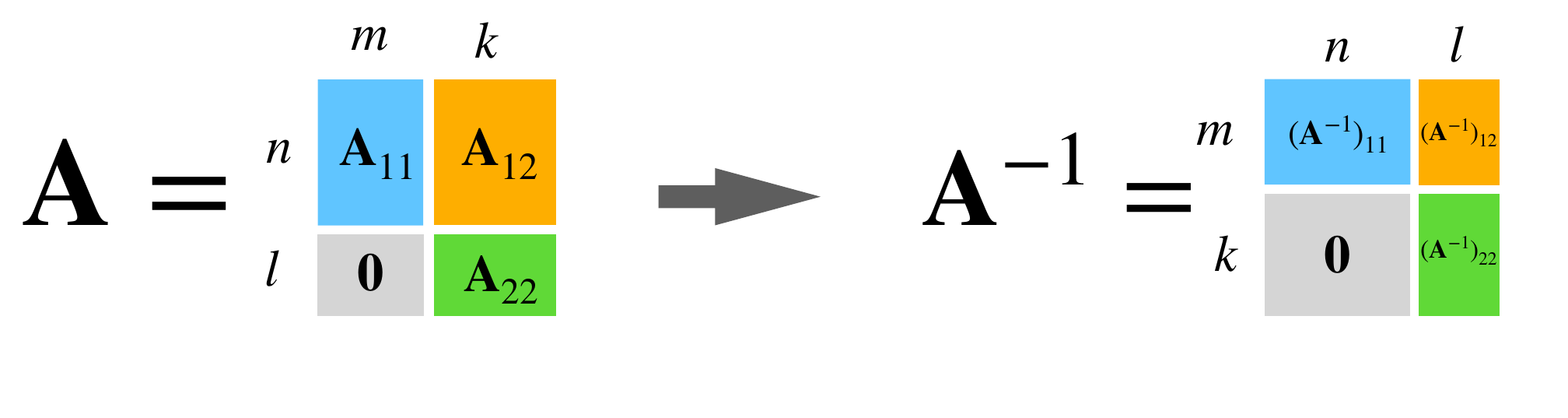} 
  \caption{
  Structures of ${\bf A}$-matrix and its inverse. 
  }
  \label{fig:a-ainv} 
\end{figure}

Suppose that ${\bf A}_{11}$ is an $n \times m$ matrix. 
Since ${\bf A}_{21}$ is a zero matrix, 
the invertibility of ${\bf A}$ requires that 
${\bf A}_{11}$ should be square or vertically long, 
and we have $n \ge m$.
On the other hand, 
from the localization of responses, 
its inverse ${\bf A}^{-1}$ should have the structure 
shown in Fig.~\ref{fig:a-ainv}, 
with zero matrix in the lower-left sector, 
and the upper-left part is an $m \times n$ matrix. 
So that ${\bf A}^{-1}$ be invertible, 
$({\bf A}^{-1})_{11}$ should be square or vertically long, 
and we need $m \ge n$. 
Thus, the only possibility is $n=m$, 
and ${\bf A}_{11}$ is a square matrix. 

Recall that 
the set of vectors 
$\{ {\bm c}^{(\alpha^{\star})}\}_{\alpha^{\star}=1,\ldots,|\alpha^\star|}$ 
is a basis of $(\ker S)_{{\rm supp}\,\gamma}$. 
The fact that ${\bf A}_{11}$ is a square matrix 
implies 
$|E_\gamma| + |P^0_\gamma (\coker S)| 
- |V_\gamma| - |(\ker S)_{{\rm supp}\,\gamma}|
=0$, 
meaning that $\lambda(\gamma)=0$. 
Thus, we have the claim. 

\end{proof}

Motivated by Theorem~\ref{thm:rpa-bs}, 
we introduce {\it labeled buffering structures} 
to characterize the elementary RPA properties in a reaction network. 
Specifically, for each reaction $e_A$, we identify the
corresponding buffering structure 
as in Theorem~\ref{thm:rpa-bs}, 
\begin{equation}
\gamma_A 
= 
(V_A, E_A \cup \mathcal E_A) . 
\label{eq:lbs-def}
\end{equation}
The concentrations of $V\setminus V_A$ and 
reaction fluxes of $E \setminus E_A$ exhibit 
RPA with respect to the perturbation of $e_A$. 
We call Eq.~\eqref{eq:lbs-def} as a {\it labeled buffering structure}. 
The name comes from the fact that, 
if we regard a labeled buffering structure 
$(V_A, E_A \cup \mathcal E_A)$ as a pair of sets and forget about the labels of parameters (and distinction of $E_A$ and $\mathcal E_A$), 
it reduces to an ordinary buffering structure. 
Similarly, we also define a labeled buffering structure 
associated with the perturbation of conserved quantity $\ell_{\bar\alpha}$
\begin{equation}
\gamma_{\bar\alpha} 
= 
(V_{\bar\alpha}, E_{\bar\alpha} \cup \mathcal E_{\bar\alpha}) . 
\label{eq:lbs-def-cons}
\end{equation}
By enumerating labeled buffering structures for all the parameters, we can identify all the RPA properties in a given reaction network, 
via the one-to-one correspondence, 
\begin{equation}
\{
\text{ elementary RPA properties } 
\}  
\longleftrightarrow 
\{ \text{ labeled buffering structures } \}. 
\end{equation}
There is a possibility that the RPA properties with respect to two or more reactions are exactly the same. 
We use multiple indices in this case, such as $\gamma_{A,B,C}$ 
when the RPA properties with respect to reaction $e_A,e_B,$ and $e_B$ are the same.

As we stated in Sec.~ \ref{sec:assumptions}, we allow the kinetics to be generic,
and RPA properties that can be detected by labeled buffering structures are kinetics-independent ones. 
For a specific choice of kinetics, such as mass-action kinetics, there can be further RPA properties that are robust only within the chosen kinetics. 
The RPA properties represented by labeled buffering structures are also robust under the change of kinetics (as long as stability is not jeopardized).

We note that a generic RPA property can be generated by elementary RPA properties (see Fig.~\ref{fig:generic-rpa}).
Let us consider, a subset of parameters $\bm p \subset (\bm k, \bm \ell)$. 
Correspondingly to each element $p_\mu \in \bm p$, 
we have a labeled buffering structure 
$\gamma_\mu = (V_\mu, E_\mu \cup \mathcal E_\mu)$. 
Let us define 
$
V_{\bm p} \coloneqq 
\bigcup_{\mu} V_\mu
$
and 
$
E_{\bm p} \coloneqq 
\bigcup_{\mu} E_\mu
$. 
Then, the subnetwork $\gamma_{\bm p}$ defined by 
\begin{equation}
\gamma_{\bm p} \coloneqq 
\left( V_{\bm p}, E_{\bm p} \cup \mathcal E_{\bm p} \right) 
\end{equation}
is a buffering structure, where 
$\mathcal E_{\bm p}$ is the minimal set of reactions to make 
$\left( V_{\bm p}, E_{\bm p} \cup \mathcal E_{\bm p} \right)$ output-complete.
Since $\gamma_{\bm p}$ is the union of $\gamma_\mu$ with $p_\mu \in \bm p$ (as a pair of sets), the vanishing of the index $\lambda(\gamma_{\bm p})$ follows from the closure property of buffering structures under union~\cite{PhysRevResearch.3.043123}.
This buffering structure $\gamma_{\bm p}$ is a minimal one with respect to $\bm p$ in the sense that for any species in $V_{\bm p}$ and reactions in $E_{\bm p}$ there exists a parameter $p_\mu \in \bm p$ to which it is sensitive to. 
Then, if we define  
\begin{equation}
\mathcal V \coloneqq V \setminus V_{\bm p}
\quad 
\mathcal E \coloneqq E \setminus E_{\bm p} 
\end{equation}
then the triple $(\mathcal V, \mathcal E, \bm p)$ exhibits a generic RPA property, i.e.\ concentrations of the species in $\mathcal V$ 
and reaction rates of the reactions in $\mathcal E$ exhibit RPA with respect to any parameter in $\bm p$. 
The fact that all the generic RPA properties can be captured 
by this construction 
follows via the same logic of the proof for elementary RPA properties. 
Let us summarize the argument of this paragraph in the following proposition: 
\begin{proposition}
For any generic RPA property $(\mathcal V, \mathcal E, \bm p)$, 
there exists an associated buffering structure. 
\end{proposition}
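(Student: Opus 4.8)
The plan is to assemble the associated buffering structure as a union of the single-parameter labeled buffering structures supplied by Theorem~\ref{thm:rpa-bs}, and then to confirm that the species and reactions recorded in the given RPA property lie in the appropriate complement of this union. First, for each parameter $p_\mu \in \bm p$ I would apply Theorem~\ref{thm:rpa-bs} (or its conserved-quantity analogue) to obtain a labeled buffering structure $\gamma_\mu = (V_\mu, E_\mu \cup \mathcal E_\mu)$, where $V_\mu$ and $E_\mu$ are the species and reactions influenced by $p_\mu$ and $\mathcal E_\mu$ comprises the reactions added to render $\gamma_\mu$ output-complete. Writing $V_{\bm p} \coloneqq \bigcup_\mu V_\mu$ and $E_{\bm p} \coloneqq \bigcup_\mu E_\mu$, I would then set $\gamma_{\bm p} \coloneqq (V_{\bm p}, E_{\bm p} \cup \mathcal E_{\bm p})$, with $\mathcal E_{\bm p}$ the minimal completion of $V_{\bm p}$, and claim that $\gamma_{\bm p}$ is the buffering structure sought.

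The first substantive step is to identify $\gamma_{\bm p}$, as a pair of sets, with the genuine union $\bigcup_\mu \gamma_\mu$ of the labeled buffering structures (each $\gamma_\mu$ regarded as a pair of sets with its label forgotten). This is a short set-theoretic check: any reaction in the minimal completion $\mathcal E_{\bm p}$ has a reactant in some $V_\mu$ and lies outside $E_{\bm p}$, hence outside $E_\mu$, so it sits in $\mathcal E_\mu$; conversely every reaction contributed by some $\mathcal E_\mu$ either already belongs to $E_{\bm p}$ or has a reactant in $V_\mu \subseteq V_{\bm p}$ and therefore lands in $\mathcal E_{\bm p}$. Granting this identification, output-completeness of $\gamma_{\bm p}$ is inherited (a reaction with a reactant in $V_{\bm p}$ has a reactant in some $V_\mu$ and so already lies in $E_\mu \cup \mathcal E_\mu$), and the vanishing of the influence index, $\lambda(\gamma_{\bm p}) = 0$, follows from the closure of buffering structures under union~\cite{PhysRevResearch.3.043123}, since each $\gamma_\mu$ satisfies $\lambda(\gamma_\mu) = 0$ by Theorem~\ref{thm:rpa-bs}.

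It then remains to connect $\gamma_{\bm p}$ to the prescribed triple. By the defining property of the influenced sets, a species robust to every $p_\mu$ is influenced by none of them and so lies in no $V_\mu$; thus $\mathcal V \subseteq V \setminus V_{\bm p}$, and the identical reasoning gives $\mathcal E \subseteq E \setminus E_{\bm p}$. Conversely, any species in $V \setminus V_{\bm p}$ and any reaction in $E \setminus E_{\bm p}$ is influenced by no parameter of $\bm p$ and hence adapts to all of them, so $(V \setminus V_{\bm p}, E \setminus E_{\bm p}, \bm p)$ is the maximal RPA property for $\bm p$ and contains $(\mathcal V, \mathcal E, \bm p)$. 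I expect the main obstacle to be the reaction side of this final step: the completion reactions $\mathcal E_{\bm p}$ sit \emph{inside} the buffering structure $\gamma_{\bm p}$ and yet are robust to all of $\bm p$ (they belong to no influenced set $E_\mu$), so the link between the adapting reactions and $\gamma_{\bm p}$ cannot be expressed as plain set-complementation of the full edge set $E_{\bm p} \cup \mathcal E_{\bm p}$, but only relative to the influenced set $E_{\bm p}$. Keeping this distinction explicit is the delicate bookkeeping on which the argument turns.
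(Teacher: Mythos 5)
Your proposal is correct and follows essentially the same route as the paper: take the labeled buffering structures $\gamma_\mu$ for each $p_\mu \in \bm p$ from Theorem~\ref{thm:rpa-bs}, form $\gamma_{\bm p} = (V_{\bm p}, E_{\bm p} \cup \mathcal E_{\bm p})$, identify it as the set-theoretic union of the $\gamma_\mu$, and invoke closure of buffering structures under union to get $\lambda(\gamma_{\bm p})=0$. You actually spell out the verification that the minimal completion $\mathcal E_{\bm p}$ agrees with $\bigcup_\mu \mathcal E_\mu$ modulo $E_{\bm p}$ — a point the paper asserts without proof — and your closing remark that the adapting reactions are $E \setminus E_{\bm p}$ rather than the complement of the full edge set $E_{\bm p} \cup \mathcal E_{\bm p}$ is exactly the distinction the paper encodes in its definition $\mathcal E \coloneqq E \setminus E_{\bm p}$.
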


\begin{figure}[tb]
  \centering
  \includegraphics
  [clip, trim=1cm 7cm 1cm 1cm, scale=0.43]
  {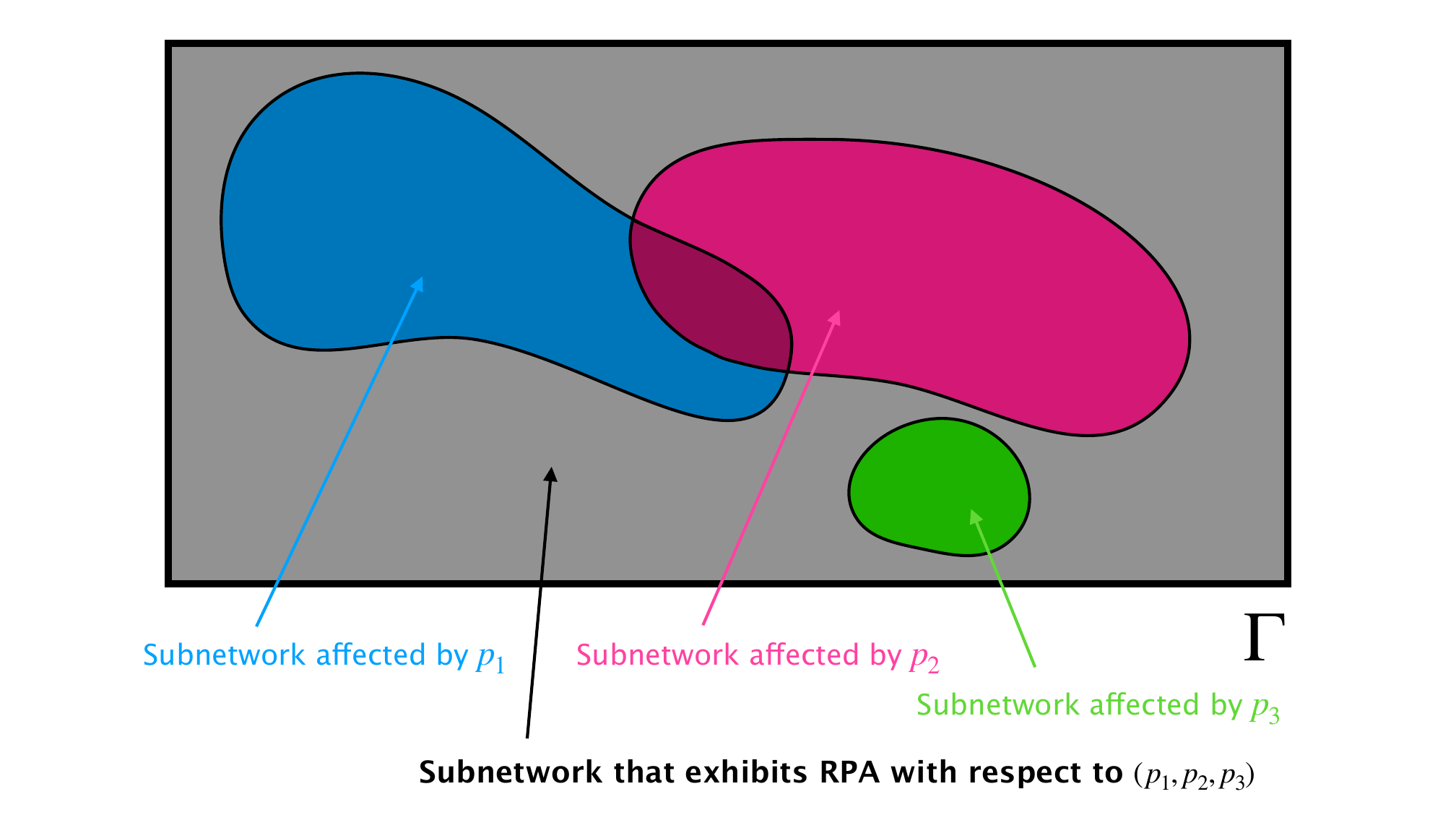}
  \caption{ 
  Schematic illustration of a generation of a generic RPA property from elementary RPA properties. Each labeled buffering structure associated with $p_1$, $p_2$ and $p_3$ identifies the corresponding subnetwork affected by each parameter. 
  With these data, we can identify the subnetwork (gray region) that exhibits RPA with respect to a set $(p_1,p_2,p_3)$ of parameters. 
  }
  \label{fig:generic-rpa} 
\end{figure}

Once all the labeled buffering structures for a given reaction network are identified, we have the full characterization of dependencies of all the steady-state concentrations and reaction fluxes on system parameters, which can be used for various analyses.\footnote{
As we discuss in Sec.~\ref{sec:num-alg}, 
the enumeration of all the labeled buffering structures can be done in polynomial time.
} 
Thus, labeled buffering structures can be used to find species and reactions with more complicated patterns of dependencies.
For example, if we are interested in 
concentrations and reaction rates that are affected 
by $e_A$ 
and not affected by $e_B$, 
such species and reactions are identified as 
$(V_A \cap (V\setminus V_B) , E_A \cap (E \setminus E_B))$. 
As another example, suppose that we have several target species 
of interest $V_{\rm out} \subset V$.
If we would like to identify the parameters which affect the concentrations of the species in $V_{\rm out}$, 
this can be achieved by finding the largest buffering structure that does not include species in $V_{\rm out}$.
All the buffering structures in a reaction network can be 
generated from the identified labeled buffering structures, 
and for buffering structures, we can perform bifurcation analysis~\cite{PhysRevE.98.012417,PhysRevE.103.062212}, 
or we can reduce the network to a smaller one without affecting 
steady-state properties by eliminating buffering structures~\cite{PhysRevResearch.3.043123}. 
Although we do not consider addition of new reactions or new species as a part of perturbations, we can study its effect on the dependencies of concentrations and reaction rates on system parameters, by comparing 
the sets of labeled buffering structures before and after the addition of new reactions and species. 

In this way, identified labeled buffering structures form a basis for various analyses. We shall refer to the identification of RPA properties through buffering structures as \emph{topological analysis}, emphasizing the fact that these subnetworks are identified through topological characteristics. This name is also motivated from a more abstract viewpoint: robust adaptation is a topological phenomenon, as we discuss in Sec.~\ref{sec:rat}. 

\subsection{ Algorithm to enumerate labeled buffering structures }\label{sec:num-alg}

Let us present an algorithm to identify all 
the labeled buffering structures for a given deterministic reaction system. 

As we discussed so far, the qualitative responses (i.e., whether the response is zero or nonzero) of a reaction system against parameter perturbations 
are captured by the inverse of the ${\bf A}$-matrix. 
The response of concentrations is given by 
$\bar x_{i,A} \propto ({\bf A}^{-1})_{i A}$. 
As for reaction rates, 
noting that $\bar r_B = \sum_\alpha \mu_\alpha c^{(\alpha)}_B$, 
the response of a reaction rate can be written as 
\begin{equation}
\p_A \bar r_B = 
\sum_{\alpha,C} c^{(\alpha)}_B 
({\bf A}^{-1})_{\alpha C}
\p_A r_C .
\end{equation}
Since $\p_A r_C \propto \delta_{AC}$, we have 
\begin{equation}
\p_A \bar r_B
\propto 
\sum_{\alpha} c^{(\alpha)}_B 
({\bf A}^{-1})_{\alpha A} . 
\end{equation}
We define the species-sensitivity matrix and 
reaction-sensitivity matrix as 
\begin{equation}
 X_{iA} \coloneqq ({\bf A}^{-1})_{i A} , 
\quad 
\quad 
R_{BA} \coloneqq 
\sum_{\alpha} c^{(\alpha)}_B
({\bf A}^{-1})_{\alpha A}. 
\label{eq:x-r-mat}
\end{equation}

The numerical procedure goes as follows. 
For a given chemical reaction network, 
we first construct the ${\bf A}$-matrix. 
In doing to, we assign random numbers for 
the entries with nonzero $r_{A,i}$.
Namely, we put a random number if $v_i \vdash e_A$ and 
zero otherwise. 
Then, numerically compute its inverse, ${\bf A}^{-1}$,
with which we can obtain the species-sensitivity and reaction-sensitivity matrices \eqref{eq:x-r-mat}. 
Based on these matrices, 
for each reaction $e_A$ (and conserved quantity $\ell_{\bar\alpha}$), 
we identify chemical species and reactions 
that are affected by the perturbation of $e_A$ ($\ell_{\bar\alpha}$). 
Namely, we identify $(V_A, E_A)$ for each reaction $e_A$ 
that are defined by 
\begin{equation}
V_A \coloneqq 
\left\{ 
v_i \in V 
\,\, \middle| \,\,
\text{$X_{iA}$ is nonzero}
\right\}, 
\quad 
\quad 
E_A \coloneqq 
\left\{ 
e_B \in E
\,\, \middle| \,\,
\text{$R_{BA}$ is nonzero}  
\right\}. 
\end{equation}
The subnetwork $(V_A,E_A)$ is not necessarily output-complete. 
It can be made output-complete by adding a set of reactions 
$\mathcal E_A$ 
defined in Eq.~\eqref{eq:curly-e-def}. 
Then, we obtain the labeled buffering structure 
associated with reaction $e_A$, 
$\gamma_A = (V_A, E_A \cup \mathcal E_A)$. 
By repeating this for all the reactions, 
we obtain all the labeled buffering structures within a given reaction system. 

The bottleneck of this enumeration process is the inversion of matrix ${\bf A}$, whose computational complexity is $O({\mathcal N}^3)$ 
in the case of the Gaussian elimination, where ${\mathcal N}$ is the dimension of ${\bf A}$, which is roughly the same as the number of reactions. 

According to the algorithm explained above, we have implement a method to enumerate all the labeled buffering structures for a given reaction network in {\bf RPAFinder}, which is available on Github~\cite{RPAFinder}.

\subsection{ Examples }\label{sec:lbs-examples}

Let us here discuss some examples of chemical reaction systems 
to illustrate the use of topological analysis 
based on the identification of labeled buffering structures.

\subsubsection{ A simple example }

We consider a reaction network 
$\Gamma \coloneqq (\{v_1,v_2,v_3 \}, \{e_1,e_2,e_3,e_4,e_5 \})$
whose reactions are given by 
\begin{align}
e_1 &: \emptyset \to v_1, \nonumber \\ 
e_2 &: v_1 \to v_2, \nonumber  \\ 
e_3 &: v_2 \to v_3, \\ 
e_4 &: v_3 \to v_1, \nonumber  \\ 
e_5 &: v_2 \to \emptyset.  \nonumber 
\end{align}
The network can be visualized as 
\begin{equation}\begin{tikzpicture} 
    \node[species] (v1) at (1.25,0) {$v_1$}; 
    \node[species] (v2) at (3.75,0) {$v_2$};
    \node[species] (v3) at (2.5,1.5) {$v_3$}; 
    \node (d1) at (-0.5,0) {}; 
    \node (d2) at (5.5,0) {}; 
 
    \draw [-latex,draw,line width=0.5mm] (d1) edge node[below]{$e_1$} (v1);
    \draw [-latex,draw,line width=0.5mm] (v1) edge node[below]{$e_2$} (v2);
    \draw [-latex,draw,line width=0.5mm] (v2) edge node[above right]{$e_3$} (v3);
    \draw [-latex,draw,line width=0.5mm] (v3) edge node[above left]{$e_4$} (v1);
    \draw [-latex,draw,line width=0.5mm] (v2) edge node[below]{$e_5$} (d2);

\end{tikzpicture}
\end{equation}
Corresponding to the perturbation of the parameter of each reaction, 
we can identify the following labeled buffering structures, 
\begin{align}
\gamma_1 &= (\{v_1,v_2,v_3 \}, \{ e_1,e_2, e_3,e_4,e_5\} \cup \{ \}) = \Gamma,  \nonumber \\
\gamma^\ast_2 &=  (\{v_1\}, \{ \} \cup \{e_2 \}),  \nonumber \\ 
\gamma_3 &=  (\{v_1,v_3\}, \{e_2,e_3,e_4 \}\cup \{ \} ),\\ 
\gamma^\ast_4 &= (\{v_3 \}, \{ \} \cup \{e_4 \}),  \nonumber \\ 
\gamma_5 &= (\{v_1,v_2,v_3 \},  \{e_2, e_3,e_4 \} \cup \{ e_5 \}) ,  \nonumber 
\end{align}
where those with $\ast$ are strong buffering structures~\cite{https://doi.org/10.48550/arxiv.2302.01270}.
Here, $\gamma_1$ and $\gamma_3$ are already output-complete without adding any additional reactions (i.e.\ $\mathcal E_1$ and $\mathcal E_3$ are empty).
For $\gamma^\ast_2, \gamma^\ast_4,$ and $\gamma_5$, 
we supplemented them with reactions so that they are output-complete. 
The parameter of $e_1$ affects all the species and reactions, 
and the corresponding labeled buffering structure is the entire network. 
In the present example, all parameters 
give rise to different labeled buffering structures. 

From the identified labeled buffering structures, we can conclude the following:
\begin{itemize}
\item From $\gamma_3$, we learn that $\bar x_2, \bar r_1,$ and $\bar r_5$ exhibit RPA with respect to $e_3$.
\item The strong buffering structure $\gamma_2^\ast \cup \gamma_4^\ast$ tells us that all the reaction rates $\bar{\bm r}$ exhibit RPA with respect to $e_2$ and $e_4$.
\item $\bar x_2, \bar x_3$ exhibit RPA with respect to $e_2$ (from $\gamma_2^\ast$).
\item $\bar x_1, \bar x_2$ exhibit RPA with respect to $e_4$ (from $\gamma_4^\ast$).
\item No species or reactions exhibit RPA with respect to $e_1$.
\end{itemize}

If we use mass-action kinetics, 
the steady-state concentrations and rates are given by ($k_A$ is the rate constant for $e_A$)
\begin{equation}
\bar x_1 = \frac{k_1}{k_3 k_5} (k_3+k_5), 
\quad 
\bar x_2 = \frac{k_1}{k_5}, 
\quad 
\bar x_3 = \frac{k_1 k_3}{k_4 k_5}, 
\quad 
\bar {\bm r} = k_1 \bm c^{(1)} + \frac{k_1 k_3}{k_5} \bm c^{(2)} ,
\end{equation}
where 
$
\bm c^{(1)} 
\coloneqq 
\begin{bmatrix}
1 & 1 & 0 & 0 & 1    
\end{bmatrix}^\top 
$
and 
$
\bm c^{(2)} 
\coloneqq 
\begin{bmatrix}
0 & 1 & 1 & 1 & 0
\end{bmatrix}^\top 
$. 
These steady-state solutions are indeed compatible with 
the predictions based on the labeled buffering structures.

\subsubsection{ Bacterial chemotaxis }\label{subsec:bacterial_chemo}

We discuss an example describing bacterial chemotaxis~\cite{barkai1997robustness}. 
Bacterial chemotaxis is the ability of bacteria to move towards or away from a chemical stimulus in their environment. 
{\it E.~coli} are able to detect a variety of attractants and use a biased random walk to navigate towards them~\cite{berg1972chemotaxis}.
Their motion is comprised of runs, during which the bacterium moves in a straight line, and tumbles, during which it stops and changes direction.
When a bacterium senses the increase of attractants 
along its path, it reduces the frequency of tumbles, allowing it to move up gradients of attractants. 
When an attractant is added uniformly in space, 
the frequency of tumbles initially decreases
but eventually returns to its original level.
This adaptation is perfect, and the tumbling frequency at the steady state is independent of the attractant concentration.

We here use a simplified model~\cite{alon2006introduction} that captures this adaptation behavior.
Each receptor is bound to a protein kinase CheA, and we will denote the complex by $X$. 
When a complex of a methylated receptor and a kinase is in an active state, $X^\star_m$, 
it phosphorylates a response regulator, which binds to the flagellar motor and changes the rotational direction, enhancing tumbling. 
Thus, $X^\star_m$ determines the tumbling probability and it is the target output variable of the analysis here. 
The set of reaction we consider is 
\begin{align}
    e_1 &: X_m^\star \to X_m \notag \\
    e_2 &: X_m \to X_m^\star, \notag \\
    e_3 &: X \to X_m ,  \\
    e_4 &: X \to X_m^\star , \notag \\
    e_5 &: X_m^\star \to X . \notag 
\end{align}
Here, $e_3$ and $e_4$ are methylation reactions catalyzed by CheR. The demethylation is mediated by CheB. 
When a ligand is bound to a receptor, 
the probability of deactivation process $e_1$ increases, 
which means that the parameter $k_1$ is an increasing function of the ligand concentration $a$, $k_1 = k_1 (a)$. 
Perfect adaptation is realized when 
the following two conditions are met~\cite{barkai1997robustness}: 
\begin{itemize}
\item The demethylation by CheB only acts on active receptors, $X_m^\ast$. 
\item The rate of methylation by CheR 
does not depend on the substrate concentration, $X$.
\end{itemize}
Under these assumptions, the set of reactions above is 
equivalently described by the following reaction network\footnote{
Here, we use $\emptyset$ to indicate the null species, whose abundance is assumed to be constant and its dynamics is not considered.
}, 
\begin{align}
    e_1 &: X_m^\star \to X_m \notag \\
    e_2 &: X_m \to X_m^\star, \notag \\
    e'_3 &:\emptyset  \to X_m , \label{eq:chemo-reactions-2} \\
    e'_4 &:\emptyset   \to X_m^\star , \notag \\
    e'_5 &: X_m^\star \to \emptyset . \notag 
\end{align}

\begin{figure}[tb]
\centering
\resizebox{6cm}{!}{%
    \begin{tikzpicture}[bend angle=45] 
    \node[species,color=mybrightblue] (v1) at (0,0) {$X_m^\star$}; 
    \node[species,color=mybrightblue] (v2) at (2,0) {$X_m$}; 
 
    \node (d1) at (0,1.65) {};
    \node (d2) at (0,-1.65) {};
    \node (d3) at (3.6,0) {};
    
    \draw[-latex,color=mydarkred, line width=0.5mm] (d1) edge node[left]  {$e_4$} (v1); 
    \draw[-latex,line width=0.5mm,out=30,in=150] (v1) edge node[above] {$e_1$} (v2); 
    \draw[-latex,line width=0.5mm,out=210,in=-30] (v2) edge node[below] {$e_2$} (v1);

    \draw[-latex,color=mydarkred, line width=0.5mm] (d3) edge node[above]  {$e_3$} (v2);
  
    \draw[-latex,color=mygreen,line width=0.5mm] (v1) edge node[left] {$e_5$} (d2); 

     \node at (2,1) { \scalebox{0.8} { \color{mydarkred}Set-point encoding}};
     \node at (0.7,-1) { \scalebox{0.8} { \color{mygreen}Sensing}};
     \node at (2.5,-1) { \scalebox{0.8} { \color{mybrightblue}Internal model}};
 
\end{tikzpicture}
}
\caption{
Effective reaction network of the example of bacterial chemotaxis. 
} 
\end{figure}
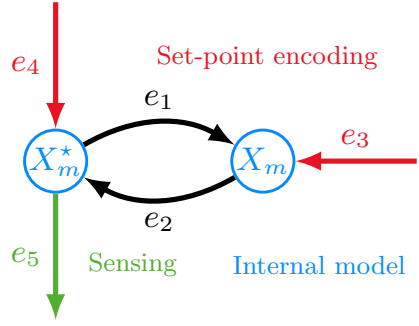

The time evolution of the concentrations of $X_m$ 
and $X_m^\star$ is governed by\footnote{
Here, we use the same symbol to denote 
the species and its concentration 
to simplify notations. 
}
\begin{align}
\dot X_m & = - k_2 X_m + k_1 X^\star_m  + k_3 , \\
\dot X^\star_m & = k_2 X_m 
- k_1 X^\star_m - k_5 X_m^\star + k_4 .
\end{align}
Note that each rate constant 
is proportional to the corresponding enzyme concentration, 
$k_5 \propto B$, and $k_3, k_4 \propto R$,
which are treated as constants here. 
At the steady state, the concentrations and reactions rates are given by 
\begin{equation}
\bar X_m^\star = \frac{k_3 + k_4}{k_5}, 
\quad 
\bar X_m = 
\frac{1}{k_2}
\left( 
k_3 + \frac{k_1 (k_3 + k_4)}{k_5}
\right)
, 
\end{equation}
\begin{equation}
\bar{\bm r} = 
\frac{k_1 (k_3 + k_4)}{k_5} \bm c^{(1)}
+ k_3 \, \bm c^{(2)}
+ k_4 \, \bm c^{(3)},
\end{equation}
where we have taken the basis of $\ker S$ as 
$\bm c^{(1)} \coloneqq 
\begin{bmatrix}
  1 & 1 & 0 & 0 & 0
\end{bmatrix}^\top
$,
$\bm c^{(2)} \coloneqq 
\begin{bmatrix}
  0 & 1 & 1 & 0 & 1
\end{bmatrix}^\top
$,
and 
$\bm c^{(3)} \coloneqq 
\begin{bmatrix}
  0 & 0 & 0 & 1 & 1
\end{bmatrix}^\top
$.
Notably, $\bar X^\star_m$ does not depend on $k_1 (a)$, which means that 
it is insensitive to the ligand concentration $a$. 
This indicates that $X^\star_m$ exhibits RPA with respect to the ligand concentration. 
As pointed out in Ref.~\cite{yi2000robust}, RPA in this system is realized through an integral feedback control. 
The corresponding integrator is identified as 
\begin{equation} 
\frac{d}{dt} 
\left( X^\star_m + X_m \right)
= - k_5 X_m^\star + k_3 + k_4
= - k_5 \delta X_m^\star , 
\end{equation}
where $\delta X_m^\star \coloneqq X_m^\star - \bar X_m^\star$ is the 
deviation of $X_m^\star$ from its steady-state value. 
This equation drives the value of $X_m^\star$ to $\bar X^\star_m$ 
regardless of the value of $k_1$ and $k_2$. 

Let us try to understand this phenomenon using the topological analysis.
The labeled buffering structures in the reaction network~\eqref{eq:chemo-reactions-2} 
are identified as 
\begin{align}
\gamma_1 &= (\{ X_m \}, \{ e_1,e_2 \} \cup \{ \}), \nonumber \\ 
\gamma^\ast_2 &= 
(\{ X_m \}, \{ \} \cup \{ e_2 \}), \nonumber \\ 
\gamma_3 &= 
(\{ X_m, X^\star_m \}, \{ e_1,e_2, e_3,e_5 \} \cup \{ \} ), \\
\gamma_4 &= 
(\{ X_m, X^\star_m \}, \{ e_1,e_2, e_4, e_5 \} \cup \{ \} ),\nonumber \\ 
\gamma_5 &= 
(\{ X_m, X^\star_m \}, \{ e_1,e_2 \} \cup \{ e_5 \}). \nonumber
\end{align}
We can conclude the following from the identified buffering structures: 
\begin{itemize}
    \item The independence of $\bar X_m^\star$ on $k_1$ and $k_2$ is understood from 
    the buffering structure $\gamma_1$. 
    Since $X^\star_m$ is outside $\gamma_1$, the parameters $k_1, k_2$ do not affect the steady-state value of $X^\star_m$. 
    \item The independence of all the reaction fluxes on $k_2$ 
    is understood from the strong buffering structure $\gamma_2^\ast$. 
\end{itemize}
The following two points are also consistent with the identified buffering structure, 
although these are trivial (since the rates of $e_3$ and $e_4$ are given externally): 
\begin{itemize}
    \item The rate $\bar r_4$ depends only on $k_4$, which can be explained by $\gamma_3$ (and partially by $\gamma_5$). 
    \item The rate $\bar r_3$ depends only on $k_3$, which can be explained by $\gamma_4$ (and partially by $\gamma_5$). 
\end{itemize}

\subsubsection{ Sniffer system } \label{sec:sniffer_system}

We note that, the RPA properties found through the identification of buffering structures are kinetics-independent ones. 
For example, 
the perfect adaptation realized through 
the {\it incoherent feedforward} (IFF) structure~\cite{https://doi.org/10.1038/msb.2011.49}
cannot be detected through buffering structures, 
since the adaptation here depends on the choice of kinetics 
and will disappear once the kinetics deviate from it. 
In this sense, the perfect adaptation by IFF is not so robust. 
As a example, let us consider the {\it sniffer system}~\cite{TYSON2003221,10041993}, 
\begin{align}
    e_1 &: v_3 \to v_1 + v_3 ,   \notag \\
    e_2 &: v_1 + v_2 \to v_2, \notag \\
    e_3 &: v_3 \to v_2 + v_3 ,\\
    e_4 &: v_2 \to \emptyset , \notag \\
    e_5 &: \emptyset \to v_3 , \notag \\
    e_6 &: v_3 \to \emptyset . \notag
\end{align}
The rate equations in mass action kinetics are given by 
\begin{align}
\dot x_1 &=  k_1 x_3 - k_2 x_1 x_2 ,\label{eq:ex-iff-x1} \\
\dot x_2 &= k_3 x_3 - k_4 x_2 , \label{eq:ex-iff-x2}\\
\dot x_3 &= k_5 - k_6 x_3 . 
\end{align}
The effect of $v_3$ on $v_1$ enters in two ways: 
directly via $e_1$ and indirectly via $e_3$ and $e_2$. 
Adaptation occurs due to the cancellation of the two effects.
The steady-state concentrations and reaction rates are 
\begin{equation}
\bar x_1 = \frac{k_1 k_4}{k_2 k_3}, 
\quad 
\bar x_2 = \frac{k_3 k_5}{k_4 k_6}, 
\quad 
\bar x_3 = \frac{k_5}{k_6},
\quad
\bar{\bm r}    
= 
\frac{k_1 k_5}{k_6} \bm c^{(1)}
+ 
\frac{k_3 k_5}{k_6} \bm c^{(2)}
+
k_5 \, \bm c^{(3)} .
\end{equation}
where we took a basis of $\ker S$ as 
$\bm c^{(1)}\coloneqq 
    \begin{bmatrix}
      1 & 1 & 0 & 0 & 0 & 0
    \end{bmatrix}^\top$, 
$\bm c^{(2)}\coloneqq 
    \begin{bmatrix}
      0 & 0 & 1 & 1 & 0 & 0
    \end{bmatrix}^\top
$, 
and 
$\bm c^{(3)}\coloneqq\begin{bmatrix} 0 & 0 & 0 & 0 & 1 & 1   \end{bmatrix}^\top
$.
The steady-state concentration $\bar x_1$ is independent of $k_5$ and $k_6$, 
which determine the value of $\bar x_3$. 
However, this independence is due to the fact that
$x_3$ enters linearly in Eqs.~\eqref{eq:ex-iff-x1} and \eqref{eq:ex-iff-x2}. 
Once the kinetics of $r_1$ or $r_2$ deviates from the linear form, 
$\bar x_1$ becomes dependent on $k_5$ and $k_6$. 
In this sense, this adaptation is a kinetics-dependent feature
and hence is not quite robust.

The labeled buffering structures are identified as 
\begin{align}
\gamma_1 &= (\{v_1 \},\{e_1,e_2 \} \cup \{ \}),\nonumber \\
\gamma^\ast_2 &= (\{v_1 \}, \{ \} \cup \{e_2 \} ),\nonumber \\
\gamma_3 &= (\{v_1,v_2 \},\{e_3,e_4 \} \cup \{ e_2 \} ), \\
\gamma^\ast_4 &= (\{v_1,v_2 \},\{ \} \cup \{e_2,e_4 \} ),\nonumber \\
\gamma_5 &= (\{v_1,v_2,v_3 \},\{e_1,e_2,e_3,e_4,e_5, e_6\} \cup \{ \} ) ,\nonumber
\\
\gamma_6 &= (\{v_1,v_2,v_3 \},\{e_1,e_2,e_3,e_4\}\cup \{ e_6\} ) .\nonumber
\end{align}
Indeed, all the buffering structures include $v_1$, 
and hence $\bar x_1$ can depend on every parameter for a generic choice of kinetics. 
We can infer the following RPA properties from the labeled buffering structures:
\begin{itemize}
\item $\bar x_3 \perp k_2, k_3, k_4$ is explained by $\gamma_3$, where $\perp$ means ``is independent of.''
\item $\bar x_2, \bar x_3 \perp k_1, k_2$ is explained by $\gamma_1$. 
\item $\bar r_1, \bar r_5, \bar r_6 \perp k_2, k_3, k_4$ is explained by $\gamma_3$. 
\item $\bar r_5, \bar r_6 \perp k_1,k_2,k_3,k_4$ is explained by $\gamma_1 \cup \gamma_3$. 
\item $\bar {\bm r} \perp k_2, k_4$, because of the strong buffering structure $\gamma_4^\ast$. 
\end{itemize}
The RPA properties above deduced from the buffering structures are kinetics-independent, 
unlike the property $\bar x_1 \perp k_5, k_6$, which is only true for mass-action kinetics and thus is not captured by buffering structures.

\subsubsection{ Ethanol production by yeasts }\label{sec:yeast}

\begin{figure}[tb]
  \centering
  \includegraphics
%  [keepaspectratio, scale=0.45]
  [clip, trim=0cm 0cm 0cm 0cm, scale=0.40]
  {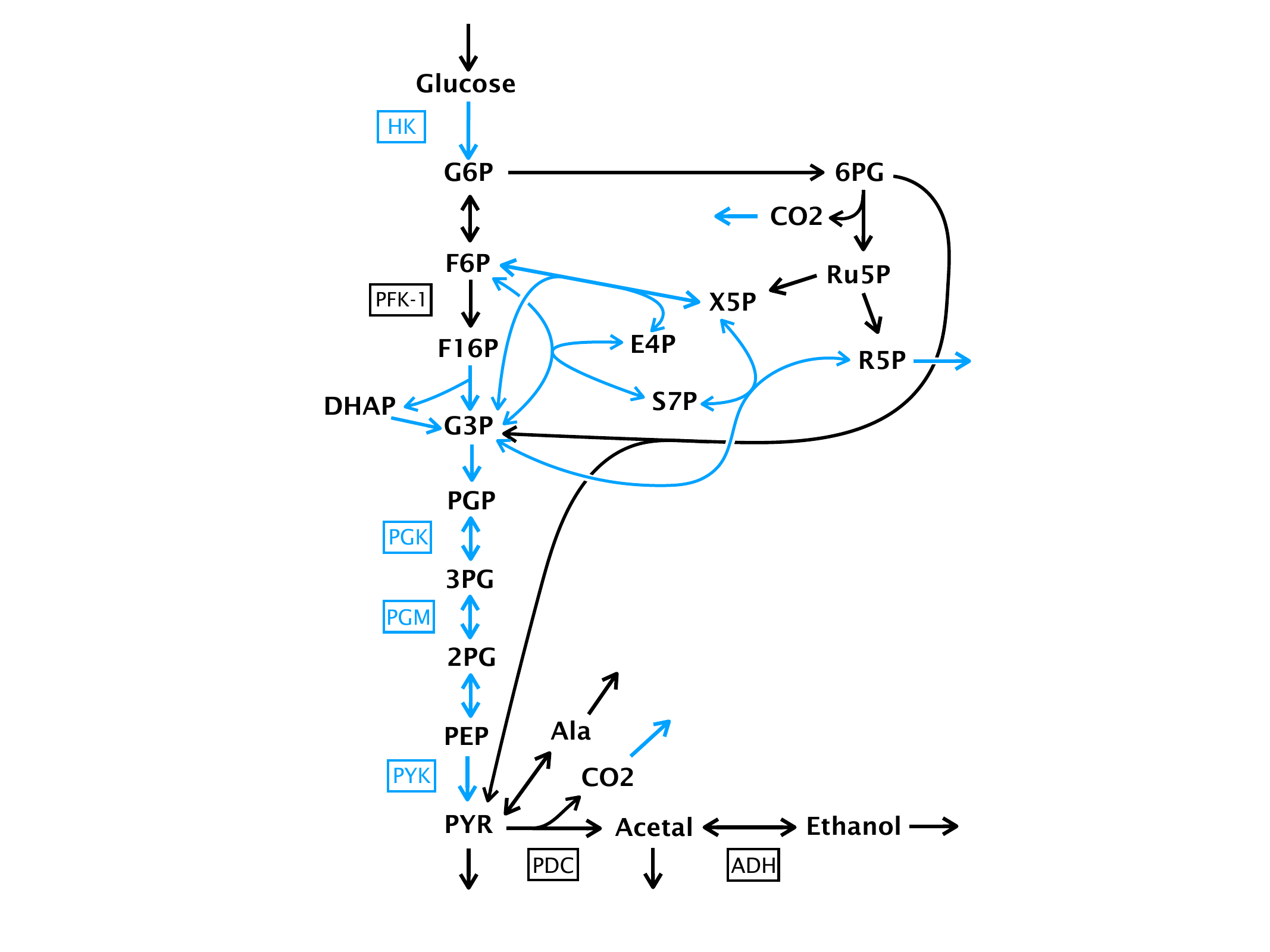} 
  \caption{
Glycolytic pathways of {\it S. cerevisiae}. 
Boxes indicate enzymes catalyzing the corresponding reactions. 
Abbreviations for metabolites: 
2PG, 2-phosphoglycerate;
3PG, 3-phosphoglycerate;
6PG, 6-phosphogluconate;
Acetal, acetaldehyde;
Ala, alanine;
DHAP, dihydroxyacetone phosphate;
E4P, erythrose 4-phosphate;
F6P, fructose 6-phosphate;
F16P, fructose 1,6-bisphosphate;
G3P, glyceraldehyde 3-phosphate;
G6P, glucose 6-phosphate;
PEP, phosphoenolpyruvate;
PGP, 1,3-bisphospho glycerate;
PYR, pyruvate;
R5P, ribose 5-phosphate;
Ru5P, ribulose 5-phosphate;
S7P, sedoheptulose 7-phosphate;
X5P, xylulose 5-phosphate.
Abbreviations for enzymes: 
ADH, alcohol dehydrogenase;
HK, hex-kinase;
PDC, pyruvate decarboxylase;
PFK-1, phosphofructokinase-1;
PGK, phosphoglycerate kinase;
PGM, phosphoglycerate mutase;
PYK, pyruvate kinase. }
  \label{fig:s-c-crn} 
\end{figure}

As a realistic example, let us discuss overexpression experiments 
of metabolic enzymes of the yeast {\it Saccharomyces cerevisiae}, 
which has been extensively used 
in metabolic engineering~\cite{doi:10.1128/MMBR.64.1.34-50.2000}. 
The quantity of interest here is the ethanol production. 
Overexpression experiments 
of glycolytic enzymes were performed 
for the purpose of increasing the flux to ethanol, but they did not lead to a significant improvement.
We revisit the results of these experiments 
from a topological perspective.

We employed the pathways shown in 
Fig.~\ref{fig:s-c-crn},
which consists of glycolysis and the pentose phosphate pathway (PPP) (see Sec.~\ref{sec:sc-reactions} for the list of reactions).
We identified all the labeled buffering structures in this network, which we list in Appendix~\ref{sec:sc-lbs}. 
Based on this, 
we selected the reactions that are included 
in buffering structures which do not include the ethanol production reaction (27 in the list
in Appendix~\ref{sec:sc-reactions}).
Namely, the perturbations of the parameters 
of the blue reactions do not affect ethanol production. 
For example, 
the overexpression of hexokinase (HK)
by 13.9 fold, 
which catalyzes the phosphorylation 
Glucose $\rightarrow$ G6P, 
has led to the improvement of only 7 \% 
compared to the wild type (see Table~\ref{table:yeast-ex-results}). 
Similarly, the overexpressions of 
phosphoglycerate kinase (PGK), 
phosphoglycerate mutase (PGM),
or 
pyruvate kinase (PYK) 
did not result in a significant increase in ethanol production. 
Although the amount of enzymes are increased severalfold, 
the ethanol production shows strong resistance to these changes.
These results are consistent with the expectation from buffering structures, 
since the reactions catalyzed 
by HK, PGK, PGM, PYK are 
inside buffering structures which do not include the flux to ethanol. 
Namely, the strong resilience of 
{\it S. cerevisiae} against these overexpressions can be understood as a result of integral feedback control associated with buffering structures, meaning that it is a consequence of the network topology. 
On the other hand, 
the black reactions can in general affect ethanol production. 
Thus, the overexpressions of 
phosphofructokinase-1 (PFK-1),
pyruvate decarboxylase (PDC),
or 
alcohol dehydrogenase (ADH)
can potentially increase ethanol production 
(see the lower part of Table~\ref{table:yeast-ex-results}
for the corresponding experimental results). 
Although we cannot make a definite judgement 
as to whether the experimental results support this, 
the identification of buffering structures allows us to narrow down the candidates for overexpression of enzymes.

\begingroup
\setlength{\tabcolsep}{10pt} 
\renewcommand{\arraystretch}{1.5} 

\begin{table}[tb]
\centering
\begin{tabular}{| c | c | c| c|} 
 \hline
 Enzyme
  & 
Overexpression fold 
&
Flux to ethanol (\% WT) 
&
Refs. 
\\ 
\hline\hline
 HK
 & 
 13.9
 &
 107
 &
 \cite{https://doi.org/10.1002/yea.320050408} 
  \\ 
 \hline 
 PGK
 &
 7.5
 &
 97
 &
 \cite{https://doi.org/10.1002/yea.320050408}
 \\ 
 \hline 
 PGM
 &
 12.2
 &
 107
 &
 \cite{https://doi.org/10.1002/yea.320050408}
 \\
 \hline
 PYK
 & 
 8.6
 &
 107
 &
 \cite{https://doi.org/10.1002/yea.320050408} 
  \\
 \hline
 \hline 
 PFK-1 (anaerobic)
 & 
 4.6
 &
 106
 &
 \cite{davies1992effects}
 \\ 
 \hline 
 PFK-1 (aerobic) 
 &
 4.6
 &
 130
 &
 \cite{davies1992effects}
 \\
 \hline
 PDC
 & 
 3.7
 &
 85
 &
 \cite{https://doi.org/10.1002/yea.320050408} 
 \\
 \hline
 ADH
 & 
 4.8
 &
 89
 &
 \cite{https://doi.org/10.1002/yea.320050408} 
 \\
 \hline
\end{tabular}
\caption{
Overexpression experiments of 
glycolytic enzymes of {\it S. cerevisiae}. 
}
\label{table:yeast-ex-results}
\end{table}
\endgroup

\section{Topological characterization of kinetics-independent maxRPA networks}\label{sec:equivalence}

So far we have discussed two distinct approaches (topological and control-theoretical) toward the study of RPA properties in chemical reaction networks. A natural question is how these approaches are related. In this section, we provide this connection by proving the equivalence of the two approaches in the maxRPA setting, which is going to give us an insight into the construction of integrators for a generic RPA property.

In the following, 
we first introduce the the decomposition of the influence index in Sec.~\ref{sec:decomposition},
which will be used in the proof. 
In Sec.~\ref{sec:example}, we describe examples 
to highlight the relation of the two approaches and gain an insight into the strategy to prove the equivalence. 
The proof will be completed in Sec.~\ref{sec:proof-th1}.

\subsection{ Decomposition of the influence index }\label{sec:decomposition}

To prove the equivalence, a crucial role is played by 
a decomposition of the influence index shown in Ref.~\cite{PhysRevResearch.3.043123}, which we introduce in this subsection.
The decomposition plays important roles in later analyses as well when we discuss integrators for generic RPA properties. 

In the following, we often choose a subnetwork,
and let us first introduce the notations associated with such a choice.
A subnetwork is specified by subsets of species and reactions, 
$\gamma = (V_\gamma, E_\gamma)$, 
with $V_\gamma \subset V$ and $E_\gamma \subset E$. 
We refer to the chemical species 
and reactions inside $\gamma$ 
as {\it internal}, and 
those in $\Gamma \setminus \gamma$ as {\it external}. 
Accordingly, the stoichiometric matrix $S$ 
can be partitioned into four blocks,
\begin{equation}
  S = 
\begin{bmatrix}
  S_{11} & S_{12} \\
  S_{21} & S_{22} 
\end{bmatrix}. 
\label{eq:s-sepa}
\end{equation}
where $1$ and $2$ correspond to 
internal and external degrees of freedom, respectively.

Now let us introduce the decomposition
of the influence index~\cite{PhysRevResearch.3.043123},
\begin{equation}
  \lambda (\gamma)
  = \widetilde c (\gamma) + d_l (\gamma)  - \widetilde d (\gamma) . 
  \label{eq:lambda-decom}
\end{equation}
The definition and meaning of each term is as follows: 
\begin{itemize}
    \item 
The first term is defined as     
$\widetilde{c}(\gamma)\coloneqq |\widetilde{\rm C}(\gamma)|$,
which is the dimension of the following space,
\begin{equation}
 {\rm \widetilde C}(\gamma)  
\coloneqq 
 \ker S_{11}\, / \,
 (\ker S)_{{\rm supp }\gamma} . 
\end{equation}
The term $\widetilde c(\gamma)$ 
represents the number of {\it emergent cycles} in $\gamma$. 
Intuitively, $\widetilde c(\gamma)$ is the number of cycles in subnetwork $\gamma$, 
that are not cycles in the whole network $\Gamma$. 
\item The second term is 
defined as $d_l (\gamma) \coloneqq |D_l (\gamma)|$
with
\begin{align}
D_l (\gamma) & \coloneqq (\coker S) / X(\gamma) , \label{eq:def-Dl}
\\
X (\gamma) 
&\coloneqq 
  \left\{ 
    \begin{bmatrix}
      \bm d_1 \\
      \bm d_2
    \end{bmatrix}
    \in \coker S 
    \, \middle| \,
    \bm d_1 \in \coker S_{11}
    \right\}. 
 \label{defn:xgamma}
\end{align}
This counts the number of conserved quantities in $\Gamma$ 
whose projections to $\gamma$ are not conserved inside $\gamma$. 
We call such conserved quantities as {\it lost conserved quantities}.
\item 
The third term 
$\widetilde{d}(\gamma) \coloneqq |\widetilde D(\gamma)|$
is the dimension of the space $\widetilde D(\gamma)$, 
which is defined as
\begin{align}
  \widetilde D(\gamma)
  &\coloneqq \coker S_{11} / D_{11}(\gamma) ,
\\
\label{defn:d11gamma}
  D_{11}(\gamma)
  &\coloneqq 
  \left\{ 
    \bm d_1 \in \coker S_{11} 
  \, \middle| \,
  \exists \, \bm d_2 {\text{ such that }} 
  \begin{bmatrix}
    \bm d_1 \\
    \bm d_2
  \end{bmatrix}
  \in \coker S 
  \right\} .
\end{align}
This counts the number of conserved quantities in $\gamma$ 
that cannot be extended to conserved quantities in $\Gamma$. 
We call such them as 
{\it emergent conserved quantities} in $\gamma$. 
\end{itemize}
Note that 
the integers $\widetilde c(\gamma),d_l(\gamma),$ and $\widetilde d(\gamma)$ are all nonnegative by definition.
We describe a linear-algebraic procedure to obtain 
the bases of $\wt{D}(\gamma)$, $\wt{\rm C}(\gamma)$
in Appendix~\ref{sec:find-basis}, that will be used later to construct integrator equations for a generic RPA property and also used in the discussion of manifold RPA. 
{\bf RPAFinder}~\cite{RPAFinder} implements methods to obtain bases of these spaces.

For later purposes, let us also introduce the decomposition of $\coker S$ (see Ref.~\cite{PhysRevResearch.3.043123} for derivation),
\begin{equation}
\coker S \simeq D_{11}(\gamma) \oplus D_l (\gamma) \oplus \bar D'(\gamma) , 
\label{eq:decom-coker-s}
\end{equation}
where $\bar D'(\gamma)$ denotes the space of conserved quantities of $\Gamma$ supported in $\Gamma \setminus \gamma$,
\begin{equation}
\label{defn:dprime_gamma}
\bar D' (\gamma) 
\coloneqq 
  \left\{ 
    \begin{bmatrix}
      \bm 0 \\
      \bm d_2
    \end{bmatrix}
    \in \coker S 
    \right\}
.
\end{equation}

We find that these quantities are in fact relevant in the description of maxRPA. 
In the maxRPA setting, the subnetwork $\gamma$ 
contains all the species and reactions other than $X$ and $\{e_{\bar 1}, e_{\bar 2}\}$, 
and the $M \times N$ dimensional stoichiometric matrix $S$ is partitioned into four blocks where $S_{11}$, $S_{12}$, $S_{21}$ and $S_{22}$ 
are of dimensions 
$(M-1) \times (N-2)$, 
$(M-1) \times 2$, 
$1 \times (N-2)$ and 
$1 \times 2$ respectively.
When $q_M = 0$, we can write vector $\bm q$ as 
$\bm q = 
\begin{bmatrix} \bm q_1 & 0 \end{bmatrix}
$, 
and due to Eq.~\eqref{eq:qs2} the $(M-1)$ dimensional vector $\bm q_1$ must satisfy
\begin{align}
\label{maxrpa_vectowq_1}
\bm q_1^\top S_{11} = \bm 0, \quad  \bm q_1^\top S_{12} 
= 
\begin{bmatrix}
\kappa & -1 
\end{bmatrix}
\quad \textnormal{and} \quad \bm q_1^\top \bar{D} = \bm 0,
\end{align}
where $\bar{D}$ is the $(M-1) \times |\bar \alpha|$ matrix formed by the first $(M-1)$ rows of $D$. This shows that $\bm q_1 \in \coker S_{11}$ and $\bm q_1$ cannot be extended to a vector in $\coker S$ by adding a component\footnote{This is because if there is a scalar $q_2$ such that $\begin{bmatrix}
    \bm q_1 \\
    q_2
\end{bmatrix} \in \coker S$, then since the columns of $D$ span $\coker S$, there exists a vector $y$ such that $D y = \begin{bmatrix}
    \bm q_1 \\
    q_2
\end{bmatrix}.$ This shows that 
$\|\bm q_1\|^2 = \begin{bmatrix} \bm q_1^\top & 0 \end{bmatrix}
\begin{bmatrix}
    \bm q_1 \\
    q_2
\end{bmatrix} = 
\begin{bmatrix}    
\bm q_1^\top & 0
\end{bmatrix}
D y = \bm q_1^\top \bar{D} y = 0$ (due to the last relation in Eq.~\eqref{maxrpa_vectowq_1}) and hence $\bm q_1 = \bm 0$ which is a contradiction as the second relation in Eq.~\eqref{maxrpa_vectowq_1} must hold.}. Hence, the vector $\bm q_1$ is an \emph{emergent conserved quantity} for the subnetwork $\gamma$. 
If the maxRPA network is kinetics-independent (see Definition \ref{defn:kinetics_indep_maxrpa}), 
then reactions $e_{\bar 1}$ and $e_{\bar 2}$ can only have the output species $v_M = X$ as a reactant which implies that all nonzero entries of $S_{12}$ must be positive. 
However, the second relation in Eq.~\eqref{maxrpa_vectowq_1} can hold only when $\bm q$ has both positive and negative components. 
Hence, a kinetics-independent maxRPA network can never be homothetic and it must be antithetic if $q_M=0$.
Interestingly, the same conclusion can be drawn for stochastic maxRPA networks (see Ref.~\cite{gupta2022universal}). Note that, as the example in Sec.~\ref{subsec:maxrpa_homo} shows, kinetics-independent maxRPA networks can be homothetic when $q_M \neq 0$.

The following proposition summarises the discussion in the previous paragraph.
\begin{proposition}
\label{prop:maxrpa}
Suppose $\Gamma$ is a kinetics-independent maxRPA network characterized by a pair $(\bm q, \kappa)$ that satisfies Eq.~\eqref{eq:qs2}. Suppose that the last component of $\bm q$ is $0$, and let $\bm q_1$ be the vector obtained by removing this last component from $\bm q$. Then this network must be antithetic and $\bm q_1$ is an emergent conserved quantity for the subnetwork $\gamma$ given by Eq.~\eqref{defn_gamma}.
\end{proposition}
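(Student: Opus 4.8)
The plan is to verify the two assertions—that $\bm q_1$ is an emergent conserved quantity for $\gamma$ and that $\Gamma$ is antithetic—directly from the defining relation \eqref{eq:qs2}, once the stoichiometric matrix and a cokernel basis are partitioned according to the splitting of $\Gamma$ into $\gamma$ and $\bar\gamma = (\{X\}, \{e_{\bar 1}, e_{\bar 2}\})$. First I would write $\bm q = \begin{bmatrix} \bm q_1^\top & 0 \end{bmatrix}^\top$ (using $q_M = 0$) and split $S$ into the blocks $S_{11}, S_{12}, S_{21}, S_{22}$ as in \eqref{eq:s-sepa}, together with a basis matrix $D$ of $\coker S$ whose first $M-1$ rows form $\bar D$. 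Reading off \eqref{eq:qs2} column by column then yields the three relations of \eqref{maxrpa_vectowq_1}: the columns of $S$ indexed by reactions of $\gamma$ give $\bm q_1^\top S_{11} = \bm 0$ (the $q_M S_{21}$ contribution drops out since $q_M = 0$); the two columns for $e_{\bar 1}, e_{\bar 2}$ give $\bm q_1^\top S_{12} = \begin{bmatrix} \kappa & -1 \end{bmatrix}$; and the $D$-columns give $\bm q_1^\top \bar D = \bm 0$, which is precisely the orthogonality $\bm q \in (\coker S)^\perp$ restricted to the first $M-1$ coordinates.

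The first relation already shows $\bm q_1 \in \coker S_{11}$. To conclude that it is \emph{emergent}, i.e.\ that its class in $\widetilde D(\gamma) = \coker S_{11}/D_{11}(\gamma)$ is nonzero (see \eqref{defn:d11gamma}), I would argue by contradiction: suppose there were a scalar $q_2$ with $\begin{bmatrix} \bm q_1^\top & q_2 \end{bmatrix}^\top \in \coker S$. Since the columns of $D$ span $\coker S$, choose $\bm y$ with $D \bm y = \begin{bmatrix} \bm q_1^\top & q_2 \end{bmatrix}^\top$; then
\[
\|\bm q_1\|^2 = \begin{bmatrix} \bm q_1^\top & 0 \end{bmatrix} \begin{bmatrix} \bm q_1 \\ q_2 \end{bmatrix} = \begin{bmatrix} \bm q_1^\top & 0 \end{bmatrix} D \bm y = \bm q_1^\top \bar D \bm y = 0
\]
by the third relation, forcing $\bm q_1 = \bm 0$. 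This contradicts $\bm q_1^\top S_{12} = \begin{bmatrix} \kappa & -1 \end{bmatrix} \neq \bm 0$. Hence no such extension exists, $\bm q_1 \notin D_{11}(\gamma)$, and $\bm q_1$ is a nonzero emergent conserved quantity of $\gamma$.

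For the antithetic claim I would use kinetics-independence (Definition \ref{defn:kinetics_indep_maxrpa}): since $X = v_M$ is the only admissible reactant of $e_{\bar 1}$ and $e_{\bar 2}$, none of $v_1, \dots, v_{M-1}$ is a reactant of these reactions, so every entry of $S_{12}$ has the form $t_{i,\bar j} - 0 = t_{i,\bar j} \ge 0$; thus all nonzero entries of $S_{12}$ are strictly positive. Now I apply $\bm q_1^\top S_{12} = \begin{bmatrix} \kappa & -1 \end{bmatrix}$ one column at a time: the second column gives $\sum_i (q_1)_i (S_{12})_{i2} = -1 < 0$ with all $(S_{12})_{i2} \ge 0$, which is impossible unless some $(q_1)_i < 0$; the first column gives $\sum_i (q_1)_i (S_{12})_{i1} = \kappa > 0$ with all $(S_{12})_{i1} \ge 0$, which forces some $(q_1)_i > 0$. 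Hence $\bm q$ has both positive and negative components and, by Definition \ref{defn:anti_homo_maxrpa}, $\Gamma$ is antithetic.

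The step I expect to be the crux is the non-extension argument in the second paragraph: the conclusion that $\bm q_1$ is \emph{emergent} rather than merely an element of $\coker S_{11}$ hinges entirely on having fixed the representative $\bm q$ in $(\coker S)^\perp$ (equivalently, on the $D$-columns of \eqref{eq:qs2}). Had $\bm q$ been chosen only up to $\coker S$—as condition 1 of Theorem~\ref{thm:maxRPA_characterization} alone permits—this conclusion could fail, so making the orthogonality hypothesis do its work is the key point; the remaining steps are routine block algebra.
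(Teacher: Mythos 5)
Your proof is correct and follows essentially the same route as the paper: the same block partition of $S$ and $D$, the same three relations read off from Eq.~\eqref{eq:qs2}, the identical $\|\bm q_1\|^2 = \bm q_1^\top \bar D \bm y = 0$ contradiction for non-extendability (which is exactly the paper's footnote argument), and the same sign analysis of $\bm q_1^\top S_{12} = \begin{bmatrix}\kappa & -1\end{bmatrix}$ using the nonnegativity of $S_{12}$ under kinetics-independence. Your closing observation that the emergence claim hinges on having fixed $\bm q \in (\coker S)^\perp$ is also the right diagnosis of where the hypothesis does its work.
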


\subsection{ Examples }\label{sec:example}

In this subsection, we compare the 
control-theoretical and topological approaches through simple examples that exhibit maxRPA to highlight the roles of the concepts introduced in the previous subsection. 
The intuition from these examples is going to be straightforwardly generalized to give the proof of the equivalence in Sec.~\ref{sec:proof-th1}.

\subsubsection{ Example: Homothetic case }\label{subsec:maxrpa_homo}

Let us discuss an example which corresponds a homothetic maxRPA network. 
The network consists of two species $\{x,y\}$ and 
three reactions,
\begin{align}
e_1&: \emptyset \to y ,\notag  \\
e_2&: y \to x, \\
e_3&: x \to \emptyset . \notag 
\end{align}
The stoichiometric matrix reads 
\begin{equation}
S =
\begin{blockarray}{cccc}
&& \\
\begin{block}{c[ccc]}
{y} \quad\,
& -1 & 1 & 0  \\
{x} \quad\, 
& 1 & 0 & - 1  \\
\end{block}
& e_2 & e_1 & e_3 
\end{blockarray} \,\,.    
\end{equation}
If we use mass-action kinetics, the rate equations are 
\begin{eqnarray}
\dot y &= k_1 - k_2 y ,
\\
\dot x &= k_2 y - k_3 x .
\end{eqnarray}
The steady-state concentrations are given by 
\begin{equation}
\bar x = \frac{k_1}{k_3}, 
\quad 
\bar y = \frac{k_1}{k_2}. 
\end{equation}
This is a maxRPA network for species $x$:
the $\bm q$ vector is identified as 
\begin{equation}
\bm q = 
\begin{bmatrix}  1 & 1 \end{bmatrix}^\top . 
\label{eq:q-vector-ex-homothetic}
\end{equation}
Indeed, it satisfies the condition \eqref{eq:qs}, 
\begin{equation}
\bm q^\top S = 
\begin{bmatrix}  0 & 1 & -1 \end{bmatrix}. 
\end{equation}
Since the components of $\bm q$ are all positive, 
this is a homothetic maxRPA network. 
The integrator is given by 
\begin{equation}
\frac{d}{dt} \bm q \cdot \bm x 
= 
\frac{d}{dt} (x + y)
= k_1 - k_3 x 
= -k_3 \delta x
, 
\end{equation}
where $\delta x \coloneqq x - \bar x$. 
This equation drives $x$ toward its steady-state value. 
In this example, 
$e_3$ is the sensing reaction, and $e_1$ is the set-point encoding reaction. 
The maxRPA property of this network corresponds to the fact that 
the target value of $x$, $\bar x$, is independent of $k_2$.

Let us examine this example with the topological analysis.
Labeled buffering structures in this system are 
\begin{align}
\gamma_1 &= (\{x,y\},\{e_1, e_2,e_3 \} \cup \{ \} ) (=\Gamma ) ,\nonumber \\
\gamma^\ast_2 &= (\{y\},\{ \} \cup \{ e_2 \} ), \\
\gamma^\ast_3 &= (\{x\},\{ \} \cup \{ e_3 \} ).\nonumber 
\end{align}
The subnetworks $\gamma_2^\ast$ and $\gamma_3^\ast$ 
are strong buffering structures.
The fact that $\bar x$ does not depend on $k_2$ 
is explained by $\gamma_2^\ast$, so this subnetwork is responsible for the maxRPA behavior. 
The decomposition of $\lambda (\gamma_2^\ast)$ reads 
\begin{equation}
\lambda (\gamma_2^\ast) 
= -1 + 1 - 0 + 0
=
\underset{=\widetilde c(\gamma_2^\ast)}{ 0 }
+
\underset{=d_l(\gamma_2^\ast)}{ 0 }
- 
\underset{=\widetilde d(\gamma_2^\ast)}{ 0 }
= 0.
\end{equation}
In the stoichiometric matrix, 
the subnetwork $\gamma_2^\ast$ can be highlighted as 
\begin{equation}
\begin{tikzpicture}
\node at (0, 0) {
$
S =
\begin{blockarray}{cccc}
&& \\
\begin{block}{c[ccc]}
{y} \quad\,
& -1 & 1 & 0  \\
{x} \quad\, 
& 1 & 0 & - 1  \\
\end{block}
& e_2 & e_1 & e_3 
\end{blockarray} \,\,.
$
}; 
\draw[mydarkred, dashed,line width=1] 
 (-0.4,0.2) rectangle (0.4, 0.7);
\node at (0,1) {\color{mydarkred}$S_{11}$} ;

\end{tikzpicture} 
\end{equation}
As we discuss later generically, 
in this case, the $\bm q$ vector is identified as 
\begin{equation}
\bm q = 
\begin{bmatrix}
    - S_{21}S^+_{11} & 1 
\end{bmatrix}^\top, 
\label{eq:q-vec-case1-ex-homothetic}
\end{equation}
where $S_{11}^+$ denote the Moore-Penrose inverse of $S_{11}$.
In this example, 
$S_{11}=\begin{bmatrix} - 1\end{bmatrix}$
and 
$S_{21}=\begin{bmatrix} 1\end{bmatrix}$, 
which are $1 \times 1$ matrices. 
Here, $S_{11}$ is invertible, 
and $S_{11}^{+} = S_{11}^{-1} = \begin{bmatrix}  -1 \end{bmatrix}$. 
Thus, we can see that 
Eq.~\eqref{eq:q-vec-case1-ex-homothetic} reproduces 
Eq.~\eqref{eq:q-vector-ex-homothetic}. 

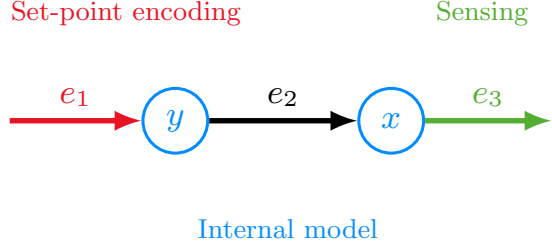
\begin{figure}[tb]
\centering
\resizebox{8cm}{!}{%
    \begin{tikzpicture}

    \node[species,color=mybrightblue] (v1) at (0,0) {$y$}; 
    \node[species,color=mybrightblue] (v2) at (2,0) {$x$}; 
 
    \node (d1) at (-1.65,0) {};
    \node (d2) at (0,-1.65) {};
    \node (d3) at (3.6,0) {};
    
    \draw[-latex,color=mydarkred, line width=0.5mm] (d1) edge node[above] {$e_1$} (v1); 
    \draw[-latex,line width=0.5mm] (v1) edge node[above] {$e_2$} (v2); 

    \draw[-latex,color=mygreen, line width=0.5mm] (v2) edge node[above]  {$e_3$} (d3);
  
     \node at (-0.5,1) { \scalebox{0.8} { \color{mydarkred}Set-point encoding}};
     \node at (2.8,1) { \scalebox{0.8} { \color{mygreen}Sensing}};
     \node at (1,-1) { \scalebox{0.8} { \color{mybrightblue}Internal model}};
 
\end{tikzpicture}
}
\caption{ Example of a homothetic maxRPA network.  } 
\end{figure}

\subsubsection{ Example: Antithetic case }\label{sec:ex-antithetic}

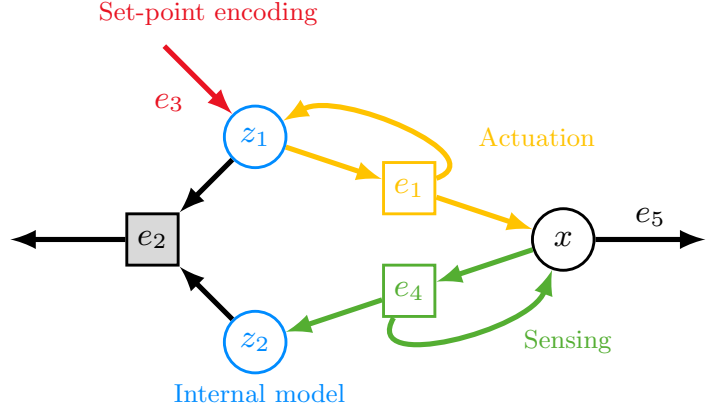
\begin{figure}[tb]
\centering
\resizebox{10cm}{!}{%
    \begin{tikzpicture}[bend angle=45] 
    \node[species,color=mybrightblue] (z1) at (0,2) {$z_1$}; 
    \node[species,color=mybrightblue] (z2) at (0,0) {$z_2$};
    \node[species] (x) at (3,1) {$x$}; 

    \node[rectangle,draw=black!100,thick, inner sep=0pt,minimum size=5mm,color=mybrown] (e1) at (1.5, 1.5) {$e_1$}; 
    
    \node[rectangle,draw=black!100,thick, inner sep=0pt,minimum size=5mm,color=mygreen] (e4) at (1.5, 0.5) {$e_4$}; 

    \node[reaction] (e2) at (-1, 1) {$e_2$}; 

    \node (d1) at (-2.5,1) {};    
    \node (d2) at (4.5,1) {};
    \node (d3) at (-1,3) {};

    \draw [-latex,line width=0.5mm] (x) edge node[above]{$e_5$} (d2);
    \draw [-latex,line width=0.5mm] (e2) -- (d1);
    \draw[-latex, line width=0.5mm, color=mydarkred] (d3) edge node[below left]{$e_3$} (z1);

    \draw[-latex, line width=0.5mm] (z1) -- (e2);  
    \draw[-latex, line width=0.5mm] (z2) -- (e2);

    \draw[-latex, line width=0.5mm,color=mybrown] (z1) -- (e1);
    \draw[-latex, line width=0.5mm,color=mybrown] (e1) -- (x);
    \draw[-latex,line width=0.5mm,out=20,in=30,color=mybrown] (e1) edge (z1); 

    \draw[-latex, line width=0.5mm,color=mygreen] (x) -- (e4);
    \draw[-latex, line width=0.5mm,color=mygreen] (e4) -- (z2);
    \draw[-latex,line width=0.5mm,out=240,in=250,color=mygreen] (e4) edge (x); 

     \node at (-0.5,3.2) { \scalebox{0.8} { \color{mydarkred}Set-point encoding}};
     \node at (3,0) { \scalebox{0.8} { \color{mygreen}Sensing}};
     \node at (0,-0.5) { \scalebox{0.8} { \color{mybrightblue}Internal model}};
     \node at (2.7,2) { \scalebox{0.8} { \color{mybrown}Actuation}};

\end{tikzpicture}
}
\caption{ Example of an antithetic maxRPA network. }
\label{fig:ex-x-z1-z2}
\end{figure}

We consider a reaction network consisting of the following set of reactions:
\begin{align}
    e_1 &: z_1 \to x + z_1,  \notag \\
    e_2 &: z_1 + z_2 \to \emptyset, \notag \\
    e_3 &: \emptyset \to z_1 ,\\
    e_4 &: x \to x + z_2 , \notag \\
    e_5 &: x \to \emptyset . \notag
\end{align}    
The structure of the network is shown in Fig.~\ref{fig:ex-x-z1-z2}. 
The corresponding rate equation is 
\begin{align}
\dot x &= k_1 z_1 - k_5 x, \\
\dot z_1 &= - k_2 z_1 z_2 + k_3 ,\\
\dot z_2 &= - k_2 z_1 z_2 + k_4 x. 
\end{align}
In this network, $e_3$ is a set-point encoding reaction, 
and $e_4$ is a sensing reaction.
Reaction $e_1$ is an actuation from the internal model to the 
the rest of the network for countering the disturbances. 
We here employed mass-action kinetics for definiteness, 
but the conclusion does not depend on the choice of kinetics.
The stoichiometric matrix is given by 
\begin{equation}
S =
\begin{blockarray}{cccccc}
&& \\
\begin{block}{c[ccccc]}
{z_2} \quad\,
& -1 & 0 & 0 & 1 & 0  \\
{z_1} \quad\, 
& -1 & 0 & 0 & 0 & 1  \\
{x} \quad\, 
& 0 & 1 & -1 & 0  & 0  \\
\end{block}
& e_2 & e_1 & e_5 & e_4 & e_3
\end{blockarray} \,\,,    
\label{eq:s-anti-ex}
\end{equation}
where rows and columns are arranged 
for later convenience. 
The steady-state concentrations are given by 
\begin{equation}
\bar x = \frac{k_3}{k_4}, 
\quad
\bar z_1 = \frac{k_3 k_5}{k_1 k_4}, 
\quad 
\bar z_2 = \frac{k_1 k_4}{k_2 k_5},
\end{equation}
and the steady-state fluxes are 
\begin{equation}
\bar{ \bm r} = \frac{k_3k_5}{k_4} \bm c^{(1)} + k_3 \, \bm c^{(2)},
\end{equation}
where we have taken the basis vectors of $\ker S$ as 
\begin{equation}    
    \ker S = {\rm span\,} \{ 
    \begin{bmatrix}
      0 & 1 & 1 & 0 & 0
    \end{bmatrix}^\top,
    \begin{bmatrix}
      1 & 0 & 0 & 1 & 1
    \end{bmatrix}^\top
    \}
    \eqqcolon 
{\rm span} \left\{ \bm c^{(1)}, \bm c^{(2)} \right\} .
\end{equation}
Here, the components are arranged in the same order as the column of 
$S$ in Eq.~\eqref{eq:s-anti-ex}.

The value of $\bar x$ is determined only by $k_3$ and $k_4$, and it does not depend on other parameters. 
Namely, this is an example of a maxRPA network.
The vector $\bm q$ in this example is given by 
\begin{equation}
\bm q = 
\begin{bmatrix}
1 & -1 & 0 
\end{bmatrix}^\top .
\label{eq:ex-anti-q-vec}
\end{equation}
Indeed, $\bm q$ satisfies the condition for maxRPA,
\begin{equation}
\bm q^\top S 
= 
%\begin{bmatrix}
%\bm 0^\top & \bm q_1^\top S_{12}
%\end{bmatrix}
%=
\begin{bmatrix}
0 & 0 & 0 & 1 & -1 
\end{bmatrix}. 
\end{equation}
With this $\bm q$ vector, one can identify the integrator as 
\begin{equation}
\frac{d}{dt}
(z_1 - z_2) 
= k_3 - k_4 x 
= - k_4 \delta x, 
\end{equation}
where $\delta x \coloneqq x - \bar x$.

Let us analyze this example based on the topological analysis. 
We can identify the labeled buffering structures of this system as 
\begin{align}
\gamma^\ast_1 &= ( \{ z_1,z_2 \}, \{ \} \cup \{ e_1,e_2 \}) ,\nonumber \\ 
\gamma^\ast_2 &= ( \{ z_2 \}, \{ \} \cup \{ e_2 \}) ,\nonumber \\ 
\gamma_3 &= ( \{x,z_1, z_2 \}, \{ e_1,e_2,e_3,e_4,e_5 \} \cup  \{ \}) (=\Gamma), \\ 
\gamma_4 &= ( \{x,z_1, z_2 \}, \{e_1,e_5 \} \cup \{ e_2,e_4 \}) ,\nonumber \\ 
\gamma_5 &= ( \{z_1, z_2 \}, \{e_1,e_5 \} \cup \{ e_2 \}) . \nonumber
\end{align}
From these buffering structures, 
we can, for example, infer the following: 
\begin{enumerate}
    \item $\bar x \perp k_1,k_2,k_5$ is understood from $\gamma_5$. 
    \item $\bar z_1 \perp k_2$ is explained by $\gamma^\ast_2$. 
    \item All the steady-state reaction fluxes satisfy
    $\bar{\bm r} \perp k_1, k_2$, can be understood from the strong buffering structure $\gamma^\ast_1$. 
\end{enumerate}
Let us focus on $\gamma_5$, which is responsible for 
the maxRPA property of $\bar x$. 
The subnetwork $\gamma_5$ 
corresponds to the following part of 
$S$, 
\begin{equation}
\begin{tikzpicture}
\node at (0, 0) {
$
S =
\begin{blockarray}{cccccc}
&& \\
\begin{block}{c[ccccc]}
{z_1} \quad\, 
& 0 & -1 & 0 & 1 & 0
\\
{z_2} \quad\, 
& 0 & -1 & 0 & 0 & 1 
\\
{x} \quad\,
& 1 & 0 & -1 & 0 & 0 
\\
\end{block}
 & e_1 & e_2 & e_5
 & e_3 & e_4 
\end{blockarray} \,\, ,
$
}; 
\draw[mydarkred, dashed,line width=1] 
 (-0.75,0) rectangle (1.1, 0.9);
\node at (-0.1, 1.25) {\color{mydarkred}$S_{11}$} ;

\end{tikzpicture} 
\end{equation}
where $\gamma_5$ is indicated by a red rectangle. 
In this example, the internal part of $\bm q$ turns out to be an emergent conserved quantity, as we show below.  
The spaces $\ker S_{11}$ and $\coker S_{11}$ are given by
\begin{align}
\ker S_{11}  
&= 
{\rm span}
\{ 
\begin{bmatrix}
1 & 0 & 1    
\end{bmatrix}^\top,
{
\color{mydarkred} 
\begin{bmatrix}
1 & 0 & -1    
\end{bmatrix}^\top
}
\}, \\
\coker S_{11}    
&= 
{\rm span}
\{ 
{
\color{mydarkred} 
\begin{bmatrix}
    1 & -1 
\end{bmatrix}^\top 
}
\} ,
\label{eq:ex-anti-coker-s11}
\end{align}
where colored vectors are emergent ones (note that $\begin{bmatrix}
1 & -1    
\end{bmatrix} \in \coker S_{11}$, but it cannot be extended to an element in $\coker S$, which is trivial in this example).
We can see that the internal part of the $\bm q$ vector \eqref{eq:ex-anti-q-vec} is the same as the emergent conserved quantity in Eq.~\eqref{eq:ex-anti-coker-s11}. 
The network $\gamma_5$ 
has one emergent conserved quantity 
and one emergent cycle,
and the influence index of $\gamma_5$ is decomposed as 
\begin{equation}
\lambda (\gamma_5) 
= -2 + 3 - 1 + 0
=
\underset{=\widetilde c(\gamma_5)}{ 1 }
+
\underset{=d_l(\gamma_5)}{ 0 }
- 
\underset{=\widetilde d(\gamma_5)}{ 1 }
= 0. 
\end{equation}
This example suggests that, when the subnetwork has an emergent conserved quantity, $\wt{d}(\gamma)>0$, we can use it to construct the integrator. 
We will show that this observation is true for generic cases in 
the proof of Theorem~\ref{thm:maxrpa_equivalence}.

\subsection{ Proof of Theorem~\ref{thm:maxrpa_equivalence} } \label{sec:proof-th1}

From the examples in the previous subsection, we have learned that 
the integrators (or equivalently the $\bm q$ vectors satisfying Eq.~\eqref{eq:qs}) are constructed 
in different ways, depending on whether $\wt{d}(\gamma)=0$ or not. 
As we see below, this observation is true for a generic chemical reaction system, and we can now describe the proof of Theorem~\ref{thm:maxrpa_equivalence}.

\begin{proof}[Proof of Theorem~\ref{thm:maxrpa_equivalence}]
As we discussed at the end of Section \ref{sec:maxrpa_charac}, a network $\Gamma$ is kinetics-independent maxRPA if the last two reactions do not have a species other than the output species $X$ as a reactant. 
This restriction of the kinetics of reactions in subnetwork $\bar{\gamma}$ is equivalent to the subnetwork $\gamma$ being output-complete. 
Hence, to prove Theorem~\ref{thm:maxrpa_equivalence}, it suffices to prove that the zero influence index condition (i.e\ $\lambda(\gamma) = 0$) is equivalent to the existence of a pair $(\bm q, \kappa)$ satisfying Eq.~\eqref{eq:qs}.

Let us first assume that $\gamma$ is a buffering structure. 
We note that when $\lambda(\gamma)=0$ is satisfied, 
$d_l (\gamma)=0$ is always true
from the assumption of the existence of a steady state,
as we show later near Eq.~\eqref{eq:num-var-minus-num-eqs-x2}.
Thus, $\wt{c}(\gamma)=\wt{d}(\gamma)$ holds under the current assumption. 
We consider the following two cases,  
$\widetilde c(\gamma)=0$ 
and 
$\widetilde c(\gamma)\neq0$,
separately. 

\begin{itemize}
\item 
Suppose that $\widetilde c (\gamma)=0$
and $\gamma$ does not have an emergent cycle. 
Since $\lambda(\gamma)=0$
by assumption, 
we also have $\widetilde d(\gamma)=0$. 
As we show in Appendix~\ref{sec:iso}, 
when $\widetilde c(\gamma)=\wt{d}(\gamma)=0$, 
we have the following isomorphisms,
\begin{align}
\label{eq:first_iso}
\ker S / \ker S_{11}
&\simeq \ker S', \\
\label{eq:second_iso}
\coker S / \coker S_{11} 
&\simeq \coker S', 
\end{align}
where 
$S' \coloneqq S_{22} - S_{21} S^+_{11} S_{12}$ is the generalized Schur complement. 
Note that $S'$ is $1 \times 2$ matrix in the current setting. 
We can decompose $\coker S$ as in Eq.~\eqref{eq:decom-coker-s},
and since $d_l(\gamma)=|D_l(\gamma)|=0$, we have 
$\coker S \simeq D_{11}(\gamma) \oplus \bar D'(\gamma)$.
Since $\wt{d}(\gamma)=0$, we have 
$\coker S_{11} \simeq D_{11}(\gamma)$ and
\begin{equation}
\coker S / \coker S_{11} \simeq \bar D' (\gamma). 
\end{equation}
Recall that $\bar D'(\gamma)$ is the space of conserved quantities of $\Gamma$ whose support is in $\bar \gamma = \Gamma \setminus \gamma$, which contains only $X$ as species. 
Since $X$ itself does not constitute a conserved quantity\footnote{ If that is the case, its concentration is solely determined by the value of the conserved quantity, which contradicts the assumption that the steady-state concentration of $X$ depends nontrivially on $k_{\bar 1}$ and $k_{\bar 2}$ (see the text near Eq.~\eqref{max_rpa_defn}).
}, 
we should have $\coker S' = \bm 0$.
Thus, $S'$ is of rank 1 and is not 
a zero matrix. 
In this situation, 
we can pick a vector $\bm q$ by 
\begin{equation}
\bm q = 
c
\begin{bmatrix}
    - S_{21}S^+_{11} & 1 
\end{bmatrix}^\top,
\label{eq:q-vec-case1}
\end{equation}
where $c$ is an overall constant. 
Indeed, 
\begin{equation}
\begin{split}
\bm q^\top S 
&= 
c
\begin{bmatrix}
    - S_{21}S^+_{11} & 1 
\end{bmatrix}
\begin{bmatrix}
S_{11} & S_{12} \\
S_{21} & S_{22} 
\end{bmatrix}    
\\
&= 
c 
\begin{bmatrix}
S_{21}
\left(1 - S_{11}^+ S_{11} \right) 
& S'
\end{bmatrix}. 
\end{split}
\end{equation}
Note that $1 - S_{11}^+ S_{11}$ is the projection matrix to $\ker S_{11}$. 
Since here we have 
$\widetilde c(\gamma)=0$,
which is equivalent to $\ker S_{11} \subset \ker S_{21}$, 
we can write $\bm q^\top S$ as 
\begin{equation}
\bm q^\top S 
= 
c 
\begin{bmatrix}
\bm 0^\top 
& S'
\end{bmatrix} . 
\end{equation}
By choosing the constant $c$ appropriately, 
we can define $\bm q$ satisfying Eq.~\eqref{eq:qs}. 

\item Suppose that $\widetilde c(\gamma) \neq 0$. 
Since 
$\widetilde c(\gamma) = \widetilde d(\gamma)$,
there exists an emergent conserved quantity, $\bm q^\top = 
\begin{bmatrix}\bm q_1^\top & 0\end{bmatrix}$,
which satisfies 
$\bm q_1^\top S_{11} = \bm 0 ^\top$. 
By multiplying $\bm q$ on $S$ from the left, 
\begin{equation}
\bm q^\top S 
= 
\begin{bmatrix}
\bm 0^\top & \bm q_1^\top S_{12}
\end{bmatrix}, 
\end{equation}
where $\bm q_1^\top S_{12}$ is not a zero matrix. 
\end{itemize}
In both cases, we can construct a pair 
$(\bm q, \kappa)$ satisfying Eq.~\eqref{eq:qs} 
(note that 
we assume the existence of a steady state solution, and 
$\bm q^\top S \bm r = 0$ should be able to determine $\bar x$, 
which necessitates that the components of $\bm q^\top S$ should have opposite signs and both should be nonzero, since 
 reaction fluxes are positive). 
Thus, we have shown that if $\gamma$ is a buffering structure then kinetics-independent maxRPA also holds. 
\\

Conversely, let us assume that conditions for kinetics-independent maxRPA are satisfied and 
prove that $\gamma$ is a buffering structure. Since $\gamma$ is output-complete (by definition of kinetics-independent maxRPA), we just need to show that its influence index vanishes, $\lambda(\gamma) = 0$. We first consider the case $d_l (\gamma) \neq 0$. As $\bar{\gamma}$ contains only one species, $d_l (\gamma)=1$, and there exists a conserved quantity $\begin{bmatrix}
  \bm d_1 \\
  d_2
\end{bmatrix} \in \coker S$
with $d_2 \neq 0$. This implies that $\bm d_1^\top S_{11} + d_2 S_{21} = \bm 0$. So if we pick any $\bm c_1 \in \ker S_{11}$, then we have $d_2 S_{21} \bm c_1 = \bm 0$ and as $d_2 \neq 0$ we must have $\bm c_1 \in \ker S_{21}$. Since $\bm c_1$ is an arbitrary element of $\ker S_{11}$, 
we have $\ker S_{11} \subset \ker S_{21}$,
which is equivalent to 
$\widetilde c(\gamma)=0$. 
This implies $\lambda(\gamma)= 1 - \widetilde d(\gamma)$. As $\lambda(\gamma)$ must be nonnegative if the system is stable, in order to show that $\lambda(\gamma) = 0$ it suffices to prove that $\widetilde d(\gamma) \geq 1$. We denote the vector $\bm q$ satisfying Eq.~\eqref{eq:qs} as
\begin{equation}
\label{conv_maxrpa_q_decomp}
\bm q^\top = 
\begin{bmatrix}
    \bm q_1^\top & q_2
\end{bmatrix}. 
\end{equation}
Letting $\bm q' = \bm q - \frac{q_2}{d_2} \begin{bmatrix}
  \bm d_1 \\
  d_2
\end{bmatrix}$
we see that $\bm q'$ also satisfies Eq.~\eqref{eq:qs} and its last component is $0$. Then the first $(M-1)$ components of $\bm q'$ form an emergent conserved quantity. This is because 
\begin{equation}
\bm q'^\top S = 
\begin{bmatrix} \bm q'^\top_1 & 0 \end{bmatrix} 
\begin{bmatrix}
    S_{11} & S_{12} \\ 
    S_{21} & S_{22}
\end{bmatrix}
= 
\begin{bmatrix}
  \bm q'^\top_1 S_{11} &   \bm q'^\top_1 S_{12}
\end{bmatrix}
= 
\begin{bmatrix}
\bm 0^\top & \kappa & -1 
\end{bmatrix}, 
\end{equation}
which implies $\bm q'_1 \in \coker S_{11}$ while $\bm q' \notin \coker S$. 
Thus, there must be at least one emergent conserved quantity $\widetilde d(\gamma) \geq 1$, which proves $\lambda(\gamma) = 0$ in the case $d_l (\gamma)=1$\footnote{
In fact, as we discuss in Sec.~\ref{sec:gen-int}, $\lambda(\gamma)=0$ implies $d_l (\gamma)=0$. So the situation $d_l(\gamma)=1$ in fact does not happen when maxRPA is realized, although this does not affect the current proof. 
}.

Now we come to the case $d_l (\gamma)=0$ where the influence index can be written a $\lambda(\gamma)= \wt{c}(\gamma) - \wt{d}(\gamma)$. 
We denote the vector $\bm q$ satisfying Eq.~\eqref{eq:qs} as in Eq.~\eqref{conv_maxrpa_q_decomp}. From the assumption, 
\begin{equation}
\bm q^\top S 
= 
\begin{bmatrix}
\bm q_1^\top S_{11} + q_2 S_{21} & 
\bm q_1^\top S_{12} + q_2 S_{22} 
\end{bmatrix}
= 
\begin{bmatrix}
\bm 0^\top &  \cdots 
\end{bmatrix}. 
\end{equation}
Thus, we have 
\begin{equation}
\bm q_1^\top S_{11} + q_2 S_{21} = \bm 0^\top  .
\label{eq:q1s1+p2s2}
\end{equation}
When $\ker S_{11}$ is trivial, we have $\wt{c}(\gamma)=0$ 
and 
this implies the vanishing of the influence index (note that 
$\lambda(\gamma)$ and $\widetilde d(\gamma)$  are nonnegative). 
When $\ker S_{11}$ is nontrivial, 
let us pick a nonzero element $\bm c_1 \in \ker S_{11}$. 
By multiplying $\bm c_1$ on Eq.~\eqref{eq:q1s1+p2s2} from the right, we have 
\begin{equation}
q_2 S_{21} \bm c_1 = 0  . 
\end{equation}
Suppose that 
$q_2 \neq 0$.
Since $\bm c_1$ is an arbitrary element of $\ker S_{11}$, 
we have $\ker S_{11} \subset \ker S_{21}$,
which is equivalent to 
$\widetilde c(\gamma)=0$. 
This implies $\lambda(\gamma)=0$. 
Thus, $\gamma$ is a buffering structure. 

Let us consider the case $q_2=0$. 
Then, $\bm q_1$ is an emergent conserved quantity,
and $\widetilde d (\gamma)=1$ (note that $0 \le \wt{d}(\gamma) \le |V \setminus V_\gamma|$). 
Since $\lambda(\gamma) = \widetilde c(\gamma)
- 
\widetilde d(\gamma)
$ is nonnegative for an asymptotically stable system, 
$\widetilde c(\gamma) \ge 1$. 
Thus, there exists at least one emergent cycle, 
meaning that we have $\bm c_1$ such that\footnote{
This equation means that susceptible reactions can independently affect the concentration $x$ of the target species.
}
\begin{equation}
S 
\begin{bmatrix}
    \bm c_1 \\
    \bm 0
\end{bmatrix}
= 
\begin{bmatrix}
    \bm 0 \\
    v
\end{bmatrix},
\label{eq:sc1-v}
\end{equation}
with $v \neq 0$. 
In fact, there is only one emergent cycle, $\wt{c}(\gamma)=0$.
To see this, suppose that there exists $\bm c_2$ with the same property. 
It can be normalized to satisfy 
\begin{equation}
S 
\begin{bmatrix}
    \bm c_2 \\
    \bm 0
\end{bmatrix}
= 
\begin{bmatrix}
    \bm 0 \\
    v
\end{bmatrix}. 
\label{eq:sc2-v}
\end{equation}
By taking the difference 
of Eqs.~\eqref{eq:sc1-v} and \eqref{eq:sc2-v}, 
\begin{equation}
S
\begin{bmatrix}
 \bm c_1 - \bm c_2 \\
 \bm 0
\end{bmatrix}
= 
\bm 0. 
\end{equation}
This means that 
$\bm c_1 - \bm c_2 \in (\ker S)_{{\rm supp\,}\gamma}$. 
Indeed, the choice of $\bm c_1$ is unique
up to an element in $(\ker S)_{{\rm supp\,}\gamma}$, 
which means that $\widetilde c(\gamma)=1$. 
Therefore, we have $\lambda(\gamma)=\widetilde c(\gamma) - \widetilde d(\gamma)=0$. 
Thus, we have shown that, when the conditions for maxRPA are true, 
$\gamma$ is a buffering structure. 
This concludes the proof of Theorem~\ref{thm:maxrpa_equivalence}. 
\end{proof}

Let us make a comment.
In the proof, we have separated the cases depending on whether $\wt{d}(\gamma)=0$ or $1$. 
In the case $\lambda(\gamma) = \wt{d}(\gamma)=\wt c(\gamma) = 0$, the vector $\bm q$ is given by Eq.~\eqref{eq:q-vec-case1}. 
This vector is nothing but the matrix representation of 
a {\it reduction morphism}~\cite{PhysRevResearch.3.043123}:  
in Ref.~\cite{PhysRevResearch.3.043123}, 
mappings between reaction networks are considered, 
a reduction morphism is a map between reaction networks 
under which all the complexes in $V_\gamma$ 
are mapped to a single complex in $\Gamma \setminus \gamma$.
Indeed, the expression \eqref{eq:q-vec-case1} corresponds 
to a special case of Eq.~(141) in Ref.~\cite{PhysRevResearch.3.043123}
(in the current situation, $\Gamma \setminus \gamma$ 
contains only one species and two reactions).

\section{ Integral feedback control for generic RPA } \label{sec:integrator-general}

In Sec.~\ref{sec:rpa-lbs}, we have shown that any generic RPA property in a deterministic reaction system can be represented by a buffering structure.
In this section, we construct the integral feedback controller corresponding to a given buffering structure. 
This allows us to identify the integral control mechanism for \emph{any} generic RPA property in a deterministic chemical reaction system.

\subsection{ Construction of integrators for a buffering structure }\label{sec:gen-int}

According to the IMP, we can expect that RPA with respect to constant-in-time disturbances is realized through integral control, and this expectation turns out to be correct. 
To find the integrator equations, we reformulate the reaction system to an equivalent form that is better suited for this purpose. 
For a given subnetwork, $\gamma$,
which we take to be a buffering structure, 
meaning that $\gamma$ is output-complete and $\lambda(\gamma)=0$\footnote{
In fact, the present formulation is applicable to the case $\lambda(\gamma) >0$. This leads to the phenomenon of manifold RPA, that we discuss in detail in Sec.~\ref{sec:manifold-rpa}.
}, 
we separate the chemical concentrations and reaction rates as 
\begin{equation}
    \bm x = 
    \begin{bmatrix}
      \bm x_1 \\
      \bm x_2
    \end{bmatrix}, 
    \quad 
    \bm r = 
    \begin{bmatrix}
      \bm r_1 \\
      \bm r_2
    \end{bmatrix}.
\end{equation}
With the separation of internal and external degrees of freedom (to $\gamma$),
the rate equations of the whole reaction system is written as 
\begin{equation}
  \frac{d}{dt}
  \begin{bmatrix}
    \bm x_1 \\
    \bm x_2
  \end{bmatrix}
  = 
  \begin{bmatrix}
    S_{11} & S_{12} \\
    S_{21} & S_{22} 
  \end{bmatrix} 
  \begin{bmatrix}
    \bm r_1 \\
    \bm r_2
  \end{bmatrix}
  = 
  \begin{bmatrix}
    S_{11} \bm r_1 + S_{12} \bm r_2 \\    
    S_{21} \bm r_1 + S_{22} \bm r_2 
  \end{bmatrix}.  
  \label{eq:rate-eq-block}
\end{equation}
While the internal reaction rates 
$\bm r_1 = \bm r_1 (\bm x_1, \bm x_2)$ 
in general depend on both the internal and external
chemical concentrations, 
the external reaction rates are functions of only the concentrations of external species, $\bm r_2 = \bm r_2 (\bm x_2)$
because $\gamma$ is chosen to be output-complete.
The first equation 
of Eq.~(\ref{eq:rate-eq-block}) can be solved for 
$\bm r_1$ as 
\begin{equation}
 \bm r_1 = S_{11}^+ \dot{\bm x}_1- S_{11}^+ S_{12} \bm r_2 + \bm c_{11},
 \label{eq:r1-equal}
\end{equation}
where $S_{11}^+$ is the Moore-Penrose inverse of $S_{11}$, 
and $\bm c_{11}$ is an arbitrary element in $\ker S_{11}$.
Substituting this to the second equation of Eq.~(\ref{eq:rate-eq-block}), 
\begin{equation}
\frac{d}{dt}
\left( 
\bm x_2 - S_{21} S_{11}^+ \bm x_1
\right) 
= 
S' \bm r_2 (\bm x_2) + S_{21} \bm c_{11} ,
\label{eq:sp-r2+s21-c11}
\end{equation}
where $S'$ is the generalized Schur complement,
\begin{eqnarray}
S' \coloneqq S_{22}  - S_{21} S_{11}^+ S_{12}. 
\label{eq:def-sp} 
\end{eqnarray}
The second term of the RHS of Eq.~\eqref{eq:sp-r2+s21-c11}
vanishes if and only if the following condition is satisfied, 
\begin{equation}
  \ker S_{11} \subset \ker S_{21}  , 
  \label{eq:s11-s21-cond}
\end{equation}
which is equivalent the absence of emergent cycles,
$\wt{c}(\gamma)=0$\footnote{
When $\wt{c}(\gamma)=0$ is satisfied, 
the second term of Eq.~\eqref{eq:sp-r2+s21-c11} vanishes, and 
we have 
\begin{equation}
  \frac{d}{dt}
  \left(
 {\bm x}_2 - S_{21} S^+_{11} {\bm x}_1
 \right)
 = 
S' \bm r_2 (\bm x_2), 
\label{eq:rate-red-int}
\end{equation}
This motivates us to consider the subnetwork 
$(\bm x_2, \bm r_2)$
whose rate equation is given by 
\begin{equation}
  \frac{d}{dt} {\bm x}_2 
   = S' \bm r_2 (\bm x_2).  
\end{equation}  
Namely, as long as steady states are concerned, 
the subnetwork $(\bm x_2, \bm r_2)$ 
satisfies the rate equation 
whose stoichiometric matrix is $S'$. 
Based on this observation, 
Ref.~\cite{PhysRevResearch.3.043123} proposed 
to use the generalized Schur complement as the stoichiometric matrix of the reduced system. 
In particular, if $\gamma$ is a buffering structure and $\wt{d}(\gamma)=0$, the reduced system is guaranteed to have the same steady-state solution for $\bm x_2$. 
}. 
In general, $\wt{c}(\gamma)$ can be nonzero, and 
the second term on the RHS of Eq.~\eqref{eq:sp-r2+s21-c11}
cannot be dropped. 
To account for this ambiguity,
there appear $\wt{c}(\gamma)$ undetermined variables.
Let us pick a basis for the space of emergent cycles, 
\begin{equation}
\wt{\rm C}(\gamma) = \ker S_{11} / (\ker S )_{{\rm supp}\,\gamma}
\simeq \ker S_{11} \cap (\ker S_{21})^\perp = {\rm span\,} \{ \wt{\bm c}^{(\mathfrak c)} \}_{\mathfrak c = 1, \ldots, |\mathfrak c|}.    
\end{equation}
We introduce new variables
$\{ \wt{w}_{\mathfrak c} \}_{c = 1, \ldots, |\mathfrak c|}$ for $\wt{\rm C}(\gamma)$
and $\{ w_{\alpha^\star} \}_{\alpha^\star=1,\ldots,|\alpha^\star|}$ for $(\ker S)_{{\rm supp}\,\gamma}$ and parametrize $\bm c_{11} \in \ker S_{11}$ by these variables as
\begin{equation}
\bm c_{11} 
=
\sum_{\alpha^\star} w_{\alpha^\star} \bm c_1^{(\alpha^\star)} 
+ 
\sum_{\mathfrak c} \wt{w}_{\mathfrak c} \wt{\bm c}^{(\mathfrak c)}. 
\label{eq:c11-param}
\end{equation}
Although we have solved the first equation of Eq.~\eqref{eq:rate-eq-block} for $\bm r_1$. 
However, this does not work for certain combinations of $\dot{\bm x}_1$ for which 
$\bm r_1$ does not appear on the RHS.
As we see below, this part can be captured by emergent conserved quantities. 
We denote the set of linearly independent emergent conserved quantities by 
$\{\wt{\bm d}^{(\bar{\mathfrak a})}_1 \}_{\bar{\mathfrak a} = 1 \ldots |\bar{\mathfrak a}|}$.
We extend each of them by 
\begin{equation}
\wt{\bm d}^{(\bar{\mathfrak a})} \coloneqq 
\begin{bmatrix}
 \wt{\bm d}^{(\bar{\mathfrak a})}_1
  \\
 \bm 0
\end{bmatrix}. 
\end{equation} 
The vector $\wt{\bm d}^{(\bar{\mathfrak a})}$ satisfies 
\begin{equation}
 \wt{\bm d}^{(\bar{\mathfrak a})\top} S = 
 \begin{bmatrix}
     \bm 0 \\
     \bm c_2^{(\bar{\mathfrak a})} 
 \end{bmatrix}^\top . 
\end{equation}
We note that the vectors 
$\{ \bm c_2^{(\bar{\mathfrak a})}\}_{\bar{\mathfrak a} = 1, \ldots, |\bar{\mathfrak a}|}$ 
are linearly independent\footnote{
Suppose they are not independent. 
Then, there is a certain linear combination such that 
\begin{equation}
\sum_{\bar{\mathfrak a}} b_{\bar{\mathfrak a}} \, \bm c_2^{(\bar{\mathfrak a})} = \bm 0,
\end{equation}
with $b_{\bar{\mathfrak a}} \in \mathbb R$ 
and not all of $b_{\bar{\mathfrak a}}$ are zero. 
This implies that 
\begin{equation}
 \sum_{\bar{\mathfrak a}} b_{\bar{\mathfrak a}} \,  \wt{\bm d}^{(\bar{\mathfrak a})}  \in \coker S. 
\end{equation}
Namely, a certain linear combination of $\wt{\bm d}^{(\bar{\mathfrak a})}$ is in fact in $\coker S$. 
This contradicts the assumption that 
$\wt{\bm d}^{(\bar{\mathfrak a})}$ are independent emergent conserved quantities. 
}.
Taking the time derivative of 
the linear combinations 
$
\wt{\bm d}^{(\bar{\mathfrak a})} \cdot \bm x
= \wt{\bm d}^{(\bar{\mathfrak a})}_1 \cdot {\bm x}_1
$, 
\begin{equation}
\frac{d}{dt }
 \wt{\bm d}^{(\bar{\mathfrak a})} \cdot \bm x
  = 
 \bm c_2^{(\bar{\mathfrak a})}  \cdot \bm r_2 (\bm x_2) .
 \label{eq:d-dx-c2-rw}
\end{equation}

Thus far, we have introduced a number equations as well as new variables $(\bm w, \wt{\bm w})$.
In fact, the set of these equations with additional variables is a equivalent description of the original system. 
Namely, we have the following equivalence: 
\begin{proposition}
The reaction system with variables $(\bm x_1, \bm x_2)$ 
under Eq.~\eqref{eq:rate-eq-block}
and the system with variables $(\bm x_1, \bm w, \bm x_2, \wt{\bm w})$ 
under Eqs.~\eqref{eq:r1-equal}, \eqref{eq:sp-r2+s21-c11}, and \eqref{eq:d-dx-c2-rw} (with parametrization \eqref{eq:c11-param}) are equivalent as a dynamical system, meaning that they have the same solution for $(\bm x_1 (t), \bm x_2 (t))$.
\end{proposition}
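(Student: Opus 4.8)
The plan is to prove the equivalence by verifying that, along any trajectory, the two block rows of the rate equation \eqref{eq:rate-eq-block} are together equivalent to the three reformulated equations \eqref{eq:r1-equal}, \eqref{eq:sp-r2+s21-c11}, \eqref{eq:d-dx-c2-rw} (keeping $\bm r_1 = \bm r_1(\bm x_1,\bm x_2)$ and $\bm r_2 = \bm r_2(\bm x_2)$ as the physical rate functions in both descriptions, and $\bm c_{11}$ parametrized by $\bm w,\wt{\bm w}$ as in \eqref{eq:c11-param}). Since each step will be an \emph{iff}, this shows both descriptions have identical $(\bm x_1(t),\bm x_2(t))$ solutions, the auxiliary variables $\bm w,\wt{\bm w}$ being slaved to $\bm x$. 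The organizing device is the orthogonal decomposition of the internal-species space $\mathbb{R}^{|V_\gamma|}=\mathrm{range}\,S_{11}\oplus\coker S_{11}$, refined by $\coker S_{11}=\wt D(\gamma)\oplus D_{11}(\gamma)$ (emergent versus extendable conserved quantities, cf.\ \eqref{defn:d11gamma}). I would project the first row of \eqref{eq:rate-eq-block} onto these three pieces and match each to a distinct ingredient of the reformulated system.

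First I would dispatch the straightforward pieces. Writing $Q:=S_{11}S_{11}^+$ for the orthogonal projector onto $\mathrm{range}\,S_{11}$, applying $S_{11}$ to \eqref{eq:r1-equal} (and using $S_{11}\bm c_{11}=\bm 0$) yields $Q\dot{\bm x}_1=S_{11}\bm r_1+QS_{12}\bm r_2$, which is exactly the $Q$-projection of the first row; conversely $S_{11}^+$ applied to that projection recovers \eqref{eq:r1-equal}, while projecting \eqref{eq:r1-equal} onto $\ker S_{11}$ forces $\bm c_{11}=P_{\ker S_{11}}\bm r_1$, fixing $\bm w,\wt{\bm w}$. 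For the second row, substituting \eqref{eq:r1-equal} into $\dot{\bm x}_2=S_{21}\bm r_1+S_{22}\bm r_2$ and collecting terms gives $\dot{\bm x}_2-S_{21}S_{11}^+\dot{\bm x}_1=S'\bm r_2+S_{21}\bm c_{11}$ with $S'$ as in \eqref{eq:def-sp}, i.e.\ \eqref{eq:sp-r2+s21-c11}; the reverse substitution is immediate, so the second row and \eqref{eq:sp-r2+s21-c11} are equivalent once \eqref{eq:r1-equal} holds.

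It then remains to match the $\coker S_{11}$-projection of the first row, $P_{\coker S_{11}}\dot{\bm x}_1=P_{\coker S_{11}}S_{12}\bm r_2$ (the $S_{11}\bm r_1$ term vanishes under this projection). Testing against the emergent basis $\wt{\bm d}_1^{(\bar{\mathfrak a})}$ and using $\wt{\bm d}^{(\bar{\mathfrak a})\top}S=\begin{bmatrix}\bm 0 & \bm c_2^{(\bar{\mathfrak a})\top}\end{bmatrix}$ turns it into $\tfrac{d}{dt}\bigl(\wt{\bm d}^{(\bar{\mathfrak a})}\cdot\bm x\bigr)=\bm c_2^{(\bar{\mathfrak a})}\cdot\bm r_2$, which is precisely \eqref{eq:d-dx-c2-rw}, so the $\wt D(\gamma)$-component is accounted for.

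The remaining and hardest component is the extendable part $D_{11}(\gamma)$: its directions of $\dot{\bm x}_1$ are governed by \emph{none} of the three listed equations, so I would close the gap using the conservation laws that both descriptions carry structurally (the $\coker S$ relations behind \eqref{eq:l-dx}). For $\bm d=\begin{bmatrix}\bm d_1 & \bm d_2\end{bmatrix}^\top\in\coker S$ with $\bm d_1\in D_{11}(\gamma)$, contracting \eqref{eq:sp-r2+s21-c11} with $\bm d_2^\top$ and using $\bm d_1^\top S_{11}+\bm d_2^\top S_{21}=\bm 0$, $\bm d_1^\top S_{12}+\bm d_2^\top S_{22}=\bm 0$, together with $\bm d_1\perp\mathrm{range}\,S_{11}$ and $\bm c_{11}\in\ker S_{11}$, collapses all terms to $\bm d_2^\top\dot{\bm x}_2=-\bm d_1^\top S_{12}\bm r_2$. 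Since $\tfrac{d}{dt}(\bm d\cdot\bm x)=0$, this is equivalent to $\bm d_1^\top\dot{\bm x}_1=\bm d_1^\top S_{12}\bm r_2$, i.e.\ the $D_{11}(\gamma)$-projection of the first row. A dimension count $\mathrm{rank}\,S_{11}+\wt d(\gamma)+\dim D_{11}(\gamma)=|V_\gamma|$ then confirms that the three reformulated equations plus the conservation laws pin down all of $\dot{\bm x}_1$ (and, via \eqref{eq:sp-r2+s21-c11}, all of $\dot{\bm x}_2$), matching \eqref{eq:rate-eq-block} exactly. I expect this extendable-conserved-quantity step to be the main obstacle, since it is the only place where one must bring in information not explicitly among the reformulated equations and check that the reformulated dynamics preserves it consistently.
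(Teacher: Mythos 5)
Your computation is correct and is, in essence, the ``straightforward computation'' that the paper invokes without writing out: you split the first block row of \eqref{eq:rate-eq-block} along $\mathbb R^{|V_\gamma|}=\mathrm{range}\,S_{11}\oplus\coker S_{11}$, match the $\mathrm{range}\,S_{11}$ part to \eqref{eq:r1-equal} (with $\bm c_{11}$ pinned to the $\ker S_{11}$-component of $\bm r_1$, which fixes $\bm w,\wt{\bm w}$ via \eqref{eq:c11-param}), match the emergent part of $\coker S_{11}$ to \eqref{eq:d-dx-c2-rw}, and show the second block row is equivalent to \eqref{eq:sp-r2+s21-c11} once \eqref{eq:r1-equal} holds; all of these steps check out. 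The one substantive point --- which you identify correctly --- is that the $D_{11}(\gamma)$-directions of $\dot{\bm x}_1$ (cf.\ \eqref{defn:d11gamma}) are constrained by \emph{none} of the three displayed equations, so the reformulated system, read literally, is underdetermined by exactly $d(\gamma)=|D_{11}(\gamma)|$ scalar relations. Your fix of adjoining the conservation laws for extendable conserved quantities is the right one, and it is what the paper does implicitly: its own equation count in Fig.~\ref{fig:extended-system} lists $d(\gamma)$ equations beyond the three displayed, and the steady-state system carries them as \eqref{eq:dl1-dl2-l} and \eqref{eq:d1x1-l}. The only caution is in your phrasing that both descriptions ``carry the conservation laws structurally'': in the reverse direction they are \emph{not} consequences of \eqref{eq:r1-equal}, \eqref{eq:sp-r2+s21-c11}, \eqref{eq:d-dx-c2-rw} --- indeed your own contraction shows that, given \eqref{eq:sp-r2+s21-c11}, the statement $\tfrac{d}{dt}(\bm d\cdot\bm x)=0$ is \emph{equivalent} to the missing $D_{11}(\gamma)$-projection of the first row, so assuming one to derive the other would be circular unless the conservation constraints (behind \eqref{eq:l-dx}) are explicitly adjoined to the reformulated system as part of its definition, fixed by the initial data. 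With that made explicit, your argument is complete and your closing dimension count $\mathrm{rank}\,S_{11}+\wt d(\gamma)+d(\gamma)=|V_\gamma|$ correctly certifies that nothing else is missing.
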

The equivalence can be checked by a straightforward computation.

Based on the reformulated description, we can identify the integrator equations realizing the RPA property represented by a buffering structure $\gamma$.
Equation~\eqref{eq:d-dx-c2-rw} gives us one set of integrator equations. 
The other set can be obtained from Eq.~\eqref{eq:sp-r2+s21-c11} as follows. 
Equation~\eqref{eq:sp-r2+s21-c11} can be written as
\begin{equation}
\dot {\bm x}_2 - S_{21} S_{11}^+ \dot{\bm x}_1 
= 
S' \bm r_2 (\bm x_2) + 
\sum_{\mathfrak c} \wt{w}_{\mathfrak c} \bm u^{(\mathfrak c)} 
,
\label{eq:sp-r2+s21-c11-2} 
\end{equation}
where $\bm u^{(\mathfrak c)} \coloneqq S_{21}\wt{\bm c}^{(\mathfrak c)}$. 
Note that $\{ \bm u^{(\mathfrak c)} \}_{\mathfrak c = 1, \ldots, |\mathfrak c|}$ are linearly independent, and we define 
$U \coloneqq 
{\rm span\,} 
\{ \bm u^{(\mathfrak c)} \}_{\mathfrak c = 1, \ldots, |\mathfrak c|}
$.
Since the new variables $\{ \wt{w}_{\mathfrak c} \}_{c = 1, \ldots, |\mathfrak c|}$ enter linearly, we can eliminate by projecting the dynamics on the space of original variables. 
For this, we multiply Eq.~\eqref{eq:sp-r2+s21-c11-2} by the projection matrix $\mathcal P$ to the subspace $U^\perp \subset \mathbb R^{|V \setminus V_\gamma|}$, to obtain the second set of integrator equations as
\begin{equation}
\mathcal P
( \dot {\bm x}_2 - S_{21} S_{11}^+ \dot{\bm x}_1 )
= 
\mathcal P S' \bm r_2 (\bm x_2). 
\label{integrator:gen_buffering}
\end{equation}
The combination of 
Eqs.~\eqref{eq:d-dx-c2-rw} and \eqref{integrator:gen_buffering} and constitute the integrator equations associated with a buffering structure.

To see that these equations actually have the ability to realize the corresponding RPA property, let us show that the steady-state values of $\bm x_2$ can be 
determined from these equations and the obtained solution of $\bm x_2$ is independent of the parameters inside $\gamma$. 
To find the steady-state, we have to specify the values of conserved quantities. Recall that $\coker S$ is decomposed as
\begin{equation}
\coker S \simeq D_{11}(\gamma) \oplus D_l (\gamma) \oplus \bar D'(\gamma) , 
\end{equation}
where $D_{11}(\gamma)$, $D_l(\gamma)$,
and $\bar D'(\gamma)$ are defined 
in Eq.~\eqref{defn:d11gamma}, Eq.~\eqref{eq:def-Dl}, and Eq.~\eqref{defn:dprime_gamma} respectively.
The steady-state solution should satisfy the following equations, 
\begin{align}
S' \bm r_2 (\bm x_2) + S_{21} \bm c_{11} &= \bm 0 , \label{eq:sp-r2-s21-c11}
\\
\bm c_2^{(\bar{\mathfrak a})}  \cdot \bm r_2 (\bm x_2)  &= 0 ,  \label{eq:c2-dot-r2}
\\
{\bm d}_2^{(\bar\alpha')} \cdot \bm x_2 &= \ell^{\bar\alpha'} , 
 \label{eq:d2-x2-l} 
 \\
 \bm d_1^{(\bar{\alpha}_\gamma)} \cdot \bm x_1 
+ 
\bm d_2^{(\bar{\alpha}_\gamma)} \cdot \bm x_2
&= 
\ell^{\bar{\alpha}_\gamma},
\label{eq:dl1-dl2-l}
\\ 
\bm d_1^{(\bar{\alpha}^\star)} \cdot \bm x_1 
&= 
\ell^{\bar{\alpha}^\star},
\label{eq:d1x1-l}
\end{align} 
where we further decomposed $D_{11} (\gamma) \simeq \bar D(\gamma) \oplus D_{11}(\gamma) / \bar D(\gamma)$ with
$\bar D(\gamma) \coloneqq (\coker S)_{{\rm supp\,}\gamma}$
and 
$\left\{
\begin{bmatrix}
\bm d_1^{(\bar{\alpha}_\gamma)}  \\ 
\bm d_2^{(\bar{\alpha}_\gamma)} 
\end{bmatrix}
\right\}_{\bar{\alpha}_\gamma = 1, \ldots, |\bar{\alpha}_\gamma|}   
$
are a basis of $D_{11}(\gamma) / \bar D (\gamma) \oplus D_l (\gamma)$.
Namely, the vectors indexed by $\bar\alpha_\gamma$ are conserved quantities of $\Gamma$ with nonzero support 
both in $\gamma$ and $\Gamma \setminus \gamma$. 
We took the basis of $\bar D'(\gamma)$ as
\begin{equation}
\bar D' (\gamma) 
=
\text{span}
\left\{ 
\begin{bmatrix}
\bm 0 \\ 
\bm d_2^{(\bar\alpha')}
\end{bmatrix}
\right\}_{\bar\alpha' = 1, \ldots, |\bar\alpha'|}
.
\end{equation}

Here, we count the number of constraints on the variables $(\bm x_2, \wt{\bm w})$. The number of these variables is given by 
\begin{equation}
\text{(\# of variables $(\bm x_2, \wt{\bm w})$)} 
= |V \setminus V_\gamma| + \wt{c}(\gamma) . 
\label{eq:num-boundary-variables}
\end{equation}
In the presence of conserved quantities, not all the rows of Eq.~\eqref{eq:sp-r2-s21-c11} are independent.
The number of independent equations in Eq.~\eqref{eq:sp-r2-s21-c11} 
is counted as 
\begin{equation}
|V \setminus V_\gamma| - \bar d' (\gamma) - d_l (\gamma),
\label{eq:rate-eq-indep}
\end{equation}
where we defined $\bar d' (\gamma) \coloneqq |\bar D'(\gamma)|$.
We give the derivation of Eq.~\eqref{eq:rate-eq-indep} in Appendix~\ref{app:derivation-number}. 
Thus, the number of independent equations among 
Eqs.~\eqref{eq:sp-r2-s21-c11} -- \eqref{eq:dl1-dl2-l}
that involve $(\bm x_2, \wt{\bm w})$ is given by
\begin{equation}
\begin{split}    
\text{(\# of independent eqs. that involve $(\bm x_2, \wt{\bm w})$)} 
&=
\underbrace{
|V \setminus V_\gamma| - {\bar d}'(\gamma) 
- d_l (\gamma)
}_{\text{Eq. \eqref{eq:sp-r2-s21-c11}}}
+ 
\underbrace{
{\bar d}'(\gamma) 
}_{\text{Eq. \eqref{eq:d2-x2-l}}}
+
\underbrace{
\wt{d} (\gamma)
}_{\text{Eq. \eqref{eq:c2-dot-r2}}} 
+
\underbrace{
d(\gamma) - d^\star(\gamma) + d_l(\gamma)
}_{\text{Eq. \eqref{eq:dl1-dl2-l}}}
\\
&= 
|V \setminus V_\gamma| + \wt{d} (\gamma) + d(\gamma) - d^\star(\gamma),
\end{split}
\end{equation}
where we defined 
$d^\star (\gamma) \coloneqq | \bar D(\gamma) |$
and $d(\gamma)\coloneqq |D_{11}(\gamma)|$. 

The difference between the number of variables $(\bm x_2, \wt{\bm w})$
and the number of independent equations that involve $(\bm x_2, \wt{\bm w})$ is 
\begin{equation}
\text{(\# of variables $(\bm x_2, \wt{\bm w})$)} 
- 
\text{(\# of independent eqs. that involve $(\bm x_2, \wt{\bm w}$)}
= \lambda(\gamma) - d_l (\gamma) - 
(d(\gamma) - d^\star(\gamma)) ,
\label{eq:num-var-minus-num-eqs-x2}
\end{equation}
where we used the decomposition~\eqref{eq:lambda-decom}
of the influence index.
Note that $d(\gamma) - d^\star(\gamma) \ge 0$, since 
$\bar D(\gamma) \subset D_{11}(\gamma)$.
Thus, the relation \eqref{eq:num-var-minus-num-eqs-x2} implies that, 
when $\lambda(\gamma)=0$, 
we should have $d_l(\gamma)=0$ and $d(\gamma) = d^\star(\gamma)$. 
This means that there is no equation of the form \eqref{eq:dl1-dl2-l} when $\lambda(\gamma)=0$.
This conclusion is consistent with the intuition that, if such a conserved quantity exists, the steady-state values of external concentrations seem affected by the change of internal parameters through this conserved quantity. When $\gamma$ is a buffering structure, such a possibility is excluded.

Therefore, Eqs.~\eqref{eq:sp-r2-s21-c11}, \eqref{eq:c2-dot-r2}, 
and \eqref{eq:d2-x2-l} 
completely specify the steady-state values of $\bm x_2$ when $\lambda(\gamma)=0$. 
Since these equations do not involve any parameter in $\gamma$, 
$\bm x_2$ is independent of them. 
On the assumption of the existence and stability of steady state, 
$\bm x_2$ is driven to values that are independent of the parameters in $\gamma$ 
by the action of integrator equations~\eqref{eq:d-dx-c2-rw} and \eqref{integrator:gen_buffering}.

\subsection{An Internal Model Principle for kinetics-independent RPA}

We saw in Section \ref{maxrpa_imp} that for maxRPA networks, a linear integrator for the dynamics can be constructed, based on the vector $\bm q$ satisfying Eq.~\eqref{eq:qs}, and this provides us with an Internal Model Principle (IMP) for such networks in the case where the output species is not in the support of $\bm q$. 
As argued in the proof in Sec.~\ref{sec:proof-th1} of the equivalence of maxRPA and the law of localization, for such cases, the vector $\bm q$ can be viewed as an emergent conserved quantity associated with a buffering structure (see Theorem \ref{thm:maxrpa_equivalence}).
This argument extends to a general kinetics-independent RPA property characterized by a buffering structure $\gamma$, and it gives us the first set of integrators constructed in Sec.~\ref{sec:gen-int} (see Eq.~\eqref{eq:d-dx-c2-rw}).
Observe that in this $\wt{d}(\gamma)$-dimensional system of integrator equations, the LHS does not involve the species external to $\gamma$ (i.e.\ variables $\bm x_2$) as mandated by the IMP. Defining the Internal Model (IM) as the set of species in $\gamma$ that form the support of these emergent conserved quantities establishes the IMP decomposition shown in Fig~\ref{fig:imp}. Note that all the species external to $\gamma$ are part of the Rest of the Network.

In many RPA examples (see Sections \ref{subsec:maxrpa_homo} and \ref{subsec:bacterial_chemo}), there exist integrators that do not belong to the first set, but rather they belong to the second set given by Eq.~\eqref{integrator:gen_buffering}. 
While integrators belonging to this second set do not conform to the standard IMP (as their LHS involves the variables $\bm x_2$), they do establish integral mechanisms that play a role in leading the concentrations of species external to $\gamma$ to some manifold that is insensitive to parameters inside $\gamma$.
Note that while the integrators in the second set do not conform to the standard IMP in the natural coordinates of the system, it is possible that they become IMP-conformant under a suitably devised coordinate transformation.

\subsection{ Example }\label{sec:ex-manifold-rpa}

Let us look at the construction of integrators
for a simple example with
$0 < \wt{d}(\gamma) < |V \setminus V_\gamma |$.
We consider a reaction network consisting
four species $\{x,y,z_1,z_2\}$ and the following set of reactions:
\begin{align}
    e_1 &: z_1 \to  z_1+x,  \notag \\
    e_2 &: z_1 + z_2 \to \emptyset , \notag \\
    e_3 &: \emptyset \to z_1 , \notag \\
    e_4 &: 2x \to 2x + z_2 ,  \\
    e_5 &: x \to \emptyset , \notag  \\
    e_6 &: y \to y+z_1 , \notag \\
    e_7 &: \emptyset \to y, \notag \\ 
    e_8 &: y \to \emptyset . \notag
\end{align}
The rate equations under mass-action kinetics are 
\begin{align}
\dot {z_1}  &= k_3 + k_6 y  - k_2 z_1 z_2 , \\
\dot {z_2}  &= k_4 x^2 - k_2 z_1 z_2 , \\
\dot x      &= k_1 z_1 - k_5 x , \\
\dot y      &= k_7 - k_8 y . 
\end{align}
The steady-state concentrations are given by
\begin{equation}
\bar x = 
\sqrt{
\frac{1} {k_4}
\left( 
k_3 + \frac{k_6 k_7}{k_8}
\right)
}
, 
\quad 
\bar y = \frac{k_7}{k_8},
\quad 
\bar z_1 = \frac{k_5} {k_1 }
\sqrt{
\frac{1} {k_4}
\left( 
k_3 + \frac{k_6 k_7}{k_8}
\right)
}
,
\quad 
\bar z_2 = 
\frac{k_1}{k_2 k_5}
\sqrt{
k_4
\left( 
k_3 + \frac{k_6 k_7}{k_8}
\right)
}
. 
\end{equation}
The stoichiometric matrix is 
\begin{equation}
S =
\begin{blockarray}{ccccccccc}
&& \\
\begin{block}{c[cccccccc]}
{z_2} \quad\,
& -1 & 0 & 0 & 1 & 0 & 0 & 0 & 0\\
{z_1} \quad\, 
& -1 & 0 & 0 & 0 & 1 & 1 & 0 & 0 \\
{x} \quad\, 
& 0 & 1 & -1 & 0  & 0 & 0 & 0 & 0 \\
{y} \quad\, 
& 0 & 0 & 0 & 0  & 0 & 0 & -1 & 1\\
\end{block}
& e_2 & e_1 & e_5 & e_4 & e_3 & e_6 & e_8 & e_7
\end{blockarray} \,\,.     
\end{equation}
The labeled buffering structures in this network are 
\begin{align}
\gamma_1^\ast &=  (\{ z_1, z_2  \},\{\}\cup  \{e_1, e_2 \}), \nonumber \\
\gamma_2^\ast &=  (\{ z_2  \},\{ \} \cup \{ e_2 \}), \nonumber \\
\gamma_3 &=  (\{ x, z_1,z_2 \}, \{e_1,e_2,e_3,e_4,e_5\} \cup \{ \}),\nonumber \\ 
\gamma_4 &=  (\{ x, z_1,z_2 \}, \{e_1,e_5\} \cup \{e_2,e_4 \}),  \\
\gamma_5 &=  (\{ z_1, z_2  \},\{e_1,e_5 \} \cup \{ e_2 \} ), \nonumber \\
\gamma_6 &= (\{ x, z_1,z_2 \}, \{e_1,e_2,e_4,e_5,e_6 \} \cup \{ \} ),\nonumber \\
\gamma_7 &= (\{ x,y, z_1,z_2 \}, \{e_1,e_2,e_4,e_5,e_6,e_7,e_8\} \cup \{ \}),\nonumber \\
\gamma_8 &= (\{ x,y, z_1,z_2 \}, \{e_1,e_2,e_4,e_5,e_6\} \cup \{e_8\}). \nonumber
\end{align}
We here consider RPA associated with 
with $\gamma_5$. 
The spaces $\ker S_{11}$ and $\coker S_{11}$ are spanned by 
\begin{equation}
\ker S_{11}  = 
\text{span}
\{  
\begin{bmatrix}  0 & 1 & 1 \end{bmatrix}^\top, 
{
\color{mydarkred} 
\begin{bmatrix}  0 & 1 & -1 \end{bmatrix}^\top
}
\}, 
\quad 
\coker S_{11} = 
\text{span} 
\{ 
{
\color{mydarkred} 
\begin{bmatrix} 1 & -1 \end{bmatrix}^\top 
}
\}, 
\label{eq:ex-man-ker-s11-coker-s11-}
\end{equation}
where colored vectors are emergent. 
The influence index of $\gamma_5$ is decomposed as 
\begin{equation}
\lambda (\gamma_5)    
= -2 + 3 - 1 + 0
=
\underset{=\widetilde c(\gamma_5)}{ 1 }
+
\underset{=d_l (\gamma_5)}{ 0 }
- 
\underset{=\widetilde d(\gamma_5)}{ 1 }
= 0. 
\end{equation}

Let us look at Eq.~\eqref{eq:sp-r2+s21-c11} of this example.
The generalized Schur complement 
of $S$ with respect to this subnetwork is 
\begin{equation}
S' = 
\begin{blockarray}{cccccc}
&& \\
\begin{block}{c[ccccc]}
{x} \quad\, 
& 0 & 0 & 0 & 0 & 0 \\
{y} \quad\, 
& 0 & 0 & 0 & -1 & 1 \\
\end{block}
& e_4 & e_3 & e_6 & e_8 & e_7 
\end{blockarray} \,\,.  
\end{equation}
The combination of the LHS 
of Eq.~\eqref{eq:sp-r2+s21-c11} is 
\begin{equation}
\bm x_2 - S_{21} S_{11}^+ \bm x_1 
= 
\begin{bmatrix}
    x \\
    y
\end{bmatrix}.  
\end{equation}
We can pick an element of $\ker S_{11}$ proportional to the emergent one,
$
\bm c_{11} = 
\wt{w} 
\begin{bmatrix}
 0 & 1 & -1 
\end{bmatrix}^\top
$, where $\wt{w} = \wt{w}(t)$ is an arbitrary function of time.
The RHS of Eq.~\eqref{eq:sp-r2+s21-c11} 
is given by 
\begin{equation}
S' \bm r_2 + S_{21} \bm c_{11} 
= 
\begin{bmatrix}
0 & 0 & 0 & 0 & 0 \\
0 & 0 & 0 & -1 & 1
\end{bmatrix}
\begin{bmatrix}
    k_4 x^2 \\
    k_3 \\
    k_6 y \\
    k_8 y \\
    k_7
\end{bmatrix}
+ 
\wt{w}
\begin{bmatrix}
    2 \\ 
    0 
\end{bmatrix}
= 
\begin{bmatrix}
2 \wt{w}  \\
- k_8 y + k_7 
\end{bmatrix}.
\end{equation}
Thus we have 
\begin{equation}
\frac{d}{dt} 
\begin{bmatrix}
x  \\
y 
\end{bmatrix}
= 
\begin{bmatrix}
2 \wt{w} \\
- k_8 y +  k_7 
\end{bmatrix}.     
\label{eq:xyz1z2-sp}
\end{equation}
The second line of this equations gives us one integrator equation (the multiplication of the projection matrix in Eq.~\eqref{integrator:gen_buffering} amounts to picking 
the second line of Eq.~\eqref{eq:xyz1z2-sp}). 
We obtain another integrator equation 
from the emergent conserved quantity 
$\wt{\bm d}_1 = 
\begin{bmatrix}
    1 & -1 
\end{bmatrix}^\top
$ in Eq.~\eqref{eq:ex-man-ker-s11-coker-s11-}. 
By taking the time derivative of $\wt{\bm d}_1 \cdot \bm x_1$, we have
\begin{equation}
\frac{d}{dt}
\wt{\bm d}_1 \cdot \bm x_1 
= 
\frac{d}{dt} (z_2 - z_1) 
=
\wt{\bm d}^{\,\top} S \bm r 
= 
r_4 (x) - r_3 - r_6 (y) 
= 
k_4 x^2 - k_3 - k_6 y . 
\label{eq:xyz1z2-dt}
\end{equation}
Equation~\eqref{eq:xyz1z2-dt}
and the second line of Eq.~\eqref{eq:xyz1z2-sp}
constitute the integrators for the RPA associated with $\gamma_5$. 
They drive the values of $x$ and $y$ 
to their steady-state values that are independent of
the reaction parameters ($k_2, k_1, k_5$) inside $\gamma_5$.

\section{ Regulation to manifolds }\label{sec:manifold-rpa}

In this section, we consider a generalization of a regulation problem
to situations where the target values of the output variables are in a manifold with nonzero dimension. 
We shall call the emergence of this property as \emph{manifold RPA}.
This problem can be treated naturally using the formulation developed in Sec.~\ref{sec:integrator-general}: we have reformulated the reaction system to an equivalent form to find the integrator equations for a given buffering structure, and this procedure is applicable even when $\lambda(\gamma)\neq 0$ as long as $\gamma$ is output-complete. 
This observation leads us to a natural generalization of the law of localization to manifold RPA, where the influence index turns out to give the dimension of the target manifold to which $\bm x_2$ is regulated to.

\subsection{ Law of manifold localization }\label{subsec:man_local}

We have the following theorem: 
\begin{theorem}[The law of manifold localization]
Let $\gamma \subset \Gamma$ be an output-complete subnetwork of 
a deterministic chemical reaction system
satisfying the assumptions in Sec.~\ref{sec:assumptions}. 
Suppose that $\gamma = (V_\gamma, E_\gamma)$ 
is an output-complete subnetwork of $\Gamma$
whose influence index is given by $\lambda(\gamma)$. 
Then, the steady-state values of the concentrations of the species
outside $\gamma$ are located in a $\lambda(\gamma)$-dimensional manifold,
which is invariant under the change of parameters in $\gamma$.
\label{thm:law-of-manifold-localization}
\end{theorem}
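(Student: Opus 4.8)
The plan is to reuse verbatim the reformulation of the reaction system constructed in Sec.~\ref{sec:gen-int}, exploiting the fact — already flagged in that section's footnote — that the splitting into $(\bm x_1, \bm x_2)$, the solution \eqref{eq:r1-equal} for $\bm r_1$, the generalized Schur complement $S'$, and the auxiliary variables $\wt{\bm w}$ were never built on $\lambda(\gamma)=0$; they used only output-completeness, which guarantees $\bm r_2 = \bm r_2(\bm x_2)$. First I would isolate, among the steady-state relations \eqref{eq:sp-r2-s21-c11}--\eqref{eq:d1x1-l}, precisely those that constrain the external concentrations $\bm x_2$ without referring to any parameter inside $\gamma$: these are \eqref{eq:sp-r2-s21-c11}, \eqref{eq:c2-dot-r2} and \eqref{eq:d2-x2-l}. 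I would then define the candidate manifold $\Sigma$ as their common solution set in $\bm x_2$, after eliminating the slaved auxiliary variables $\wt{\bm w}$: since the $\wt{\bm w}$ enter \eqref{eq:sp-r2-s21-c11} only through the linearly independent vectors $\bm u^{(\mathfrak c)}=S_{21}\wt{\bm c}^{(\mathfrak c)}$, the $U$-component of \eqref{eq:sp-r2-s21-c11} fixes $\wt{\bm w}$ uniquely as a function of $\bm x_2$, leaving the $U^\perp$-component $\mathcal P S' \bm r_2(\bm x_2)=\bm 0$ (cf.\ \eqref{integrator:gen_buffering}) as a constraint on $\bm x_2$ alone.

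Invariance is then immediate: none of the defining equations of $\Sigma$ contains a rate parameter of a reaction in $E_\gamma$, so $\Sigma$ is unchanged as those parameters vary. That the genuine steady-state value of $\bm x_2$ lies on $\Sigma$ follows because the steady state satisfies all of \eqref{eq:sp-r2-s21-c11}--\eqref{eq:dl1-dl2-l}, in particular the defining subset; the remaining relation \eqref{eq:dl1-dl2-l}, which couples $\bm x_1$ to $\bm x_2$ through the conserved quantities indexed by $\bar\alpha_\gamma$, is what pins down the actual point of $\Sigma$ and lets an internal perturbation (which moves $\bm x_1$) slide $\bm x_2$ along $\Sigma$ without leaving it — exactly the picture of Fig.~\ref{fig:manifold-rpa-schematic}.

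For the dimension I would repeat the bookkeeping that produced \eqref{eq:num-var-minus-num-eqs-x2}, restricted to the $\Sigma$-defining equations. Counting the $|V\setminus V_\gamma|+\wt c(\gamma)$ variables $(\bm x_2,\wt{\bm w})$ against the independent equations — $|V\setminus V_\gamma|-\bar d'(\gamma)-d_l(\gamma)$ from \eqref{eq:sp-r2-s21-c11} (Appendix~\ref{app:derivation-number}), $\wt d(\gamma)$ from \eqref{eq:c2-dot-r2}, and $\bar d'(\gamma)$ from \eqref{eq:d2-x2-l} — the deficit is $\wt c(\gamma)+d_l(\gamma)-\wt d(\gamma)=\lambda(\gamma)$ by the decomposition \eqref{eq:lambda-decom}. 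Because the $\wt c(\gamma)$ variables $\wt{\bm w}$ are slaved, the solution set is a graph over $\bm x_2$ and the projection preserves dimension, so $\dim\Sigma=\lambda(\gamma)$. For $\lambda(\gamma)=0$ this collapses $\Sigma$ to a single parameter-independent point, recovering the law of localization (Theorem~\ref{th:lol}) as the expected special case.

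The step I expect to be the main obstacle is upgrading this dimension \emph{count} into an honest smooth-manifold statement: one must show the $\Sigma$-defining map has locally constant rank equal to $|V\setminus V_\gamma|-\lambda(\gamma)$ at the steady state, so that the implicit function theorem yields a $\lambda(\gamma)$-dimensional manifold rather than merely a variety of the nominal dimension. I would obtain this regularity from the invertibility of the ${\bf A}$-matrix guaranteed by the stability Assumption~\ref{assm:stability} (Corollary~\ref{cor:invertivility-of-a}): the differential of the $\Sigma$-constraints with respect to $\bm x_2$ is assembled from the same blocks $\partial_i r_A$ and vectors $\bm c^{(\alpha)},\bm d^{(\bar\alpha)}$ that constitute ${\bf A}$, so nondegeneracy of ${\bf A}$ forces the constraint differential to attain full rank. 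The delicate points to verify are that the $U/U^\perp$ split used to eliminate $\wt{\bm w}$ is compatible with this rank argument and that the rank is constant in a neighborhood of the steady state, so that $\Sigma$ is genuinely a manifold near every reachable point.
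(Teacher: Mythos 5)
Your proposal is correct and follows essentially the same route as the paper's own proof: the paper likewise reuses the Sec.~\ref{sec:gen-int} reformulation for an output-complete $\gamma$ with $\lambda(\gamma)\neq 0$, isolates Eqs.~\eqref{eq:sp-r2-s21-c11}, \eqref{eq:c2-dot-r2} and \eqref{eq:d2-x2-l} as the parameter-independent constraints on $(\bm x_2,\wt{\bm w})$, and obtains $\dim\Sigma=\lambda(\gamma)$ from exactly the counting $\wt c(\gamma)+d_l(\gamma)-\wt d(\gamma)$ via \eqref{eq:lambda-decom}. The only difference is that you explicitly flag the constant-rank/implicit-function-theorem step needed to upgrade the dimension count to a genuine manifold statement, a regularity point the paper's proof passes over silently, so your version is, if anything, the more careful one.
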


\begin{figure}[tb]
  \centering
  \includegraphics
  [clip, trim=0cm 3.5cm 0cm 0cm, scale=0.37]
  {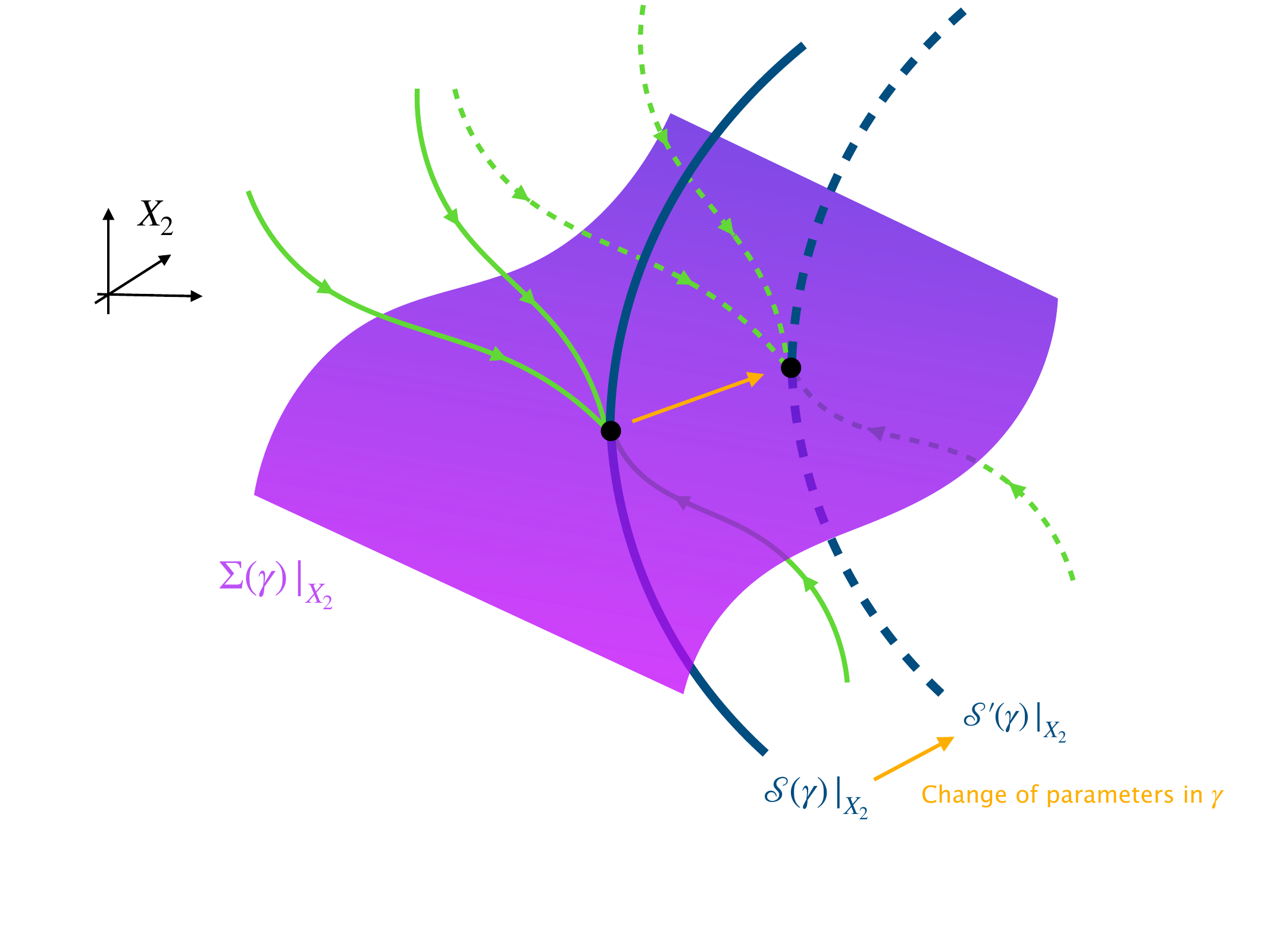}
  \caption{ 
  Schematic illustration of manifold RPA. The steady state of $\bm x_2$ is at the intersection of $\Sigma(\gamma)|_{X_2}$ and another manifold $\mathcal S (\gamma) |_{X_2}$, to which states are attracted.
  When we change the values of parameters in $\gamma$,
  $\mathcal S (\gamma) |_{X_2}$ is deformed to $\mathcal S' (\gamma) |_{X_2}$, while $\Sigma(\gamma)|_{X_2}$ does not change.
  As a result, the steady state still lies within $\Sigma(\gamma)|_{X_2}$ that is invariant under these internal parameters.
  }
  \label{fig:manifold-rpa} 
\end{figure}

We shall call an output-complete subnetwork
as a {\it generalized buffering structure of degree $\lambda(\gamma)$},
which gives rise to manifold RPA to $\lambda(\gamma)$-dimensional submanifold 
for the degrees of freedom outside $\gamma$.

Theorem~\ref{thm:law-of-manifold-localization} is a generalization of the law of localization. Namely, the conventional law of localization corresponds to special cases of Theorem~\ref{thm:law-of-manifold-localization} where the steady-state manifold for the external degrees of freedom is zero-dimensional.

Manifold RPA is very relevant for synthetic biology applications, where it is often important to design biomolecular controllers that robustly maintain some relationship between multiple output variables, without sacrificing the tunability of individual output variables \cite{alexis2022regulation}.

Basing on the formulation developed in the last section, a proof can be described succinctly. 
\begin{proof}
For an output-complete subnetwork $\gamma$, we can perform the same procedure as in Sec.~\ref{sec:gen-int} and obtain a set of steady-state equations~\eqref{eq:sp-r2-s21-c11}~--~\eqref{eq:dl1-dl2-l},
which involve $(\bm x_2, \wt{\bm w})$.
The number of the external variables $(\bm x_2, \wt{\bm w})$ is given by Eq.~\eqref{eq:num-boundary-variables}.
Among Eqs.~\eqref{eq:sp-r2-s21-c11}~--~\eqref{eq:dl1-dl2-l},
if we count the number of independent equations that involve 
$\bm x_2$ or $\wt{\bm w}$ but not $\bm x_1$ and $\bm w$ (note that Eq.~\eqref{eq:dl1-dl2-l} involve both $\bm x_2$ and $\bm x_1$), 
\begin{equation}
\begin{split}
\text{(\# of independent eqs. that involve $(\bm x_2, \wt{\bm w})$ 
but not $(\bm x_1, \bm w$)}
&= 
\underbrace{
|V \setminus V_\gamma| - {\bar d}'(\gamma) 
- d_l (\gamma)
}_{\text{Eq. \eqref{eq:sp-r2-s21-c11}}}
+ 
\underbrace{
{\bar d}'(\gamma) 
}_{\text{Eq. \eqref{eq:d2-x2-l}}}
+
\underbrace{
\wt{d} (\gamma)
}_{\text{Eq. \eqref{eq:c2-dot-r2}}} 
\\
&= |V \setminus V_\gamma| + \wt{d} (\gamma) - d_l (\gamma). 
\end{split}
\end{equation}
The difference between the number of variables $(\bm x_2, \wt{\bm w})$
and the number of independent equations that involve $(\bm x_2, \wt{\bm w})$ 
but not $(\bm x_1, \bm w)$ is
\begin{equation}
\text{(\# of variables)} 
- 
\text{(\# of independent eqs. that involve $(\bm x_2, \wt{\bm w})$
but not $(\bm x_1, \bm w$)}
= \lambda(\gamma) .
\label{eq:num-var-minus-num-eqs-x2-only}
\end{equation}
Thus, in general, the solutions of Eqs.~\eqref{eq:sp-r2-s21-c11} -- \eqref{eq:d2-x2-l} has $\lambda(\gamma)$ undetermined variables, 
and the external concentrations at steady-state are inside a manifold parameterized by them.
Since Eq.~\eqref{eq:sp-r2-s21-c11} -- \eqref{eq:d2-x2-l} do not 
involve parameters in $\gamma$, this manifold is insensitive to them.
This concludes the proof. 
\end{proof}

Let us explain how manifold RPA emerges geometrically (for a fuller explanation, see the next subsection).
We denote the state space of internal and external variables by $X_1$ and $X_2$, i.e.\ $\bm x_1 \in X_1$ and $\bm x_2 \in X_2$. 
As stated in the proof, there are $|V \setminus V_\gamma|+\wt{c}(\gamma)$ variables and $|V \setminus V_\gamma|+\wt{d}(\gamma) - d_l(\gamma)$ constraints for those variables, 
and we have $\lambda(\gamma)$-dimensional steady-state manifold $\Sigma({\gamma})|_{X_2}$ of the external concentrations.
Namely, $\Sigma({\gamma})|_{X_2}$ is a submanifold
$\Sigma({\gamma})|_{X_2} \subset X_2 \subset \mathbb R^{|V \setminus V_\gamma|}$
determined by Eqs.~\eqref{eq:sp-r2-s21-c11}~--~\eqref{eq:d2-x2-l}.
The corresponding integral feedback control
for manifold RPA realized through a generalized buffering structure can be constructed in the same way as Sec.~\ref{sec:integrator-general} (namely, Eqs.~\eqref{eq:d-dx-c2-rw} and \eqref{integrator:gen_buffering} are the integrator equations), 
and the dynamics of $\bm x_2$ is steered toward the submanifold $\Sigma({{\gamma}}) |_{X_2}$ through their action.
The system has a stable steady state by assumption, and this point in the external degrees of freedom can be seen as the intersection of $\Sigma ({\gamma})|_{X_2}$
and another manifold $\mathcal{S}({{\gamma}})|_{X_2} \subset \mathbb R^{|V \setminus V_\gamma|}$ of $(|V\setminus V_\gamma| - \lambda(\gamma))$ dimension.
While $\mathcal{S}({{\gamma}})|_{X_2}$ is dependent on the parameters in $\gamma$, $\Sigma({\gamma})|_{X_2}$ is invariant under the change of parameters in $\gamma$. 
Hence, by perturbing the parameters in $\gamma$, the location of the steady state may change but it will always lie in $\Sigma({{\gamma}}) |_{X_2}$.
We illustrate this phenomenon in Fig. \ref{fig:manifold-rpa}. 
We discuss geometric interpretation of this phenomenon in more detail in the next subsection, and the manifolds $\Sigma (\gamma) |_{X_2}$ and $\mathcal S (\gamma) |_{X_2}$ turn out to be the shadows the {\it RPA manifold} and the {\it susceptible manifold} in the extended total state space.
For a concrete realization, see examples in Sec.~\ref{sec:manifold-rpa-ex-2}.

\subsection{ Geometric interpretation of manifold RPA }\label{sec:geometric}

Here, let us give a geometric interpretation  of the manifold RPA phenomenon realized through the law of manifold localization.
In this subsection, we omit the dependence on $\gamma$ for notational simplicity of quantities such as the number of emergent cycles. 
Namely, 
$\lambda \coloneqq \lambda (\gamma)$, 
$\wt{c}\coloneqq \wt{c}(\gamma)$, 
$d \coloneqq d (\gamma)$, 
$\wt{d} \coloneqq \wt{d} (\gamma)$, 
$d_l \coloneqq d_l (\gamma)$, 
$d^\star \coloneqq d^\star (\gamma)$,
and 
$\bar d' \coloneqq \bar d' (\gamma)$. 
Additionally, we use the following abbreviations, 
\begin{equation}
v \coloneqq |V_\gamma|    , 
\quad 
v' \coloneqq |V \setminus V_\gamma| = |V| - v', 
\quad 
e\coloneqq |E_\gamma|, \quad 
c\coloneqq | (\ker S)_{{\rm supp\,}\gamma}  |, 
\end{equation}
With these notations, the definition of the influence index can be written as 
\begin{equation}
 \lambda = -v + e - c + d + d_l, 
 \label{eq:lambda-original} 
\end{equation}
where we used the relation $|P^0_\gamma (\coker S)| = d + d_l$ (see Eqs.~(167) and (173) of Ref.~\cite{PhysRevResearch.3.043123}). 
The decomposition of the influence index reads 
\begin{equation}
\lambda = \wt{c} + d_l - \wt{d}. 
 \label{eq:lambda-decomposed-short} 
\end{equation}
By taking the difference of 
Eqs.~\eqref{eq:lambda-original} and \eqref{eq:lambda-decomposed-short}, 
we have the following relation, 
\begin{equation}
v + c + \wt{c} 
= e+ d + \wt{d}, 
\label{eq:v-c-c-e-d-d}
\end{equation}
which will be used later. 

In the previous section, we have solved for $\bm r_1$ 
using the Moore-Penrose inverse of $S_{11}$, and this process 
introduces variables to account for the elements in $\ker S_{11}$ (see Eq.~\eqref{eq:r1-equal}).
We have parametrized $\bm c_{11} \in \ker S_{11}$ 
as Eq.~\eqref{eq:c11-param}. 
By introducing the new variables $(\bm w, \wt{\bm w})$, 
we are extending the state space
\begin{equation}
(\bm x_1, \bm x_2)
\in X_1 \times X_2 
\to 
( (\bm x_1, \bm w), (\bm x_2, \wt{\bm w})) 
\in \hat X_1 \times \hat X_2.
\end{equation}
Here, 
we regard 
$(\bm x_1 , \bm w)$ to be in the extended internal state space, 
$(\bm x_1 , \bm w)\in \hat X_1$, 
while $(\bm x_2 , \wt{\bm w})$ is in the extended external state space,
$(\bm x_2 , \wt{\bm w})\in \hat X_2$. 
The number of total variables is expressed as 
\begin{equation}
\text{(\# of variables)} 
= |V| + c + \wt{c}
= 
v + c 
+ 
v' + \wt{c}
\eqqcolon 
n_1 + n_2, 
\end{equation}
where we defined $n_1 \coloneqq v+c$ and $n_2 \coloneqq v'+\wt{c}$.
Together with the introduction of new variables, we have added 
Eq.~\eqref{eq:d-dx-c2-rw} using emergent conserved quantities, which is necessary so that we have sufficient constraints to determine the steady state in the extended description\footnote{As we discussed in the previous section, the extended description is equivalent to the original system dynamically, although we here look at steady-state properties.}.
See Fig.~\ref{fig:extended-system} for the set of variables and steady-state equations.
Indeed, one can check that the number of variables is equal to number of independent equations.
The total number of independent equations is given by 
\begin{equation}
\text{(\# of independent equations)} 
%= e + v' - \bar d' - d_l + d + \bar d' + d_l +\wt{d} 
= 
e + v' + d + \wt{d}. 
\end{equation}
The number of variables minus the number of independent equations is 
\begin{equation}
\text{(\# of variables)} 
-
\text{(\# of independent equations)} 
= 
v+c+v'+\wt{c}- e - v' - d - \wt{d} 
=0,
\end{equation}
where we have used Eq.~\eqref{eq:v-c-c-e-d-d}. 

\begin{figure}[tb]
  \centering
  \includegraphics
  [clip, trim=0cm 13cm 0cm 0cm, scale=0.43]
  {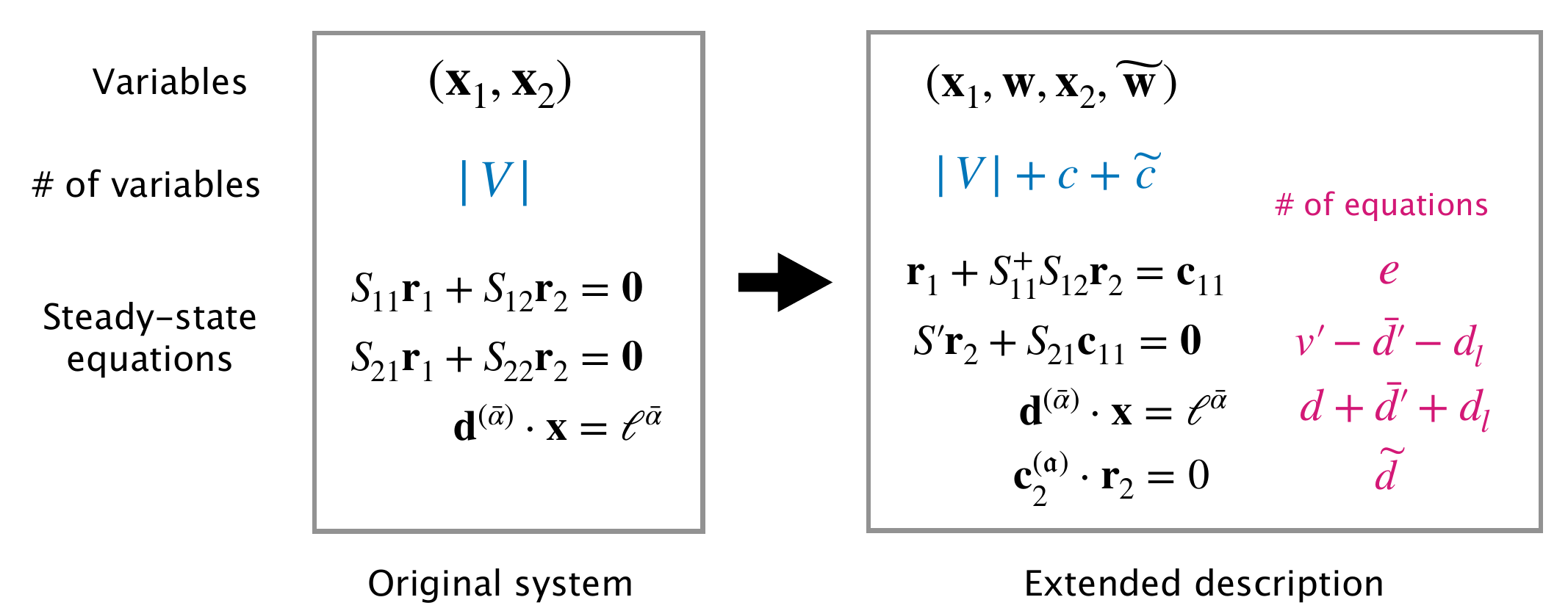}
  \caption{ Rewriting of a reaction system using additional variables. 
  The number of independent equations for each set of equations is denoted for the extended description. 
  }
  \label{fig:extended-system} 
\end{figure}

Let us introduce key objects for describing manifold RPA. 
In Fig.~\ref{fig:rpa-manifold-dimensions}, we summarize the set of equations in the extended description together with the number of independent equations.
We define the {\it RPA manifold} $\Sigma$\footnote{
Note that this object depends on the choice of a subnetwork. We do not explicitly indicate this dependence for notational simplicity. The same comment is true for the susceptible manifold ${\mathcal S}$. 
}
associated with an output-complete subnetwork $\gamma$ 
to be the set of variables $(\bm x_1, \bm w, \bm x_2, \wt{\bm w})$ that satisfy 
the conditions in the solid (light-blue) boxes. 
We define the {\it susceptible manifold} ${\mathcal S}$
to be the set of variables
$(\bm x_1, \bm w, \bm x_2, \wt{\bm w}) \in \hat X_1 \times \hat X_2$
satisfying 
the conditions in dashed (green) boxes. 
A key observation is that the conditions determining $\Sigma$ is independent of parameters in $\gamma$ and they only involve external variables, 
$(\bm x_2, \wt{\bm w}) \in \hat X_2$.
As a result, $\Sigma$ has no dependence on parameters in $\gamma$. 
On the other hand, the susceptible manifold ${\mathcal S}$ is subject to parameters inside and outside $\gamma$. 
The situation is illustrated in Fig~\ref{fig:rpa-manifold-projection}. 

Let us denote the numbers of constraints 
determining $\Sigma$ and ${\mathcal{S}}$  
by $c_{\Sigma}$ and $c_{{\mathcal{S}}}$, respectively.
The dimension of these manifolds are given by 
the total dimension minus the number of constraints, 
\begin{equation}
\dim  \Sigma = n_1 + n_2 - c_{ \Sigma}, 
\quad 
\dim {\mathcal S} = n_1 + n_2 - c_{{\mathcal S}}. 
\end{equation}
As we show in Fig.~\ref{fig:rpa-manifold-dimensions},
by counting the number of constraints, 
we can express the dimensions of $\Sigma$ and ${\mathcal S}$
in terms of the influence index as
\begin{equation}
\dim \Sigma =\lambda + n_1 , \quad 
\dim {\mathcal S} = n_2 - \lambda. 
\end{equation}
These manifolds intersect at a point, 
corresponding to the existence of a steady state. 
Indeed, the sum of the dimensions equal the dimension of the total space, 
\begin{equation}
\dim  \Sigma+ \dim {\mathcal S} = n_1 + n_2. 
\end{equation}

Now, let us consider the projection of $\Sigma$ and ${\mathcal S}$ 
to the space of external variables, $(\bm x_2, \wt{\bm w}) \in \hat X_2$\footnote{We will refer to the projected RPA and susceptible manifolds with the same names, indicating the total space they reside to avoid possible confusions.}.
We denote these projections as $\Sigma |_{\hat X_2}$ and 
${\mathcal S} |_{\hat X_2}$, respectively. 
While the dimension of ${\mathcal S}$ and $\mathcal S|_{\hat X_2}$ are the same (i.e.\ the dimension does not change under the projection),
the dimension of the RPA manifold is decreased by $n_1$,
since $\Sigma$ has no dependence on the variables in $\hat X_1$ (see Fig.~\ref{fig:rpa-manifold-projection}).
Namely, 
\begin{equation}
\dim \Sigma |_{\hat X_2} = \lambda, 
\quad 
\dim \mathcal S |_{\hat X_2} = n_2 - \lambda .
\end{equation}
The projected RPA and susceptible manifolds still intersect at a point in $\hat X_2$,
\begin{equation}
\dim \Sigma |_{\hat X_2} + \dim \mathcal S |_{\hat X_2} = n_2. 
\end{equation}
We can further project $\Sigma |_{\hat X_2}$ 
to $X_2$ by projecting out $\wt{\bm w}$, and obtain $\Sigma |_{X_2}$. 
When the manifold $\Sigma |_{X_2}$ is in a general position, its projection to $X_2$ is also of dimension $\lambda$.
Since $\Sigma |_{\hat X_2}$ does not depend on the parameters inside $\gamma$, its projection to $X_2$ does not either.
Thus, the steady state of $\bm x_2$ is localized within a $\lambda$-dimensional manifold in $X_2$ under the change of parameters in $\gamma$. 

\begin{figure}[tb]
  \centering
  \includegraphics
  [clip, trim=0cm 8cm 0cm 0cm, scale=0.43]
  {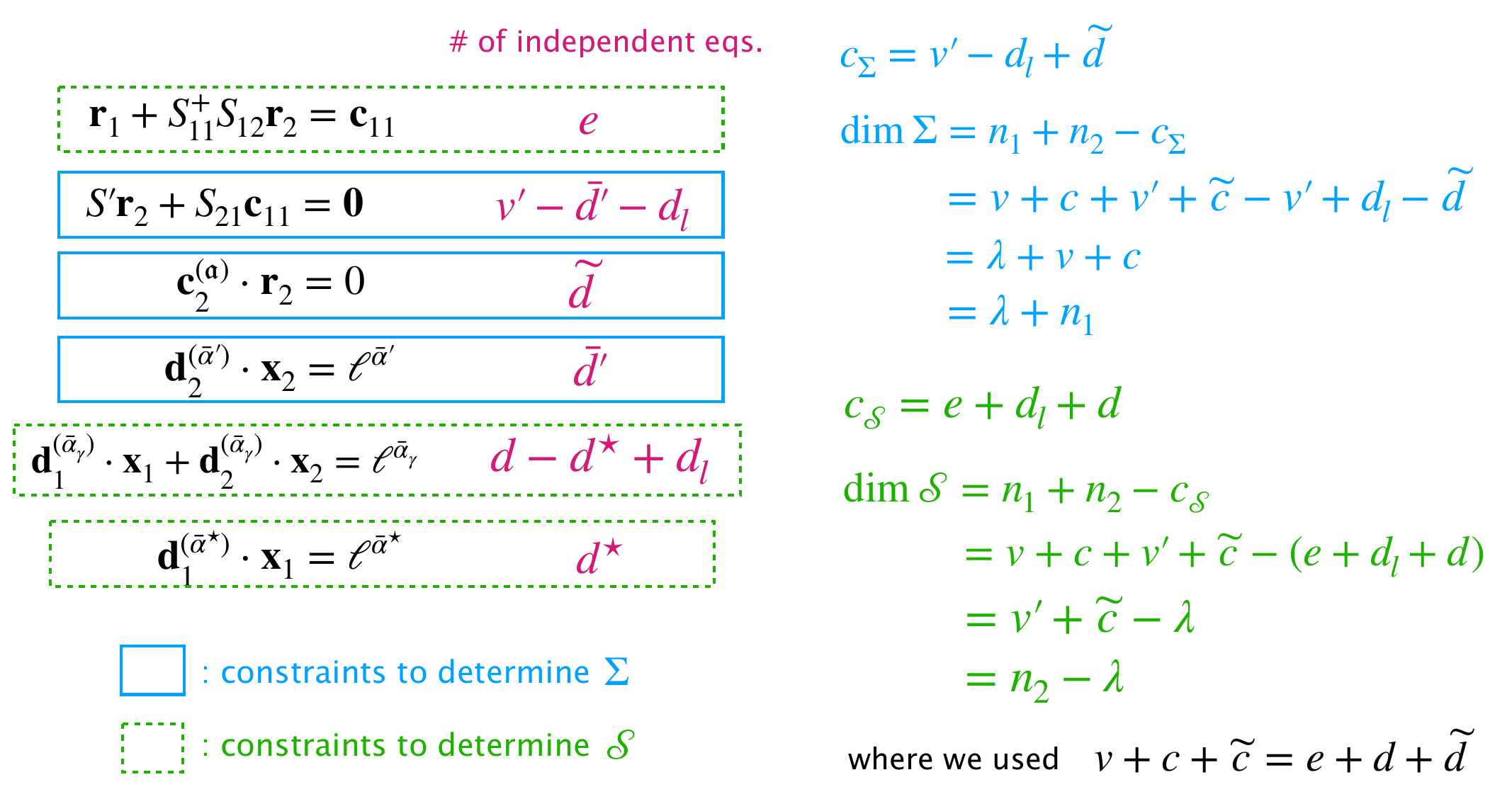}
  \caption{ 
  Illustration of the dimensions of the RPA manifold and the susceptible manifold. We separately wrote the conserved-quantity equations as in Eqs.~\eqref{eq:d2-x2-l} -- \eqref{eq:d1x1-l}.
  }
  \label{fig:rpa-manifold-dimensions}
\end{figure}

\begin{figure}[tb]
  \centering
  \includegraphics
  [clip, trim=0cm 11cm 0cm 0cm, scale=0.4]
  {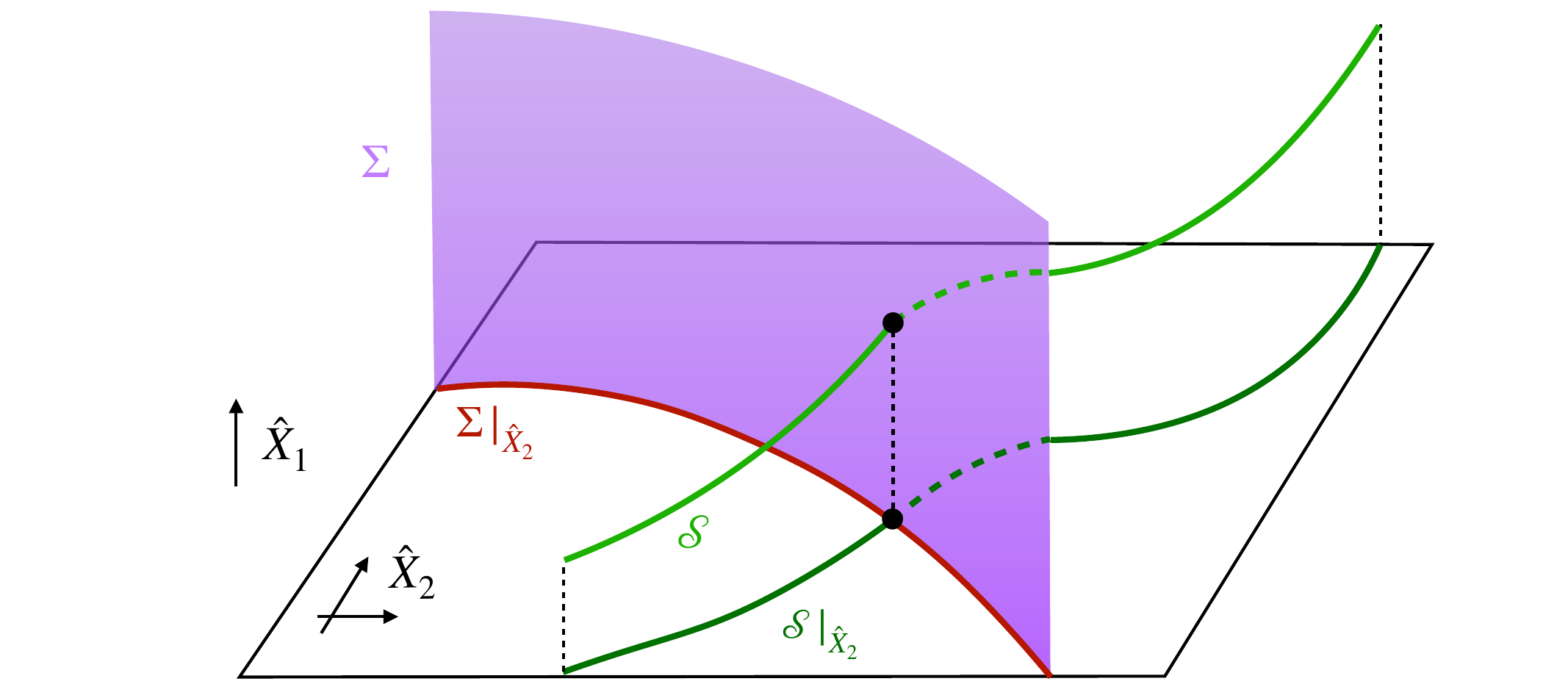}
  \caption{ Schematic of RPA manifold and susceptible manifold in $\hat X_1 \times \hat X_2$ and their projections to $\hat X_2$. }
  \label{fig:rpa-manifold-projection}
\end{figure}

\subsection{Examples} 

Here, we demonstrate the manifold RPA phenomenon and the law of manifold localization using a few examples. 

\subsubsection{ Simple example }

Let us first describe a simple example that consists of two species $\{ x,y \}$ and three reactions,
\begin{align}
e_1&: \emptyset \to y , \nonumber  \\
e_2&: y \to x, \\
e_3&: x \to  \emptyset. \nonumber
\end{align}
The stoichiometric matrix reads 
\begin{equation}
S =
\begin{blockarray}{cccc}
&& \\
\begin{block}{c[ccc]}
{y} \quad\,
& -1 & 1 & 0  \\
{x} \quad\, 
& 1 & 0 & - 1  \\
\end{block}
& e_2 & e_1 & e_3 
\end{blockarray} \,\,,    
\end{equation}
If we use mass-action kinetics, the rate equations are 
\begin{eqnarray}
\dot y &= k_1 - k_2 y ,
\\
\dot x &= k_2 y - k_3 x .
\end{eqnarray}
At the steady state, the concentrations are given by 
$\bar x = k_1 / k_3$ and $\bar y = k_1 / k_2$. 

Let us choose the following output-complete subnetwork, 
\begin{equation}
\gamma = (\{ \}, \{e_2 \}) ,
\end{equation}
which includes no species and one reaction. 
The influence index is nonzero, 
$\lambda (\gamma)=1$, 
and the conventional law of localization does not apply. 
Since the subnetwork is output-complete, the law of manifold localization is applicable. 
The influence index of $\gamma$ is decomposed as
\begin{equation}
\lambda (\gamma)
=
\underset{=\widetilde c(\gamma)}{ 1 }
+
\underset{=d_l(\gamma)}{ 0} 
- 
\underset{=\widetilde d(\gamma)}{ 0 }
= 1. 
\end{equation}
Thus, the subnetwork is a generalized buffering structure of degree $1$.
In the current case, the submatrix $S_{11}$ is $0 \times 1$ dimensional, and the matrices $S'$ and $S_{21}$ are given by 
\begin{equation}
S' =
\begin{blockarray}{ccc}
&& \\
\begin{block}{c[cc]}
{y} \quad\,
&  1 & 0  \\
{x} \quad\, 
&  0 & - 1  \\
\end{block}
 & e_1 & e_3 
\end{blockarray} \,\,,    
\quad 
S_{21}  = 
\begin{bmatrix}
    -1 \\
    1
\end{bmatrix}. 
\end{equation}
The subnetwork $\gamma$ has an emergent cycle, 
and $\ker S_{11}$ is nontrivial,
$
 \ker S_{11} = {\rm span\,} \{
 \begin{bmatrix} 1 \end{bmatrix}
    \} 
$.
We can pick an element of $\ker S_{11}$ as 
$
\bm c_{11}  = \wt{w}(t) 
\begin{bmatrix}
    1
\end{bmatrix}
$, and we have 
\begin{equation}
\begin{bmatrix}
    \dot y \\
    \dot x
\end{bmatrix}    
= 
\begin{bmatrix}
1 & 0 \\
0 & -1 
\end{bmatrix}
\begin{bmatrix}
    k_1  \\
    k_3 x
\end{bmatrix}
+ \wt{w} \begin{bmatrix} -1   \\ 1 \end{bmatrix}.
\label{eq:xy-red-gamma}
\end{equation}
The second term on the RHS is orthogonal to $\begin{bmatrix}
    1 & 1
\end{bmatrix}$, 
and multiplying this vector, we get an integrator equation,
\begin{equation}
\frac{d}{dt} (y+x) = k_1 - k_3 x. 
\end{equation}
This integrator steers the variables $(x,y)$ to the following 
one-dimensional RPA manifold $\Sigma (\gamma) |_{X_2}$,
\begin{equation}
\begin{bmatrix}
    \bar x \\ 
    \bar y 
\end{bmatrix}
= 
\begin{bmatrix}
    k_1 / k_3 \\ 
    s 
\end{bmatrix},
\end{equation}
where $s \in \mathbb R_{\ge 0}$ is a parameter. 
This manifold is independent
of parameters in $\gamma$, which is $k_2$ in the current case.
On the other hand, the susceptible manifold $\mathcal S({\gamma}) |_{X_2}$ is given by the relation
\begin{equation}
k_1 - k_2 \bar y = 0 ,
\end{equation}
which also defines a line in $(\bar x,\bar y)$ plane.
The susceptible manifold $\mathcal S({\gamma}) |_{X_2}$ depends nontrivially on $k_2$.

\subsubsection{ Example with a lost conserved quantity } \label{sec:manifold-rpa-ex-simple-2}

Here we discuss an example with a lost conserved quantity. 
Let us take a reaction network with four species $\{v_1,v_2,v_3,v_4 \}$ and four reactions,
\begin{align}
e_1&: v_1 \to v_2, \nonumber \\ 
e_2&: v_2 \to v_3,  \\
e_3&: v_3 \to v_4, \nonumber \\ 
e_4&: v_4 \to v_1. \nonumber
\end{align}
Under mass-action kinetics, 
the steady-state concentrations are given by 
$\bar x_i = \ell K / k_i$ for $i=1,2,3,,4$ with 
$1/K \coloneqq 1/k_1 + 1/k_2 + 1/k_3 + 1/k_4$, where 
$\ell = \bar x_1 + \bar x_2 + \bar x_3 + \bar x_4$ is a conserved quantity,
and the steady-state rate is 
$\bar r_A = \ell K$ for $A = 1,2,3,4$.

We here take the following output-complete subnetwork,
\begin{equation}
\gamma = ( \{v_1,v_2\}, \{e_1,e_2 \} ) . 
\end{equation}
The subnetwork contains no emergent cycle and no emergent conserved quantity,
and it has one lost conserved quantity. 
The decomposition of the influence index reads 
\begin{equation}
\lambda (\gamma)
=
-2+2-0+1
=
\underset{=\widetilde c(\gamma)}{ 0}
+
\underset{=d_l(\gamma)}{ 1} 
- 
\underset{=\widetilde d(\gamma)}{ 0 }
= 1. 
\end{equation}
Indeed, the projection of the conserved quantity $\bm d = \begin{bmatrix}
    1 & 1 & 1 & 1
\end{bmatrix}^\top \in \coker S$ 
to subnetwork $\gamma$ is not conserved within $\gamma$.
Since there is no emergent conserved quantity, 
the integrator equation is given by Eq.~\eqref{integrator:gen_buffering}, 
%The rate equation~\eqref{eq:sp-r2-s21-c11} for the external concentrations reads
\begin{equation}
\frac{d}{dt}
\begin{bmatrix}
x_1 + x_2 + x_3 \\ 
x_4
\end{bmatrix}
= 
\begin{bmatrix}
- k_3 x_3 + k_4 x_4 \\ 
k_3 x_3 - k_4 x_4
\end{bmatrix}. 
\end{equation}
The first and second lines of these equations are equivalent via the conservation law, $x_1 + x_2 + x_3 + x_4 = \ell = {\rm const.}$.
This integrator drives the variables to satisfy 
\begin{equation}
k_4 \bar x_4 - k_3 \bar x_3  = 0, 
\label{eq:l-cons-ex-rate}
\end{equation}
This relation determines the RPA manifold $\Sigma (\gamma)|_{X_2}$, which is one-dimensional in $(\bar x_3, \bar x_4)$. 
Because of the existence of a lost conserved quantity, 
$(\bar x_3,\bar x_4)$ should satisfy an equation that involve $\bar x_1$ and $\bar x_2$ as well, 
\begin{equation}
\bar x_3 + \bar x_4 = \ell - \bar x_1 - \bar x_2. 
\label{eq:x3-p-x4-l-x1-x2}
\end{equation}
The condition \eqref{eq:x3-p-x4-l-x1-x2} also defines a line and this corresponds to the susceptible manifold $\mathcal S ({\gamma}) |_{X_2}$.
While $\mathcal S ({\gamma}) |_{X_2}$ is subject to the parameters in $\gamma$, i.e.\ $k_1$ and $k_2$, the manifold $\Sigma (\gamma)|_{X_2}$ is insensitive to these parameters, as expected.

\subsubsection{ Example with an emergent conserved quantity } \label{sec:manifold-rpa-ex-2}

\begin{figure}[tb]
\centering
\includegraphics
 [keepaspectratio, scale=0.55]
%  [clip, trim=0cm 14cm 0cm 0cm, scale=0.43]
 {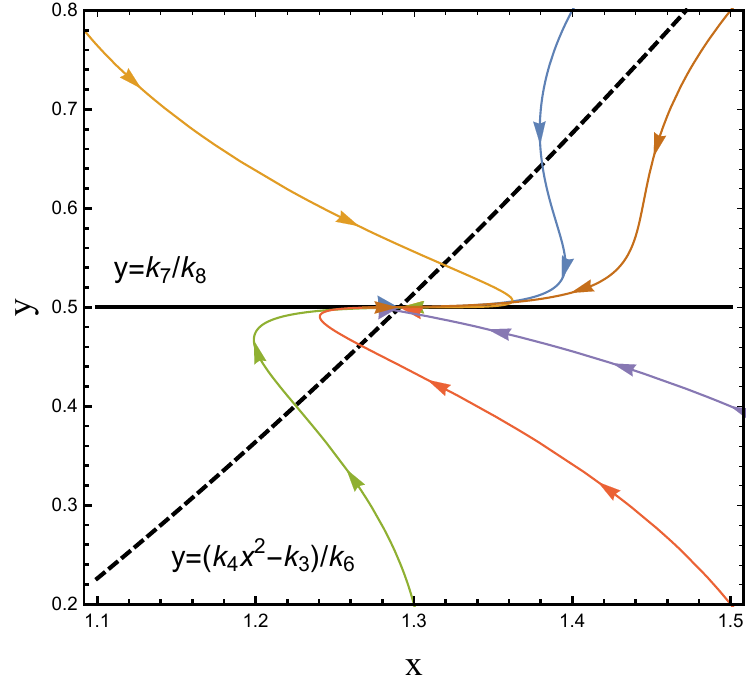} 
\caption{
Approach to the steady state of the concentrations $x(t)$ and $y(t)$. 
The black solid line denotes the steady-state line determined by
the second equation of Eq.~\eqref{eq:xyz1z2-sp}.
The integrator \eqref{eq:xyz1z2-dt} drives the variables $x(t)$ and $y(t)$ to 
the black dashed line $\Sigma (\gamma) |_{X_2}$.
The steady state is at the intersection of the black solid and dashed lines, 
to which the states are absorbed. 
Parameters are taken as
$
k_1 = 1.5, 
k_2 = 0.1,
k_3 = 1,
k_4 = 1.2,
k_5 = 1,
k_6 = 2,
k_7 = 1, 
k_8 = 2.
$
}
\label{fig:flow} 
\end{figure}

As a more nontrivial example, let us revisit the example discussed in Sec.~\ref{sec:ex-manifold-rpa}. 
We choose the following subnetwork
\begin{equation}
\gamma = (\{z_1,z_2\}, \{e_1,e_2,e_5,e_7,e_8\}) . 
\end{equation}
This subnetwork has nonzero influence index, $ \lambda (\gamma)=1$, and cannot be discussed with the conventional law of localization. 
It is output-complete, so we can apply the law of manifold localization. 
The influence index is decomposed as
\begin{equation}
\lambda (\gamma)
=
-2 + 5 - 2 + 0
=
\underset{=\widetilde c(\gamma)}{ 2}
+
\underset{=d_l(\gamma)}{ 0}
- 
\underset{=\widetilde d(\gamma)}{ 1 }
= 1. 
\end{equation}
The spaces $\ker S_{11}$ and $\coker S_{11}$ are spanned by 
\begin{align}
\ker S_{11}  &= 
\text{span}
\{  
\begin{bmatrix}  0 & 1 & 1 & 0 & 0 \end{bmatrix}^\top,
\begin{bmatrix}  0 & 0 & 0 & 1 & 1 \end{bmatrix}^\top,
{
\color{mydarkred} 
\begin{bmatrix}  0 & 0 & 0 & 1 & -1 \end{bmatrix}^\top, 
\begin{bmatrix}  0 & -1 & 1 & 0 & 0 \end{bmatrix}^\top
}
\}, \\
%\label{eq:ex-man-ker-s11-coker-s11-}
\coker S_{11} 
&= 
\text{span} 
\{ 
{
\color{mydarkred} 
\begin{bmatrix} 1 & -1 \end{bmatrix}^\top 
}
\}, 
\end{align} 
where the components are ordered as 
$(e_2,e_1,e_5,e_8,e_7)$ for $\ker S_{11}$
and $(x,y)$ for $\coker S_{11}$, 
and colored vectors are emergent.
If we set 
$\wt{\bm c}^{(1)}_1 \coloneqq 
\begin{bmatrix}  0 & 0 & 0 & 1 & -1 \end{bmatrix}^\top
$
and 
$\wt{\bm c}^{(2)}_1 \coloneqq
\begin{bmatrix}  0 & -1 & 1 & 0 & 0 \end{bmatrix}^\top
$, 
we can parametrize an element of $\ker S_{11} \cap (\ker S_{21})^\perp$ as
$\bm c_{11} \coloneqq \wt{w}_1 \wt{\bm c}^{(1)}_1 + \wt{w}_2 \wt{\bm c}^{(2)}_2$, 
and we have 
\begin{equation}
S' \bm r_2 + S_{21} \bm c_{11}
= 
\begin{bmatrix}
    -2 \wt{w}_2 \\ 
    -2 \wt{w}_1
\end{bmatrix}.
\end{equation}
Hence, the rate equations do not give us an integrator and 
they just fixes the values of $\wt{w}_1$ and $\wt{w}_2$ to zero at steady state. 
The emergent conserved quantity gives the same integrator equation as Eq.~\eqref{eq:xyz1z2-dt}. 
At steady state, this integrator enforces the relation,
\begin{equation}
k_4 \bar x^2 = k_3 + k_6 \bar y .
\label{eq:man-rpa-ex2-ecq}
\end{equation}
Thus, the RPA manifold for the external concentrations
is given by $(x,y) \in (\mathbb R_{> 0})^2$ subject to constraint~\eqref{eq:man-rpa-ex2-ecq}. 
One can parametrize the one-dimensional manifold as 
\begin{equation}
\begin{bmatrix}
 \bar x \\  \bar y 
\end{bmatrix}    
= 
\begin{bmatrix}
\sqrt{ \frac{1}{k_4}  (k_3 + k_6 s) } 
\\ 
s 
\end{bmatrix},
\end{equation}
where $s \in \mathbb R_{> 0}$.

Figure~\ref{fig:flow} illustrates manifold RPA of this example. 
We plot the time evolution of states in $(x,y)$ space. 
The dashed line is the RPA manifold $\Sigma (\gamma)|_{X_2}$
associated with $\gamma$, that is specified by Eq.~\eqref{eq:man-rpa-ex2-ecq}, 
and the solid line corresponds to $\mathcal S ({\gamma}) |_{X_2}$. 
The states are absorbed to the intersection of $\Sigma (\gamma) |_{X_2}$ 
and $\mathcal S (\gamma) |_{X_2}$.
While he location of $\mathcal S (\gamma) |_{X_2}$ is subject to changes in the parameters in $\gamma$: it depends on $k_7$ and $k_8$, 
the position of $\Sigma (\gamma) |_{X_2}$ is independent of 
all the parameters $(k_1,k_2,k_5,k_7,k_8)$ inside $\gamma$. 
In other words, the relation~\eqref{eq:man-rpa-ex2-ecq} is independent of the parameters $(k_1,k_2,k_5,k_7,k_8)$.

\section{ Robust adaptation is topological } \label{sec:rat}

We have seen that the RPA properties and the corresponding integral feedback control can be identified based on indices determined by the topological characteristics of reaction networks. 
Here, we argue that this feature is natural and generically true
because of the robustness of adaptation in broader classes of physical systems beyond deterministic chemical reaction systems.

Let us consider a class of dynamical systems (which do not necessarily have to be deterministic chemical reaction systems) and think of the set of all such systems. 
Among these, we wish to find the systems that exhibit RPA. 
Suppose also that we shall be able to detect such systems with
a real-valued quantity $\chi (S)$ associated with a system $S$\footnote{
In the case of deterministic chemical reaction systems, 
the specification of a system $S$ should 
include the data of 
a CRN $\Gamma$, 
a subnetwork $\gamma$ of $\Gamma$, 
a set of parameters to be perturbed, 
and a choice of output variables. 
See Table~\ref{table:rat-example}.
}, 
and let us say that system $S$ exhibits RPA if $\chi(S)=0$. 
When the perfect adaptation property is robust, 
the adaptation does not require fine-tuning of system parameters. 
If we have another system $S'$, which only differs from $S$ by the values of parameters, then we should also have $\chi (S') =\chi (S) = 0$ 
due to the robustness requirement.
Namely, the quantity $\chi (S)$ is invariant under 
the change of parameters of the system $S$. 
In this sense, the quantity $\chi$ should be a topological invariant. 
Thus, we argue that the class of systems with an RPA property
should be generically identified by a topological invariant. 
We summarize this feature as \emph{``Robust Adaptation is Topological''(RAT)}, and will refer to this motto as the \emph{RAT principle} for short.

The RAT principle itself is not a single theorem, 
but rather is a template for theorems, similarly to the Internal Model Principle of control theory.
Namely, depending on the technical details such as 
the class of systems under consideration, 
the choice of output variables, 
and 
the class of perturbations under consideration, 
the corresponding topological invariant should be identified 
appropriately (if it exists). 
For example, the law of localization~\cite{PhysRevLett.117.048101,PhysRevE.96.022322}
and the RPA of fluxes~\cite{https://doi.org/10.48550/arxiv.2302.01270} 
provide us with particular incarnations of the RAT principle (see Table~\ref{table:rat-example}).
The one-to-one correspondence of RPA properties and labeled buffering structures indicates that all the RPA properties can be identified topologically. 
The law of manifold localization discussed in Sec.~\ref{sec:manifold-rpa} 
generalizes the regulation problem to nonzero dimension of target manifolds, 
and the condition of output-completeness is still topological, in line with the RAT principle. 
The RAT principle gives us guidance for finding systems with RPA properties:
the question is to identify the appropriate topological invariant for the particular situation under consideration. 
As we showed in Sec.~\ref{sec:rpa-lbs},
buffering structures turned out to be also necessary for RPA to be realized. 

The realizations of the RAT principle mentioned here are 
for deterministic chemical reaction systems. 
It would be highly intriguing to explore the possibility of formulating analogous theorems for various classes of dynamical systems, such as stochastic reaction systems, electrical circuits, and others.
It would also be interesting to examine the known results in control theory about robust control from this perspective: the condition for robust control 
may allow for a topological interpretation. 
As a further exploration, it would be interesting to identify the underlying mathematical object 
behind the topological invariant, similarly to the case of the Euler characteristic, which can be written as the alternating sum of the ranks of homology groups of different degrees~\cite{hatcher2001topology}.

\begingroup
\setlength{\tabcolsep}{10pt} 
\renewcommand{\arraystretch}{1.5} 

\begin{table}[tb]
\centering
\begin{tabular}{| c || c | c| c|} 
 \hline
  & 
  Law of localization~\cite{PhysRevLett.117.048101,PhysRevE.96.022322}
  & 
  RPA of fluxes~\cite{https://doi.org/10.48550/arxiv.2302.01270} 
  &
  Law of manifold localization
  \\ 
 \hline\hline
 Perturbation 
 & 
 Parameters inside $\gamma$ 
 & 
 Parameters inside $\gamma$ 
 &
 Parameters inside $\gamma$ 
 \\ 
 \hline 
 Output variables 
 & 
$x_i$ and $r_A$ outside $\gamma$
 & 
 $r_A$ for all reactions
 &
 $x_i$ and $r_A$ outside $\gamma$
 \\ 
 \hline 
 Dim. of target manifolds
 & 0 & 0 & $\lambda(\gamma)$ 
 \\ 
 \hline 
 Topological conditions
 & 
 $\lambda(\gamma) =0$ \& OC
 & 
 $\lambda_{\rm f}(\gamma)=0$ \& OC
 & 
 OC
 \\
 \hline
\end{tabular}
\caption{
Summary of allowed perturbations, output variables, 
dimension of target manifolds for regulation, 
and topological conditions for three realizations of the RAT principle.
OC in the table means output-completeness.
In all the cases, constant-in-time disturbances of parameters inside 
a subnetwork $\gamma \subset \Gamma$ are considered. 
}
\label{table:rat-example}
\end{table}
\endgroup

\section{ Conclusions } \label{sec:conclusion} 

The ability of cellular biomolecular networks to tightly regulate key quantities is of utmost importance, despite their complexity and various sources of disturbances, since the failure to achieve this precise control can have dire consequences for the cell population and, consequently, the entire organism.
The capacity of a reaction network to exhibit robustness against disturbances, known as Robust Perfect Adaptation (RPA), equips cells with the ability to effectively cope with such disruptions. Hence, unraveling the underlying mechanisms for RPA has been a crucial endeavor in understanding cellular resilience.
The goal of this paper is to develop a systematic method for finding all the kinetics-independent RPA properties and the embedded integral control mechanisms associated with them.
We have shown that every elementary RPA property in a deterministic chemical reaction system can be characterized by a labeled buffering structure, 
which is a subnetwork with distinguished topological characteristics (output-completeness and vanishing of the influence index) (Theorem~\ref{thm:rpa-bs}).
This finding gives us a systematic method for identifying all the RPA properties in any chemical reaction network through the identification of labeled buffering structures, with which any generic RPA property can be obtained. 
We provided an efficient algorithm for the enumeration of all the labeled buffering structures for a given reaction network, and developed a computational package {\bf RPAFinder}~\cite{RPAFinder}.
We also gave a method for finding integral feedback controllers for a generic RPA property, which is represented by a buffering structure.
By doing so, we have shown that integral feedback control is \emph{necessary} (and sufficient) for kinetics-independent RPA in deterministic chemical reaction systems, establishing the Internal Model Principle in these systems for constant-in-time disturbances. Unlike most existing works on this topic, our results apply to generic multi-output situations, where many different shades of the RPA property can naturally emerge. 
These different shades correspond to the dimension of the RPA manifold to which the output-species dynamics is driven by the integral feedback mechanisms. This phenomenon, which we call \emph{manifold RPA}, can be studied naturally using our approach.
Remarkably, every output-complete subnetwork gives rise to manifold RPA and the concentrations of the species external to the subnetwork is driven to a manifold whose dimension is given by the influence index (Theorem~\ref{thm:law-of-manifold-localization}). 
Importantly, this manifold remains unaffected by parameters that influence the species and reactions within the subnetwork, even though these parameters might impact the steady-state concentrations of species outside the subnetwork. 
This observation highlights that perturbations in the subnetwork parameters can influence the external species, but they cannot disrupt the robust steady-state relationship among these species, which is determined by the manifold. Exploring this type of RPA holds crucial significance in systems biology and bears relevance to synthetic biology as well, particularly in the design of biomolecular controllers that can robustly maintain relationships between multiple output variables while retaining the adjustability of individual output variables~\cite{alexis2022regulation}.

In biochemical systems, there is no pre-existing distinction between a ``system'' and a ``controller'' unlike the situations in control engineering, and we have to reverse engineer~\cite{doi:10.1126/science.1069981} 
biological circuits to reveal embedded control mechanisms. 
In the present method, a pair of a controlled subsystem and a controller emerges for each labeled buffering structure.
Typically, a reaction network is endowed with multiple labeled buffering structures and hence it possesses multiple distinct RPA properties, which we can systematically uncover with our method. 
This is useful for understanding network resilience in systems biology, and it provides helpful insights into the design principles for synthetic biology~\cite{del2016control}.
We believe it will also be useful for metabolic engineering~\cite{stephanopoulos1998metabolic,doi:10.1146/annurev-chembioeng-061312-103312,moreno2008metabolic,SHIMIZU2022107887}, 
since we shall be able to narrow down the possible candidate reactions to modify for various objectives, like improving yields or reducing costs.

We have pointed out that the topological characterization of the condition for RPA 
is not limited to the studies of deterministic chemical reaction systems 
and should also be applicable to more generic dynamical systems. 
This idea can be expressed as ``Robust Adaptation is Topological,'' and we call this motto as the RAT principle. 
Results such as the one-to-one correspondence of the RPA property and labeled buffering structures,
the law of localization~\cite{PhysRevLett.117.048101,PhysRevE.96.022322},
and the RPA of reaction fluxes~\cite{PhysRevResearch.3.043123},
can be seen as examples of theorems embodying the RAT principle. 
We believe that the RAT principle provides us with a useful guideline 
as to how to formulate theorems identifying dynamical systems with the RPA property, 
depending on the technical details such as 
the class of systems and the choice of target variables. 
It will be interesting to reexamine existing results on robust control 
from a topological perspective. 

Finally, let us comment on possible further directions.
Labeled buffering structures provides us with a natural biological unit, in the sense that they can be controlled independently. 
It would be interesting to investigate how and when labeled buffering structures change when we embed a network in a larger network. 
Namely, the modularity~\cite{wagner2007road} of labeled buffering structures is to be tested.
The present method of enumeration of buffering structures is specifically for subnetworks with vanishing influence index.
As shown in Theorem~\ref{thm:law-of-manifold-localization}, any output-complete subnetwork can induce regulation to a $\lambda(\gamma)$-dimensional manifold. Developing an efficient method to find subnetworks with nonzero indices is important for uncovering manifold RPA in chemical reaction networks.
In the current work, we have considered deterministic reaction systems. 
Whether a similar result can be established for stochastic reaction systems is a nontrivial and intriguing question.
As a generalization to a different direction, the structural conditions governing the regulation to time-dependent states, such as oscillatory ones~\cite{novak2008design}, are to be explored.
To have a more comprehensive understanding of the behavior of chemical reaction systems, it would be interesting to study the connection of the present results with other approaches based on, for example, information geometry~\cite{PhysRevE.106.044131,PhysRevResearch.4.033208,PhysRevResearch.4.033066,kobayashi2022information}
or non-equilibrium thermodynamics~\cite{ito2015maxwell,10.1063/5.0094849,polettini2014irreversible,PhysRevX.6.041064,PhysRevX.13.021041}.

\begin{acknowledgments}
Y.~H. is grateful to 
Hyukpyo Hong,
Jae Kyoung Kim 
for useful discussions. 
The work of Y.~H. was supported in part by JSPS KAKENHI Grant No. JP22H05111, and in part by an appointment of the JRG Program at the APCTP, which is funded by the Science and Technology Promotion Fund and Lottery Fund of the Korean Government and also by the Korean Local Governments of Gyeongsangbuk-do Province and Pohang City. A.~G. and M.~K. acknowledge funding from ETH Zurich and from the Swiss National Science Foundation (SNSF) under grant 182653.
\end{acknowledgments}

\appendix

\section{ Spectral correspondence }\label{sec:spectral-correspondence} 

In the current paper, we assume that the dynamics is asymptotically stable, i.e., the Jacobian matrix at steady state is non-singular and all its eigenvalues have strictly negative real parts, as discussed in Sec.~\ref{sec:assumptions}.
In the case $\coker S = \bm 0$, this implies that the ${\bf A}$-matrix is invertible due to Theorem 1 in Ref.~\cite{PhysRevE.103.062212}. 
Let us here show that this continues to hold even when $\coker S \neq \bm 0$.

Let us consider the time evolution of 
fluctuations around a steady state, which is governed by 
\begin{equation}
\frac{d}{dt} \delta x_i =\sum_j J_{ij} \delta x_j , 
\end{equation}
where 
$J_{ij}\coloneqq \sum_A S_{iA} r_{A,j}$ is 
the Jacobian matrix. 
To be consistent with the conservation relations, 
the fluctuations should satisfy the constraints 
\begin{equation}
\sum_i d^{(\bar\alpha)}_i \delta x_i = 0. 
\end{equation}
The eigenvalue equations of the Jacobian matrix 
satisfying the constraint from conserved quantities read 
\begin{align}
J \delta \bm x_\lambda &= \lambda \delta \bm x_\lambda, \\ 
D \delta \bm x_\lambda &= \bm 0 ,
\end{align}
where we have used the matrix notation 
and the components of the matrix $D$ is 
given by $D_{\bar\alpha i}\coloneqq d^{(\bar\alpha)}_i$. 
An eigenvector of the Jacobian 
in the presence of conserved quantities 
can be seen as an element of the kernel of the following matrix,
\begin{equation}
{\bf J}_\lambda 
\coloneqq 
\left[ 
\begin{array}{c}
    J  - \lambda \\ 
    D
\end{array}
\right]. 
\end{equation}
We will show that there is a one-to-one correspondence 
between $\ker {\bf J}_\lambda$ and 
the generalized eigenvectors/values of the ${\bf A}$-matrix. 
We here use a matrix notation and express the $\bf A$-matrix~\eqref{eq:mat-a-def} as 
\begin{equation}
{\bf A} 
= 
\left[ 
\begin{array}{c|c}
R & -C \\ \hline 
D & \bm 0 
\end{array} 
\right] . 
\end{equation}
We introduce the following matrix parametrized by $\lambda \in \mathbb C$, 
\begin{equation}
{\bf A}_\lambda 
\coloneqq 
\left[ 
\begin{array}{c|c}
R - \lambda S^+  & -C \\  \hline 
D & \bm 0 
\end{array}
\right] . 
\end{equation}

We will show the following: 
\begin{theorem}
The following map $F$ is a bijection,
\begin{equation}
F : 
\ker  {\bf J}_\lambda
\to 
\ker {\bf A}_\lambda ,
\end{equation}
where $F$ is defined by 
\begin{equation}
\ker {\bf J}_\lambda \ni \delta \bm x_\lambda 
\mapsto 
F (\delta \bm x_\lambda) 
\coloneqq 
\begin{bmatrix}
\delta \bm x_\lambda \\ 
\bm \mu 
\end{bmatrix}, 
\end{equation}
Here, $\bm \mu$ is determined uniquely by 
\begin{equation}
C \bm \mu = (1 - S^+ S) R \delta \bm x_\lambda. 
\label{eq:c-mu-r-dx}
\end{equation}
\end{theorem}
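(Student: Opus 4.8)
The plan is to verify directly that $F$ is well defined, that it lands in $\ker {\bf A}_\lambda$, and that it is injective, and then to establish surjectivity by reconstructing a preimage. Throughout I would lean on the standard Moore--Penrose identities $SS^+S=S$ and $S^+SS^+=S^+$, on the fact that $1-S^+S$ is the orthogonal projector of $\mathbb{C}^N$ onto $\ker S$ while $SS^+$ is the orthogonal projector of $\mathbb{C}^M$ onto $\operatorname{range}S$, together with $SC=0$ (the columns of $C$ span $\ker S$) and $J=SR$. The well-definedness of $\bm\mu$ comes first: the right-hand side $(1-S^+S)R\,\delta\bm x_\lambda$ lies in $\ker S=\operatorname{range}C$, and since the columns of $C$ are linearly independent, Eq.~\eqref{eq:c-mu-r-dx} has exactly one solution $\bm\mu$, so $F$ is a well-defined linear map.

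Next I would check $F(\delta\bm x_\lambda)\in\ker{\bf A}_\lambda$ by evaluating the two block rows. The lower block gives $D\,\delta\bm x_\lambda$, which vanishes because $\delta\bm x_\lambda\in\ker{\bf J}_\lambda$. For the upper block, substituting $C\bm\mu=(1-S^+S)R\,\delta\bm x_\lambda$ yields
\begin{equation}
(R-\lambda S^+)\delta\bm x_\lambda-C\bm\mu
= S^+SR\,\delta\bm x_\lambda-\lambda S^+\delta\bm x_\lambda
= S^+(J-\lambda)\delta\bm x_\lambda=0,
\end{equation}
where the final equality uses $(J-\lambda)\delta\bm x_\lambda=0$. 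Injectivity is then immediate, since $F$ acts as the identity on the first block of components.

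The substantive step is surjectivity. Given $\begin{bmatrix}\delta\bm x\\ \bm\mu\end{bmatrix}\in\ker{\bf A}_\lambda$, the lower block gives $D\,\delta\bm x=\bm 0$ and the upper block gives $C\bm\mu=(R-\lambda S^+)\delta\bm x$. Applying $S$ to this upper relation and using $SC=0$ produces $SR\,\delta\bm x=\lambda SS^+\delta\bm x$, i.e.\ $J\,\delta\bm x=\lambda\,SS^+\delta\bm x$. To upgrade this to the genuine eigen-relation $J\,\delta\bm x=\lambda\,\delta\bm x$ I must show $SS^+\delta\bm x=\delta\bm x$, that is $\delta\bm x\in\operatorname{range}S$. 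This is exactly where $D\,\delta\bm x=\bm 0$ enters: the rows of $D$ form a basis of $\coker S=\ker S^\top=(\operatorname{range}S)^{\perp}$, so $D\,\delta\bm x=\bm 0$ says $\delta\bm x$ is orthogonal to $(\operatorname{range}S)^{\perp}$ and hence lies in $\operatorname{range}S$. Once this holds, $\delta\bm x\in\ker{\bf J}_\lambda$; moreover $S^+SR\,\delta\bm x=S^+J\,\delta\bm x=\lambda S^+\delta\bm x$ shows $(R-\lambda S^+)\delta\bm x=(1-S^+S)R\,\delta\bm x$, so the $\bm\mu$ attached to $\delta\bm x$ is precisely the one prescribed by Eq.~\eqref{eq:c-mu-r-dx}, giving $F(\delta\bm x)=\begin{bmatrix}\delta\bm x\\ \bm\mu\end{bmatrix}$.

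I expect the only delicate point to be the implication $D\,\delta\bm x=\bm 0\Rightarrow\delta\bm x\in\operatorname{range}S$, which must be argued over $\mathbb{C}$ because $\lambda$, and hence $\delta\bm x$, may be complex. The clean route is to observe that $S$ and $D$ are real, so the complex space $\ker S^\top$ is the complexification of its real counterpart and the real rows of $D$ remain a basis; orthogonality with respect to the Hermitian inner product then follows from $D\,\delta\bm x=\bm 0$, since pairing a real vector against $\delta\bm x$ is insensitive to the conjugation. With this settled, the four claims combine to show $F$ is a linear bijection.
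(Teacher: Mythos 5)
Your proof is correct and follows essentially the same route as the paper's: you verify membership in $\ker {\bf A}_\lambda$ by projecting the eigenvalue relation (your identity $S^+(J-\lambda)\delta\bm x_\lambda=\bm 0$ is the paper's multiplication by $SS^+$ in disguise), note injectivity from the first block, and reverse the projections for surjectivity. You actually supply more detail than the paper, which dismisses surjectivity with ``it can be easily checked'': your observation that $D\,\delta\bm x=\bm 0$ forces $SS^+\delta\bm x=\delta\bm x$ is exactly the step needed there, and the remark about complexifying the orthogonality argument for complex $\lambda$ addresses a subtlety the paper glosses over.
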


\begin{proof}
    
Let us show that 
indeed $F(\delta \bm x_\lambda) \in \ker {\bf A}_\lambda$. 
An element $\delta \bm x_\lambda \in \ker {\bf J}_\lambda$ satisfies 
\begin{align}
S R \delta \bm x_\lambda &= \lambda \delta \bm x_\lambda ,\label{eq:sr-del-x-x}
\\
D \delta \bm x_\lambda &= \bm 0 \label{eq:d-delta-x}. 
\end{align}
Equation \eqref{eq:d-delta-x} indicates that 
$\delta \bm x_\lambda \in (\coker S)^\perp$. 
Since $S S^+$ is a projection matrix to  $(\coker S)^\perp$, 
we can multiply this to the RHS of 
Eq.~\eqref{eq:sr-del-x-x}, 
and we have 
\begin{equation}
S 
\left( 
R   - \lambda S^+ 
\right) 
\delta \bm x_\lambda 
= \bm 0 . 
\end{equation}
This means that we can write 
\begin{equation}
\left( 
R 
- \lambda S^+ 
\right) 
\delta \bm x_\lambda 
 = C \bm \mu, 
 \label{eq:r-x-lm-s}
\end{equation}
Since 
$S^+ \delta \bm x_\lambda \in (\ker S)^\perp$, 
by multiplying the projection matrix to $\ker S$, we obtain 
Eq.~\eqref{eq:c-mu-r-dx}. 
For a given set of basis vectors $C$ of $\ker S$, 
$\bm \mu$ is determined uniquely. 
Equations \eqref{eq:r-x-lm-s} and \eqref{eq:d-delta-x} 
can be rearranged as 
\begin{equation}
\left[ 
\begin{array}{c|c}
R & -C  \\  \hline 
D & \bm 0 
\end{array}
\right] 
\begin{bmatrix}
    \delta \bm x_\lambda \\ 
    \bm \mu 
\end{bmatrix}
= 
\lambda 
\left[ 
\begin{array}{c|c}
S^+ & \bm 0 \\ \hline 
\bm 0 & \bm 0 
\end{array}
\right]
\begin{bmatrix}
    \delta \bm x_\lambda \\ 
    \bm \mu 
\end{bmatrix}, 
\end{equation}
which means that 
$
\begin{bmatrix}
    \delta \bm x_\lambda \\ 
    \bm \mu 
\end{bmatrix}
\in \ker {\bf A}_\lambda. 
$

The map $F$ is obviously injective. 
Conversely, 
for a given 
$
\begin{bmatrix}
    \delta \bm x_\lambda \\ 
    \bm \mu 
\end{bmatrix}
\in \ker {\bf A}_\lambda,
$
it can be easily checked that 
$\delta \bm x_\lambda \in \ker {\bf J}_\lambda$,
which means that $F$ is surjective. 
Thus, we have shown that $F$ is a bijection. 
This concludes the proof. 

\end{proof}

\begin{cor}\label{cor:invertivility-of-a}
When a chemical reaction system has an asymptotically stable steady state, the corresponding ${\bf A}$-matrix is invertible. 
\end{cor}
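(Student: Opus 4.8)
The plan is to deduce the corollary directly from the bijection $F$ established in the theorem just proved, specialized to $\lambda = 0$. First I would observe that the $\lambda$-parametrized matrix ${\bf A}_\lambda$ reduces to the ${\bf A}$-matrix of Eq.~\eqref{eq:mat-a-def} when $\lambda = 0$, since its only $\lambda$-dependent entry $R - \lambda S^+$ becomes $R$; likewise ${\bf J}_0$ is the matrix with block $J$ stacked over $D$. Because ${\bf A}$ is square (as noted after Eq.~\eqref{eq:mat-a-def} via Fredholm's theorem), its invertibility is equivalent to the statement $\ker {\bf A} = \bm 0$. So the whole argument reduces to showing that this kernel is trivial.

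Next, applying the bijection at $\lambda = 0$ gives $\ker {\bf A} = \ker {\bf A}_0 \simeq \ker {\bf J}_0$, so it suffices to show $\ker {\bf J}_0 = \bm 0$. An element of $\ker {\bf J}_0$ is a vector $\delta \bm x$ satisfying simultaneously $J \delta \bm x = \bm 0$ and $D \delta \bm x = \bm 0$; that is, a zero-eigenvalue eigenvector of the Jacobian that is compatible with all the conservation constraints. The key step is then to invoke the stability assumption correctly: by Assumption~\ref{assm:stability}, and specifically the footnote clarifying that the relevant Jacobian is that of the system with the values of conserved quantities held fixed, the spectrum of $J$ restricted to the subspace $\{\delta \bm x \mid D \delta \bm x = \bm 0\}$ consists entirely of eigenvalues with strictly negative real part. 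In particular $0$ is not an eigenvalue on this subspace, so there is no nonzero $\delta \bm x$ with $J \delta \bm x = \bm 0$ and $D \delta \bm x = \bm 0$. Hence $\ker {\bf J}_0 = \bm 0$, and through the bijection $\ker {\bf A} = \bm 0$, which yields invertibility of ${\bf A}$.

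I expect the only delicate point to be the precise interpretation of the phrase ``non-singular Jacobian'' in Assumption~\ref{assm:stability}. When $\coker S \neq \bm 0$ the full Jacobian $J$ acting on all of $\mathbb{R}^{|i|}$ necessarily possesses zero eigenvalues along the conserved directions, so the stability statement can only be meaningful on the constraint surface $D \delta \bm x = \bm 0$. The crux is therefore to recognize that the constraint $D \delta \bm x = \bm 0$ built into ${\bf J}_\lambda$ is exactly what restricts attention to this surface, so that $\ker {\bf J}_0$ encodes precisely the zero-eigenvalue data of the \emph{constrained} Jacobian; confirming this alignment between the constraint row of ${\bf J}_\lambda$ and the conserved-quantity-fixed dynamics of the footnote is the main obstacle, after which the corollary follows immediately.
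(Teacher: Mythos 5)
Your proposal is correct and follows essentially the same route as the paper: specialize the bijection $F$ to $\lambda=0$, conclude $\ker{\bf A}=\ker{\bf A}_0\simeq\ker{\bf J}_0$, and use the stability assumption (interpreted on the constraint surface $D\,\delta\bm x=\bm 0$) to see that $\ker{\bf J}_0$ is trivial. Your extra care about the constrained versus unconstrained Jacobian matches the intent of the footnote to Assumption~\ref{assm:stability} and introduces no gap.
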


\begin{proof}
Because of the asymptotic stability, 
$\ker {\bf J}_\lambda$ is trivial when $\lambda=0$, 
and consequently 
$\ker {\bf A}_\lambda$ is trivial when $\lambda=0$. 
This implies that the square matrix ${\bf A}$ does not have a zero eigenvalue, which means that ${\bf A}$ is invertible. 
\end{proof}

\section{ Formula for second-order responses }\label{sec:second-order-formula}

Here we derive explicit formulas for second-order responses 
of steady-state concentrations and reaction fluxes 
with respect to parameters. 
In the following, we derive the expression 
for $\bar x_{i,AB}$ with $A \neq B$, 
and also for 
\color{magenta}
$\bar y_{\nu,\rho\sigma}$  
\color{black}
with 
\color{magenta}
$\rho \neq \sigma$ 
\color{black}
in parallel.

We start with the formula for the first-order response, 
\begin{equation}
\bar x_{i,A}
= 
- ({\bf A}^{-1})_{iA} \, \p_A r_A. 
\quad \quad 
\color{magenta}
\bar y_{\nu,\rho} 
= 
- ({\bf A}^{-1})_{\nu \rho} \, \p_\rho L_\rho.  
\color{black} 
\end{equation}
By taking another derivative with respect to a parameter, 
\begin{equation}
\bar x_{i,AB}
= 
- 
({\bf A}^{-1})_{iA,B} \, \p_A r_A  
- 
({\bf A}^{-1})_{iA} \, (\p_A r_A)_{,B} . 
\end{equation}
\begin{equation}
\color{magenta}
y_{\nu,\rho \sigma }
= 
- ({\bf A}^{-1})_{\nu\rho,\sigma} \, \p_\rho L_\rho 
- ({\bf A}^{-1})_{\nu\rho} \, (\p_\rho L_\rho)_{,\sigma} . 
\end{equation}
\color{black}
The second term is written as 
\begin{equation}
- ({\bf A}^{-1})_{iA} \, (\p_A r_A)_{,B} 
= 
\bar x_{i,A} (\ln \p_A r_A)_{,B} 
\quad 
\color{magenta}
- ({\bf A}^{-1})_{\nu \rho } \, (\p_\rho L_\rho)_{,\sigma} 
= 
\bar y_{\nu,\rho} (\ln \p_\rho L_\rho)_{,\sigma} .
\end{equation}
\color{black}
Noting that 
\begin{equation}
({\bf A}^{-1})_{iA,B} 
= 
- 
\sum_{C,D}
({\bf A}^{-1})_{i C} 
\p_B {\bf A}_{CD}
({\bf A}^{-1})_{D A} ,
\quad 
\color{magenta}
({\bf A}^{-1})_{\nu \rho,\sigma} 
= 
- 
\sum_{\kappa ,\tau} 
({\bf A}^{-1})_{\nu \kappa} 
\p_\sigma {\bf A}_{\kappa \tau }
({\bf A}^{-1})_{\tau \rho} , 
\end{equation}
\color{black}
and 
\begin{equation}
\p_B {\bf A}_{Cj}    
= 
\delta_{BC} (\p_B r_C)_{,j} 
+ \sum_k r_{C,jk} \, \bar x_{k,B} , 
\quad 
\color{magenta}
\p_\sigma {\bf A}_{C j} 
= 
\delta_{\sigma C} (\p_\sigma r_C)_{,j} 
+ \sum_k r_{C,jk} \, \bar y_{k,\sigma} , 
\end{equation}
\color{black}
the first term is written as 
\begin{equation}
\begin{split}
- ({\bf A}^{-1})_{iA,B} \, \p_A r_A  
&= 
\sum_{C,D}
({\bf A}^{-1})_{i C} 
\p_B {\bf A}_{CD}
({\bf A}^{-1})_{D A} 
 \, \p_A r_A 
\\
&= 
\sum_{C,j}
({\bf A}^{-1})_{i C} 
\left( 
\delta_{BC} (\p_B r_C)_{,j} + \sum_k r_{C,jk} \bar x_{k,B} 
\right)
({\bf A}^{-1})_{j A} 
 \, \p_A r_A 
\\
&= 
\sum_j 
({\bf A}^{-1})_{i B} 
 (\p_B r_B)_{,j} 
({\bf A}^{-1})_{j A} 
 \, \p_A r_A 
+ 
\sum_{C,j,k}
({\bf A}^{-1})_{i C}\, r_{C,jk} \, \bar x_{k,B} 
({\bf A}^{-1})_{j A}  \, \p_A r_A 
 \\
 &= 
\sum_j 
\bar x_{i,B} 
(\ln \p_B r_B)_{,j}
\bar x_{j,A}
+ 
\sum_{C,j,k}
\frac{r_{C,jk} }{\p_C r_C}
\,
\bar x_{i,C} \, 
\bar x_{j,A} \,
\bar x_{k,B} 
\\
&= 
 \bar x_{i,B} (\ln \p_B r_B)_{,A}
+ 
\sum_{C,j,k}
\frac{r_{C,jk} }{\p_C r_C}
\,
\bar x_{i,C} \, 
\bar x_{j,A} \,
\bar x_{k,B} .
\end{split}    
\end{equation}
\color{magenta}
\begin{equation}
\begin{split}
- ({\bf A}^{-1})_{\nu \rho, \sigma} \, \p_\rho L_\rho   
&= 
\sum_{\kappa, \tau }
({\bf A}^{-1})_{\nu \kappa} 
\p_\sigma {\bf A}_{\kappa \tau}
({\bf A}^{-1})_{\tau \rho} 
 \, \p_\rho L_\rho 
\\
&= 
\sum_{C,j}
({\bf A}^{-1})_{\nu C} 
\left( 
\delta_{\sigma C} (\p_\sigma r_C)_{,j} + \sum_k r_{C,jk} \, y_{k,\sigma} 
\right)
({\bf A}^{-1})_{j \rho} 
 \, \p_\rho L_\rho 
\\
&= 
\sum_j 
({\bf A}^{-1})_{\nu \sigma} 
 (\p_\sigma r_\sigma)_{,j} 
({\bf A}^{-1})_{j \rho} 
 \, \p_\rho L_\rho 
+ 
\sum_{C,j,k}
({\bf A}^{-1})_{\nu C}\, r_{C,jk} \, y_{k,\sigma} 
({\bf A}^{-1})_{j \rho}  \, \p_\rho L_\rho 
 \\
 &= 
\sum_j 
\bar y_{\nu,\sigma}  
(\ln \p_\sigma r_\sigma)_{,j}
\, \bar y_{j,\rho} 
+ 
\sum_{C,j,k}
\frac{r_{C,jk} }{\p_C r_C}
\,
\bar y_{\nu,C} \, 
\bar y_{j,\rho} \,
\bar y_{k,\sigma} 
\\
&= 
\bar y_{\nu,\sigma} (\ln \p_\sigma L_\sigma)_{,\rho}
+ 
\sum_{C,j,k}
\frac{r_{C,jk} }{\p_C r_C}
\,
\bar y_{\nu,C} \, 
\bar y_{j,\rho} \,
\bar y_{k,\sigma} . 
\end{split}    
\end{equation}
\color{black}
Thus, we obtain the following formula, 
\begin{equation}
\bar x_{i,AB}
=     
\bar x_{i,A} F_{A,B}
+
\bar x_{i,B} F_{B,A} 
+ 
\sum_{C} 
\bar x_{i,C} 
\sum_{j,k \, \vdash C}
\frac{r_{C,jk} }{r_{C,C}}
\,
\bar x_{j,A} \,
\bar x_{k,B} , 
\label{eq:xiab-formula}
\end{equation}
where $F_A \coloneqq \ln \p_A r_A$. 
\color{magenta}
\begin{equation}
\bar y_{\nu,\rho \sigma}
=     
\bar y_{\nu,\rho} F_{\rho,\sigma} 
+
\bar y_{\nu,\sigma} F_{\sigma, \rho}  
+ 
\sum_{C} 
\bar y_{\nu,C} 
\sum_{j,k \, \vdash C}
\frac{r_{C,jk} }{r_{C,C}}
\,
\bar y_{j,\rho} \,
\bar y_{k,\sigma} , 
\label{eq:y-formula}
\end{equation}
where $F_\sigma \coloneqq \ln \p_\sigma L_\sigma$. 
\color{black}
Note that this expression is manifestly symmetric 
under the exchange $A \leftrightarrow B$ 
\color{magenta} $\rho \leftrightarrow \sigma$\color{black}
. 

We can also derive a similar formula for reaction fluxes. 
Noting that $\bar y_{\alpha} = \mu_\alpha$ and $\bar r_C$ can be written as 
\begin{equation}
\bar r_C = \sum_\alpha c^{(\alpha)}_C 
\bar y_{\alpha} , 
\end{equation}
we find that second-order responses
of reaction fluxes are expressed as
\begin{equation}
\bar r_{C,AB}
=     
\bar r_{C,A} F_{A,B}
+
\bar r_{C,B} F_{B,A} 
+ 
\sum_{D} 
\bar r_{C,D} 
\sum_{j,k \, \vdash D}
\frac{r_{D,jk} }{r_{D,D}}
\,
\bar x_{j,A} \,
\bar x_{k,B} . 
\label{eq:rcab-formula}
\end{equation}
Note that $e_C$ can be taken to be the same as $e_A$ or $e_B$, while we assume $e_A \neq e_B$.
\color{magenta}
Including the conserved-quantity perturbations, we have
\begin{equation}
\bar r_{C,\rho\sigma}
=     
\bar r_{C,\rho} F_{\rho,\sigma}
+
\bar r_{C,\sigma} F_{\sigma,\rho} 
+ 
\sum_{D} 
\bar r_{C,D} 
\sum_{j,k \, \vdash D}
\frac{r_{D,jk} }{r_{D,D}}
\,
\bar y_{j,\rho} \,
\bar y_{k,\sigma} . 
\label{eq:rcab-formula-cons}
\end{equation}
\color{black}

The formulas \eqref{eq:xiab-formula} and \eqref{eq:rcab-formula} (or \eqref{eq:y-formula} and \eqref{eq:rcab-formula-cons})
can be seen as a decomposition of higher-order influence 
in terms of lower-order influence.
We can diagrammatically represent Eq.~\eqref{eq:xiab-formula} as 
\begin{center}
\begin{tikzpicture}[thick]

\path (0,0.7) node(eA) {$e_A$} 
      (0,-0.7) node(eB) {$e_B$}
      (1,0) node(mdl) {}
      (2.6,0) node(eC) {$v_i \,\, =\,\,\sum_{j \vdash A}$}
%      (3.3,0) node(equal) {$} 
      (4,0) node(eB1) {$e_B$} 
      (5.6,0) node (ej1) {$v_j \vdash e_A$}  
      (8.5,0) node(vi1) {$v_i \,\,  +\,\, \sum_{k \vdash B} e_A  $} 
      (11.2,0) node(vk2) {$v_k \vdash e_B$} 
      (13.7,0) node(vi2) {$v_i \,\,\,  +\,\,\, \sum_{C,j,k}$} 
      (14.9,0.6)  node(eA3) {$e_A$}      
      (14.9,-0.6) node(eB3) {$e_B$} 
      (16,0.6)  node(vj3) {$v_j$}      
      (16,-0.6) node(vk3) {$v_k$}  
      (16.8,0) node(eC3) {$e_C$} 
      (18,0) node(vi3) {$v_i$}
      (16.4,0.35) node[rotate=-45](vd1) {$\vdash$} 
      (16.4,-0.35) node[rotate=45](vd2) {$\vdash$} 
      ; 

\draw[-, style={decorate, decoration={snake,segment length=2mm, amplitude=0.5mm}}] (eA) -- (1,0);
\draw[-, style={decorate, decoration={snake,segment length=2mm, amplitude=0.5mm}}] (eB) -- (1,0);
\draw[->, style={decorate, decoration={snake,segment length=2mm, amplitude=0.5mm}}] (1,0) -- (eC);

\draw[->, style={decorate, decoration={snake,segment length=2mm, amplitude=0.5mm}}] 
(eB1) -- (ej1); 
\draw[->, style={decorate, decoration={snake,segment length=2mm, amplitude=0.5mm}}] (ej1) -- (vi1); 

\draw[->, style={decorate, decoration={snake,segment length=2mm, amplitude=0.5mm}}] (vi1) -- (vk2); 
\draw[->, style={decorate, decoration={snake,segment length=2mm, amplitude=0.5mm}}] (vk2) -- (vi2);  

\draw[->, style={decorate, decoration={snake,segment length=2mm, amplitude=0.5mm}}] (eA3) -- (vj3);  
\draw[->, style={decorate, decoration={snake,segment length=2mm, amplitude=0.5mm}}] (eB3) -- (vk3);  

\draw[->, style={decorate, decoration={snake,segment length=2mm, amplitude=0.5mm}}] (eC3) -- (vi3);  

\end{tikzpicture}
\end{center}
The LHS represents the influence of both $e_A$ and $e_B$ on $v_i$, 
and there are different contributions. 
The first term on the RHS represents a contribution 
that $e_B$ first influences a reactant of $e_A$ and 
then $e_A$ influences $v_i$, 
and 
in the second term the roles of $e_A$ and $e_B$ are exchanged. 
The third term on the RHS
takes into the contribution where 
$e_A$ and $e_B$ affects 
reactants of some reaction $e_C$, 
and then $e_C$ influences $v_i$.

Let us check the second-order formula with a simple example network, 
\begin{align}
e_1 &: \emptyset \to v_1, \\ 
e_2 &: v_1 \to \emptyset, \\ 
e_3 &: v_1 + v_2 \to v_1, \\ 
e_4 &: \emptyset \to v_2 . 
\end{align}
With mass-action kinetics, 
the rate functions are written as 
$r_1 = k_1$, $r_2 = k_2 x_1$, $r_3 = k_3 x_1 x_2$, $r_4 = k_4$. 
The steady-state concentrations are given by 
\begin{equation}
\bar x_1 = \frac{k_1}{k_2}, 
\quad 
\bar x_2 = \frac{k_2 k_4}{k_1 k_3}. 
\end{equation}
For example, the second-order derivative 
${\bar x}_{2,21} = \p_{k_2} \p_{k_1} \bar x_2$ 
can be explicitly computed as 
\begin{equation}
\bar x_{2,21}  = - \frac{k_4}{(k_1)^2 k_3}. 
\label{eq:second-order-lhs-ex}
\end{equation}
Let us evaluate the RHS of Eq.~\eqref{eq:xiab-formula}. 
The first and second terms are 
\begin{equation}
\bar x_{2,2} F_{2,1}
= 
\frac{k_4}{k_1 k_3}
\cdot 
\frac{(\partial_{k_2} r_2)_{,1}}{\partial_{k_2} r_2}
=
\frac{k_4}{(k_1)^2 k_3}
, 
\quad 
\bar x_{2,1} F_{1,2}
= 
\bar x_{2,1} 
\frac{(\partial_{k_1} r_1)_{,2}}
{\partial_{k_1} r_1}
= 0.     
\label{eq:second-ordcer-first-second-ex}
\end{equation}
The third term is evaluated as 
\begin{equation}
\sum_{C}
\frac{\bar x_{2,C} }
{r_{C,C}}
\sum_{j,k \,\vdash C}
r_{C,jk}
\,
\bar x_{j,2} 
\,
\bar x_{k,1} 
=
\frac{\bar x_{2,3} }{r_{3,3}}
\left( 
r_{3,12}
\,
\bar x_{1,2} 
\,
\bar x_{2,1} 
+
r_{3,21}
\,
\bar x_{2,2} 
\,
\bar x_{1,1} 
\right)
= 
-2 \frac{k_4}{(k_1)^2 k_3}. 
\label{eq:second-ordcer-third-ex}
\end{equation}
The summation of 
Eqs.~\eqref{eq:second-ordcer-first-second-ex} 
and \eqref{eq:second-ordcer-third-ex} 
indeed reproduces Eq.~\eqref{eq:second-order-lhs-ex}.

\section{Extension of the maxRPA characterization result} \label{proof:maxrpageneralizattion}

Here, we prove Theorem~\ref{thm:maxRPA_characterization}, which is an extension of the characterization result for maxRPA networks~\cite{gupta2022universal}. In this extension, we no longer require that $\coker S = \bm 0$ and hence the full network $\Gamma =(V, E)$ may have conserved quantities. The last two reactions, denoted by  $e_{\bar 1}$ and $e_{\bar 2}$ no longer need to have mass-action kinetics, and we will denote their rate functions as $r_{\bar 1}(\bm x; k_{\bar 1})$ and $r_{\bar 2}(\bm x; k_{\bar 2})$ respectively, where  $k_{\bar 1}$ and $k_{\bar 2}$ are parameters.

\begin{proof}
We start by proving the ``if" part of the theorem, namely that if these two conditions are satisfied then the network exhibits maxRPA for the output species $X$. This is straightforward because by picking any $(\bm q , \kappa)$ that satisfies Eq.~\eqref{eq:qs}, we can construct a linear integrator $z \coloneqq \bm q \cdot \bm x$ whose dynamics is given by
\begin{equation}
\dot z = \kappa \, r_{\bar 1}(\bm x;k_{\bar 1} ) - r_{\bar 2 }(\bm x;k_{\bar 2}) = \ r_{\bar 1 }(\bm x;k_{\bar 1}) \ \left( \kappa  -  \Phi(x_M, k_{\bar 1}, k_{\bar 2})\right)  
\label{eq:integrator_dyn_maxrpa}
\end{equation}
where the last relation follows from Eq.~\eqref{max_rpa_ratio_cond}. Setting the RHS to zero, and using the positivity of the fixed point we obtain that at steady-state the concentration of the output species $X$ must satisfy Eq.~\eqref{max_rpa_gen_set_point} and hence it is independent of all other parameters except $k_{\bar 1}$ and $ k_{\bar 2}$. Note that if there exists another pair $(\bm q' , \kappa')$ satisfying Eq.~\eqref{eq:qs}, then we must necessarily have that $\kappa = \kappa'$, or otherwise the condition of uniqueness of the steady-state will be violated. As both $(\bm q' , \kappa)$ and $(\bm q , \kappa)$ satisfy Eq.~\eqref{eq:qs}, subtracting them yields that the vector $(\bm q'- \bm q)$ will be in $\coker S$. This proves that the vector $\bm q$ is unique up to addition of vectors in $\coker S$, or equivalently, we can say that $\bm q$ is unique in the quotient space $\mathbb R^{M} / \coker S$.

We now prove the ``only if" part of the theorem. We start from the steady-state equations Eqs.~\eqref{eq:sr} and \eqref{eq:l-dx}, where $\{\bm d^{(\bar\alpha)}\}_{\bar\alpha=1,\ldots ,|\bar\alpha|}$ is a basis of $\coker S$ and for each $\bar\alpha$, $\ell^{\bar\alpha}$ is the value of the conserved quantity corresponding to $\bm d^{(\bar\alpha)}$. 
Differentiating Eq.~\eqref{eq:sr} with respect to parameter $k_B$ and quantity $\ell^{\bar \beta}$ we arrive at:
\begin{align}
\label{max_rpa_gen1}
\sum_A  S_{iA}  \frac{\p r_A}{\p x_i}  \frac{\p \bar x_i}{\p k_B} = - \sum_A  S_{iA}   \frac{\p r_A}{\p k_B} \qquad \textnormal{and} \qquad \sum_A  S_{iA}  \frac{\p r_A}{\p x_i}  \frac{\p \bar x_i}{\p \ell^{\bar\beta}} &= 0.
\end{align}
As we have seen before, differentiating Eq.~\eqref{eq:l-dx} with respect to $k_B$ and $\ell^{\bar \beta}$ gives us relations, 
\begin{align}
\label{max_rpa_gen2}
\sum_i d_i^{(\bar\alpha)} \frac{\p x_i}{\p k_B} &= 0\qquad \textnormal{and} \qquad
\sum_i  d_i^{(\bar\alpha)} \frac{\p x_i}{\p \ell^{\bar\beta}}=  \delta^{\bar\alpha \bar\beta}. 
\end{align}
We shall write these relations in matrix-form and for this we need to define certain quantities. Recall that ${\bm r}$ is the $N$-dimensional vector of reaction rates and $S$ is the $M \times N$ stoichiometric matrix. 
Let $\nabla_{\bm x} {\bm r} = [\p_i r_A]$ be the $N \times M$ Jacobian matrix of ${\bm r}$ with respect to the explicit dependence on the state $\bm x$, evaluated at the steady-state $\bm x = \bar{\bm x} (\bm k, \bm \ell)$. 
The $M \times M$ Jacobian for the dynamics is given by
\begin{align}
J = S \nabla_{\bm x} {\bm r}. 
\end{align}
Similarly, let $\nabla_{\bm k} {\bm r} = [\p_B {r}_A]$ be the $N \times N$ 
Jacobian matrix of ${\bm r}$ with respect to the explicit dependence on the parameters in $\bm k$,
and let $\nabla_{\bm k, \bm \ell} \bar{\bm x} (\bm k, \bm \ell)$ be the $M \times (N +|\bar\alpha|)$ Jacobian matrix of the steady-state $\bar{\bm x} (\bm k, \bm \ell)$ with respect to both $\bm k$ and $\bm \ell$. Let $D$ be the $|\bar\alpha| \times M$ matrix whose rows are the basis vectors $\{\bm d^{(\bar\alpha)}\}_{\bar\alpha=1,\ldots ,|\bar\alpha|}$ for $\coker S$. Relations \eqref{max_rpa_gen1} and \eqref{max_rpa_gen2} can be succinctly expressed as
\begin{align}
\label{main_maxrpa_reln}
\left[ 
\begin{array}{c}
J \\
D
\end{array}
\right] \nabla_{\bm k, \bm \ell} \bar{\bm x} (\bm k, \bm \ell) = -\left[\begin{array}{cc} 
 S \nabla_{\bm k} {\bm r} &  {\bm 0}_{ M \times |\bar\alpha|} \\
{\bm 0}_{ |\bar\alpha| \times M} & - {\bm 1}_{|\bar\alpha| \times |\bar\alpha| }  
\end{array} \right],
\end{align}
where $\bm 0$ and $\bm 1$ denote zero and identity matrices of the dimensions indicated in the subscript. Note that the matrix $(M + |\bar\alpha| )\times M$ matrix
\begin{align}
B \coloneqq \left[ 
\begin{array}{c}
J \\
D
\end{array}
\right]
\end{align}
has independent columns, because otherwise $\ker B$ would be nontrivial, and this would imply that the steady state is not asymptotically stable. 
Since the columns of $B$ are independent, its Moore-Penrose inverse is explicitly given by
\begin{equation} 
B^{+} = (B^\top \ B)^{-1} B^\top,
\end{equation}
and hence Eq.~\eqref{main_maxrpa_reln} implies that
\begin{align}
\label{main_ss_sens}
\nabla_{\bm k, \bm \ell} \bar{\bm x} (\bm k, \bm \ell) = - (B^\top \ B)^{-1} B^\top \left[\begin{array}{cc} 
 S \nabla_{\bm k} {\bm r} &  {\bm 0}_{ M \times |\bar\alpha|} \\
{\bm 0}_{ |\bar\alpha| \times M} & - {\bm 1}_{|\bar\alpha| \times |\bar\alpha| }  
\end{array} \right].
\end{align}

Define $\tilde{\bm q}$ to be the $(M + |\bar\alpha| )$-dimensional vector given by
\begin{equation}
\tilde{\bm q}^\top = -  {\bm u}^\top_M (B^\top \ B)^{-1} B^\top ,
\end{equation}
where $\bm u_M$ is a $M$-dimensional vector whose all components are zero, except the component at the $M$-th index (i.e. location of the output species $X$) which is 1. We can decompose $\tilde{\bm q}^\top $ as $\tilde{\bm q}^\top = [\tilde{\bm q}_1^\top \  \tilde{\bm q}_2^\top]$, where $\tilde{\bm q}_1^\top$ and $\tilde{\bm q}_2^\top $ are vectors with dimensions $1 \times M$ and $1 \times |\bar\alpha| $ respectively. Multiplying Eq.~\eqref{main_ss_sens} by ${\bm u}^\top_M$ on the left we see that the $1 \times (M + |\bar\alpha|)$ vector of sensitivities of the steady-state $\bar x_M (\bm k, \bm \ell)$ of the output with respect to $\bm k, \bm \ell$ is given by
\begin{align}
\label{maxrpa_output_sens}
\nabla_{\bm k, \bm \ell} \bar x_M (\bm k, \bm \ell) \coloneqq \bm u^\top_M \nabla_{\bm k, \bm \ell}\bar{\bm x} (\bm k, \bm \ell) 
= 
\begin{bmatrix}
\tilde{\bm q}_1^\top & \tilde{\bm q}_2^\top 
\end{bmatrix}
\left[\begin{array}{cc} 
 S \nabla_{\bm k} {\bm r} &  {\bm 0}_{ M \times |\bar\alpha|} \\
{\bm 0}_{ |\bar\alpha| \times M} & -{\bm 1}_{|\bar\alpha| \times |\bar\alpha| }
\end{array} \right] 
= 
\begin{bmatrix}
\tilde{\bm q}_1^\top S \nabla_{\bm k} {\bm r} &  - \tilde{\bm q}_2^\top
\end{bmatrix}.
\end{align}
For maxRPA we require that
\begin{align}
\label{max_rpa_output_func}
\bar x_M (\bm k, \bm \ell) = \phi_{\rm out} \left( k_{\bar 1}, k_{\bar 2} \right),
\end{align}
and hence the steady-state output has zero-sensitivity with respect to all components of $\bm \ell$ and all components of $\bm k$ except $k_{\bar 1}$ and $k_{\bar 2}$. 
Therefore Eq.~\eqref{maxrpa_output_sens} implies that $\tilde{\bm q}^\top_2 = - \nabla_{\bm \ell} \bar x_M (\bm k, \bm \ell) = {\bf 0}$ and 
\begin{align}
\label{maxrpa_output_sens2}
\tilde{\bm q}^\top_1 S \nabla_{\bm k} {\bm r} =
\begin{bmatrix}
{\bf 0} & \,\,
\partial_{k_{\bar 1}}\phi_{\rm out} \left( k_{\bar 1}, k_{\bar 2} \right) & \,\, \partial_{k_{\bar 2}}\phi_{\rm out} \left( k_{\bar 1}, k_{\bar 2} \right) 
\end{bmatrix}.
\end{align}
Only the last two components of the vector on the RHS are non-zero, while the rest are zeros. Since for each reaction $e_A$, the reaction rate $r_A (\bar{\bm x} (\bm k, \bm \ell); k_A)$ has an explicit dependence on an independent $k_A$, 
the $N \times N$ matrix $\nabla_{\bm k} {\bm r} $ is a diagonal matrix.
Therefore, Eq.~\eqref{maxrpa_output_sens2} informs us that for any reaction $A$, which is not one of the last two reactions (i.e.\ $A \neq {\bar 1}, {\bar 2}$) we must have
\begin{align}
\label{max_rpa_charge_preservation}
    \sum_i \tilde{\bm q}_{1 i} S_{i A} = 0,
\end{align}
which says that vector $\tilde{\bm q}_{1}$ is orthogonal to the stoichiometric vector for reaction $e_A$. Now for the last two reactions, Eq.~\eqref{maxrpa_output_sens2} implies that
\begin{align}
    \sum_i \tilde{\bm q}_{1i} S_{i {\bar 1}} = \frac{\partial_{k_{\bar 1}}\phi_{\rm out} \left( k_{\bar 1}, k_{\bar 2} \right)}{\partial_{k_{\bar 1}} r_{\bar 1} (\bar{\bm x} (\bm k, \bm \ell); k_{\bar 1} )}      
    \quad \textnormal{and} \quad     
    \sum_i \tilde{\bm q}_{1i} S_{i {\bar 2}} = \frac{\partial_{k_{\bar 2}}\phi_{\rm out} \left( k_{\bar 1}, k_{\bar 2} \right)}{\partial_{k_{\bar 2}} r_{\bar 2} (\bar{\bm x} (\bm k, \bm \ell); k_{\bar 2} )},
\end{align}
where for $i=1,2$, $\p_{k_{\bar i}} r_{\bar i} (\bar{\bm x} (\bm k, \bm \ell); k_{\bar i} )$ denotes the partial derivative with respect to only the explicit dependence on parameter $k_{\bar i}$ and not the implicit dependence through the steady-state $\bar{\bm x} (\bm k, \bm \ell)$.
Combining these relations with Eq.~\eqref{max_rpa_charge_preservation}, we see that the vector $\tilde{\bm q}_1$ satisfies
\begin{align}
\label{vec_q_first_rel}
\tilde{\bm q}_1^\top S = 
\begin{bmatrix}
{\bm 0},
&
\frac{\partial_{k_{\bar 1}}\phi_{\rm out} \left( k_{\bar 1}, k_{\bar 2} \right)}{\partial_{k_{\bar 1}} r_{\bar 1} (\bar{\bm x} (\bm k, \bm \ell); k_{\bar 1} )}, 
&
\frac{\partial_{k_{\bar 2}}\phi_{\rm out} \left( k_{\bar 1}, k_{\bar 2} \right)}{\partial_{k_{\bar 2}} r_{\bar 2} (\bar{\bm x} (\bm k, \bm \ell); k_{\bar 2} )}  
\end{bmatrix}
.
\end{align}
Now let $\bm c = 
\begin{bmatrix}
c_1 & \cdots & c_N
\end{bmatrix}^\top$ be a vector in $\ker S$ whose last two components are not both zero. Choosing such a $\bm c$ is possible because the network has positive steady-states. Since $S \bm c= {\bm 0}$, we must have $\tilde{\bm q}_1^\top S \bm c = 0$ which implies that
\begin{align}
\label{max_rpa_reln_for_c}
c_{N-1} \frac{\partial_{k_{\bar 1}}\phi_{\rm out} \left( k_{\bar 1}, k_{\bar 2} \right)}{\partial_{k_{\bar 1}} r_{\bar 1} (\bar{\bm x} (\bm k, \bm \ell); k_{\bar 1} )} 
=  
- c_{N} \frac{\partial_{k_{\bar 2}}\phi_{\rm out} \left( k_{\bar 1}, k_{\bar 2} \right)}{\partial_{k_{\bar 2}} r_{\bar 2} (\bar{\bm x} (\bm k, \bm \ell); k_{\bar 2} )}.
\end{align}
Note that for any $\bm c$ fixed, this relation must hold for any $(\bm k, \bm \ell)$. 
We argued before that the matrix $\nabla_{\bm k} {\bm r} $ has full row-rank. This implies that $\coker (S \nabla_{\bm k} {\bm r}) = \coker S$ and hence the row-rank of the product matrix $S \nabla_{\bm k} {\bm r}$ is just $M - |\bar \alpha|$, where $|\bar \alpha| = |\coker S|$. This shows that the matrix
\begin{equation} 
\left[\begin{array}{cc} 
 S \nabla_{\bm k} {\bm r} &  {\bm 0}_{ M \times |\bar\alpha|} \\
{\bm 0}_{ |\bar\alpha| \times M} & - {\bm 1}_{|\bar\alpha| \times |\bar\alpha| }  
\end{array} \right]
\end{equation}
has rank $M$. As matrix $B^\top$ has full row-rank (because columns of $B$ are independent) and matrix $(B^\top B)^{-1}$ has full row-rank (because it is an invertible matrix), from Eq.~\eqref{main_ss_sens} we can conclude that the Jacobian matrix $\nabla_{\bm k, \bm \ell} \bar{\bm x} (\bm k, \bm \ell)$, capturing the sensitivities of the steady-state $\bar{\bm x} (\bm k, \bm \ell)$ with respect to parameters $(\bm k, \bm \ell)$ has rank $M$ which is equal to the number of species. Hence, applying the constant rank theorem (see Theorem 11.1 in \cite{tu2008rank}) we can conclude that by perturbing the parameters in $(\bm k, \bm \ell)$ we can independently perturb the components of the steady-state vector $\bar{\bm x} (\bm k, \bm \ell)$. This will be useful later in the proof. 

Let $\Phi(\bm x,k_{\bar 1}, k_{\bar 2})$ be the ratio of the rate functions for the last two reactions, i.e.\
\begin{align}
 \Phi(\bm x , k_{\bar 1}, k_{\bar 2}) = \frac{r_{\bar 2}(\bm x; k_{\bar 2})}{r_{\bar 1}(\bm x; k_{\bar 1})}. 
 \end{align}
Then by differentiating with respect to both $k_{\bar 1}$ and $k_{\bar 2}$ we obtain
\begin{align}
\label{partial_derivative_last_two_rates}
\partial_{k_{\bar 1}} r_{\bar 1} (\bm x; k_{\bar 1} ) = -r_{\bar 1} (\bm x ; k_{\bar 1} ) \frac{\partial_{k_{\bar 1}}\Phi(\bm x,k_{\bar 1}, k_{\bar 2})}{\Phi(\bm x,k_{\bar 1}, k_{\bar 2})} \qquad \textnormal{and} \qquad \partial_{k_{\bar 2}} r_{\bar 2} (\bm x ; k_{\bar 2} ) = r_{\bar 1} (\bm x ; k_{\bar 1} ) \partial_{k_{\bar 2}}\Phi(\bm x,k_{\bar 1}, k_{\bar 2}).
\end{align}
Setting $\bm x$ to be the steady-state vector $\bm x = \bar{\bm x}(\bm k, \bm \ell)$, and substituting in Eq.~\eqref{max_rpa_reln_for_c} we obtain 
\begin{align}
\label{maxrpa_main_concl}
\Phi( \bar{\bm x}(\bm k, \bm \ell),k_{\bar 1}, k_{\bar 2}) \frac{\p_{k_{\bar 1}}\phi_{\rm out} \left( k_{\bar 1}, k_{\bar 2} \right)}{
\p_{k_{\bar 1}}\Phi(\bar{\bm x}(\bm k, \bm \ell),k_{\bar 1}, k_{\bar 2})
 } 
 =  
 \kappa \frac{\p_{k_{\bar 2}}\phi_{\rm out} \left( k_{\bar 1}, k_{\bar 2} \right)}{\p_{k_{\bar 2}}\Phi( \bar{\bm x} (\bm k, \bm \ell),k_{\bar 1}, k_{\bar 2})},
\end{align}
where 
$\kappa \coloneqq \frac{ c_{N} }{ c_{N-1}}$. Let us now define the vector $\bm q$ as
\begin{align}
\bm q = - {\tilde{\bm q}_1} \frac{r_{\bar 1} (\bar{\bm x}(\bm k, \bm \ell) ; k_{\bar 1} ) \p_{k_{\bar 2}}\Phi( \bar{\bm x} (\bm k, \bm \ell),k_{\bar 1}, k_{\bar 2}) }{\p_{k_{\bar 2}}\phi_{\rm out} \left( k_{\bar 1}, k_{\bar 2} \right)},
\end{align}
then Eqs.~\eqref{vec_q_first_rel}, \eqref{partial_derivative_last_two_rates} and \eqref{maxrpa_main_concl} together imply that pair $(\bm q, \kappa)$ satisfies Eq.~\eqref{eq:qs}. This proves the first condition of the theorem.

To prove the second condition, observe that if we pick a pair $(\bm q, \kappa)$ satisfying Eq.~\eqref{eq:qs}, then for the linear integrator $z \coloneqq \bm q \cdot \bm x$, the dynamics is given by
\begin{equation}
\dot z = \kappa \, r_{\bar 1}(\bm x;k_{\bar 1} ) - r_{\bar 2 }(\bm x;k_{\bar 2}) = \ r_{\bar 1 }(\bm x;k_{\bar 1}) \ \left( \kappa  -  \Phi(\bm x, k_{\bar 1}, k_{\bar 2})\right)  
\end{equation}
which proves that at the steady-state
\begin{align}
\label{max_rpa_gen_set_point_2_2}
 \Phi(\bar{\bm x}(\bm k, \bm \ell) , k_{\bar 1}, k_{\bar 2}) = \kappa.
 \end{align}
The only thing left to show is that the function $ \Phi(\bm x , k_{\bar 1}, k_{\bar 2})$ does not depend on the concentrations $x_1,\dots,x_{M-1}$ of the first $(M-1)$ species, i.e.\ this function can only depend on $x_M$. We shall prove this by showing that the gradient of this function with respect to the state vector $\bm x$ can only have one non-zero component, which is at the $M$-th location (corresponding to $\partial_{x_M} \Phi(\bm x , k_{\bar 1}, k_{\bar 2})$). It suffices to prove this only at the steady-state $\bm x = \bar{\bm x} (\bm k, \bm \ell )$ because as we argued before, the components of $\bar{\bm x} (\bm k, \bm \ell)$ can be independently perturbed by perturbing the parameters in $(\bm k, \bm \ell)$.

 Recall that $\bm u_M$ is the $M$-dimensional vector whose all components are zero, except the component at the $M$-th index (i.e. location of the output species $X$) which is $1$. In order to prove our claim we just need to show that $\bm u_M$ \emph{cannot be independent} of the gradient vector
 \begin{align}
\phi(\bm k, \bm \ell) \coloneqq \nabla_{\bm x}\Phi(\bar{\bm x}(\bm k, \bm \ell) , k_{\bar 1}, k_{\bar 2}).
 \end{align}
We set $\bm q_\Phi$ to be the $(M + |\bar\alpha| )$-dimensional vector
\begin{equation}
{\bm q_\Phi}^\top = -  \phi(\bm k, \bm \ell)^\top (B^\top \ B)^{-1} B^\top.
\end{equation}
and decompose it as ${\bm q_\Phi}^\top = [\bm q_{\Phi,1}^\top \  \bm q_{\Phi,2}^\top]$, where $\bm q_{\Phi,1}^\top$ and $\bm q_{\Phi,2}^\top$ are vectors with dimensions $1 \times M$ and $1 \times |\bar\alpha| $ respectively. Similar to Eq.~\eqref{maxrpa_output_sens} we can derive
\begin{align}
\phi(\bm k, \bm \ell)^\top \nabla_{\bm k, \bm \ell} \bar{\bm x} (\bm k, \bm \ell)  
= 
\begin{bmatrix}
\bm q_{\Phi,1}^\top S \nabla_{\bm k} {\bm r}, &  - \bm q_{\Phi,2}^\top
\end{bmatrix}
.
\end{align}
Eq.~\eqref{max_rpa_gen_set_point_2_2} implies that $\bm q_{\Phi,2}^\top = - \nabla_{\bm \ell}\Phi(\bm x(\bm k, \bm \ell) , k_{\bar 1}, k_{\bar 2}) = \bm 0$ and  
\begin{align}
\label{max_rpa_gen_set_point_3}
\bm q_{\Phi,1}^\top S \nabla_{\bm k} {\bm r}= 
\begin{bmatrix}
\bm 0, 
& 
- \p_{k_{\bar 1}}\Phi(\bar{\bm x}(\bm k, \bm \ell),k_{\bar 1}, k_{\bar 2}), k_{\bar 1}, k_{\bar 2}), 
&
- \partial_{k_{\bar 2}}\Phi(\bar{\bm x}(\bm k, \bm \ell),k_{\bar 1}, k_{\bar 2})
\end{bmatrix}
.
 \end{align}
Due to Eq.~\eqref{maxrpa_main_concl}, the vector on the right is collinear with the vector on the right of Eq.~\eqref{maxrpa_output_sens2}. Hence there exists constants $\zeta_1$ and $\zeta_2$ such that, the vector $\bm z_\Phi = \zeta_1 \  \tilde{\bm q}_1 + \zeta_2 \  \bm q_{\Phi,1}$
satisfies
\begin{align}
\label{max_rpa_gen_set_point_4}
\bm z_\Phi^\top  S \nabla_{\bm k} {\bm r} = \bm 0, 
 \end{align}
which shows that the vector vanishes,
\begin{equation}
\bm z_\Phi^\top = \zeta_1 \  \tilde{\bm q}^\top + \zeta_2 \  \bm q_{\Phi}^\top = -(\zeta_1 \ \bm u_M^\top + \zeta_2 \ \phi(\bm k, \bm \ell)^\top)(B^\top \ B)^{-1} B^\top = \bm 0. 
\end{equation}
However, since the rows of $B^T$ are independent we must have that the vector $\zeta_1 \ \bm u_M^\top + \zeta_2 \  \phi(\bm k, \bm \ell)^\top = \bm 0$, which happen if and only if $\bm u_M$ and $\phi(\bm k, \bm \ell)$ are \emph{dependent} vectors. 
Hence only the last component of $\phi(\bm k, \bm \ell)$ can be nonzero and this proves that $\Phi(\bm x, k_{\bar 1}, k_{\bar 2})$ can only be a function of the output species concentration $x_M$. This completes the proof of this theorem.
\end{proof}

Let us illustrate this result with an example. Consider the minimal reaction network that displays the phenomenon of \emph{Absolute Concentration Robustness (ACR)}, which refers to robustness with respect to a conserved quantity \cite{cappelletti2020hidden}. This network has just two species and reactions:
\begin{align}
    e_1 &: v_1 \to v_2, \\
    e_2 &: v_1 + v_2 \to 2 v_1. 
\end{align}
The rate equations in the mass action kinetics are given by 
\begin{align}
\dot x_1 &=   k_2 x_1 x_2 - k_1 x_1,\\
\dot x_2 &=  -k_2 x_1 x_2 + k_1 x_1.
\end{align}
It is immediate that $\dot x_1 + \dot x_2 = 0$ and so the total concentration $\ell = x_1(t) + x_2(t)$ is constant over time. Fixing the value of $\ell$, the steady-state concentrations are given by
\begin{align}
\bar{x}_1 = \ell - \frac{k_1}{k_2} \qquad \textnormal{and} \qquad \bar{x}_2 = \frac{k_1}{k_2}.
\end{align}
Hence the output species $X = v_2$ is robust to $\ell$, and this network exhibits maxRPA as the steady-state concentration of $X$ is just a function of two mass-action reaction rate constants. The stoichiometric matrix for this network is 
\begin{equation}
S =
\begin{blockarray}{ccc}
&& \\
\begin{block}{c[cc]}
{v_1} \quad\,
& -1 & 1 \\
{v_2} \quad\, 
& 1 & -1\\
\end{block}
& e_1 & e_2 
\end{blockarray} \,\,.     
\end{equation}
Hence $\coker S$ is one-dimensional and spanned by the vector $\bm d = \begin{bmatrix} 1  & 1 \end{bmatrix}^\top$. Since $\coker S$ is non-empty the maxRPA characterisation result in Ref.~\cite{gupta2022universal} does not apply, but we can use the maxRPA result developed here to conclude that this network indeed exhibits maxRPA. Setting $\bm q =\begin{bmatrix} -1 &0 \end{bmatrix}^\top$ and $\kappa =1$, we see that the linear system \eqref{eq:qs} holds and so the first condition of Theorem \ref{thm:maxRPA_characterization} is satisfied. To check the second condition note that under the mass-action kinetics the ratio of the rate functions $r_{1}(x_1,x_2)$ and $r_{2}(x_1,x_2)$ for reactions $e_1$ and $e_2$ satisfy
\begin{equation}
\frac{r_{2}(x_1,x_2)}{r_{1}(x_1,x_2)} = \frac{k_2 x_1 x_2}{k_1 x_1} = \frac{k_2}{k_1}x_2 \eqqcolon \Phi(x_2,k_1,k_2).
\end{equation}
Hence the ratio is only a function of the output species concentration $x_2$ and the set-point is precisely determined by the relation
\begin{equation}
\Phi(x_2,k_1,k_2) = \kappa = 1.    
\end{equation}
Observe that for any real number $\alpha$, $(\bm q +\alpha \bm d, \kappa)$ would also satisfy \eqref{eq:qs} and these are in fact all the solutions to this linear system.

In this example, if we consider the subnetwork $\gamma$ that is formed by excluding the output species and the two reactions from the full network, then $\gamma$ has only one species $v_1$ and no reactions. Hence $\gamma$ is not output-complete which happens because for this exampled maxRPA is not kinetics independent. Indeed, maxRPA does not hold for arbitrary kinetics for the two reactions.

\section{ Derivation of isomorphisms~\eqref{eq:first_iso} and \eqref{eq:second_iso} }\label{sec:iso}

Let us give a proof of the isomorphisms~\eqref{eq:first_iso} and \eqref{eq:second_iso}, which are true when $\wt{c}(\gamma) = 0$ 
and $\wt{d}(\gamma)=0$ hold. 
A proof of the isomorphisms is given in Ref.~\cite{PhysRevResearch.3.043123}. 
We describe it here to make the paper self-contained
together with the explicit construction of the isomorphisms.

We first define the following vector spaces, 
\begin{align}
  C_0 (\Gamma) &\coloneqq 
  \big\{
  \sum_i d_i v_i
  \, |  \, 
  v_i \in V, \, d_i \in \mathbb R 
  \big\} , 
  \\
  C_1 (\Gamma) &\coloneqq 
  \big\{
\sum_A c_A e_A 
  \, |  \,
  e_A \in E, \, c_A \in \mathbb R 
  \big\} .  
\end{align} 
We also define similar spaces for a subnetwork $\gamma = (V_\gamma, E_\gamma) \subset \Gamma$, $C_n (\gamma)$ for $n=0,1$, 
to be those generated by $V_\gamma$ and $E_\gamma$. 
Similarly, we define the spaces for $\Gamma' = \Gamma / \gamma$,
to be the spaces spanned by $(V \setminus V_\gamma, E \setminus E_\gamma)$. 
We will denote the elements of 
$C_1(\gamma)$, 
$C_1(\Gamma)$ 
and 
$C_1(\Gamma')$ 
using a vector,
where each component represents the corresponding coefficient. 
For example, 
\begin{equation}
\bm c_1 \in C_1(\gamma), 
\quad 
\begin{bmatrix}
    \bm c_1 \\
    \bm c_2
\end{bmatrix}
\in C_1(\Gamma), 
\quad 
\bm c_2 \in C_1(\Gamma').
\end{equation}
In the above, the element of $C_1(\Gamma)$ are partitioned 
into those inside/outside $\gamma$. 
Similar notation will be used for $C_0(\gamma)$ and so on. 
To prove the isomorphisms, we consider the following short exact sequence,  
\begin{equation}
  \xymatrix{
    &
    0 
    \ar[d]
    &
    0 \ar[d]
    & 
    0 \ar[d]
    \\
   0 \ar[r] &
   C_1 (\gamma)
%   \ker \varphi_1
   \ar[r]^{\psi_1 }
   \ar[d]^{\p_\gamma } 
   &
   C_1 (\Gamma) 
   \ar[r]^{\varphi_1}
   \ar[d]^{\p }
   & 
   C_1(\Gamma' )
   \ar[r]
   \ar[d]^{\p' }
   & 0
   \\
0
 \ar[r] 
&  
C_0(\gamma)
%\ker \varphi_0 
   \ar[r]^{\psi_0 }
   \ar[d]   
   & 
C_0 (\Gamma) 
  \ar[r]^{\varphi_0}
  \ar[d] 
   &
C_0 (\Gamma' )
   \ar[r]
   \ar[d]
   &
   0
   \\
   & 
   0 
   & 0 
   & 0 
 }
 \label{eq:diag-b}
\end{equation}
where the columns are the chain complexes of $\gamma$, 
$\Gamma$, and $\Gamma'$, respectively. 
The maps $\p_\gamma$, $\p$, and $\p'$ are defined by 
\begin{equation}
  \p_\gamma: \bm c_1 \mapsto S_{11} \bm c_1, 
  \quad 
  \p: \bm c = 
  \begin{bmatrix}
    \bm c_1 \\
    \bm c_2
  \end{bmatrix}
  \mapsto 
  S \bm c, 
  \quad 
  \p': \bm c_2 \mapsto S' \bm c_2. 
\end{equation}
The horizontal maps are given by 
\begin{equation}
  \psi_1: 
  \bm c_1 \mapsto 
  \begin{bmatrix}
    \bm c_1 \\ 
    \bm 0
  \end{bmatrix}, 
  \quad 
  \varphi_1: 
  \begin{bmatrix}
    \bm c_1 \\ 
    \bm c_2
  \end{bmatrix}
  \mapsto \bm c_2 , 
\end{equation}
\begin{equation}
  \psi_0: 
  \bm d_1 \mapsto 
  \begin{bmatrix}
    \bm d_1 \\ 
    S_{21} S^+_{11} \bm d_1
  \end{bmatrix}, 
  \quad 
  \varphi_0: 
  \begin{bmatrix}
    \bm d_1 \\ 
    \bm d_2
  \end{bmatrix}
  \mapsto \bm d_2 - S_{21}S^+_{11} \bm d_1 . 
\end{equation}
One can check that the diagram \eqref{eq:diag-b} commutes 
if and only if the following condition is satisfied: 
\begin{equation}
  S_{21} (1 - S_{11}^+ S_{11}) \bm c_1 = \bm 0, 
  \label{eq:com-1}
\end{equation}
where $\bm c_1 \in C_1(\gamma)$. 
The matrix $ (1 - S_{11}^+ S_{11})$ 
is the projection matrix to $\ker S_{11}$, 
and Eq.~\eqref{eq:com-1} is equivalent to 
\begin{equation}
  \ker S_{11} \subset \ker S_{21}  . 
  \label{eq:com-2}
\end{equation}
The condition \eqref{eq:com-2} is equivalent to 
 $\widetilde c (\gamma) = |\ker S_{11} / (\ker S)_{{\rm supp}\, \gamma}| = 0$,
namely the absence of emergent cycles, 
as shown around Eq.~(182) of Ref.~\cite{PhysRevResearch.3.043123}. 
Thus, the diagram \eqref{eq:diag-b} commutes if and only if $\gamma$ has no emergent cycle.

Applying the snake lemma to Eq.~\eqref{eq:diag-b}, we obtain 
an exact sequence, \\
\begin{equation}
  \xymatrix{
   0 \ar[r] &
   \ker S_{11}
   \ar[r]^{\psi_1} 
   &
   \ker S
   \ar[r]^{\varphi_1}
   &
   \ker S'
   \ar[r]^{\delta_1\quad } &  
%   H_0(\gamma) 
   \coker S_{11}
   \ar[r]^{\bar{\psi}_0}
   & 
   \coker S
   \ar[r]^{\bar{\varphi}_0} 
   &
   \coker S'
   \ar[r]
   &
   0 , 
 }  
 \label{eq:les}
\end{equation}
\\ 
where $\bar \psi_0$ and $\bar \varphi_0$ 
are induced maps of $\psi_0$ and $\varphi_0$. 
The map $\delta_1: \ker S' \to \coker S_{11}$ 
is called the connecting map. 
For a given $\bm c_2 \in \ker S'$, 
the connecting map is given by\footnote{
The  map is identified as follows. 
Pick an element $\bm c_2 \in \ker S'$, which 
can be included in $C_1(\Gamma')$. 
Since $\varphi_1$ is surjective, there exists 
$
\bm c = 
\begin{bmatrix}
  \bm c_1 \\
  \bm c_2 
\end{bmatrix}$ 
such that $\varphi_1 (\bm c) = \bm c_2$. 
From the commutativity of the diagram (\ref{eq:diag-b}), 
we have $\varphi_0 (S\bm c) = S' \bm c_2 = \bm 0$. 
Since the rows of Eq.~(\ref{eq:diag-b}) are exact, 
there exists $\bm d_1 \in C_0(\gamma)$ 
such that $\psi_0 (\bm d_1) = S \bm c$. 
We obtain $[\bm d_1]=[S_{11} \bm c_1 + S_{12} \bm c_2]= [S_{12} \bm c_2]\in \coker S_{11}$ by identifying the differences in ${\rm im\,} S_{11}$. 
The mapping $\bm c_2 \mapsto [S_{12} \bm c_2]$ is the connecting map. 
} 
\begin{equation}
\delta_1: 
\bm c_2 \mapsto 
[S_{12} \bm c_2] \in \coker S_{11}, 
\end{equation}
where $[...]$ means to identify the differences in ${\rm im\,} S_{11}$. 

When $\widetilde d(\gamma)=0$ is satisfied, 
the connecting map $\delta_1$ is a zero map, as we show below. 
Since $\widetilde d(\gamma)=0$, for each ${\bm d}_1 \in \coker S_{11}$ there exists a ${\bm d}_2$ 
such that $
\begin{bmatrix}
\bm d_1 \\ \bm d_2    
\end{bmatrix} \in \coker S$. This implies that
\begin{align}
{\bm d}^\top_2 S_{21} &= {\bm 0} ,
\\
{\bm d}^\top_1 S_{12} + {\bm d}^\top_2 S_{22} &= {\bm 0}.
\label{eq:ds12+ds22-0}
\end{align}
Multiplying the second equation on the right by ${\bm c}_2 \in \ker S'$ and using Eq.~\eqref{eq:ds12+ds22-0} and $S'\bm c_2=\bm 0$, 
we obtain
\begin{equation}
{\bm d}^\top_1 S_{12} {\bm c}_2 
= - {\bm d}^\top_2 S_{22} {\bm c}_2 
= - {\bm d}^\top_2  S_{21} S^{+}_{11}S_{12} {\bm c}_2 = {\bm 0},
\end{equation}
where the last equality is because 
${\bm d}^\top_2 S_{21} = {\bm 0}$. 
This shows that ${\bm d}_1$ is orthogonal to $S_{12} {\bm c}_2$ 
and since ${\bm d}_1$ is an arbitrary element of $\coker S_{11}$ we can conclude that $S_{12} {\bm c}_2$ is in the image of $S_{11}$.
Therefore, $\delta_1$ is a zero map.

The exact sequence \eqref{eq:les} and the fact that $\delta_1 = \bm 0$ implies that we have the following exact sequences,
\begin{eqnarray}
&&  \xymatrix{
   0 \ar[r] &
   \ker S_{11}
   \ar[r]
   &
   \ker S
   \ar[r] 
   &
   \ker S'
  \ar[r] 
  & 0
 }  , \\  
 &&
 \xymatrix{
  0 \ar[r] &
  \coker S_{11}
  \ar[r]
  &
  \coker S
  \ar[r] 
  &
  \coker S'
 \ar[r] 
 & 0
}  . 
\end{eqnarray}
These imply the isomorphisms \eqref{eq:first_iso} and \eqref{eq:second_iso}.

Let us explicitly construct the isomorphism \eqref{eq:first_iso}. 
We first define a map $F:\ker S \to \ker S'$ 
by 
\begin{equation}
F: \ker S \ni 
\bm c = 
\begin{bmatrix}
    \bm c_1 \\
    \bm c_2
\end{bmatrix}
\mapsto 
\bm c_2 \in \ker S'. 
\end{equation}
We can see that $\bm c_2$ is indeed an element of $\ker S'$, 
\begin{equation}
S' \bm c_2 = - S_{21} (1 - S^{+}_{11} S_{11}) \bm c_1 = \bm 0, 
\end{equation}
where we have used $S \bm c=\bm 0$ and 
$\ker S_{11} \subset \ker S_{21}$ (since $\wt{c}(\gamma)=0$).

The map $F$ is a surjection as shown in the following. 
Pick any $\bm c_2 \in \ker S'$ and define $\bm c_1 \coloneqq - S^{+}_{11}S_{12} \bm c_2$. 
Then
\begin{align}
\label{iso1_reln1}
S_{11} \bm c_1 + S_{12} \bm c_2 &= (1 - S_{11} S^{+}_{11}) S_{12} \bm c_2 , \\
\label{iso1_reln2}
S_{21} \bm c_1 + S_{22} \bm c_2 & = (- S_{21} S^{+}_{11}S_{12} + S_{22} ) \bm c_2 
= -S' \bm c_2 = {\bm 0}.
\end{align}
The matrix $(1 - S_{11} S^{+}_{11})$ is the projection matrix to $\coker S_{11}$. 
As we showed earlier, $S_{12} \bm c_2$ belongs to the image of $S_{11}$, 
and the RHS of Eq.~\eqref{iso1_reln1} is zero, 
proving that 
$\begin{bmatrix}
\bm c_1 \\ \bm c_2
\end{bmatrix}
\in \ker S$ and so the map $F$ is surjective. 
An element of the kernel of the map $F$ is of the form 
\begin{equation}
\begin{bmatrix}
    \bm c_1 \\ 
    \bm 0
\end{bmatrix} 
\in 
(\ker S)_{{\rm supp}\,\gamma} 
\subset \ker S .
\end{equation}
Thus, the kernel of $F$ is identified as $\ker F = (\ker S)_{{\rm supp}\,\gamma}$. 
Therefore, the induced map 
\begin{equation}
\bar F : \ker S / (\ker S)_{{\rm supp}\, \gamma} \to \ker S, 
\end{equation}
is an isomorphism. 
Noting that $(\ker S)_{{\rm supp}\, \gamma} = \ker S_{11}$ 
since $\wt{c}(\gamma)=0$, we have obtained 
the isomorphism \eqref{eq:first_iso}.

We now prove the isomorphism \eqref{eq:second_iso}. Recall the definitions of subspaces $D_{11}(\gamma)$, $X (\gamma)$ and $\bar{D}'(\gamma)$ from Eqs.~\eqref{defn:d11gamma}, \eqref{defn:xgamma} and \eqref{defn:dprime_gamma}. Since there is no emergent conserved quantities $\wt{d}(\gamma)=0$, we must have $\coker S_{11} = D_{11}(\gamma)$. 
Consider the projection of $\coker S $ to $X(\gamma)$. Clearly the kernel of this surjective map is $\coker S \cap X (\gamma)^{\perp}$, where $X (\gamma)^{\perp}$ denotes the orthogonal complement of $X (\gamma)$. Hence $\coker S$ is isomorphic to the direct sum
\begin{align}
\label{max_rpa_second_iso_pf0}
\coker S \simeq X(\gamma) \oplus \coker S/ X (\gamma).
\end{align}
Now if we consider the projection map from $X(\gamma)$ to $D_{11}(\gamma)$ by 
$$\begin{bmatrix}
\bm d_1 \\ \bm d_2   \end{bmatrix} \mapsto \bm d_1,$$
then the image of this map is exactly $D_{11}(\gamma)$ and the kernel of this map is exactly $\bar{D}'(\gamma)$. Hence $X(\gamma)$ is isomorphic to the direct sum
$$X(\gamma) \simeq D_{11}(\gamma) \oplus  \bar{D}'(\gamma),$$
which together with Eq.~\eqref{max_rpa_second_iso_pf0} shows that
\begin{align*}
\coker S \simeq D_{11}(\gamma) \oplus  \bar{D}'(\gamma) \oplus \coker S / X (\gamma).
\end{align*}
Since $\coker S_{11} = D_{11}(\gamma)$ and the projection map $\bar \varphi_0$ (given by Eq.~\eqref{defn_varphi_0}) is an isomorphism between $\bar{D}'(\gamma)$ and $\coker S' \cap \coker S_{21}$, in order to prove isomorphism \eqref{eq:second_iso} we just need to establish the following isomorphism
\begin{align}
\label{max_rpa_second_iso_pf1}
\coker S / X (\gamma) \simeq \coker S' \cap (\coker S_{21})^{\perp}.
\end{align}
We claim that the projection map $\bar \varphi_0$ also establishes this isomorphism. To see this consider any 
$\begin{bmatrix}
\bm d_1 \\ \bm d_2    
\end{bmatrix} \in \coker S/X (\gamma).$
Without losing generality we can assume that $\bm d_1 \in (\coker S_{11})^{\perp}$ and $\begin{bmatrix}
\bm d_1 \\ \bm d_2    
\end{bmatrix}  \in X(\gamma)^{\perp}$. As 
$\bm d_1^\top S_{11} + \bm d_2^\top S_{21} = 0$
and $\bm d_1 \in (\coker S_{11})^{\perp}$, we can express $\bm d_1^\top$ uniquely as
\begin{align}
\label{max_rpa_iso_2_d_1_defn}
\bm d_1^\top  =-  \bm d_2^\top S_{21}S_{11}^{+}
\end{align}
which upon substitution in relation $\bm d_1^\top S_{12} + \bm d_2^\top S_{22} = 0$ yields
\begin{align}
\label{max_rpa_iso_2_d_1_1}
\bm d_1^\top S_{12} + \bm d_2^\top S_{22} =  \bm d_2^\top (S_{22} - S_{21} S_{11}^+ S_{12}) = \bm d_2^\top S' = 0.
\end{align}
Hence $\bm d_2 \in \coker S'$. Let $\bm d'_2$ be the projection of this vector on $\coker S' \cap \coker S_{21}$. Then certainly $\begin{bmatrix}
\bm 0 \\ \bm d'_2    
\end{bmatrix} \in \coker S \cap X(\gamma)$ which implies that the dot product of this vector with $\begin{bmatrix}
\bm d_1 \\ \bm d_2    
\end{bmatrix}  \in X(\gamma)^{\perp}$
is zero. Therefore $\bm d'_2 = \bm 0$ and so $\bm d_2 \in \coker S' \cap (\coker S_{21})^\perp$. This shows that the image of $\coker S / X (\gamma)$ under the map $\bar \varphi_0$ is in $\coker S' \cap (\coker S_{21})^{\perp}$. Moreover this map is injective on $\coker S / X (\gamma)$ as its kernel is trivial, because if 
$\begin{bmatrix}
\bm d_1 \\ \bm 0   
\end{bmatrix} \in \coker S/X (\gamma)$
then $\bm d_1 \in \coker S_{11}$ and so $\bm d_1 \in (\coker S_{11})^{\perp}$ can only happen when $\bm d_1 = \bm 0$. Next we establish the surjectivity of the map $$\bar \varphi_0 : \coker S / X (\gamma) \to \coker S' \cap (\coker S_{21})^{\perp}.$$
Pick any $\bm d_2 \in \coker S' \cap (\coker S_{21})^{\perp}$. Then setting $\bm d_1^\top$ as in Eq.~\eqref{max_rpa_iso_2_d_1_defn} yields Eq.~\eqref{max_rpa_iso_2_d_1_1}. Moreover 
\begin{align}
\label{max_rpa_second_iso_pf2}
\bm d_1^\top S_{11} + \bm d_2^\top S_{21} =\bm d_2^\top S_{21} (1 - S_{11}^+ S_{11}) = 0
\end{align}
because matrix $(1 - S_{11}^+ S_{11})$ is the projection matrix to $\ker S_{11}$ and $\ker S_{11} \subset \ker S_{21}$ as $\wt{c}(\gamma) =0$. Eqs.~ \eqref{max_rpa_iso_2_d_1_1} and \eqref{max_rpa_second_iso_pf2} show that 
$\begin{bmatrix}
\bm d_1 \\ \bm d_2    
\end{bmatrix} \in \coker S.$ Noting that $\ker S_{11}^+ = \ker S_{11}^\top = \coker S_{11}$, we from Eq.~\eqref{max_rpa_iso_2_d_1_defn} that $\bm d_1 \in (\coker S_{11})^{\perp}$. This also shows that
$\begin{bmatrix}
\bm d_1 \\ \bm d_2    
\end{bmatrix}  \in X(\gamma)^{\perp}$
because for any $\begin{bmatrix}
\wt{\bm d}_1 \\  \wt{\bm d}_2    
\end{bmatrix}  \in X(\gamma)$
we have $\wt{\bm d}_1 \in \coker S_{11}$ (hence orthogonal to $\bm d_1$) and since $\wt{\bm d}_1^\top S_{11} + \wt{\bm d}^\top_2 S_{21} =  \wt{\bm d}^\top_2 S_{21} = 0$, we must have $\wt{\bm d}_2 \in \coker S_{21}$ while $\bm d_2 \in (\coker S_{21})^{\perp}$. This proves the surjectivity of the map $\bar \varphi_0$ from $\coker S / X (\gamma)$ to $\coker S' \cap (\coker S_{21})^{\perp}$, which establishes the isomorphism \eqref{max_rpa_second_iso_pf1}, and therefore proves the isomorphism \eqref{eq:second_iso}.

\section{ Derivation of Eq.~\eqref{eq:rate-eq-indep} }\label{app:derivation-number}

In this Appendix, we show that 
the number of independent equations in Eq.~\eqref{eq:sp-r2-s21-c11} 
is given by 
\begin{equation}
|V \setminus V_\gamma| - \bar d' (\gamma) - d_l (\gamma). 
\end{equation}

We here count the dependent relations in Eq.~\eqref{eq:rate-eq-indep}. 
This amounts to finding independent vectors
$\bm d_2 \in \mathbb R^{|V \setminus V_\gamma|}$ such that
\begin{equation}
\bm d_2^\top S' \bm r_2 = \bm 0, 
\quad 
\bm d_2^\top S_{21} \bm c_{11} = 0,
\end{equation}
for any $\bm c_{11} \in \ker S_{11}$. 
Since $\bm r_2$ and $\bm c_{11} \in \ker S_{11}$ is arbitrary, 
this condition is rephrased as
finding $\bm d_2 \in \mathbb R^{|V \setminus V_\gamma|}$ such that
\begin{align}
\bm d_2^\top S' &= \bm 0 ,
\label{eq:d2sp-0}
\\ 
\bm d_2^\top S_{21} (1 - S^+_{11} S_{11}) &= \bm 0 .
\label{eq:d2s21-s11-0}
\end{align}

We show the following equivalence,
\begin{equation}
\text{$\bm d_2$ satisfies Eqs.~\eqref{eq:d2sp-0} and \eqref{eq:d2s21-s11-0}}  
\quad  \Longleftrightarrow \quad
\bm d_2 \in (\coker S' \cap \coker S_{21})  \oplus  \bar{\varphi}_0 (D_l(\gamma)). 
\label{eq:equiv-d2-d2}
\end{equation}
Let us first show the direction $\Longrightarrow$. 
If $\bm d_2 \in \coker S_{21}$, then
$\bm d_2 \in \coker S' \cap \coker S_{21}$ and the relation holds. 
Suppose $\bm d_2 \notin \coker S_{21}$.
Then, if we define 
$\bm d_1^\top \coloneqq - \bm d_2^\top S_{21} S^+_{11}$, 
Eqs.~\eqref{eq:d2sp-0} and \eqref{eq:d2s21-s11-0} implies that 
\begin{equation}
\begin{bmatrix}
\bm d_1^\top & \bm d_2^\top     
\end{bmatrix}    
\begin{bmatrix}
S_{11} & S_{12} \\ 
S_{21} & S_{22}
\end{bmatrix} 
= \bm 0. 
\end{equation}
Namely, $\begin{bmatrix} \bm d_1 \\ \bm d_2 \end{bmatrix} \in \coker S$.
If we multiply the projection matrix to $\coker S_{11}$ 
on $\bm d_1$, 
\begin{equation}
\bm d_1^\top (1 - S_{11} S^+_{11}) 
= 
-\bm d_{2}^\top S_{21} S_{11}^+ + \bm d_2^\top S_{21} S_{11}^+ S_{11} S^+_{11} 
= \bm 0 ,
\end{equation}
which means that $\bm d_1 \in (\coker S_{11})^\perp$. 
This implies that 
$\begin{bmatrix} \bm d_1 \\ \bm d_2 \end{bmatrix} \in \coker S / X(\gamma) = D_l(\gamma)$ and $\bm d_2 \in \bar{\varphi}_0 (D_l (\gamma))$. 

We now show the opposite direction $\Longleftarrow$.
When $\bm d_2 \in \coker S' \cap \coker S_{21}$, 
$\bm d_2$ trivially satisfies Eqs.~\eqref{eq:d2sp-0} and \eqref{eq:d2s21-s11-0}.
When $\bm d_2 \in \bar{\varphi}_0 (D_l (\gamma))$, 
there exists 
$
\begin{bmatrix}
\bm d_1 \\ \bm d_2    
\end{bmatrix} 
\in \coker S$ such that $\bm d_1 \in (\coker S_{11})^\perp$. 
Then, we have 
\begin{equation}
\begin{split}
\bm d_2^\top S_{21} (1 - S_{11}^+ S_{11}) 
= - \bm d_1^\top S_{11} (1 - S_{11}^+ S_{11})
= \bm 0,
\end{split}
\end{equation}
where we used 
$
\begin{bmatrix}
\bm d_1 \\ \bm d_2    
\end{bmatrix} 
\in \coker S$, and we also have 
\begin{equation}
\bm d_2^\top S' 
=  
\bm d_2^\top (S_{22} - S_{21}S_{11}^+ S_{12})
= 
-
\bm d_1^\top S_{12}
+
\bm d_1^\top S_{11}
S_{11}^+ S_{12})
= 
- \bm d_1^\top (1-S_{11} S_{11}^+) S_{12} 
= \bm 0, 
\end{equation}
because $(1 - S_{11} S_{11}^+)$ is a projection matrix to $\coker S_{11}$. 
This concludes the proof of the equivalence \eqref{eq:equiv-d2-d2}.

Observe also that 
\begin{equation}
\bar \varphi_0 ( \bar D'(\gamma) ) 
= 
(\coker S' ) \cap (\coker S_{21}) . 
\label{bar-d-p-sp-s21}
\end{equation}
Recall that the injective map $\bar{\varphi}_0$ is defined on $\bar D'(\gamma)$ as 
\begin{equation}
\label{defn_varphi_0}
\bar \varphi_0 : 
\bar D' (\gamma) 
\ni 
\begin{bmatrix}
    \bm 0 \\ \bm d_2
\end{bmatrix}
\mapsto \bm d_2 
\in \coker S'. 
\end{equation}
Indeed, suppose that we are given 
$\bm d_2 \in \bar \varphi_0 (\bar D'(\gamma))$. 
Since it satisfies 
\begin{equation}
\begin{bmatrix}
\bm 0 & \bm d_2^\top     
\end{bmatrix} 
\begin{bmatrix}
S_{11}  & S_{12} \\
S_{21}  & S_{22} 
\end{bmatrix}
= 
\begin{bmatrix}
\bm d_2^\top S_{21} & \bm d_2^\top S_{22} 
\end{bmatrix}
= \bm 0 , 
\end{equation}
we have 
$\bm d_2^\top S' = \bm d_2^\top (S_{22} - S_{21}S^+_{11} S_{12}) 
= \bm 0$
and 
$\bm d_2 \in (\coker S') \cap (\coker S_{21})$. 
Conversely, for a given $\bm d_2 \in (\coker S') \cap (\coker S_{21})$, 
we have 
\begin{equation}
\begin{bmatrix}
\bm 0 & \bm d_2^\top     
\end{bmatrix} 
\begin{bmatrix}
S_{11}  & S_{12} \\
S_{21}  & S_{22} 
\end{bmatrix}
= 
\begin{bmatrix}
\bm d_2^\top S_{21} & \bm d_2^\top S_{22} 
\end{bmatrix}
=
\begin{bmatrix}
\bm d_2^\top S_{21} & \bm d_2^\top S_{21} S_{11}^+ S_{12} 
\end{bmatrix}
= \bm 0 , 
\end{equation}
where we have used $\bm d_2^\top S'=\bm 0$, 
and thus $\bm d_2 \in \bar\varphi_0 (\bar D'(\gamma))$. 
This proves Eq.~\eqref{bar-d-p-sp-s21}.

Combining Eq.~\eqref{bar-d-p-sp-s21} and Eq.~\eqref{eq:equiv-d2-d2}, 
the number of independent vectors satisfying 
Eqs.~\eqref{eq:d2sp-0} and \eqref{eq:d2s21-s11-0}
is given by $\bar d'(\gamma)$ and $d_l(\gamma)$. 
Therefore, among $|V \setminus V_\gamma|$ equations of Eq.~\eqref{eq:sp-r2-s21-c11}, the number of independent equations is given by Eq.~\eqref{eq:rate-eq-indep}.

\section{ Finding emergent cycles, emergent conserved quantities, and lost conserved quantities }\label{sec:find-basis}

We here describe how to find emergent cycles, emergent conserved quantities, and lost conserved quantities linear-algebraically.

As for emergent cycles, the space spanned by them can be written as 
\begin{equation}
\begin{split}
{\rm \widetilde C}(\gamma) 
&\coloneqq 
 \ker S_{11}\, / \,
 (\ker S)_{{\rm supp }\gamma} 
 \\ 
 &= 
 \ker S_{11}\, / \,
\left\{ 
\bm c_1 \in \ker S_{11} \, \middle| \, 
\begin{bmatrix}
  \bm c_1 \\
  \bm 0
\end{bmatrix}
\in \ker S
\right\} 
\\ 
&\simeq 
\ker S_{11}\, \cap (\ker S_{21})^\perp ,
\end{split}
\end{equation}
and the last expression can be used to find the basis of ${\rm \wt{C}}(\gamma)$. 
On the other hand, the space of emergent conserved quantities is written as 
\begin{equation}
\widetilde D(\gamma)
 \coloneqq \coker S_{11}  / D_{11} (\gamma) 
\simeq 
 \coker S_{11} \cap \left( D_{11}(\gamma) \right)^\perp .     
 \label{eq:d-tilde-comp}
\end{equation}
The space $D_{11}(\gamma)$ can be written as 
\begin{align}
\begin{split}
D_{11}(\gamma)
&\coloneqq 
  \left\{ 
    \bm d_1 \in \coker S_{11} 
  \, \middle| \,
  {}^{\exists} \bm d_2 {\text{ such that }} 
  \begin{bmatrix}
    \bm d_1 \\
    \bm d_2
  \end{bmatrix}
  \in \coker S 
  \right\} 
\\
&= 
\left\{ 
\bm d_1 
\, \middle| \, 
\begin{bmatrix}
\bm d_1^\top & \bm d_2^\top
\end{bmatrix}
\wt{S} 
= \bm 0
\right\}, 
\end{split}
\label{eq:d11-comp}
\end{align}
where we have defined an extended matrix 
\begin{equation}
\wt{S}
\coloneqq 
\begin{bmatrix}
S_{11} & S_{12} & S_{11} \\ 
S_{21} & S_{22} & \bm 0 
\end{bmatrix}. 
\end{equation}
The last expression of Eq.~\eqref{eq:d11-comp} can be used 
to find the basis of $D_{11} (\gamma)$ and hence that of $\left( D_{11} (\gamma) \right)^\perp$, which can be combined with
Eq.~\eqref{eq:d-tilde-comp} to obtain emergent conserved quantities. 

Similarly, we can also obtain the basis of lost conserved quantities 
by noting that its space is written as 
\begin{equation}
D_l (\gamma) \coloneqq \coker S / X(\gamma) 
\simeq \coker S \cap (X(\gamma))^\perp 
\end{equation}
and 
\begin{equation}
X (\gamma) 
\coloneqq 
  \left\{ 
    \begin{bmatrix}
      \bm d_1 \\
      \bm d_2
    \end{bmatrix}
    \in \coker S 
    \, \middle| \,
    \bm d_1 \in \coker S_{11}
    \right\} 
\simeq 
  \left\{ 
    \begin{bmatrix}
      \bm d_1 \\
      \bm d_2
    \end{bmatrix}
    \, \middle| \,
\begin{bmatrix}
\bm d_1^\top & \bm d_2^\top
\end{bmatrix}
\wt{S} 
= \bm 0
  \right\} .
\end{equation}
We have implemented methods to obtain bases of ${\rm \widetilde C}(\gamma)$, $\widetilde D(\gamma)$, and $D_l(\gamma)$ in {\bf RPAFinder}~\cite{RPAFinder}.

\section{ Details of the example in Sec.~\ref{sec:yeast} } 

\subsection{List of reactions}\label{sec:sc-reactions}

\noindent
1: Glucose  $\rightarrow$  G6P, 

\noindent
2: G6P  $\rightarrow$  F6P, 

\noindent
3: F6P  $\rightarrow$  G6P, 

\noindent
4: F6P  $\rightarrow$  F16P, 

\noindent
5: F16P  $\rightarrow$  DHAP + G3P, 

\noindent
6: DHAP  $\rightarrow$  G3P, 

\noindent
7: G3P  $\rightarrow$  PGP, 

\noindent
8: PGP  $\rightarrow$  3PG, 

\noindent
9: 3PG  $\rightarrow$  PGP, 

\noindent
10: 3PG  $\rightarrow$  2PG, 

\noindent
11: 2PG  $\rightarrow$  3PG, 

\noindent
12: 2PG  $\rightarrow$  PEP, 

\noindent
13: PEP  $\rightarrow$  2PG, 

\noindent
14: PEP  $\rightarrow$  PYR, 

\noindent
15: G6P  $\rightarrow$  PG6, 

\noindent
16: PG6  $\rightarrow$  CO2 + Ru5P, 

\noindent
17: Ru5P  $\rightarrow$  X5P, 

\noindent
18: Ru5P  $\rightarrow$  R5P, 

\noindent
19: R5P + X5P  $\rightarrow$  G3P + S7P, 

\noindent
20: G3P + S7P  $\rightarrow$  R5P + X5P, 

\noindent
21: G3P + S7P  $\rightarrow$  E4P + F6P, 

\noindent
22: E4P + F6P  $\rightarrow$  G3P + S7P, 

\noindent
23: E4P + X5P  $\rightarrow$  F6P + G3P, 

\noindent
24: F6P + G3P  $\rightarrow$  E4P + X5P, 

\noindent
25: PG6  $\rightarrow$  G3P + PYR, 

\noindent
26: PYR  $\rightarrow$  Acetal + CO2, 

\noindent
27: Acetal  $\rightarrow$  Ethanol, 

\noindent
28: Ethanol  $\rightarrow$  Acetal, 

\noindent
29: R5P  $\rightarrow$  $\emptyset$, 

\noindent
30: CO2  $\rightarrow$  $\emptyset$, 

\noindent
31: $\emptyset$  $\rightarrow$  Glucose, 

\noindent
32: Ethanol  $\rightarrow$  $\emptyset$, 

\noindent
33: Acetal  $\rightarrow$  $\emptyset$, 

\noindent
34: PYR  $\rightarrow$  Ala, 

\noindent
35: Ala  $\rightarrow$  PYR, 

\noindent
36: Ala  $\rightarrow$  $\emptyset$.

\subsection{ List of labeled buffering structures }\label{sec:sc-lbs}

\noindent
$\gamma^\ast_{1}=(\{ \rm Glucose\}, \{\} \cup \{1\})$,
 
\noindent
$\gamma_{2,3,4,15,16,17,18,25}=(\{ \rm Acetal, Ala, CO2, DHAP, E4P, Ethanol, F16P, F6P, G3P, G6P, PEP, 2PG, 3PG, PG6, PGP, PYR, 
\\
R5P, Ru5P, S7P, X5P\}, \{2, 3, 4, 5, 6, 7, 8, 9, 10, 11, 12, 13, 14, 15, 16, 17, 18, 19, 20, 21, 22, 23, 24, 25, 26, 27, 28, 29, 30, 32, 33, 34,
\\
35, 36\} \cup \{\})$,
 
\noindent
$\gamma^\ast_{5}=(\{ \rm F16P\}, \{\} \cup \{5\})$,
 
\noindent
$\gamma^\ast_{6}=(\{ \rm DHAP\}, \{\} \cup \{6\})$,
 
\noindent
$\gamma_{7}=(\{ \rm E4P, G3P, S7P, X5P\}, \{19, 20, 21, 22, 23, 24\} \cup \{7\})$,
 
\noindent
$\gamma^\ast_{8}=(\{ \rm PGP\}, \{\} \cup \{8\})$,
 
\noindent
$\gamma_{9}=(\{ \rm PGP\}, \{8, 9\} \cup \{\})$,
 
\noindent
$\gamma_{10}=(\{ \rm 3PG, PGP\}, \{8, 9\} \cup \{10\})$,
 
\noindent
$\gamma_{11}=(\{ \rm 3PG, PGP\}, \{8, 9, 10, 11\} \cup \{\})$,
 
\noindent
$\gamma_{12}=(\{ \rm 2PG, 3PG, PGP\}, \{8, 9, 10, 11\} \cup \{12\})$,
 
\noindent
$\gamma_{13}=(\{ \rm 2PG, 3PG, PGP\}, \{8, 9, 10, 11, 12, 13\} \cup \{\})$,
 
\noindent
$\gamma_{14}=(\{ \rm PEP, 2PG, 3PG, PGP\}, \{8, 9, 10, 11, 12, 13\} \cup \{14\})$,
 
\noindent
$\gamma_{19,20,21,22,23}=(\{ \rm E4P, S7P, X5P\}, \{19, 20, 21, 22\} \cup \{23\})$,
 
\noindent
$\gamma_{24}=(\{ \rm E4P, S7P, X5P\}, \{19, 20, 21, 22, 23, 24\} \cup \{\})$,
 
\noindent
$\gamma_{26,34,35,36}=(\{ \rm Acetal, Ala, CO2, Ethanol, PYR\}, \{26, 27, 28, 30, 32, 33, 34, 35, 36\} \cup \{\})$,
 
\noindent
$\gamma_{27,28,32,33}=(\{ \rm Acetal, Ethanol\}, \{27, 28, 32, 33\} \cup \{\})$,
 
\noindent
$\gamma_{29}=(\{ \rm E4P, R5P, S7P, X5P\}, \{19, 20, 21, 22\} \cup \{23, 29\})$,
 
\noindent
$\gamma^\ast_{30}=(\{ \rm CO2\}, \{\} \cup \{30\})$,
 
\noindent
$\gamma_{31}=(\{ \rm Acetal, Ala, CO2, DHAP, E4P, Ethanol, F16P, F6P, G3P, G6P, Glucose, PEP, 2PG, 3PG, PG6, PGP, PYR,
\\
R5P, Ru5P, S7P, X5P\}, \{1, 2, 3, 4, 5, 6, 7, 8, 9, 10, 11, 12, 13, 14, 15, 16, 17, 18, 19, 20, 21, 22, 23, 24, 25, 26, 27, 28, 29, 30,
\\
31, 32, 33, 34, 35, 36\} \cup \{\})$
\\

Here, $\gamma_{31}$ coincides with the whole network.

\bibliography{refs}

\end{document}